\definecolor{dark-red}{rgb}{0.7,0.25,0.25}
\definecolor{dark-blue}{rgb}{0.15,0.15,0.55}
\definecolor{medium-blue}{rgb}{0,0,.8}
\definecolor{DarkGreen}{RGB}{0,150,0}
\definecolor{rho}{named}{red}
\newcommand{\cB}{\mathcal{B}}
\newcommand{\F}{\mathcal{F}}
\newcommand{\cM}{\mathcal{M}}
\newcommand{\cH}{\mathcal{H}}
\newcommand{\scG}{\mathscr{G}}
\newcommand{\K}{\mathcal{K}}
\newcommand{\cK}{\mathcal{K}}
\newcommand{\cC}{\mathcal{C}}
\newcommand{\C}{\mathbb{C}}
\newcommand{\cA}{\mathcal{A}}
\newcommand{\cI}{\mathcal{I}}
\newcommand{\cU}{\mathcal{U}}
\newcommand{\cPU}{\mathcal{PU}}
\newcommand{\bbP}{\mathbb{P}}
\newcommand{\g}{\mathfrak{g}}
\renewcommand{\Re}{\operatorname{Re}}
\newcommand{\Aut}{\operatorname{Aut}}
\newcommand{\bbA}{\mathbb{A}}
\newcommand{\R}{\mathbb{R}}
\newcommand{\Z}{\mathbb{Z}}
\newcommand{\fru}{\mathfrak{u}}
\newcommand{\cl}{\operatorname{cl}}
\newcommand{\im}{\operatorname{im}}
\newcommand{\abs}[1]{\left|#1\right|}
\newcommand{\norm}[1]{\left\|#1\right\|}
\newcommand{\ip}[1]{\left\langle#1\right\rangle}
\newcommand{\D}{\mathbb{D}}
\newcommand{\cD}{\mathcal{D}}
\newcommand{\cDR}{\mathcal{DR}}
\newcommand{\cDA}{\mathcal{DA}}
\newcommand{\cDP}{\mathcal{DP}}
\newcommand{\cO}{\mathcal{O}}
\newcommand{\cR}{\mathcal{R}}
\newcommand{\Diff}{\operatorname{Diff}}
\newcommand{\CAR}{\operatorname{CAR}}
\newcommand{\Span}{\operatorname{span}}
\newcommand{\End}{\operatorname{End}}
\newcommand{\Vect}{\operatorname{Vect}}
\newcommand{\Vir}{\operatorname{Vir}}
\newcommand{\Mob}{\operatorname{M\ddot{o}b}}
\newcommand{\Id}{1}
\newcommand{\grotimes}{\hat \otimes}
\newcommand{\supp}{\operatorname{supp}}
\newcommand{\tr}{\operatorname{tr}}
\newcommand{\interior}[1]{\mathring{#1}}
\newcommand{\hOmega}{\hat \Omega}
\newcommand{\admisall}{\widetilde{\mathscr{A}}}
\newcommand{\admis}{\mathscr{A}}
\newtheorem{thmalpha}{Theorem}
\newtheorem{Theorem}{Theorem}[section]
\newtheorem*{Theorem*}{Theorem}
\newtheorem{Lemma}[Theorem]{Lemma}
\newtheorem{Proposition}[Theorem]{Proposition}
\newtheorem{Corollary}[Theorem]{Corollary}
\theoremstyle{definition}
\newtheorem{Remark}[Theorem]{Remark}
\newtheorem{Definition}[Theorem]{Definition}
\newtheorem*{Definition*}{Definition}
\newtheorem{Example}[Theorem]{Example}
\numberwithin{equation}{section}
\numberwithin{figure}{section}
\newcommand{\Addresses}{{
  \bigskip
  \footnotesize

  J.~Tener, \textsc{Mathematical Sciences Institute, The Australian National University,
    Canberra}\par\nopagebreak
  \textit{E-mail address}: \texttt{james.tener@anu.edu.au}
}}
\begin{document}

\title{Geometric realization of algebraic conformal field theories}
\author{James E. Tener}
\date{}

\maketitle

\abstract{
We explore new connections between the fields and local observables in two dimensional chiral conformal field theory.
We show that in a broad class of examples, the von Neumann algebras of local observables (a conformal net) can be obtained from the fields (a unitary vertex operator algebra) via a continuous geometric interpolation procedure involving Graeme Segal's functorial definition of conformal field theory, and that the conformal net may be thought of as a boundary value of the Segal CFT.
In particular, we construct conformal nets from these unitary vertex operator algebras by showing that `geometrically mollified' versions of the fields yield \emph{bounded}, \emph{local} observables on the Hilbert space completion of the vertex algebra.
These are the first results which unite the three major definitions of chiral conformal field theory. 
This work is inspired by Henriques' picture of conformal nets arising from degenerate Riemann surfaces.
}

\bigskip\bigskip

\tableofcontents

\newpage

\settocdepth{section}
\section{Introduction} 

There are three major mathematical formulations of two dimensional chiral conformal field theory (CFT).
On the algebraic side, we have the notion of a \emph{vertex operator algebra} (VOA), which axiomatizes the fields of a chiral CFT.
In the language of functional analysis and operator algebras, we have \emph{conformal nets}, which axiomatize the algebras of local observables (in the sense of Haag-Kastler algebraic quantum field theory).
These two notions have been more extensively developed than the third formulation, Graeme Segal's geometric definition in terms of functors from the two dimensional complex bordism category (Segal CFT).

It is widely believed that the three approaches are essentially equivalent, after imposing some technical conditions, and perhaps restricting the Segal formulation to bordisms with genus zero.
Since all three definitions are supposed to capture the same physical notion of 2d chiral conformal field theory, each has a version of the major examples (e.g. minimal models, WZW models) and constructions (e.g. coset construction, orbifold construction), and it would be very satisfying to have a robust theory which identifies the three manifestations of these.

More importantly, each of these formulations has important and interesting connections within mathematics, for example the connection between conformal nets and Jones' theory of subfactors, or the connection between vertex operator algebras and `monstrous moonshine.'
There are many examples of important results in conformal field theory which can be established in one of the frameworks but not the others\footnote{
For example, the rationality of orbifolds and cosets is an open problem in the theory of VOAs which has been solved in the context of conformal nets, whereas the rationality of many important examples has been established for VOAs, but not for conformal nets},
and it is very desirable to develop the connection between different formulations of CFT to the point that one may answer open questions about one version using a result from another.
One striking example of the value of this approach is Wassermann's computation of the fusion rules for the $SU(N)_k$ conformal nets using smeared primary fields \cite{Wa98}, which provided a natural construction of subfactors with index $4 \cos^2 \frac{\pi}{n}$.

Recently, Carpi, Kawahigashi, Longo and Weiner initiated a general theory relating vertex operator algebras and conformal nets \cite{CKLW18}.
They give a construction which produces a conformal net from a (simple, unitary) vetex operator algebra satisfying regularity conditions, which they show are satisfied by essentially every known vertex operator algebra.
Moreover, they show how to recover the vertex operator algebra from the conformal net that it produces.

In this paper, we will present an alternative, geometric perspective on the relationship between vertex operator algebras and conformal nets, based on a geometric picture of conformal nets introduced by Andr\'e Henriques \cite{Henriques14}.
We will show that, in a broad class of examples, Segal's functorial definition of conformal field theory allows one to continuously interpolate between unitary vertex operator algebras and conformal nets.
To our knowledge, these are the first results which unite the three definitions of conformal field theory.

We will now outline Henriques' geometric picture of conformal nets in more detail.
The dictionary between vertex operator algebras and the geometric picture of Segal CFT has long been understood by mathematicians and physicists.
In Segal's picture, there is a a Hilbert space $\cH$ assigned to the circle $S^1$, and to every two dimensional complex bordism $\Sigma$ there is a one dimensional space of trace class linear maps 
$$
E(\Sigma): \bigotimes_{\pi_0(\partial \Sigma^{in})} \cH \to \bigotimes_{\pi_0(\partial \Sigma^{out})} \cH.
$$
In particular, one map $T:\cH \otimes \cH \to \cH$ assigned to a disk with two disks removed corresponds to the the state-field correspondence $a \mapsto Y(a,w)$ of a vertex operator algebra. 
More precisely, we have
$$
\begin{tikzpicture}[baseline={([yshift=-.5ex]current bounding box.center)}]
	\filldraw[fill=red!10!blue!20!gray!30!white] (0,0) circle (1cm);
	\filldraw[fill=white] (0,0) circle (0.3cm);
	\filldraw[fill=white] (140:.65cm) circle (.25cm); 
\end{tikzpicture} 
\qquad \longleftrightarrow \qquad
\begin{array}{ll}
T:\cH \otimes \cH \to \cH, \smallskip \\  T(a \otimes b) = Y(s^{L_0} a, w)r^{L_0} b
\end{array}
.
$$
where the parameters $s, w$ and $r$ are determined by the geometry of the surface, and $L_0$ is the energy operator.

The symmetry group of a chiral conformal field theory is the group of orientation preserving diffeomorphisms of the unit circle, $\Diff_+(S^1)$.
It is common practice to think of these diffeomorphisms as `thin' bordisms, i.e. as degenerate annuli with zero thickness.
Henriques' idea in \cite{Henriques14} is to also consider degenerate annuli which are thin along only part of the  boundary, such as the ones in Figure \ref{figIntroDegenerateAnnuli}.
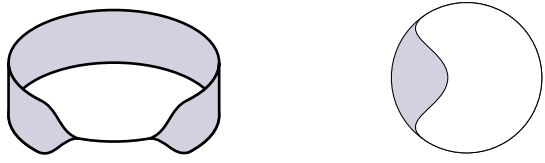
\begin{figure}[!ht]
$$
\tikz[scale=.7,,baseline={([yshift=-.5ex]current bounding box.center)}]{
\coordinate (a) at ($(0,0)+(-45:2cm and 1cm)$);
\coordinate (b) at ($(0,0)+(225:2cm and 1cm)$);
\coordinate (c) at ($(0,1)+(-45:2cm and 1cm)$);
\coordinate (d) at ($(0,1)+(225:2cm and 1cm)$);
\coordinate (e) at ($(0,.5)+(-70:2cm and 1cm)$);
\coordinate (f) at ($(0,.5)+(-110:2cm and 1cm)$);
\fill[red!10!blue!20!gray!30!white] (-2,0) arc (180:0:2cm and 1cm) -- +(0,1) arc (0:180:2cm and 1cm) -- cycle;
\draw[line width=1](0,0)+(-45:2cm and 1cm) arc (-45:180+45:2cm and 1cm)
(0,1)+(-45:2cm and 1cm) arc (-45:180+45:2cm and 1cm)
(0,.5)+(-70:2cm and 1cm) arc (-70:-110:2cm and 1cm);
\filldraw[line width=1, fill=red!10!blue!20!gray!30!white]
(2,0) arc (0:-45:2cm and 1cm) to[out=195, in=0] (e) to[out=20, in=203] (c) arc (-45:0:2cm and 1cm) -- cycle
(-2,0) arc (180:225:2cm and 1cm) to[out=-20, in=180] (f) to[out=160, in=-23] (d) arc (225:180:2cm and 1cm) -- cycle;
}
\qquad \qquad \qquad
\begin{tikzpicture}[baseline={([yshift=-.5ex]current bounding box.center)}]
	\coordinate (a) at (120:1cm);
	\coordinate (b) at (240:1cm);
	\coordinate (c) at (180:.25cm);
	\fill[fill=red!10!blue!20!gray!30!white] (0,0) circle (1cm);
	\draw (0,0) circle (1cm);
	\fill[fill=white] (a)  .. controls ++(210:.6cm) and ++(90:.4cm) .. (c) .. controls ++(270:.4cm) and ++(150:.6cm) .. (b) -- ([shift=(240:1cm)]0,0) arc (240:480:1cm);
	\draw ([shift=(240:1cm)]0,0) arc (240:480:1cm);
	\draw (a) .. controls ++(210:.6cm) and ++(90:.4cm) .. (c);
	\draw (b) .. controls ++(150:.6cm) and ++(270:.4cm) .. (c);
\end{tikzpicture}
$$
\captionsetup{justification=centering,width=0.8\linewidth}
\caption{A pair of degenerate annuli, one (from \cite{Henriques14}) depicted in three space, and another embedded in the complex plane.}
\label{figIntroDegenerateAnnuli}
\end{figure}
Given a Segal CFT, one might hope that it assigns bounded linear maps $\cH \to \cH$ to such degenerate annuli, although these maps will no longer be trace class.
We should be able to obtain the linear maps for degenerate annuli as limits of maps assigned to ordinary annuli:
\begin{equation}\label{eqnIntroSurfaceLimit}
E\left(
\begin{tikzpicture}[baseline={([yshift=-.5ex]current bounding box.center)}]
	\coordinate (a) at (120:1cm);
	\coordinate (b) at (240:1cm);
	\coordinate (c) at (180:.25cm);
	\fill[fill=red!10!blue!20!gray!30!white] (0,0) circle (1cm);
	\draw (0,0) circle (1cm);
	\fill[fill=white] (a)  .. controls ++(210:.6cm) and ++(90:.4cm) .. (c) .. controls ++(270:.4cm) and ++(150:.6cm) .. (b) -- ([shift=(240:1cm)]0,0) arc (240:480:1cm);
	\draw ([shift=(240:1cm)]0,0) arc (240:480:1cm);
	\draw (a) .. controls ++(210:.6cm) and ++(90:.4cm) .. (c);
	\draw (b) .. controls ++(150:.6cm) and ++(270:.4cm) .. (c);
	\node at (0:1cm) {\scriptsize{\textbullet}};
	\node at (0:0.8cm) {1};
\end{tikzpicture}\right)
 = \lim_{R \downarrow 1} E\left(
\begin{tikzpicture}[baseline={([yshift=-.5ex]current bounding box.center)}]
	\coordinate (a) at (120:1cm);
	\coordinate (b) at (240:1cm);
	\coordinate (c) at (180:.25cm);
	\fill[fill=red!10!blue!20!gray!30!white] (0,0) circle (1.2cm);
	\draw (0,0) circle (1.2cm);
	\fill[fill=white] (a)  .. controls ++(210:.6cm) and ++(90:.4cm) .. (c) .. controls ++(270:.4cm) and ++(150:.6cm) .. (b) -- ([shift=(240:1cm)]0,0) arc (240:480:1cm);
	\draw ([shift=(240:1cm)]0,0) arc (240:480:1cm);
	\draw (a) .. controls ++(210:.6cm) and ++(90:.4cm) .. (c);
	\draw (b) .. controls ++(150:.6cm) and ++(270:.4cm) .. (c);
	\node at (0:1cm) {\scriptsize{\textbullet}};
	\node at (0:0.8cm) {1};
	\node at (0:1.2cm) {\scriptsize{\textbullet}};
	\node at (0:1.4cm) {R};
\end{tikzpicture}
\right)
\,\, .
\end{equation}

The principal piece of data for a conformal net is a family of von Neumann algebras $\cA(I)$, called the local algebras, indexed by intervals $I \subset S^1$.
In Henriques' geometric perspective, the local operators of a conformal net correspond to degenerate annuli with states inserted in the thick part of the annulus. That is, $\cA(I)$ is generated by degenerate surfaces which look like:
\begin{equation}\label{eqnVagueDegeneratePants}
\begin{tikzpicture}[baseline={([yshift=-.5ex]current bounding box.center)}]
	\coordinate (a) at (120:1cm);
	\coordinate (b) at (240:1cm);
	\coordinate (c) at (180:.25cm);
	\fill[fill=red!10!blue!20!gray!30!white] (0,0) circle (1cm);
	\draw (0,0) circle (1cm);
	\fill[fill=white] (a)  .. controls ++(210:.6cm) and ++(90:.4cm) .. (c) .. controls ++(270:.4cm) and ++(150:.6cm) .. (b) -- ([shift=(240:1cm)]0,0) arc (240:480:1cm);
	\draw ([shift=(240:1cm)]0,0) arc (240:480:1cm);
	\draw (a) .. controls ++(210:.6cm) and ++(90:.4cm) .. (c);
	\draw (b) .. controls ++(150:.6cm) and ++(270:.4cm) .. (c);
	\filldraw[fill=white] (180:.65cm) circle (.25cm); 
	\node at (180:.65cm) {$a$};
	\draw (130:1.2cm) -- (130:1.4cm);
	\draw (230:1.2cm) -- (230:1.4cm);
	\draw (130:1.3cm) arc (130:230:1.3cm);
	\node at (180:1.5cm) {\scriptsize{$I$}};
\end{tikzpicture}
\,\,,
\end{equation}
where $a$ runs over all states.
Thus a conformal net can be thought of as a boundary value of a Segal CFT via a limiting procedure like the one in \eqref{eqnIntroSurfaceLimit}.

The content of this paper is that these ideas can be made rigorous in a large family of examples, namely those examples which can be embedded in some number of complex free fermions.
Moreover, we show that Segal CFT can be used to interpolate between vertex operator algebras and conformal nets.
We'll now outline our main results.

In \cite{Ten16}, we gave a construction of the Segal CFT for the free fermion, which assigns to a circle the fermionic Fock Hilbert space $\F$, and to a Riemann surface $X$ equipped with a spin structure, trivialized on the boundary, a one dimensional space of trace class maps 
$$
E(X): \bigotimes_{\pi_0(\partial X^{in})} \F \to \bigotimes_{\pi_0(\partial X^{out})} \F.
$$
The maps $T \in E(X)$ are characterized by certain commutation relations, determined by the Hardy space $H^2(X)$, between $T$ and generators $a(f)$ and $a(g)^*$ of the canonical anticommutation relations algebra $\CAR(L^2(S^1))$.
This construction has many nice properties (discussed in Section \ref{subsecFermionicFockSpace}), the most important of which is the compatibilty between gluing of Riemann surfaces and composition of linear maps.
We also proved in \cite{Ten16} that
\begin{equation}\label{eqnIntroPants}
E\left(
\begin{tikzpicture}[scale=.5,baseline={([yshift=-.5ex]current bounding box.center)}]
	\filldraw[fill=red!10!blue!20!gray!30!white] (0,0) circle (1cm);
	\filldraw[fill=white] (0,0) circle (0.3cm);
	\filldraw[fill=white] (140:.65cm) circle (.25cm); 
\end{tikzpicture} 
\right)
\,\, = \,\, \Span_\C T
\end{equation}
where $T:\F \otimes \F \to \F$ is given on finite energy vectors $a,b \in \F^0$ by $$T(a \otimes b) = Y(s^{L_0}a, w)r^{L_0}b.$$ Here, $Y$ is the free fermion state-field correspondence and $s,w,r$ are such that the surface in question is $\D \setminus (r\interior{\D} \cup (w + s \interior{\D}))$\footnote{We will use $\D$ for the closed unit disk in $\C$, and $\interior{\D}$ for its interior}.
The spin structure in \eqref{eqnIntroPants} is the one inherited from the depicted embedding into $\C$, and the boundary trivializations of this spin structure are the ones obtained from the Riemann maps $z \mapsto rz$ and $z \mapsto w + sz$ for the regions $r\interior{\D}$ and $w + s \interior{\D}$ removed from the unit disk $\D$ (along with suitable choices of square roots of the derivatives of these maps).

Now consider a family of Riemann surfaces of the form
$$
X_{R,t} = \begin{tikzpicture}[baseline={([yshift=-.5ex]current bounding box.center)}]
	\coordinate (a) at (120:1cm);
	\coordinate (b) at (240:1cm);
	\coordinate (c) at (180:.25cm);
	\fill[fill=red!10!blue!20!gray!30!white] (0,0) circle (1.2cm);
	\draw (0,0) circle (1.2cm);
	\fill[fill=white] (a)  .. controls ++(210:.6cm) and ++(90:.4cm) .. (c) .. controls ++(270:.4cm) and ++(150:.6cm) .. (b) -- ([shift=(240:1cm)]0,0) arc (240:480:1cm);
	\draw ([shift=(240:1cm)]0,0) arc (240:480:1cm);
	\draw (a) .. controls ++(210:.6cm) and ++(90:.4cm) .. (c);
	\draw (b) .. controls ++(150:.6cm) and ++(270:.4cm) .. (c);
	\node at (0:1cm) {\scriptsize{\textbullet}};
	\node at (0:0.8cm) {1};
	\node at (0:1.2cm) {\scriptsize{\textbullet}};
	\node at (0:1.4cm) {R};
	\filldraw[fill=white] (180:.65cm) circle (.25cm); 
\end{tikzpicture}
\,\, = \,\, R \D \setminus ( \phi_t(\interior{\D}) \cup (w + s \interior{\D}) ),
$$
where $R > 1$ and $(\phi_t)_{t \ge 0}$ is a one-parameter semigroup of univalent (i.e. holomorphic, injective) self maps of the unit disk $\D$ which fix $0$ and map onto Jordan domains with $C^\infty$ boundary.\footnote{
The requirement that the region removed from $\D$ be of the form $\phi_t(\interior{\D})$ for a semigroup $\phi_t$ is for technical technical reasons, but we expect that this assumption is not essential.}
There is a unique univalent map $\sigma:\interior{\D} \to \C$, called the \emph{Koenigs map} of $\phi_t$, which satisfies Schr\"oder's equation $\sigma(\phi_t(z)) = \phi_t^\prime(0) \sigma(z)$ for all $t \ge 0$ and $z \in \interior{\D}$.
We assume that $\sigma$ extends smoothly to the boundary $S^1$ of $\interior{\D}$.

Let $L_n$ be the unitary, positive energy representation of the Virasoro algebra for the free fermion, and let
$$
L(\rho) = \sum_{n = -\infty}^\infty \hat \rho_n L_n
$$
be the smeared field corresponding to the function $\rho(z) = \frac{\sigma(z)}{z \sigma^\prime(z)}$, where $\hat \rho_n$ are the Fourier coefficients of the restriction of $\rho$ to $S^1$.\footnote{
The function $\rho$ is closely related to the generating vector field of the semigroup $\phi_t$ of Berkson and Porta \cite{BerksonPorta78}.} 

One can verify that the space assigned by the (non-degenerate) free fermion Segal CFT, $E(X_{R,t})$, is spanned by the map $T_{R,t}$ given on finite energy vectors $a,b$ by 
$$
T_{R,t}(a \otimes b) = R^{-L_0}Y(s^{L_0}a,w)e^{-tL(\rho)}b,
$$
when $X_{R,t}$ is given the \emph{standard spin structure} inherited from $\C$, and \emph{standard boundary trivializations} induced by the Riemann maps $\phi_t$ and $z \mapsto sz + w$ (along with appropriate choices of square roots of their derivatives), after perhaps reparametrizing the semigroup $\phi_t \mapsto \phi_{\alpha t}$.
The following theorem, characterizing the value of the Segal CFT on the degenerate boundary limit $\lim_{R \downarrow 1} X_{R,t}$ (as in \eqref{eqnIntroSurfaceLimit}) is stated more precisely in the body of the paper as Theorem \ref{thmBoundednessAndExistence}.

\begin{thmalpha}\label{thmIntroFermionLimitThm}
Let 
$$
T_{R,t} \in E\left( \begin{tikzpicture}[scale=.4,baseline={([yshift=-.5ex]current bounding box.center)}]
	\coordinate (a) at (120:1cm);
	\coordinate (b) at (240:1cm);
	\coordinate (c) at (180:.25cm);
	\fill[fill=red!10!blue!20!gray!30!white] (0,0) circle (1.2cm);
	\draw (0,0) circle (1.2cm);
	\fill[fill=white] (a)  .. controls ++(210:.6cm) and ++(90:.4cm) .. (c) .. controls ++(270:.4cm) and ++(150:.6cm) .. (b) -- ([shift=(240:1cm)]0,0) arc (240:480:1cm);
	\draw ([shift=(240:1cm)]0,0) arc (240:480:1cm);
	\draw (a) .. controls ++(210:.6cm) and ++(90:.4cm) .. (c);
	\draw (b) .. controls ++(150:.6cm) and ++(270:.4cm) .. (c);
	\filldraw[fill=white] (180:.65cm) circle (.25cm); 
\end{tikzpicture}
\right)
$$ 
be as above, and fix $t > 0$. 
Then $\lim_{R \downarrow 1} T_{R,t}$ converges to a bounded operator $T_t:\F \otimes \F \to \F$ in the strong operator topology, given on finite energy vectors $a,b$ by
$$
T_t(a \otimes b) = Y(s^{L_0}a,w)e^{-tL(\rho)}b,
$$
where $Y$ is the free fermion state-field correspondence.
Moreover, $T_t$ can be characterized in terms of commutation relations with generators for the CAR algebra determined by the Hardy space of a degenerate Riemann surface $X_t = \D \setminus ((w + s \interior{\D}) \cup \phi_t(\interior{\D}))$, depicted:
$$
X_t=\begin{tikzpicture}[scale=.4,baseline={([yshift=-.5ex]current bounding box.center)}]
	\coordinate (a) at (120:1cm);
	\coordinate (b) at (240:1cm);
	\coordinate (c) at (180:.25cm);
	\fill[fill=red!10!blue!20!gray!30!white] (0,0) circle (1cm);
	\draw (0,0) circle (1cm);
	\fill[fill=white] (a)  .. controls ++(210:.6cm) and ++(90:.4cm) .. (c) .. controls ++(270:.4cm) and ++(150:.6cm) .. (b) -- ([shift=(240:1cm)]0,0) arc (240:480:1cm);
	\draw ([shift=(240:1cm)]0,0) arc (240:480:1cm);
	\draw (a) .. controls ++(210:.6cm) and ++(90:.4cm) .. (c);
	\draw (b) .. controls ++(150:.6cm) and ++(270:.4cm) .. (c);
	\filldraw[fill=white] (180:.65cm) circle (.25cm); 
\end{tikzpicture}
\,\, .
$$
\end{thmalpha}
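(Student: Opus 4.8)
The plan is to reduce the whole statement to the single assertion that the candidate limit $T_t$ extends to a bounded operator, and then to prove that boundedness using the commutation relations with $\CAR(L^2(S^1))$ furnished by the construction of \cite{Ten16}. The reduction rests on the elementary but decisive observation that on the dense domain $\F^0 \otimes \F^0$ of finite energy vectors the explicit formulas factor as $T_{R,t} = R^{-L_0} T_t$, where $T_t(a \otimes b) = Y(s^{L_0}a,w)e^{-tL(\rho)}b$. Since $L_0 \ge 0$, the operators $R^{-L_0}$ are contractions for $R \ge 1$ and converge strongly to the identity as $R \downarrow 1$. Hence, once $T_t$ is known to be bounded, we obtain at once the uniform bound $\norm{T_{R,t}} \le \norm{T_t}$ and the pointwise convergence $T_{R,t}\xi = R^{-L_0}T_t\xi \to T_t\xi$ on the dense domain; the uniform bound then promotes this to strong convergence on all of $\F \otimes \F$ by a routine approximation argument. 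Both the asserted convergence and the closed formula for the limit are therefore consequences of the boundedness of $T_t$.

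To establish that boundedness I would pass to the limit in the defining data rather than in the formula. Each $T_{R,t}$ is characterized, up to the scalar fixed by its vacuum amplitude, by commutation relations with the generators $a(f)$ and $a(g)^*$ determined by the Hardy space $H^2(X_{R,t})$. The geometric input is that as $R \downarrow 1$ the surfaces $X_{R,t}$ degenerate to $X_t$ and the Hardy spaces converge, as closed subspaces of the boundary Hilbert space, to $H^2(X_t)$ (conveniently phrased as strong convergence of the associated orthogonal projections). The space $H^2(X_t)$ determines a Bogoliubov-type transformation of $\CAR(L^2(S^1))$, and $T_t$ is exactly the intertwiner implementing it, so that boundedness of $T_t$ is controlled by a bound on this transformation that is uniform in $R$. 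Obtaining such a uniform bound is the crux of the argument: for $R > 1$ the relevant off-diagonal operator is Hilbert--Schmidt — so that $T_{R,t}$ is trace class — whereas in the limit the Hilbert--Schmidt and trace norms diverge and one must show that mere \emph{boundedness} persists.

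\textbf{The main obstacle} is precisely this uniform estimate, and here I would exploit the structure of $X_t$ to localize the degeneracy. The removed disk $w + s\interior{\D}$ is compactly contained in $\D$ with $\abs{w} + s < 1$, so the field-insertion factor $Y(s^{L_0}a,w)$ is a non-degenerate contribution, the damping $s^{L_0}$ with $s < 1$ taming the otherwise distributional field; all of the degeneracy is carried by the annular factor $e^{-tL(\rho)}$, the second quantization of the contracting univalent semigroup $\phi_t$. I would prove boundedness of this factor by deriving a bound uniform in $R$ directly from the geometry, using the contraction property $\phi_t(\D) \subset \D$ together with the assumed smooth extension of the Koenigs map $\sigma$ to $S^1$ to keep the angle between $H^2(X_t)$ and the standard polarization bounded away from the degenerate value. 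This uniform-in-$R$ estimate is the one genuinely hard analytic step; everything else is bookkeeping.

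Finally, with boundedness secured, I would identify the strong limit geometrically by passing to the limit $R \downarrow 1$ in the commutation relations. Convergence of the Hardy spaces shows that $T_t$ satisfies exactly the relations determined by $H^2(X_t)$, and the value $T_t\Omega$, obtained as the limit of the vacuum amplitudes of the $T_{R,t}$, fixes the remaining scalar. Since these relations together with the vacuum amplitude determine the operator uniquely, this simultaneously yields the claimed Hardy-space characterization and confirms that the bounded strong limit agrees with the operator given by the explicit formula $T_t(a \otimes b) = Y(s^{L_0}a,w)e^{-tL(\rho)}b$.
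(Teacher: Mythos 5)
Your reduction is sound and coincides with the paper's own bookkeeping: the factorization $T_{R,t} = R^{-L_0}T_t$ on finite energy vectors is exactly Proposition \ref{propLimitIsRegularizedVertexOperator}, the uniqueness of an operator satisfying the $H^2(X_t)$ commutation relations with prescribed vacuum amplitude is Proposition \ref{propDimAtMostOne}, and the identification of the limit by passing to the limit in the commutation relations is how the proof of Theorem \ref{thmFermionPantsBoundedness} concludes. The difficulty is that the step you yourself single out as the crux --- the boundedness of $T_t$, equivalently the uniform-in-$R$ bound --- is not actually argued, and the mechanism you gesture at is not the right one. Boundedness of the limiting intertwiner is not governed by keeping ``the angle between $H^2(X_t)$ and the standard polarization bounded away from the degenerate value.'' By Theorem \ref{thmAdmissibleBoundedness}, the implementing operator attached to a map $r$ is bounded iff the off-diagonal block of $r$ is trace class and each diagonal block decomposes as (contraction) $+$ (trace class); for the annular factor the relevant diagonal block is the weighted composition operator $W_{\phi_t}$, whose approximation numbers $a_n(W_{\phi_t})$ tend to $1$ when $\phi_t(S^1)$ meets $S^1$, and boundedness is equivalent to the quantitative condition $\prod_n a_n(W_{\phi_t}) < \infty$ (Section \ref{secOutlookGeometry}). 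No qualitative angle statement delivers that. Note also that $e^{-tL(\rho)}$ is the ``second quantization'' of an operator that is \emph{not} a contraction, so the contraction property $\phi_t(\D)\subset\D$ does not by itself give boundedness of this factor.

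What actually closes the argument in the paper, and is absent from your proposal, is the quantum energy inequality of Fewster and Hollands (Theorem \ref{thmQEI}): since $\Re\rho \ge 0$ (because $\sigma(\interior{\D})$ is starlike), the QEI bounds $\Re\ip{L(\rho)\xi,\xi}$ from below, and the Lumer--Phillips theorem then shows $e^{-tL(\rho)}$ is bounded (Proposition \ref{propExponentiatedVirasoro}). Only then is Theorem \ref{thmAdmissibleBoundedness} run \emph{backwards} to conclude that $W_{\phi_t}$ is admissible, and forwards again to conclude that $S = \lim_{R\downarrow 1}T_{R,t}^*$ is bounded --- note that it is the adjoint, not $T_t$ itself, that is an implementing operator, since it is $T_{R,t}^*$ that sends the vacuum to a Segal CFT vector (Propositions \ref{propAdjointsConverge} and \ref{propAdjointsOnVacuum}). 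Without the QEI input, or a genuine substitute yielding the contraction-plus-trace-class decomposition of $W_{\phi_t}$, your proof does not close.
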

See Section \ref{secSegalCFTForDegenerateSurfaces} for a precise definition of the Hardy space of a degenerate Riemann surface and of the commutation relations which characterize $T_t$.
The most difficult part of the proof of Theorem \ref{thmIntroFermionLimitThm} is to show that the limit operator $T_t$ is bounded.
Our approach is to show that $T_t^*$ is an example of what we call an \emph{implementing operator} (defined in Section \ref{secImplementingOperators}), which in this case means that it arises as the second quantization of a bounded, not necessarily contractive, map $L^2(S^1) \to L^2(S^1) \oplus L^2(S^1)$.
We then prove that $T_t^*$ is bounded by combining a careful study of the boundedness of implementing operators in general (Theorem \ref{thmAdmissibleBoundedness}) with the `quantum energy inequality' of Fewster and Hollands \cite{FewsterHollands} for smeared Virasoro fields on the circle.
The boundedness of the maps $T_t$ is closely related to the concept of `local energy bounds' for fields, which will be appear in the forthcoming paper \cite{CarpiWeinerLocal}.

Theorem \ref{thmIntroFermionLimitThm} characterizes the value of the free fermion Segal CFT on degenerate Riemann surfaces with standard boundary trivializations, but one can check that changing the trivializations by (spin) diffeomorphisms $\gamma$ of $S^1$ corresponds to composition with a certain unitary representation $U(\gamma)$.
Given a fixed semigroup $(\phi_t)_{t \ge 0}$ as above, and a fixed choice of $t > 0$, we will be interested in pairs $(\gamma_1, \gamma_2)$ of spin diffeomorphisms which satisfy $\gamma_1(z) = \phi_t(\gamma_2(z))$ for all $z$ lying in some interval $I \subset S^1$; let $\mathscr{P}_{I}$ be the collection of all such pairs.

\begin{thmalpha}\label{thmIntroSubalgebraThm}
Let $(\phi_t)_{t \ge 0}$ be a one-parameter group of univalent maps $\phi_t: \D \to \D$ as above, and let $\rho$ be its generator.
Fix $t > 0$, and assume that $\phi_t(S^1) \cap S^1$ contains an interval.
Let $\F^0$ be the finite energy vectors of the free fermion vertex operator algebra, regarded as a subspace of its Hilbert space completion $\F$.
Let $V \subset (\F^0)^{\otimes N}$ be a unitary vertex operator subalgebra, and let $Y:V \to \End(V)[[x^{\pm 1}]]$ be its state-field correspondence.
For $a \in V$, let $T_{t;a} = Y(s^{L_0}a,w)e^{-tL(\rho)}$.
Then $T_{t;a}$ is bounded, and
$$
\cA_V(I) := \big\{ \,U(\gamma_1)^*T_{t,a}U(\gamma_2)\,,\, (U(\gamma_1)^*T_{t;a}U(\gamma_2))^* \,\, \big| \,\, a \in V, \,(\gamma_1,\gamma_2) \in \mathscr{P}_I\big\}^{\prime\prime}
$$
defines a conformal net on the Hilbert space completion of $V$, with conformal symmetry $U$ given by the positive energy representation of $\Diff_+(S^1)$ induced by the conformal vector of $V$.
\end{thmalpha}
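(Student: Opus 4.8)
The plan is to verify the defining axioms of a conformal net for the family $\cA_V(I)$ — isotony, locality, diffeomorphism covariance with positive energy, and the existence of a cyclic vacuum — drawing the analytic input from Theorem~\ref{thmIntroFermionLimitThm} (in its in-text form, Theorem~\ref{thmBoundednessAndExistence}) applied to the $N$-fold free fermion. First I would record that $\cA_V(I)$ genuinely acts on the Hilbert space completion $\cH_V$ of $V$: the representation $U$ integrates the unitary positive-energy Virasoro action attached to the conformal vector of $V$ and hence preserves $\cH_V$, while $T_{t;a} = Y(s^{L_0}a,w)e^{-tL(\rho)}$ preserves $\cH_V$ because $Y$ is the restriction to $V$ of the ambient free fermion state-field correspondence and $L(\rho)$ is assembled from the Virasoro field of $V$. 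Boundedness of $T_{t;a}$ for $a \in V \subset (\F^0)^{\otimes N}$ is then inherited from the free fermion case via the embedding: Theorem~\ref{thmBoundednessAndExistence}, applied to the tensor product of $N$ copies of the free fermion Segal CFT, exhibits the relevant degenerate-surface operators as bounded on $\F^{\otimes N}$, and restricting to the invariant subspace $\cH_V$ keeps them bounded. Since the $U(\gamma_i)$ are unitary, every generator of $\cA_V(I)$ is a bounded operator on $\cH_V$, so $\cA_V(I)$ is a von Neumann algebra.

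Covariance and isotony I expect to be essentially formal. For $\gamma \in \Diff_+(S^1)$, conjugation by $U(\gamma)$ carries $U(\gamma_1)^* T_{t;a} U(\gamma_2)$ to $U(\gamma_1\gamma^{-1})^* T_{t;a} U(\gamma_2\gamma^{-1})$, the projective cocycle cancelling in the conjugation; and since $\gamma_1 = \phi_t \circ \gamma_2$ on $I$ if and only if $\gamma_1\gamma^{-1} = \phi_t\circ(\gamma_2\gamma^{-1})$ on $\gamma(I)$, the transformed pair lies in $\mathscr{P}_{\gamma(I)}$. This yields $U(\gamma)\cA_V(I)U(\gamma)^* = \cA_V(\gamma(I))$, and Möbius covariance with positive energy is the restriction to $\Mob$, using that $U$ is a positive-energy representation. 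Isotony reduces to the monotonicity of $\mathscr{P}_I$ under inclusion of intervals, together with the geometric identification of the localization region of these operators, so that the physical localization interval is matched to the correct member of the family.

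The heart of the argument, and the step I expect to be the main obstacle, is locality: for intervals $I,I'$ whose localization regions are disjoint, $\cA_V(I)$ and $\cA_V(I')$ must commute. My approach would be to reduce this to the known locality of the free fermion net via the commutation-relation characterization in Theorem~\ref{thmIntroFermionLimitThm}. Concretely, each generator is controlled by the Hardy space of the degenerate Riemann surface $X_t$, reparametrized by the $\gamma_i$, which dictates its (anti)commutation relations with the $\CAR$ generators $a(f)$ and $a(g)^*$; when the localization data for $I$ and $I'$ are supported in complementary arcs, these relations force the two operators to commute on $\F^{\otimes N}$, and the commutation descends to $\cH_V$ since both operators preserve it. Two genuine subtleties arise here. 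First, the $\Z_2$ fermion grading must be tracked: the operators attached to the bosonic algebra $V$ are even, so Fermi locality yields honest commutation rather than anticommutation. Second, the covariance unitaries $U(\gamma)$ come from the conformal vector of $V$ rather than that of the ambient fermions, so the reduction is not a literal restriction of fixed free fermion operators but must be mediated by the common geometric (Segal gluing) structure of the degenerate surfaces; reconciling these two conformal symmetries is, I expect, where the real work lies.

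Finally I would establish the vacuum axioms. The VOA vacuum $\Omega \in V$ is fixed by $U(\Mob)$, and positivity of energy holds because $V$ is a unitary positive-energy VOA. Cyclicity follows from the state-field correspondence: as $a$ ranges over $V$, the vectors $T_{t;a}\Omega = Y(s^{L_0}a,w)e^{-tL(\rho)}\Omega$ span a dense subspace of $\cH_V$, using that $e^{-tL(\rho)}$ has dense range and that $Y(\,\cdot\,,w)\Omega$ recovers $V$; hence $\Omega$ is cyclic for $\cA_V(I)$. Combined with locality and covariance this upgrades to the Reeh--Schlieder property, so that $\Omega$ is cyclic and separating for every $\cA_V(I)$. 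Assembling isotony, locality, covariance, positivity of energy, and the cyclic vacuum yields that $\cA_V$ is a conformal net on $\cH_V$ with conformal symmetry $U$.
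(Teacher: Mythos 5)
Your overall architecture (reduce everything to the free fermion via the embedding, use the Hardy-space commutation relations to localize, Reeh--Schlieder for cyclicity) matches the paper's, and you correctly flag the reconciliation of the two conformal structures as the hard step. But the proposal stops exactly there, and that is where the paper's proof actually lives. The difficulty is that for a general unitary subalgebra $V \subset (\F^0)^{\otimes N}$ the operator $T_{t;a} = Y(s^{L_0}a,w)e^{-tL(\rho)}$ is built from $V$'s \emph{own} conformal vector: $L(\rho)$, $L_0$, and the covariance unitaries $U(\gamma_i)$ are not restrictions of the ambient fermionic ones unless $\nu^{(\F^0)^{\otimes N}} \in V$. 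So $\cH_V$ is not an invariant subspace of the ambient degenerate-surface operators in any way that recovers $T_{t;a}$, and your boundedness and locality arguments (``restricting to the invariant subspace $\cH_V$ keeps them bounded,'' ``the commutation descends to $\cH_V$'') do not go through as stated for the generic case (e.g.\ WZW subalgebras are never conformal subalgebras of the fermions). The paper's resolution is a specific two-step reduction you would need to supply: (i) Proposition \ref{propWandCommutantGenerateTensorProduct} embeds $V \otimes V^c$ unitarily as a \emph{conformal} subalgebra $\tilde V = \{a_{(-1)}b\}$ of $(\F^0)^{\otimes N}$, and Proposition \ref{propTensorProductGood} lets one pass between $\tilde V$ and the tensor factor $V$; (ii) for conformal subalgebras $W$, the compression identity $e_W T_\xi e_W = T_{e_W\xi}e_W = e_W T_{e_W\xi}$ (the super version of \cite[Lem.~5.28]{CKLW15}) identifies $LV(I;W)$ with $e_W\,LV(I;V)\,e_W$, after which Theorem \ref{thmSubnetsAreNets} (covariant subnets of Fermi conformal nets carry a unique diffeomorphism symmetry making them conformal nets) delivers the axioms wholesale rather than one at a time.

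Two smaller discrepancies with the paper's route are worth noting. First, even for the fermion itself the paper never verifies locality of $\cA_{\F^0}$ directly: it shows each localized operator supercommutes with $a(f)$, $a(f)^*$ for $f$ supported in $I'$ (via Mergelyan's theorem producing elements $(f,0,f) \in H^2(X)$) and then invokes twisted Haag duality to conclude $LV(I;\F) \subset \cM(I)$; locality, covariance, and Reeh--Schlieder are then inherited from the known free fermion net, and the real content is that $LV(I;\F)$ \emph{generates} $\cM(I)$ (again Mergelyan plus Proposition \ref{propSubnetTrivial}). Your plan of showing two localized operators commute with each other ``because of the Hardy space relations'' would still need this Haag-duality step, since the operator localized in $I'$ is an element of a von Neumann algebra, not a CAR generator. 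Second, your cyclicity argument via $T_{t;a}\Omega = Y(s^{L_0}a,w)\Omega$ is fine in spirit (and $e^{-tL(\rho)}\Omega = \Omega$ since $\rho$ is holomorphic on $\D$), but it only applies to intervals $I$ with $I'$ contained in $\phi_t(S^1)\cap S^1$, so one must first establish covariance to propagate it to all intervals --- which is the order the paper uses.
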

Theorem \ref{thmFermiSubalgebrasAreGood} gives a more detailed statement of Theorem \ref{thmIntroSubalgebraThm} which also addresses vertex operator \emph{super}algebras $V$ (which produce \emph{Fermi} conformal nets).
While the results of \cite{CKLW18} are not stated for superalgebras and Fermi conformal nets, they should still hold in that case with minimal modification.
Assuming the `super version' of these results, one can show that the conformal nets constructed in Theorem \ref{thmIntroSubalgebraThm} are isomorphic to the ones constructed in \cite{CKLW18} (see Remark \ref{rmkCKLWComparison}).
Note that while we will cite several results from \cite{CKLW18} on the structure of unitary vertex operator (super)algebras, our construction of conformal nets is entirely independent.
We call our construction a `geometric realization', as the generators of local algebras arise as limits of a Segal CFT, which may be depicted as degenerate Riemann surfaces with states inserted.
We outline our construction in Figure \ref{figFieldsToAlgebras}.
\begin{figure}[!ht]
$$
\begin{array}{ccccc}
\begin{tikzpicture}[baseline={([yshift=-.5ex]current bounding box.center)}]
	\filldraw[fill=red!10!blue!20!gray!30!white] (0,0) circle (1cm);
	\filldraw[fill=white] (0,0) circle (0.3cm);
	\filldraw[fill=white] (160:.65cm) circle (.25cm); 
	\node at (160:.65cm) {$a$};
\end{tikzpicture} 
&
\quad
\leadsto
\quad
&
\begin{tikzpicture}[scale=.9,baseline={([yshift=-.5ex]current bounding box.center)}]
	\coordinate (a) at (120:1cm);
	\coordinate (b) at (240:1cm);
	\coordinate (c) at (180:.25cm);
	\fill[fill=red!10!blue!20!gray!30!white] (0,0) circle (1.2cm);
	\draw (0,0) circle (1.2cm);
	\fill[fill=white] (a)  .. controls ++(210:.6cm) and ++(90:.4cm) .. (c) .. controls ++(270:.4cm) and ++(150:.6cm) .. (b) -- ([shift=(240:1cm)]0,0) arc (240:480:1cm);
	\draw ([shift=(240:1cm)]0,0) arc (240:480:1cm);
	\draw (a) .. controls ++(210:.6cm) and ++(90:.4cm) .. (c);
	\draw (b) .. controls ++(150:.6cm) and ++(270:.4cm) .. (c);
	\filldraw[fill=white] (180:.65cm) circle (.25cm); 
	\node at (180:.65) {$a$};
\end{tikzpicture}
&
\qquad
\leadsto
&
\begin{tikzpicture}[baseline={([yshift=-.5ex]current bounding box.center)}]
	\coordinate (a) at (120:1cm);
	\coordinate (b) at (240:1cm);
	\coordinate (c) at (180:.25cm);
	\fill[fill=red!10!blue!20!gray!30!white] (0,0) circle (1cm);
	\draw (0,0) circle (1cm);
	\fill[fill=white] (a)  .. controls ++(210:.6cm) and ++(90:.4cm) .. (c) .. controls ++(270:.4cm) and ++(150:.6cm) .. (b) -- ([shift=(240:1cm)]0,0) arc (240:480:1cm);
	\draw ([shift=(240:1cm)]0,0) arc (240:480:1cm);
	\draw (a) .. controls ++(210:.6cm) and ++(90:.4cm) .. (c);
	\draw (b) .. controls ++(150:.6cm) and ++(270:.4cm) .. (c);
	\filldraw[fill=white] (180:.65cm) circle (.25cm); 
	\node at (180:.65) {$a$};
\end{tikzpicture}
\smallskip
\\
\mbox{A field } Y(a,w)
\bigskip
&&
\begin{minipage}{4cm}
\centering 
A path of operators from the Segal CFT
\end{minipage}
&&
\begin{minipage}{5cm}
\centering
A bounded local operator
$ U(\gamma_1)^*Y(s^{L_0}a,w)e^{-tL(\rho)} U(\gamma_2)$
\end{minipage}
\end{array}
$$
\captionsetup{justification=centering,width=0.8\linewidth}
\caption{Moving continuously from the fields of a 2d chiral CFT to the local algebras via Segal CFT.}
\label{figFieldsToAlgebras}
\end{figure}
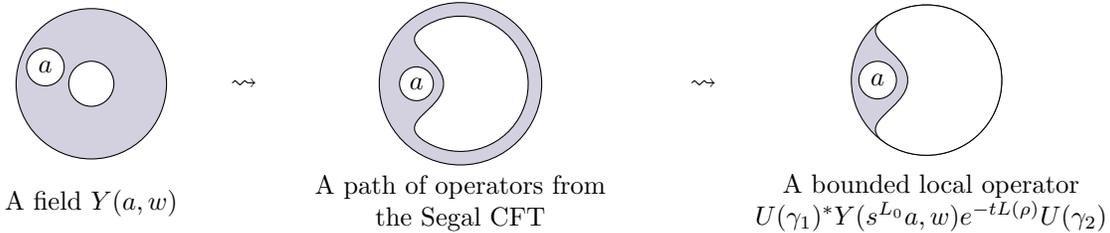

In contrast to the construction in \cite{CKLW18}, where the local algebras are generated by unbounded operators, our construction gives a linear map identifying states in the vertex operator algebra with \emph{bounded} generators of local algebras.
This approach has considerable upside, as bounded operators arising from fields are considerably easier to work with from a technical standpoint, as demonstrated in Wassermann's computation of the fusion rules for $SU(N)_k$ in \cite{Wa98}.
Moreover, geometric ideas have already proven to be valuable in the study of conformal nets, such as in the recent article  \cite{MorinelliTanimotoWeiner18} of Morinelli, Tanimoto, and Weiner which proved the long-held conjecture that conformal nets automatically satisfy the split property.

Of course, one is led to ask which unitary vertex operator algebras Theorem \ref{thmIntroSubalgebraThm} applies to; that is, which appear as subalgebras of $N$ free fermions, for some $N$.
We provide many examples of such VOAs in Section \ref{secOtherSuperalgebras}, and these include the free boson, lattice VOAs corresponding to sublattices of $\Z^N$, the discrete series of (super) Virasoro VOAs (and many other Virasoro VOAs), and affine VOAs corresponding to simple Lie algebras $\g$, at levels $k \Delta_\g$ ($k = 1, 2, \ldots$), where $\Delta_\g \in \Z_{> 0}$ is a constant depending on $\g$.

We see no reason that Theorem \ref{thmIntroSubalgebraThm} should be limited to subalgebras of free fermions; 
embeddings into free fermions are simply a technical tool useful for establishing analytic properties of fields.
Such embeddings were used for the same reason in Wassermann's paper \cite{Wa98}, as well as in a forthcoming paper of Carpi, Weiner and Xu relating representations of conformal nets and representations of vertex operator algebras \cite{CarpiWeinerXu}.
In the future, we hope to extend our results to a larger class of unitary VOAs (e.g. \emph{all} affine models, \emph{all} Virasoro models, \emph{all} lattice models).

\subsection{Acknowledgments}

I am grateful to Andr\'e Henriques for numerous helpful discussions on the topics of this paper, and in particular for explaining his geometric perspective on conformal nets.
I would also like to thank Marcel Bischoff, Sebastiano Carpi, Vaughan Jones, Yasuyuki Kawahigashi, Roberto Longo, David Penneys, Peter Teichner, Antony Wassermann, Mih\'aly Weiner, and Feng Xu for enlightening conversations.
I am very grateful for the hospitality and support of the Max Planck Institute for Mathematics, Bonn, without which this project would not have been possible.
This work was also supported in part by NSF grant DMS 0856316.

\settocdepth{subsection}
\section{Preliminaries}\label{secPreliminaries}

\subsection{Fermionic second quantization and the free fermion Segal CFT}\label{subsecFermions}

We will briefly outline fermionic second quantization and the free fermion Segal CFT; for a more detailed overview, the reader may consult \cite{Ten16}.

\subsubsection{Fermionic fock space}\label{subsecFermionicFockSpace}

Let $H$ and $K$ be complex Hilbert spaces.
We will write $\cB(H, K)$ for the space of bounded linear maps from $H$ to $K$, abbreviated $\cB(H)$ when $H=K$, and $\norm{x}$ for the operator norm of $x \in \cB(H,K)$.
For $p \ge 1$ and $x \in \cB(H,K)$, let $\norm{x}_p = \tr((x^*x)^{p/2})^{1/p}$, and let
$$
\cB_p(H,K) = \{ x \in \cB(H,K) : \norm{x}_p < \infty\}.
$$
We will be primarily interested in the cases $p=1$ and $p=2$.
In these cases, elements of $\cB_p(H,K)$ are called \emph{trace class} and \emph{Hilbert-Schmidt} maps, respectively.
We will write $\cB_p(H)$ for $\cB_p(H,H)$, which is a two-sided ideal of $\cB(H)$.
Note that $\cB_p(H) \subseteq \cB_q(H)$ when $p \le q$.

Given a complex Hilbert space $H$, $\CAR(H)$ is the universal unital $C^*$-algebra with generators $a(f)$ for $f \in H$ which are linear in $f$ and satisfy the relations
\begin{align*}
a(f)a(g) + a(g)a(f) &= 0,\\
a(f)a(g)^* + a(g)^* a(f) &= \ip{f,g} \Id.
\end{align*}

There is an irreducible, faithful representation of $\CAR(H)$ on the Hilbert space
$$
\Lambda H = \bigoplus_{k =0}^\infty \Lambda^k H,
$$
densely defined by $a(f)\omega = f \wedge \omega$. 
These operators are bounded with $\norm{a(f)} = \norm{f}$. 
The subspace $\Lambda^0 H$ is spanned by a distinguished unit vector $\Omega$ which satisfies $a(f)^*\Omega = 0$ for all $f \in H$. 

The exterior Hilbert space $\Lambda H$ is naturally a super Hilbert space, with $\Z/2$-grading given by
\begin{equation*}\label{eqFockGrading}
(\Lambda H)^j = \bigoplus_{k =0}^\infty \Lambda^{2k+j} H
\end{equation*}
for $j \in \{0,1\}$.
We will write $\Gamma_{\F_{H,p}}$ for the grading operator which acts by $(-1)^j$ on $\big(\Lambda H\big)^j$.

There is a family of irreducible, faithful representations of $\CAR(H)$ indexed by projections $p \in \cB(H)$ given as follows. 
Let $H_p = (pH)^* \oplus (1-p)H$, and let $\F_{H,p} = \Lambda H_p$. 
The representation $\pi_p:\CAR(H) \to \cB(\Lambda H_p)$ is given by
$$
\pi_p(a(f)) = a((pf)^*)^* + a((1-p)f).
$$
We will generally write $a(f)$ instead of $\pi_p(a(f))$ when the space that $\CAR(H)$ acts on is clear.

The distinguished unit vector $\Omega_p \in \Lambda^0 H_p$ is characterized, up to scalar multiples, by the equations
\begin{align*}
a(f)\Omega_p &= 0 & \mbox{ for } & f \in pH,\\
a(g)^*\Omega_p &= 0 & \mbox{ for } & g \in (1-p)H .
\end{align*}

The following result is often called the Shale-Stinespring condition, or the  Segal equivalence criterion.

\begin{Theorem}\label{thmShaleStinespring}
Let $H$ be a Hilbert space and let $p$ and $q$ be projections on $H$.
Then there exists a unit vector $\hOmega_q \in \F_{H,p}$ which satisfies
\begin{align*}
a(f)\hOmega_q &= 0 & \mbox{ for } & f \in qH,\\
a(g)^*\hOmega_q &= 0 & \mbox{ for } & g \in (1-q)H,
\end{align*}
if and only if $p-q$ is Hilbert-Schmidt.
If these conditions are satisfied, the vector $\hOmega_q$ will be unique up to scalar multiple.
\end{Theorem}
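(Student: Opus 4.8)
The plan is to realize $\hOmega_q$ as a \emph{squeezed vacuum}: a vector obtained from $\Omega_p$ by exponentiating a pair-creation operator and then normalizing, and to show that the normalization is finite precisely when $p-q$ is Hilbert--Schmidt. Write $V = H_p = (pH)^* \oplus (1-p)H$, so that $\F_{H,p} = \Lambda V$, the vacuum $\Omega_p$ is annihilated by every $a(v)^*$ with $v \in V$, and under $\pi_p$ the generator $a(f)$ acts as the creation operator $a((1-p)f)$ plus the annihilation operator $a((pf)^*)^*$.

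First I would reformulate the two defining conditions as a single family of annihilation conditions. Asking that $a(f)\hOmega_q = 0$ for $f \in qH$ and $a(g)^*\hOmega_q = 0$ for $g \in (1-q)H$ is exactly asking that $\hOmega_q$ be a joint vacuum for the operators
$$
\big\{\, a((pf)^*)^* + a((1-p)f) \ :\ f \in qH \,\big\} \ \cup\ \big\{\, a((1-p)g)^* + a((pg)^*) \ :\ g \in (1-q)H \,\big\},
$$
each a sum of one annihilation and one creation operator on $\Lambda V$. This is the standard setup for a Bogoliubov vacuum, and it pins down a pair-creation kernel as follows. Decompose $q$ into blocks relative to $H = pH \oplus (1-p)H$, with off-diagonal part $b = (1-p)qp \colon pH \to (1-p)H$, and take a pair-creation operator $T = \sum_{i,j} t_{ij}\, a(\epsilon_i^*)a(\delta_j)$ creating one excitation in $(pH)^*$ and one in $(1-p)H$, with kernel $\tau = (t_{ij})$. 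Using the commutator identity $a(\eta)^* e^{T}\Omega_p = e^{T} a(\tau\eta)\Omega_p$ (valid since $[a(\eta)^*,T]$ is itself a creation operator commuting with $e^T$), the annihilation conditions above collapse to the single linear equation $\tau(pf)^* = -(1-p)f$ for $f\in qH$; I would solve this, obtaining $\tau$ as an explicit expression in the blocks of $q$ (essentially $b$ composed with the inverse of the invertible diagonal part), and check that the second family of conditions yields the same $\tau$ by self-adjointness of $q$.

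The analytic heart of the argument is the identity $\norm{e^{T}\Omega_p}^2 = \det(1 + \tau^*\tau)$, obtained by diagonalizing $\tau$ via its singular value decomposition so that each pair of modes contributes a factor $1+s_k^2$; from this, $e^{T}\Omega_p$ has finite norm if and only if $\sum_k s_k^2 < \infty$, i.e.\ if and only if $\tau$ is Hilbert--Schmidt, equivalently $b$ is Hilbert--Schmidt. The complication I expect to be the main obstacle is the \emph{commuting} part, namely the subspaces $pH \cap (1-q)H$ and $(1-p)H \cap qH$ on which $p$ and $q$ disagree diagonally: the pair exponential does nothing there, so these modes must be filled or emptied separately, and $p-q \in \cB_2(H)$ forces both subspaces to be finite-dimensional. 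On them $\hOmega_q$ is built from $\Omega_p$ by applying finitely many operators $a((1-p)g)^*$ and $a((pf)^*)$, and assembling the generic (squeezed) and commuting (finite-rank) contributions produces a genuine unit vector satisfying the required relations. Uniqueness up to a scalar then follows because the joint vacuum condition determines $\hOmega_q$ once these lowest pieces are fixed.

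For the converse I would argue directly with the number operator $N_p$ of $\Lambda V$ rather than reversing the construction. If a unit vector $\hOmega_q$ satisfying the relations exists, the defining annihilation conditions force its two-point functions to take the quasi-free values determined by $q$, and computing $\ip{\hOmega_q, N_p\, \hOmega_q}$ against the $p$-polarization yields, up to the contributions of the commuting overlaps, the quantity $\tr(b^*b) = \norm{b}_2^2$. Finiteness of this expectation therefore forces $b$ to be Hilbert--Schmidt and the overlap subspaces to be finite-dimensional, which is exactly $p - q \in \cB_2(H)$, closing the equivalence.
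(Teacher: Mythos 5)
The paper does not prove Theorem \ref{thmShaleStinespring}; it is quoted as the classical Shale--Stinespring/Segal criterion (cf.\ the references to \cite{PrSe86} and \cite{Wa98}), so there is no in-paper argument to compare against and your proposal must stand on its own. Your squeezed-vacuum construction is the standard route, and the forward implication is sound as sketched: the commutator identity $a(\eta)^*e^T\Omega_p = e^T a(\tau\eta)\Omega_p$ is correct (the commutator $[a(\eta)^*,T]$ is a creation operator, hence commutes with $e^T$), the equation $\tau(pf)^* = -(1-p)f$ is the right one, the norm formula $\norm{e^T\Omega_p}^2 = \det(1+\tau^*\tau)$ converges exactly when $\tau \in \cB_2$, and the exceptional subspaces $pH\cap(1-q)H$ and $(1-p)H\cap qH$ (eigenspaces of $p-q$ for $\pm 1$) are indeed finite-dimensional under the hypothesis and are handled by filling/emptying finitely many modes.

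The converse direction, however, contains a genuine gap. You compute $\ip{\hOmega_q, N_p\hOmega_q}$ from the quasi-free two-point functions and then conclude from ``finiteness of this expectation'' that the off-diagonal block is Hilbert--Schmidt. But finiteness is exactly what is at stake: a unit vector of $\F_{H,p}$ need not lie in the form domain of the number operator, and the quantity $\ip{\hOmega_q, N_p\hOmega_q} = \sum_i\norm{a(\epsilon_i)\hOmega_q}^2 + \sum_j\norm{a(\delta_j)^*\hOmega_q}^2$ is a priori only an element of $[0,\infty]$; evaluating each term via the state shows it equals $\tr(p(1-q)p)+\tr((1-p)q(1-p)) = \norm{p-q}_2^2$ exactly, so asserting its finiteness is asserting the conclusion. (Note also that this is not $\tr(b^*b)$ plus overlap corrections; the diagonal blocks of $q$ contribute.) The standard repair is to decompose $\hOmega_q = \sum_k\omega_k$ into particle-number components and take $k_0$ minimal with $\omega_{k_0}\neq 0$: the creation halves of the relations force every mode of $(1-p)H\cap qH$ and of $(pH\cap(1-q)H)^*$ to be occupied in $\omega_{k_0}$, so these subspaces are finite-dimensional, and the degree-$(k_0+2)$ component of the relations identifies $\omega_{k_0+2}$ with the pair operator of kernel $\tau$ applied to $\omega_{k_0}$, whence $\norm{\tau}_2^2 \le \norm{\omega_{k_0+2}}^2/\norm{\omega_{k_0}}^2 < \infty$. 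Only after this does $p-q\in\cB_2(H)$ follow. Your uniqueness claim is fine, but it rests on the same recursion, so it is worth making that recursion explicit rather than routing the converse through an unproven domain assertion.
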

If the conditions of Theorem \ref{thmShaleStinespring} hold, then there is a unitary isomorphism $\F_{H,q} \to \F_{H,p}$ of representations of $\CAR(H)$ determined by $\Omega_q \mapsto \hOmega_q$.

If $I = \{i_1, \ldots, i_n\}$ is a finite ordered set with $i_1 < i_2 < \ldots < i_n$, and $\{h_i\}_{i \in I} \subset H$ is a family of vectors from $H$ indexed by $I$, then we write
\begin{equation}\label{eqnFermionProductNotationEarly}
a(h_I) := a(h_{i_1}) \cdots a(h_{i_n}) \in \CAR(H).
\end{equation}

We will generally consider the case when $H$ is separable and $pH$ and $(1-p)H$ are both infinite dimensional.
We will then choose an orthonormal basis $\{e_i\}_{i \in \Z}$ with $e_i \in pH$ when $i \ge 0$ and $e_i \in (1-p)H$ when $i < 0$.
Then $\F_{H,p}$ has an orthonormal basis indexed by finite subsets $I \subset \Z_{\ge 0}$ and $J \subset \Z_{<0}$ given by
$$
a(e_J)a(e_I)^*\Omega_p.
$$

A key property of the Fock space construction is that it takes direct sums to tensor products.
\begin{Proposition}\label{propFockSumToTensor}
There is a natural even unitary isomorphism
$
\mathcal{F}_{H \oplus K, p \oplus q} \cong \mathcal{F}_{H,p} \otimes \mathcal{F}_{K,q}
$
characterized by 
$$
a(h_J)a(h_I)^*a(k_{J^\prime})a(k_{I^\prime})^*\Omega_{p\oplus q} \mapsto a(h_J)a(h_I)^*\Omega_p \otimes a(k_{J^\prime})a(k_{I^\prime})^*\Omega_q,
$$
where $h_J$ and $h_I$ are ordered families of vectors from $H$ and $k_{J^\prime}$ and $k_{I^\prime}$ are ordered families of vectors from $K$.
The induced action of $\CAR(H \oplus K)$ on $\mathcal{F}_{H,p} \otimes \mathcal{F}_{K,q}$ is
\begin{equation}\label{eqTensorCARReps}
a(h+k) \mapsto a(h) \otimes 1 + \Gamma_{\F_{H,p}} \otimes a(k)
\end{equation}
where $\Gamma_{\F_{H,p}}$ is the grading operator on $\F_{H,p}$.
\end{Proposition}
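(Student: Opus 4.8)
The plan is to reduce the entire statement to the exponential law for exterior algebras, $\Lambda(U \oplus W) \cong \Lambda U \grotimes \Lambda W$, applied after a purely formal identification of the one-particle spaces. First I would unwind the definition of the one-particle space attached to the projection $p \oplus q$ on $H \oplus K$. Since $(p \oplus q)(H \oplus K) = pH \oplus qK$ and $(1 - p \oplus q)(H \oplus K) = (1-p)H \oplus (1-q)K$, and using the canonical identification $(V \oplus W)^* \cong V^* \oplus W^*$, one obtains
$$
(H \oplus K)_{p \oplus q} = (pH)^* \oplus (qK)^* \oplus (1-p)H \oplus (1-q)K .
$$
A reordering of these four mutually orthogonal summands is a canonical even unitary onto $H_p \oplus K_q$, where $H_p = (pH)^* \oplus (1-p)H$ and $K_q = (qK)^* \oplus (1-q)K$.

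Next I would invoke the exponential law: the functor $\Lambda$ carries orthogonal direct sums to graded tensor products, yielding a natural even unitary
$$
\Phi \colon \F_{H \oplus K, p \oplus q} = \Lambda\big((H\oplus K)_{p \oplus q}\big) \;\cong\; \Lambda H_p \grotimes \Lambda K_q ,
$$
whose underlying Hilbert space is $\F_{H,p} \otimes \F_{K,q}$. Unitarity holds because $\Phi$ sends an orthonormal basis of wedge monomials to elementary tensors of wedge monomials, and evenness because $\Phi$ preserves total exterior degree. I would then record the defining property of this isomorphism: left wedge multiplication by a vector $u$ in the first summand corresponds to $a(u)^* \otimes 1$, while multiplication by a vector $v$ in the second corresponds to $\Gamma_{\F_{H,p}} \otimes a(v)^*$ (with adjoints for the contraction operators), the sign being exactly the Koszul sign built into $\grotimes$.

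With $\Phi$ in hand I would transport the $\CAR(H \oplus K)$ action. For $h \in H$, the vector $h \oplus 0$ has $(p\oplus q)$-part in $(pH)^*$ and complementary part in $(1-p)H$, so $\pi_{p\oplus q}(a(h)) = a((ph)^*)^* + a((1-p)h)$ is assembled from operators attached to the first summand and becomes $\pi_p(a(h)) \otimes 1$; symmetrically $\pi_{p\oplus q}(a(k))$ becomes $\Gamma_{\F_{H,p}} \otimes \pi_q(a(k))$ for $k \in K$. Linearity in $f = h + k$ then gives precisely the asserted induced action $a(h+k) \mapsto a(h) \otimes 1 + \Gamma_{\F_{H,p}} \otimes a(k)$. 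Finally, the characterizing formula on spanning vectors follows from the intertwining property together with $\Phi(\Omega_{p\oplus q}) = \Omega_p \otimes \Omega_q$: the image of $a(h_J)a(h_I)^*$ is a first-factor operator $[\,\cdot\,] \otimes 1$, while the image of $a(k_{J'})a(k_{I'})^*$ equals $\Gamma_{\F_{H,p}}^{\,|J'|+|I'|} \otimes [\,\cdot\,]$; applying these to $\Omega_p \otimes \Omega_q$ and using $\Gamma_{\F_{H,p}}\Omega_p = \Omega_p$ collapses every grading sign and returns exactly $a(h_J)a(h_I)^*\Omega_p \otimes a(k_{J'})a(k_{I'})^*\Omega_q$.

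The argument is entirely formal, so I expect no genuinely hard step; the only real care needed is the sign bookkeeping — confirming that the Koszul signs of the graded tensor product produce exactly the grading operator $\Gamma_{\F_{H,p}}$ in the induced action, and then that these signs act trivially on the vacuum so the clean formula on the spanning set survives. A reader wishing to avoid the abstract exponential law could instead fix the orthonormal bases described just before the statement and verify unitarity and intertwining directly on basis vectors, at the cost of fixing an ordering convention for the combined basis and tracking the resulting permutation signs.
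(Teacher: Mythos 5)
Your argument is correct. The paper states Proposition \ref{propFockSumToTensor} without proof, treating it as a standard property of the Fock construction, so there is no in-paper argument to compare against; the route you take --- identifying $(H\oplus K)_{p\oplus q}$ with $H_p\oplus K_q$, invoking the exponential law $\Lambda(U\oplus W)\cong \Lambda U\grotimes\Lambda W$, and then transporting the $\CAR$ generators and the vacuum --- is exactly the standard argument one would supply. One notational slip: with the paper's conventions $a(u)$ denotes wedge multiplication and $a(u)^*$ contraction, so left wedge multiplication by $u$ in the first summand corresponds to $a(u)\otimes 1$ (not $a(u)^*\otimes 1$) and by $v$ in the second to $\Gamma_{\F_{H,p}}\otimes a(v)$; since your actual transport of $\pi_{p\oplus q}(a(h))=a((ph)^*)^*+a((1-p)h)$ uses the correct assignment, nothing downstream is affected. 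The one genuinely delicate point --- that the powers of $\Gamma_{\F_{H,p}}$ accumulated from the $K$-operators land on $\Omega_p$ and act trivially, which is what makes the clean formula on the spanning set come out without signs --- is handled correctly.
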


Let $\cU(H)$ be the group of unitary operators on $H$, and let 
$$
\cU_{res}(H,p) = \{ u \in \cU(H) : [p, u] \in \cB_2(H) \}
$$
where $[a,b]$ is the commutator $ab-ba$.
There is a strongly continuous projective unitary representation of $\cU_{res}(H,p)$ on $\F_{H,p}$, called the \emph{basic representation}, such that the image $U$ of $u \in \cU_{res}(H,p)$ is characterized by $Ua(f)U^* = a(uf)$.
The image of the vacuum under $U$ is given by
\begin{equation}\label{eqnBasicRepOnVacuum}
U\Omega_p = \hat \Omega_q,
\end{equation}
where $q= upu^*$ and $\hat \Omega_q$ is as in Theorem \ref{thmShaleStinespring}.
Note that $\hat \Omega_q$ only depends on $q$, and not on $u$.
For more details on the basic representation, one may consult \cite[\S 3]{Wa98}, \cite[\S 10]{PrSe86}, or \cite[\S2.1]{Ten16}.

In the following, fix $H = L^2(S^1)$ (with normalized arclength measure on $S^1$), and $p \in \cB(H)$ to be the projection of $H$ onto the classical Hardy space
$$
H^2(\D) = \overline{\Span \{ z^n : n \ge 0\}}.
$$
In this case we will just write $\F$ for $\F_{H,p}$ and $\Omega$ for the vaccum vector $\Omega_p$.
We will refer to $\F$ as \emph{fermionic Fock space}.

Let $\Diff_+(S^1)$ be the group of orientation preserving diffeomorphisms of the unit circle $S^1 \subset \C$, and let $C^\infty(S^1)^{\times}$ be the group of non-vanishing smooth functions $S^1 \to \C$ under pointwise multiplication.
Then $\Diff_+(S^1)$ acts on $C^\infty(S^1)^{\times}$ by automorphisms via $\gamma \cdot f = f \circ \gamma^{-1}$, and we can form the semidirect product $C^\infty(S^1)^\times \rtimes \Diff_+(S^1)$. 
Let $\Diff_+^{NS}(S^1)$ be the double cover of $\Diff_+(S^1)$ given as a subgroup of $C^\infty(S^1)^\times \rtimes \Diff_+(S^1)$ by
$$
\Diff_+^{NS}(S^1) = \{ (\psi, \gamma) \in C^\infty(S^1)^\times \rtimes \Diff_+(S^1) \, : \, \psi^2 = (\gamma^{-1})^\prime \}.
$$
Here, and throughout, if $f \in C^\infty(S^1)$, then we write $f^\prime$ for the complex derivative
$$
f^\prime(z) = \frac{d}{dz} f(z) = \frac{1}{iz} \frac{d}{d\theta} f(e^{i \theta}) \Big|_{z=e^{i \theta}}.
$$

Let $u_{NS}:\Diff_+^{NS}(S^1) \to \cU(H)$ be the unitary representation given by $u_{NS}(\psi, \gamma)f = \psi \cdot (f \circ \gamma^{-1})$.
Then $u_{NS}(\psi, \gamma) \in U_{res}(H, p)$, and by composing with the basic representation we get a projective unitary representation $U_{NS}:\Diff_+^{NS}(S^1) \to \cPU(H)$, which is strongly continuous when $\Diff_+^{NS}(S^1)$ is given, for example, the $C^\infty$ topology.
The representing operators $U_{NS}(\psi,\gamma)$ are even for all $(\psi,\gamma) \in \Diff_+^{NS}(S^1)$.
See e.g. \cite[\S2.1]{Ten16} for an expanded discussion of this representation.

Let $r_\theta \in \Diff_+(S^1)$ be the map $r_\theta z = e^{i \theta}z$.
By Stone's theorem, there is a self-adjoint operator $L_0$, in our case unbounded, such that
\begin{equation}\label{eqnIntroduceL0}
U_{NS}(e^{-i \theta/2}, r_{\theta}) = e^{i \theta L_0}.
\end{equation}
If we write $e_j$ for the function $z^j \in H$, then $\{e_j\}_{j \in \Z}$ is an orthonormal basis for $H$, with $e_j \in pH$ when $j \ge 0$ and $e_j \in (1-p)H$ when $j < 0$.
The corresponding basis $a(e_J)a(e_I)^*\Omega$ for $\F$ diagonalizes $L_0$, and one has
$$
L_0a(e_J)a(e_I)^*\Omega = \Big( \sum_{i \in I} (i + \tfrac12) - \sum_{j \in J} (j + \tfrac12) \Big) a(e_J)a(e_I)^*\Omega
$$
where $J \subset \Z_{< 0}$ and $I \subset \Z_{\ge 0}$ are finite subsets.
Note that the eigenvalues of $L_0$ are $\tfrac12 \Z_{\ge 0}$, and each eigenspace is finite dimensional.
We denote by $\F^0$ the algebraic span of the eigenvectors of $L_0$, and write $\F_{\le N} \subset \F^0$ for the finite dimensional subspace spanned by eigenvectors of $L_0$ with eigenvalue at most $N$.

\subsubsection{The free fermion Segal CFT}\label{subsubsecFreeFermionSegalCFT}

Let $\Sigma$ be a Riemann surface.
A \emph{spin structure} on $\Sigma$ is a holomorphic line bundle $L \to \Sigma$, and a holomorphic isomorphism $L \otimes L \to K_\Sigma$, where $K_\Sigma$ is the holomorphic cotangent bundle.
We will refer to a Riemann surface $\Sigma$ equipped with a spin structure as a \emph{spin Riemann surface.}

If $L_1$ and $L_2$ are spin structures on $\Sigma_1$ and $\Sigma_2$, respectively, then an isomorphism of spin structures $L_1 \to L_2$ is a holomorphic isomorphism of bundles $B:L_1 \to L_2$ such that the diagram
$$
\begin{CD}
L_1 \otimes L_1 @>{B \otimes B}>> L_2 \otimes L_2\\
@V\Phi_1 VV @VV\Phi_2 V\\
K_{\Sigma_1} @<{B|_{\Sigma_1}}^*<< K_{\Sigma_2}
\end{CD}
$$
commutes, where $B|_\Sigma^*$ is the pullback of holomorphic $1$-forms induced by the biholomorphic map $B|_{\Sigma_1}:\Sigma_1 \to \Sigma_2$.

Up to isomorphism there is a unique spin structure on $\C$.
It is given by the trivial bundle $L = \C \times \C$, and the isomorphism $\Phi:L \otimes L \to \K_\C$ is given on sections by 
\begin{equation}\label{eqnStandardCSpin}
\Phi_*(f \otimes g) = fg \, dz.
\end{equation}

If $0 < r < 1$, then the annulus $\bbA_r = \{ r \le \abs{z} \le 1\}$ has two non-isomorphic spin structures, called the Neveu-Schwarz (NS) and Ramond (R) spin structures, again given by the trivial bundle $L = \bbA_r \times \C$. For $\sigma \in \{NS,R\}$, the isomorphisms $\Phi_\sigma: L \otimes L \to K_{\bbA_r}$ are given by
$$
{\Phi_\sigma}_*(f \otimes g) = \left\{\begin{array}{cc} -ifg \, dz & \sigma=NS \\ -i fg z^{-1}dz & \sigma =R \end{array} \right.
$$
We denote these spin annuli by $(\bbA_r, \sigma)$.

If $Y$ is a closed, smooth $1$-manifold, then a {\it spin structure} on $Y$ is a smooth, complex line bundle $L$ and an isomorphism of complex line bundles $\phi:L \otimes L \to T^*Y_\C$, where $T^*Y_\C = T^*Y \otimes_\R \C$.

An isomorphism of spin structures $(Y_1, L_1) \to (Y_2, L_2)$ is a smooth bundle map $\beta:L_1 \to L_2$ such that
\begin{equation}\label{eqnCircleSpinIso}
\begin{CD}
L_1 \otimes L_1 @>{\beta \otimes \beta}>> L_2 \otimes L_2\\
@V\phi_1 VV @VV\phi_2 V\\
{T^*Y_1}_\C @<{\beta|_{Y_1}}^*<< {T^*Y_2}_\C
\end{CD}
\end{equation}
where ${\beta|_{Y_1}}^*$ is the isomorphism of cotangent bundles induced by the diffeomorphism $\beta|_{Y_1}:Y_1 \to Y_2$.

If $(\Sigma, L)$ is a compact spin Riemann surface with non-empty boundary, then $\partial \Sigma$ becomes a spin 1-manifold by identifying $T^*\partial\Sigma_\C \cong K_\Sigma|_{\partial \Sigma}$ in such a way that the real subspace $T^*\partial\Sigma$ corresponds to the dual of tangent vectors parallel to the boundary.

There are two non-isomorphic spin structures on the unit circle $S^1 \subset \C$, called the Neveu-Schwarz (NS) and Ramond (R) spin structures, obtained by restricting $(\bbA_r, \sigma)$ to $S^1$, where $\sigma \in \{NS, R\}$.
We denote these spin circles by $(S^1, \sigma)$.

The group $\Aut_+(S^1, NS)$ of orientation preserving automorphisms of $(S^1, NS)$ can naturally be identified with $\Diff^{NS}_+(S^1)$ as follows.
Given $(\psi, \gamma) \in \Diff^{NS}_+(S^1)$, there is a unique $\alpha_{NS}(\psi,\gamma) \in \Aut_+(S^1, NS)$ such that $\alpha_{NS}(\psi, \gamma)$ acts on sections $f \in C^\infty(S^1)$ by $\alpha_{NS}(\psi,\gamma)_*f = u_{NS}(\psi, \gamma)f$.

Similarly, there is an isomorphism $\alpha_{R}$ of $\Aut_+(S^1, R)$ with the double cover $\Diff_+^R(S^1)$ of $\Diff_+(S^1)$ given by
$$
\Diff_+^R(S^1) = \{(\psi, \gamma) \in C^\infty(S^1)^\times \rtimes \Diff_+(S^1) : \psi^2 = \abs{(\gamma^{-1})^\prime}\}.
$$
This isomorphism is characterized by 
$$
\alpha_{R}(\psi, \gamma)_*f = u_{R}(\psi, \gamma)f := \psi \cdot (f \circ \gamma^{-1})
$$ 
for all $f \in C^\infty(S^1)$.
Like with the Neveu-Schwarz representation, we have $u_R(\psi,\gamma) \in \cU_{res}$, and so we have a projective unitary representation $U_R:\Diff_+^R(S^1) \to \cPU(\F)$.

\begin{Definition}
A \emph{spin Riemann surface with parametrized boundary} is a collection of:
\begin{itemize}
\item A compact Riemann surface with boundary $\Sigma$ with spin structure $(L, \Phi)$. 
We write $\pi_0(\partial \Sigma)$ for the set of connected components of $\partial \Sigma$, and for $j \in \pi_0(\partial \Sigma)$ we let $\sigma(j) \in \{NS, R\}$ denote the isomorphism class of the restriction $L|_j$.
\item A trivialization of $L|_{\partial \Sigma}$ given by an isomorphism of spin structures
$$
\beta = (\beta_j)_{j \in \pi_0(\partial \Sigma)}: \bigsqcup_{j \in \pi_0(\partial \Sigma)} (S^1, \sigma(j)) \to L|_{\partial \Sigma}.
$$
\end{itemize}
Let $\cR$ be the collection of all such $(\Sigma, L, \Phi, \beta)$ with the property that $\Sigma$ has no closed components.
If $X_i = (\Sigma_i, L_i, \Phi_i, \beta_i) \in \cR$ for $i \in \{1,2\}$, then we say that $X_1$ and $X_2$ are isomorphic if there exists an isomorphism of spin structures $B:(\Sigma_1, L_1, \Phi_1) \to (\Sigma_2, L_2, \Phi_2)$ such that $\beta_2 = B \circ \beta_1$.
\end{Definition}

The complex structure of a Riemann surface induces an orientation, and if $X = (\Sigma, L, \Phi, \beta) \in \cR$ this allows us to partition the connected components of $\partial \Sigma$ into incoming and outgoing components, as follows.
We say that a connected component $j$ of $\partial \Sigma$ is incoming if the diffeomorphism $\beta_j|_{S^1}$ is orientation reversing, and we say that $j$ is outgoing if $\beta_j|_{S^1}$ is orientation preserving.
We write $\partial \Sigma^0$ for the subset of $\partial \Sigma$ consisting of incoming components, and $\partial \Sigma^1$ for the subset consisting of outgoing components.

The free fermion Segal CFT assigns to every $X \in \cR$ a one-dimensional space of trace class maps
\begin{equation}\label{eqnEXDef}
E(X) \subset \cB_1 \Big( \bigotimes_{j \in \pi_0(\partial \Sigma^0)} \F, \bigotimes_{j \in \pi_0(\partial \Sigma^1)} \F \,\,\,\big).
\end{equation}
The unordered tensor products in \eqref{eqnEXDef} are unordered tensor products of super Hilbert spaces, meaing that we have a family of maps, one for every ordering of the tensor products, compatible with the braiding of super Hilbert spaces.

The following theorem summarizes some of the main properties of the assignment $X \mapsto E(X)$. 
For a more detailed treatment, see \cite{Ten16}.

\begin{Theorem}\label{thmFermionSegalCFT}
Let $X = (\Sigma, L, \Phi, \beta) \in \cR$. 
The maps $E(X)$ assigned by the free fermion Segal CFT satisfy the following properties:
\begin{itemize}
\item  (Existence and invariance) $E(X)$ is one-dimensional, and its elements are homogeneous and trace class. 
If $X$ and $\tilde X$ are isomorphic, then $E(X) = E(\tilde X)$.
\item (Non-degeneracy) If every connected component of $\Sigma$ has an outgoing boundary component, then non-zero elements of $E(X)$ are injective. If every connected component of $\Sigma$ has an incoming boundary component, then non-zero elements of $E(X)$ have dense image.
\item (Monoidal) If $Y \in \cR$, then $E(X \sqcup Y) = E(X) \grotimes E(Y)$, where $X \sqcup Y$ is the disjoint union and $\grotimes$ is the graded tensor product of maps of super Hilbert spaces. 
\item (Gluing) If $\hat X \in \cR$ is obtained by sewing two boundary components of $X$ along the parametrizations, then the partial supertrace induces an isomorphism $\tr^s:E(X) \to E(\hat X)$. 
In particular, if $X_0,X_1 \in \cR$ and $X$ is obtained by sewing the outgoing boundary of $X_1$ to the incoming boundary of $X_0$, then $E(X) = E(X_0)E(X_1)$.
\item (Reparametrization) If $(\psi_j,\gamma_j) \in \prod_{j \in \pi_0(\partial \Sigma)} \Diff^{\sigma(j)}_+(S^1)$, and $\tilde X = (\Sigma, L, \Phi, \tilde \beta) \in \cR$ is obtained by setting $\tilde \beta_j = \beta_j \circ \alpha_{\sigma(j)}(\psi_j,\gamma_j)^{-1}$, then 
$$
E(\tilde X) = \left( \bigotimes_{j \in \pi_0(\partial \Sigma^1)} U_{\sigma(j)}(\psi_j,\gamma_j) \right)E(X)\left( \bigotimes_{j \in \pi_0(\partial \Sigma^0)} U_{\sigma(j)}(\psi_j,\gamma_j)^* \right).
$$
\item (Unitarity) $E(\overline{X}) = E(X)^*$, where $\overline{X}$ is the complex conjugate spin Riemann surface, and $E(X)^*$ denotes taking the adjoint elementwise.
\end{itemize}
\end{Theorem}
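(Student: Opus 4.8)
The plan is to derive every listed property from a single underlying construction: the identification of $E(X)$ with the span of a distinguished trace class operator built, via the passage from Fock-space vectors to operators, out of the \emph{Hardy space} $H^2(X)$ of boundary values of holomorphic spinors on $\Sigma$. First I would set up the boundary data. Using the parametrizations $\beta_j$ to identify $L^2(\partial\Sigma)$ (with its spin-twisted measure) with $\bigoplus_j H$, I regard $H^2(X)$ as a closed subspace of $\bigoplus_j H$. The essential analytic input is that this subspace is \emph{commensurable} with the standard polarization — the one given by $p$ on outgoing circles and $1-p$ on incoming circles — in the sense that the difference of the associated projections is Hilbert--Schmidt. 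This is a regularity statement about the Hardy space of a Riemann surface with smooth boundary, and verifying it is the technical crux of the existence half of the theorem. Granting it, Theorem \ref{thmShaleStinespring} produces a vector $\hat\Omega$ in the Fock space of $\partial\Sigma$, unique up to scalar, annihilated by the appropriate combination of $a(f)$ and $a(g)^*$; transporting $\hat\Omega$ through the standard correspondence between vectors of a bipartite Fock space and maps between the incoming and outgoing factors yields a one-dimensional space of homogeneous operators, which one checks to be trace class (the surface having positive-area collars forces rapid decay of $\hat\Omega$). I take this space to be $E(X)$. This simultaneously gives Existence and invariance: an isomorphism $B: X \to \tilde X$ respecting $\beta_2 = B\circ\beta_1$ carries $H^2(X)$ to $H^2(\tilde X)$ as the \emph{same} subspace of $\bigoplus_j H$, so the two distinguished operators coincide.

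The structural properties then follow by tracking how the Hardy space and its distinguished vector behave under the relevant operations. For Monoidal, the Hardy space of a disjoint union is the orthogonal direct sum and its distinguished vector is the tensor product of the factors', so Proposition \ref{propFockSumToTensor} converts this into $E(X \sqcup Y) = E(X)\grotimes E(Y)$, with the grading bookkeeping dictated by \eqref{eqTensorCARReps}. For Reparametrization, replacing $\beta_j$ by $\beta_j \circ \alpha_{\sigma(j)}(\psi_j,\gamma_j)^{-1}$ precomposes the boundary identification with the unitary $u_{\sigma(j)}(\psi_j,\gamma_j)$, so $H^2(X)$ is rotated by these boundary unitaries; since the basic representation is characterized by $Ua(f)U^* = a(uf)$ and acts on the distinguished vacuum by \eqref{eqnBasicRepOnVacuum}, the distinguished operator is conjugated by the corresponding $U_{\sigma(j)}$, appearing as $U_{\sigma(j)}^*$ on incoming factors and $U_{\sigma(j)}$ on outgoing ones. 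For Unitarity, passing to the conjugate surface $\overline X$ swaps incoming and outgoing boundary and sends $H^2(X)$ to its complex conjugate, which is precisely the polarization whose distinguished operator is the adjoint; hence $E(\overline X) = E(X)^*$. Non-degeneracy is a cyclicity and separating statement: when every component has an outgoing boundary there are enough holomorphic spinors that the characterizing relations leave a nonzero $T$ no kernel, and dually for dense image, which I would extract directly from the structure of the boundary polarization.

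The hard part is Gluing, and I expect it to be the main obstacle. Sewing two boundary circles of $X$ corresponds, at the level of holomorphic sections, to imposing matching of boundary values across the seam, and at the level of operators to a partial supertrace over the sewn tensor factor. Two things must be shown: that the partial supertrace of a nonzero element of $E(X)$ is again trace class and nonzero — which needs a convergence argument resting again on the Hilbert--Schmidt commensurability, together with a strict contraction estimate ensuring the series produced by the trace converges — and that the resulting operator satisfies exactly the Hardy-space relations of the glued surface $\hat X$, i.e.\ that $H^2(\hat X)$ is the contraction of $H^2(X)$ along the seam. The specialization to sewing the outgoing boundary of $X_1$ to the incoming boundary of $X_0$ reduces the partial supertrace to honest operator composition, yielding $E(X) = E(X_0)E(X_1)$. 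This is the functoriality that makes the assignment a Segal CFT, and it is the one place where the compatibility of sewing of surfaces with composition of maps is genuinely established. A complete treatment of all of these points is carried out in \cite{Ten16}, and I would cite it for the details while recording the properties here in the form needed for the sequel.
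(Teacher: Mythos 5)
Your outline is correct and follows essentially the same route as the paper, which does not prove this theorem in the text but instead defines $E(X)$ by the Hardy-space commutation relations (equivalently, via the Shale--Stinespring vector $\hat\Omega$ attached to the polarization determined by $H^2(X)$) and cites \cite{Ten16} for all of the listed properties --- exactly as you ultimately do. Your identification of the Hilbert--Schmidt commensurability of $H^2(X)$ with the standard boundary polarization as the analytic crux of existence, and of gluing as the genuinely hard functoriality step, matches the treatment there.
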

We have omitted precise explanations of the notions of graded tensor products of maps, of sewing spin Riemann surfaces along boundary parametrizations, and of the conjugate $\overline{X}$; they are discussed in \cite{Ten16} in Section 2.1, Section 2.2, and Section 3.2, respectively.

We will now briefly describe the construction of the spaces $E(X)$, as it is similar to the construction of localized vertex operators in Section \ref{secSegalCFTForDegenerateSurfaces}.

As before, let $H = L^2(S^1)$ and let $p \in \cB(H)$ be the projection onto the classical Hardy space 
$$
H^2(\D) = \overline{ \Span \{ z^n : n \ge 0\}}.
$$
Given $X = (\Sigma, L, \Phi, \beta) \in \cR$, we define the boundary Hilbert space
$$
H_{\partial \Sigma} = \bigoplus_{j \in \pi_0(\partial \Sigma)} H.
$$
We write $\cO(\Sigma; L)$ for the space of sections of $L$ which are holomorphic on the interior $\interior{\Sigma}$ and which extend to smooth functions on $\partial \Sigma$.
Given a section $F \in \cO(\Sigma; L)$, the pullback $\beta^*F$ yields a smooth function on $\bigsqcup_{j \in \pi_0(\partial \Sigma)} S^1$, and thus an element of $H_{\partial \Sigma}$.
Define the Hardy space $H^2(X) \subset H_{\partial \Sigma}$ by
$$
H^2(X) = \overline{ \{\beta^*F : F \in \cO(\Sigma; L)\}}.
$$ 

Now decompose $H_{\partial \Sigma} = H^1_{\partial \Sigma}  \oplus H^0_{\partial \Sigma}$, where
$$
H^i_{\partial \Sigma} = \bigoplus_{j \in \pi_0(\partial \Sigma^i)} H
$$
and define the boundary projections $p_i \in \cB(H_{\partial \Sigma^i})$ by
$$
p_i = \bigoplus_{j \in \pi_0(\partial \Sigma^i)} p.
$$
There is a natural unitary isomorphism
\begin{equation}\label{eqnBoundaryFockTensorIso}
\F_{H^i_{\partial \Sigma}, p_i} \to \bigotimes_{j \in \pi_0(\partial \Sigma^i)} \F
\end{equation}
given by Proposition \ref{propFockSumToTensor}.
That is, for every ordering of the tensor factors of the right-hand side of \eqref{eqnBoundaryFockTensorIso}, we have an isomorphism with $\F_{H^i_{\partial \Sigma, p_i}}$, and these isomorphisms are compatible with the braiding of super Hilbert spaces.

Making this identification, we define $E(X) \subset \cB_1(\F_{H^0_{\partial \Sigma, p_0}}, \F_{H^1_{\partial \Sigma, p_1}})$ to be the space of trace class maps $T$ which satisfy the $H^2(X)$ commutation relations:
$$
a(f^1)T = (-1)^{p(T)} T a(f^0), \qquad a(g^1)^*T = - (-1)^{p(T)}Ta(g^0)^*
$$
for every $(f^1,f^0) \in H^2(X) \subset H^1_{\partial \Sigma}  \oplus H^0_{\partial \Sigma}$, and every $(g^1, g^0) \in H^2(X)^\perp$, where $p(T) \in \{0,1\}$ is the parity of $T$, and the equation is understood by extending linearly if $T$ is not homogeneous.

This description of $E(X)$ is useful for two reasons. 
First, it is good for computing with.
For every holomorphic function on $\Sigma$ one obtains a relation satisfied by elements of $E(X)$, and in practice these relations are easy to work with.
In \cite[\S 5.2]{Ten16}, we used this description to give a short proof that when $X$ is a disk with two disks removed, $E(X)$ is spanned by maps related to vertex operators.

The second advantage of the description of $E(X)$ in terms of commutation relations from $H^2(X)$ is that it can be generalized to other geometric objects $X$ which have a Hardy space.
In Section \ref{secSegalCFTForDegenerateSurfaces}, we will consider what happens when $X$ is is a `degenerate Riemann surface' where the incoming and outgoing boundary of $\Sigma$ are allowed to overlap.

\begin{Example}(\cite[Prop. 5.1]{Ten16})
When $X$ is given by the closed unit disk $\D$, the spin structure it inherits from $\C$, and the identity parametrization on the boundary, then $E(X)$ is spanned by the vacuum vector $\Omega \in \F$
\end{Example}

\begin{Example} (\cite[Prop. 5.2]{Ten16})
When $X$ is given by the spin annulus $(\bbA_r, NS)$ with boundary parametrizations given by the identity and the map $z \mapsto rz$, then $E(X)$ is spanned by $r^{L_0}$.
Note that both the boundary parametrization and $r^{L_0}$ depend on a choice of square root of $r$.
Similarly, when $NS$ is replaced by $R$, $E(X)$ is spanned by $r^{L_0^R}$, where $L_0^R$ is the generator of the one-parameter group $U_R(1, r_\theta)$.
\end{Example}

\begin{Example}(\cite[Thm. 5.4]{Ten16}
Let $w \in \D$ and $r_1,r_2 \in (0,1)$, and assume they satisfy $s + r < \abs{w} < 1 - s$. 
Define the pair of pants
$$
\bbP_{w,s,r} = \D \setminus \left( (s \interior{\D} + w) \cup r \interior{\D}\right),
$$
where $\interior{\D}$ is the open unit disk.

Give $\bbP_{w,s,r}$ the spin structure inherited from $\C$, and parametrize the boundary components via the identity map on $S^1$, and the maps $z \mapsto r z$ and $z \mapsto sz + w$.
Let
$$
Y:\F^0 \to \End(\F^0)[[x^{\pm 1}]]
$$
be the free fermion vertex operator algebra state-field correspondence (see Example \ref{exFFUVOSA}).
Then $E(\bbP_{w,s,r})$ is spanned by the map $T: \F \otimes \F \to \F$ given on $\xi \otimes \eta \in \F^0 \otimes \F^0$ by
$$
T(\xi \otimes \eta) = Y(s^{L_0}\xi, w)r^{L_0}\eta
$$
when the inputs are ordered so that the one corresponding to the boundary component $sS^1 + w$ comes first.
Note that both the boundary parametrizations and $T$ depend on choices of square roots of $s$ and $r$.
Leaving the boundary parametrizations implicit, we can depict this result as follows:
$$
E\left( \,\,
\begin{tikzpicture}[scale=1.4,baseline={([yshift=-.5ex]current bounding box.center)}]
	\filldraw[fill=red!10!blue!20!gray!30!white] (0,0) circle (1cm);
	\node at (0:1cm) {\scriptsize{\textbullet}};
	\node at (0:1.15cm) {1};
	\filldraw[fill=white] (0,0) circle (0.3cm);
	\node at (0,0) {\scriptsize{\textbullet}};
	\node at (0:.15cm) {0};
	\node at (0:.3cm) {\scriptsize{\textbullet}};
	\node at (0:.45cm) {r};
	\filldraw[fill=white] (140:.65cm) circle (.25cm); 
	\node at (140:.65cm) {\scriptsize{\textbullet}};
	\node at (150:.6cm) {w};
	\node at ([shift=(0:0.25cm)]140:.65cm) {\scriptsize{\textbullet}};
	\node at ([shift=(20:0.65cm)]153:.7cm) {w+s};
\end{tikzpicture}
\right)
= \C T.
$$
We will frequently leave the parameters $w,r,s$ implicit as well, and depict the state insertions $T(\xi \otimes \eta)$ as follows.
$$
\begin{tikzpicture}[scale=1.4,baseline={([yshift=-.5ex]current bounding box.center)}]
	\filldraw[fill=red!10!blue!20!gray!30!white] (0,0) circle (1cm);
	\filldraw[fill=white] (0,0) circle (0.3cm);
	\filldraw[fill=white] (140:.65cm) circle (.25cm); 
	\node at (140:.65cm) {\scriptsize{$\xi$}};
	\node at (0,0) {\scriptsize{$\eta$}};
\end{tikzpicture}
\,\, = \,\, T(\xi \otimes \eta) \,\,=\,\, Y(s^{L_0}\xi,w)r^{L_0}\eta \,\,.
$$
\end{Example}

\subsection{Unitary vertex operator superalgebras}

We will give only a brief overview of unitary vertex operator superalgebras.
A detailed treatment of unitary vertex operator algebras in the spirit of this paper may be found in \cite[\S4-5]{CKLW18}.
Our treatment is adapted from this reference, as well as from \cite{AiLin17}.

\begin{Definition}
A  vertex operator superalgebra is given by:
\begin{enumerate}
\item a $\Z/2\Z$-graded vector space $V = V^0 \oplus V^1$. Elements of $V^0 \cup V^1$ are called homogeneous vectors, and elements of $V^0$ (resp. $V^1$) are called even (resp. odd) vectors. If $a \in V^i$, we denote the parity $p(a) = i$.
\item even vectors $\Omega,\nu \in V^0$ called the \emph{vacuum vector} and the \emph{conformal vector}, respectively.
\item a state-field correspondence $Y:V  \to \End(V)[[x^{\pm 1}]]$, usually denoted 
\begin{equation}\label{eqnStateFieldForm}
Y(a,x) = \sum_{n \in \Z} a_{(n)} x^{-n-1}.
\end{equation}
Here $\End(V)[[x^{\pm 1}]]$ is the vector space of formal series of the form \eqref{eqnStateFieldForm}.
The coefficients $a_{(n)}$ of this formal series are called the \emph{modes} of $a$.
\end{enumerate}
This data must satisfy:
\begin{enumerate}
\item For every $a \in V$, if $a$ is even (resp. odd) then $a_{(n)}$ is even (resp. odd) for all $n \in \Z$.
\item For every $a,b \in V$, we have $a_{(n)}b = 0$ for $n$ sufficiently large. 
\item For every $a \in V$, we have $a_{(n)}\Omega = 0$ for $n \ge 0$ and $a_{(-1)}\Omega = a$.
\item $Y(\Omega,x) = 1_V$. That is, $\Omega_{(n)} = \delta_{n,-1} 1_V$.
\item For every $a,b \in V$, there exists a positive integer $N$ such that we have an identity of formal series $(x-y)^N [Y(a,x), Y(b,y)]_{\pm} = 0$. Here, the super commutator $[ \, \cdot \, , \, \cdot \,]_{\pm}$ is given by
$$
Y(a,x)Y(b,y) - (-1)^{p(a)p(b)}Y(b,y)Y(a,x)
$$
when $a$ and $b$ are homogeneous, and extended linearly otherwise.
\item If we write $Y(\nu,x) = \sum_{n \in \Z} L_n x^{-n-2}$, then the $L_n$ give a representation of the Virasoro algebra. 
That is, 
$$
[L_m,L_n] = (m-n)L_{m+n} + \tfrac{c}{12} (m^3-m)\delta_{m,-n}1_V
$$
for a number $c \in \C$, called the \emph{central charge}.
\item If we write $V_\alpha = \ker (L_0 - \alpha 1_V)$, then we have a decomposition of $V$ as an algebraic direct sum
$$
V^0 = \bigoplus_{\alpha \in \Z_{\ge 0}} V_\alpha, \qquad V^1 = \bigoplus_{\alpha \in \tfrac12 + \Z_{\ge 0}} V_\alpha
$$
with $\dim V_\alpha < \infty$.
\item For every $a \in V$ we have $[L_{-1}, Y(a,x)] = \frac{d}{dx} Y(a,x)$.
\end{enumerate}
\end{Definition}
We will often abuse terminology by referring to $V$ as a vertex operator superalgebra, instead of referring to the quadruple $(V, Y, \Omega, \nu)$.
If $V^1 = \{0\}$, then $V$ is called a \emph{vertex operator algebra.}

If $a \in V_\alpha$, then we say that $a$ is homogeneous of conformal weight $\alpha =: \Delta_a$.
It follows from the definition that if $a$ is homogeneous, then $a_{(n)}V_{\beta} \subset V_{\beta - n - 1 + \Delta_a}$.

Under this definition, the Borcherds identity (or Jacobi identity) and the Borcherds commutator formula are consequences:
\begin{Theorem}\label{thmBorcherds}
Let $V$ be a vertex operator superalgebra.
Then the Borcherds identity 
\begin{align*}
\sum_{j = 0}^\infty \binom{m}{j} \big(a_{(n+j)}b\big)_{(m+k-j)}&c = \sum_{j=0}^\infty (-1)^j \binom{n}{j} a_{(m+n-j)}b_{(k+j)}c \\ 
&-(-1)^{p(a)p(b)} \sum_{j=0}^\infty(-1)^{j+n} \binom{n}{j} b_{(n+k-j)}a_{(m+j)}c
\end{align*}
holds for every  $a,b,c \in V$ and every $m,k,n \in \Z$.
In particular, for every $a,b,c \in V$ we have the Borcherds product formula
$$
\big(a_{(n)}b\big)_{(k)} c= \sum_{j=0}^\infty (-1)^j \binom{n}{j} \left(a_{(n-j)}b_{(k+j)} - (-1)^{p(a)p(b)+n}b_{(n+k-j)}a_{(j)}\right)c
$$
for all $n,k \in \Z$ by specializing to $m=0$, and the Borcherds commutator formula
$$
a_{(m)}b_{(k)}c - (-1)^{p(a)p(b)} b_{(k)}a_{(m)}c = \sum_{j =0}^\infty \binom{m}{j} \big(a_{(j)}b\big)_{(m+k-j)}c
$$
for all $m,k \in \Z$ by specializing to $n=0$.
As formal series, we have
$$
a_{(m)}Y(b,x) - (-1)^{p(a)p(b)} Y(b,x)a_{(m)} = \sum_{j=0}^\infty \binom{m}{j} Y(a_{(j)}b,x)x^{m-j}.
$$
\end{Theorem}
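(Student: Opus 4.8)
The plan is to derive the Borcherds identity from the locality axiom (item 5 of the definition), the vacuum/creativity axioms (item 3), and translation covariance (item 8), using the formal calculus of delta functions; the product and commutator formulas are then immediate specializations. This is the standard route from a field-theoretic axiomatization to the algebraic Borcherds identity, and the only genuinely super feature is the bookkeeping of the signs $(-1)^{p(a)p(b)}$, which must be threaded through every step.

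First I would establish \emph{creativity}: for every $a \in V$,
$$
Y(a,x)\Omega = e^{xL_{-1}}a.
$$
Indeed, item 3 gives $a_{(n)}\Omega = 0$ for $n \ge 0$ and $a_{(-1)}\Omega = a$, so $Y(a,x)\Omega$ is a power series in $x$ with constant term $a$; differentiating and using item 8 together with $L_{-1}\Omega = 0$ shows $\frac{d}{dx}Y(a,x)\Omega = L_{-1}\,Y(a,x)\Omega$, which integrates to the displayed formula. In particular $a$ is the constant term of $Y(a,x)\Omega$, so each field $Y(a,x)$ is determined by its action on $\Omega$. This yields a uniqueness principle (Goddard's uniqueness theorem) that I would use repeatedly: a field which is mutually local with all the $Y(b,y)$ and creates the same state from $\Omega$ as $Y(a,x)$ must coincide with $Y(a,x)$.

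Next I would prove \emph{weak associativity}: for all $a,b,c \in V$ there is an integer $N$ with
$$
(x_0 + x_2)^N\, Y(a,x_0+x_2)Y(b,x_2)c = (x_0+x_2)^N\, Y(Y(a,x_0)b,x_2)c.
$$
This follows from locality applied to $a$ and $b$ together with creativity: for $c = \Omega$ both sides agree by the creativity formula, and the uniqueness principle propagates the equality to arbitrary $c$ after the factor $(x_0+x_2)^N$ clears the pole along $x_0 + x_2 = 0$ produced by locality. Combining weak associativity with the locality of the pair $Y(a,x_1)$, $Y(b,x_2)$ yields the Jacobi identity in delta-function form,
$$
x_0^{-1}\delta\!\left(\tfrac{x_1-x_2}{x_0}\right)Y(a,x_1)Y(b,x_2) - (-1)^{p(a)p(b)}\, x_0^{-1}\delta\!\left(\tfrac{x_2-x_1}{-x_0}\right)Y(b,x_2)Y(a,x_1) = x_2^{-1}\delta\!\left(\tfrac{x_1-x_0}{x_2}\right)Y(Y(a,x_0)b,x_2),
$$
where the sign is exactly the one dictated by the super commutator in item 5.

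Finally I would extract the stated Borcherds identity by applying both sides to $c$ and reading off the coefficient of $x_0^{-n-1}x_1^{-m-1}x_2^{-k-1}$; expanding the two delta functions $x_0^{-1}\delta(\tfrac{x_1-x_2}{x_0})$ and $x_2^{-1}\delta(\tfrac{x_1-x_0}{x_2})$ in nonnegative powers of their ``small'' variables produces precisely the binomial coefficients $\binom{m}{j}$ and $\binom{n}{j}$ and the three sums appearing in the statement. The product formula is the specialization $m=0$ and the commutator formula is the specialization $n=0$; the closing formal-series identity is then the generating-function form of the commutator formula, obtained by multiplying by $x^{-k-1}$ and summing over $k \in \Z$. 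I expect the main obstacle to be the middle step: establishing weak associativity from locality and creativity, and correctly converting the pair (locality, associativity) into the single delta-function Jacobi identity. This is where the formal-calculus manipulations with products of delta functions are most delicate, and where an error in the placement of a sign $(-1)^{p(a)p(b)}$ would silently corrupt the final identity.
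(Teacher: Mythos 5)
Your proposal is correct: the chain creativity $\Rightarrow$ Goddard uniqueness $\Rightarrow$ weak associativity $\Rightarrow$ delta-function Jacobi identity $\Rightarrow$ component Borcherds identity is the standard derivation, your sign bookkeeping for $(-1)^{p(a)p(b)}$ is right, and the specializations $m=0$, $n=0$ and the generating-function form all check out. The paper does not prove this theorem itself but defers to \cite[\S4.8]{Kac98}, which carries out essentially this same argument, so your route matches the one the paper relies on.
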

See \cite[\S4.8]{Kac98} for an extended discussion of the Borcherds identity.

If $W = (W \cap V^0) \oplus (W \cap V^1)$ is a $\Z/2\Z$-graded subspace of $V$, then it is called a vertex subalgebra\footnote{The term `subsuperalgebra' might be more precise, but it is a bit clumsy} 
if $\Omega \in W$ and $ a_{(n)}b \in W$ for all $a,b \in W$ and $n \in \Z$. 
If $\nu \in W$, then $W$ is called a conformal subalgebra of $V$.
The even vectors $V^0$ always form a conformal subalgebra of $V$.

A vertex subalgebra $W$ is called an \emph{ideal} if we have $a_{(n)}b \in W$ for every $a \in V$ and $b \in W$.
A  vertex operator superalgebra $V$ is called \emph{simple} if its only ideals are $\{0\}$ and $V$.

A homomorphism (resp. antilinear homomorphism) from a vertex operator superalgebra $V$ to a vertex operator superalgebra is a complex linear (resp. antilinear) map $\phi:V \to W$ which satisfies $\phi(\Omega_V) = \Omega_W$, $\phi(\nu_V) = \nu_W$, and $\phi(a_{(n)}b) = \phi(a)_{(n)}\phi(b)$ for all $a,b \in V$.
We also have the obvious notion of (antilinear) isomorphism and automorphism.
The grading operator $\Gamma = (-1)^{2L_0}$ is always an automorphism of a vertex operator superalgebra.

\begin{Definition}\label{defUnitaryVOSA}
A \emph{unitary vertex operator superalgebra} is a vertex operator superalgebra $V$ along with an inner product $\ip{\,\cdot \, , \, \cdot \,}$ on $V$ and an antilinear automorphism $\theta$ of $V$, called the \emph{PCT operator}, satisfying:
\begin{enumerate}
\item (Normalization) $\ip{\Omega,\Omega} = 1$
\item (Invariance) $\ip{a, Y(\theta b, x)c} = \ip{Y(e^{x L_1} (-1)^{L_0 + 2L_0^2} x^{-2L_0}b, x^{-1})a, c}$ for all $a,b,c \in V$.
\end{enumerate}
Note that $x$ is treated as a formal, real variable in the statement of the invariance property. 
An isomorphism $\phi:V \to W$ of unitary vertex operator superalgebras is called unitary if $\ip{\phi a, \phi b} = \ip{a, b}$ for all $a,b \in V$.
If $V^1 = \{0\}$ then we refer to $V$ as a \emph{unitary vertex operator algebra}.
\end{Definition}
We will often abuse terminology by simply referring to $V$ as a unitary vertex operator superalgebra, with the additional data left implicit.

\begin{Remark}
We could alter Definition \ref{defUnitaryVOSA} by replacing $(-1)^{L_0 + 2L_0^2}$ by $(-1)^{L_0 - 2L_0^2}$.
If we call the two definitions $(+)$ and $(-)$ unitary vertex operator superalgebras, then there is a bijection between the $(+)$ and $(-)$ notions given by
$$
(V,Y,\Omega,\nu,\ip{ \,\cdot \,, \,\cdot \,}, \theta) \longleftrightarrow (V,Y,\Omega,\nu,\ip{ \,\cdot \,, \,\cdot \,}, \Gamma\theta).
$$
See \cite[\S2]{Yamauchi2014} for a more detailed discussion.
\end{Remark}

The following basic properties of the PCT operator are straightforward generalizations of \cite[Prop. 5.1]{CKLW18}.
\begin{Proposition}
Let $(V,Y,\Omega,\nu,\ip{ \,\cdot \,, \,\cdot \,}, \theta)$ be a unitary vertex operator superalgebra.
Then $\theta$ is the unique antilinear automorphism satisfying the invariance property of Definition \ref{defUnitaryVOSA}.
Moreover, we have
\begin{enumerate}
\item $\theta(V^i) = V^i$
\item $\theta^2 = 1_V$,
\item $\ip{\theta a, \theta b} = \ip{b,a}$ for all $a,b \in V$
\item $\ip{L_na, b} = \ip{a, L_{-n}b}$ for all $a,b \in V$ and $n \in \Z$,
\end{enumerate}
\end{Proposition}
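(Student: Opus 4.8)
The plan is to treat the assertions in an order dictated by their logical dependencies, using as the organizing device the reformulation of the invariance axiom as an \emph{adjoint formula}. Writing $\Phi(x) = e^{xL_1}(-1)^{L_0+2L_0^2}x^{-2L_0}$, the invariance property says exactly that $\ip{a, Y(\theta b,x)c} = \ip{Y(\Phi(x)b, x^{-1})a, c}$ for all $a,b,c$, i.e.\ that the formal adjoint of $Y(\theta b,x)$ equals $Y(\Phi(x)b,x^{-1})$. Two of the claims are then nearly immediate. For the grading claim $\theta(V^i)=V^i$, I would first use that $\theta$, being an automorphism, fixes $\nu$; applying $\theta$ to $\nu_{(n)}b = L_{n-1}b$ together with $\theta(a_{(n)}b)=(\theta a)_{(n)}(\theta b)$ gives $\theta L_m = L_m\theta$ for all $m$, so $\theta$ commutes with $L_0$, preserves each eigenspace $V_\alpha$, and hence preserves $V^0$ and $V^1$. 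For uniqueness, I would note that the right-hand side of the invariance formula does not mention $\theta$, so two PCT operators $\theta_1,\theta_2$ force $Y(\theta_1 b,x)=Y(\theta_2 b,x)$ by nondegeneracy of the inner product, whence $\theta_1 b = \theta_2 b$ by injectivity of the state-field correspondence (evaluate at $\Omega$ and use $u_{(-1)}\Omega = u$).

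For the Virasoro adjoint relation (4), I would specialize the invariance formula to $b=\nu$. Since $\theta\nu=\nu$, $L_0\nu=2\nu$ and $L_1\nu=0$, a direct computation gives $\Phi(x)\nu = x^{-4}\nu$, so the formula becomes $\ip{a, Y(\nu,x)c} = \ip{Y(x^{-4}\nu,x^{-1})a, c}$; expanding $Y(\nu,x)=\sum_n L_n x^{-n-2}$ and matching coefficients of $x^{-n-2}$ yields $\ip{a, L_n c} = \ip{L_{-n}a, c}$. I record that this already forces distinct $L_0$-eigenspaces to be orthogonal, which I reuse below.

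The heart of the argument is $\theta^2=1$. I would apply the adjoint formula twice: taking the adjoint of both sides of $Y(\theta b,x)^\dagger = Y(\Phi(x)b,x^{-1})$ and then applying the formula again to the inner field in the variable $x^{-1}$, while using that $\theta$ commutes with $\Phi(x)$ (from $\theta L_m=L_m\theta$), produces $Y(b,x) = Y(\Phi(x^{-1})\Phi(x)\theta^{-2}b, x)$. Injectivity of the state-field correspondence then gives $\Phi(x^{-1})\Phi(x)\theta^{-2}=1$, so $\theta^2=1$ follows once I establish the operator identity $\Phi(x^{-1})\Phi(x)=1$. This identity is the main obstacle: it is an $\mathrm{SL}_2$-type relation among $L_0,L_{\pm 1}$, and the delicate point is the half-integer weight spaces, where the factor $(-1)^{L_0+2L_0^2}$ is engineered precisely so that the inversion squares to the identity --- on a quasi-primary vector of weight $\Delta$ one checks $\Phi(x^{-1})\Phi(x) = (-1)^{2\Delta+4\Delta^2}=1$ for both integer and half-integer $\Delta$, and the general case requires propagating this through the non-commuting $e^{xL_1}$ and $x^{-2L_0}$ factors.

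Finally, for the antiunitarity (3), $\ip{\theta a,\theta b}=\ip{b,a}$, I would introduce the form $\ip{u,v}_\theta := \ip{\theta v,\theta u}$, which is a normalized positive-definite Hermitian form because $\theta$ is an antilinear bijection fixing $\Omega$. Using $\theta^2=1$ and the Hermitian conjugate of the invariance formula, a short computation shows that $\ip{\cdot,\cdot}_\theta$ satisfies the \emph{same} invariance property with the \emph{same} operator $\theta$. I would then argue that an invariant inner product is determined by $\theta$ together with the normalization $\ip{\Omega,\Omega}=1$: the adjoint formula is a recursion that moves modes from one entry of the pairing to the other, reducing any $\ip{u,v}$ to pairings against $\Omega$, which by the weight-space orthogonality from (4) are read off algebraically (note that this step needs no simplicity hypothesis). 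Hence $\ip{\cdot,\cdot}_\theta = \ip{\cdot,\cdot}$, which is exactly (3).
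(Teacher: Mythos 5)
The paper gives no proof of this proposition at all --- it is introduced as a ``straightforward generalization of [CKLW15, Prop.\ 5.1]'' --- so the relevant comparison is with that standard argument, and your outline follows it closely: uniqueness from nondegeneracy of the form plus injectivity of the state--field correspondence, $\theta L_n = L_n\theta$ from $\theta\nu=\nu$, item (4) by specializing invariance to $b=\nu$, and $\theta^2=1$ by applying the adjoint formula twice and reducing to the operator identity $\Phi(x^{-1})\Phi(x)=1$. That identity, which you single out as the main obstacle, is in fact a short conjugation computation rather than a case analysis on quasi-primaries: from $[L_0,L_1]=-L_1$ one gets $x^{2L_0}e^{xL_1}x^{-2L_0}=e^{x^{-1}L_1}$, and writing $\zeta=(-1)^{L_0+2L_0^2}$ one checks $\zeta L_1\zeta^{-1}=-L_1$ on each $V_\alpha$, whence $\Phi(x^{-1})\Phi(x)=e^{x^{-1}L_1}\zeta e^{x^{-1}L_1}\zeta^{-1}\zeta^2=\zeta^2=(-1)^{2L_0+4L_0^2}=1$, since $2\alpha(1+2\alpha)$ is even for $\alpha\in\tfrac12\Z$. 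Up to that point your proof is correct and is essentially the intended one.

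The genuine gap is in item (3). The proposition does not assume simplicity, and your parenthetical claim that the reduction of $\ip{u,v}$ to pairings against $\Omega$ ``needs no simplicity hypothesis'' is precisely where the argument fails: weight-space orthogonality only gives $\ip{\Omega,z}=\ip{\Omega,z_0}$ with $z_0\in V_0$, and when $V_0\supsetneq\C\Omega$ the numbers $\ip{\Omega,w}$ for $w\in V_0$ are \emph{not} determined by $\ip{\Omega,\Omega}=1$. Indeed, normalized invariant forms are genuinely non-unique in this generality --- take a direct sum of two simple unitary VOAs and rescale the inner products on the two summands --- so the principle ``an invariant normalized form is determined by $\theta$'' that your argument rests on is false, even though conclusion (3) remains true. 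The repair is to verify, before running the recursion, that your two forms already agree on $\{\Omega\}\times V$: every $w\in V_0$ is quasi-primary of weight $0$, so $\Phi(x)w=w$, and taking $a=c=\Omega$ in the invariance property (using (4) to kill the $L_{-1}^k$ terms for $k\ge1$) yields $\ip{\Omega,\theta w}=\ip{w,\Omega}$; combined with $\theta^2=1$ and $\theta V_0=V_0$ this gives $\ip{\theta z,\theta\Omega}=\ip{\Omega,z}$ for all $z\in V$. With that base case in place, your mode-moving recursion does identify $\ip{\cdot,\cdot}_\theta$ with $\ip{\cdot,\cdot}$ and (3) follows.
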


\begin{Proposition}\label{propIsoUnitaryIffTheta}
Let $V$ and $W$ be unitary vertex operator superalgebras, and let $\phi:V \to W$ be an isomorphism of vertex operator superalgebras.
Then $\phi$ is unitary if and only if $\phi \circ \theta_V = \theta_W \circ \phi$. 
\end{Proposition}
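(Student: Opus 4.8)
The plan is to pull the entire unitary structure of $W$ back to $V$ along $\phi$ and then play the two pieces of that structure off against the relevant uniqueness statements. Concretely, I would set $\theta_\phi := \phi^{-1}\theta_W\phi$ and define a form on $V$ by $\ip{a,b}_\phi := \ip{\phi a,\phi b}_W$. Since $\phi$ is a linear isomorphism, $\ip{\,\cdot\,,\,\cdot\,}_\phi$ is again a positive-definite inner product and $\theta_\phi$ is again antilinear. The first task is the routine verification that $(\ip{\,\cdot\,,\,\cdot\,}_\phi,\theta_\phi)$ makes $V$ a unitary vertex operator superalgebra. Because $\phi$ is a homomorphism it fixes $\Omega$ and $\nu$ and hence intertwines all modes, in particular the Virasoro operators, $\phi L_n = L_n \phi$ and $\phi\, Y_V(a,x) = Y_W(\phi a,x)\phi$. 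From these identities one reads off directly that $\theta_\phi$ is an antilinear automorphism, that $\ip{\Omega,\Omega}_\phi = 1$, and that the invariance property of Definition \ref{defUnitaryVOSA} for $(\ip{\,\cdot\,,\,\cdot\,}_\phi,\theta_\phi)$ is exactly the invariance property for $(\ip{\,\cdot\,,\,\cdot\,}_W,\theta_W)$ transported through $\phi$. The whole statement then reduces to a comparison of two unitary structures on the single object $V$, since ``$\phi$ is unitary'' means precisely $\ip{\,\cdot\,,\,\cdot\,}_\phi = \ip{\,\cdot\,,\,\cdot\,}_V$, while ``$\phi\circ\theta_V = \theta_W\circ\phi$'' means precisely $\theta_\phi = \theta_V$.

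For the forward direction, suppose $\phi$ is unitary, so $\ip{\,\cdot\,,\,\cdot\,}_\phi = \ip{\,\cdot\,,\,\cdot\,}_V$. Then $\theta_\phi$ is an antilinear automorphism of $V$ satisfying the invariance property with respect to $\ip{\,\cdot\,,\,\cdot\,}_V$, so by the uniqueness of the PCT operator established in the preceding proposition we must have $\theta_\phi = \theta_V$, which is the desired intertwining relation. This direction uses nothing beyond uniqueness of $\theta$.

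The backward direction is where the real work lies. Assuming $\theta_\phi = \theta_V$, I have two unitary structures $(\ip{\,\cdot\,,\,\cdot\,}_V,\theta_V)$ and $(\ip{\,\cdot\,,\,\cdot\,}_\phi,\theta_V)$ on $V$ sharing the same PCT operator, and I must deduce that they share the same inner product. The main obstacle is thus to show that the inner product of a unitary vertex operator superalgebra is determined by its PCT operator together with the normalization $\ip{\Omega,\Omega}=1$. The key is to extract from the invariance property the adjoint formula $Y(a,x)^* = Y(e^{xL_1}(-1)^{L_0+2L_0^2}x^{-2L_0}\theta a,\,x^{-1})$, obtained by replacing $b$ with $\theta b$ and using $\theta^2 = 1_V$; taking coefficients of $x$ expresses the adjoint of each mode $a_{(n)}$ as an explicit finite combination of modes of $\theta$-transformed vectors, depending only on the vertex algebra structure and on $\theta$, and not on the inner product itself. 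Since $a = a_{(-1)}\Omega$ for every vector, repeatedly applying this adjoint formula and using $a_{(n)}\Omega = 0$ for $n\ge 0$ reduces any $\ip{a,b}$ to $\ip{\Omega,\Omega}=1$, and because the adjoint formula is literally identical for $\ip{\,\cdot\,,\,\cdot\,}_V$ and $\ip{\,\cdot\,,\,\cdot\,}_\phi$ (they carry the same $\theta_V$), the reconstruction forces the two inner products to coincide. The clean way to package the final comparison is to introduce the positive, $L_0$-graded, self-adjoint ``relative Gram'' operator $G$ defined by $\ip{a,b}_\phi = \ip{a,Gb}_V$: the shared adjoint formula shows $G$ commutes with every mode $a_{(n)}$, and $\ip{\Omega,G\Omega}_V = 1$, so a Schur-type argument (using simplicity of $V$) gives $G = 1_V$ and hence $\ip{\,\cdot\,,\,\cdot\,}_\phi = \ip{\,\cdot\,,\,\cdot\,}_V$, i.e. $\phi$ is unitary. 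Throughout, this mirrors the corresponding arguments for unitary vertex operator algebras in \cite{CKLW15}, the only new bookkeeping being the parity signs in the super setting.
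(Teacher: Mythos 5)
Your argument is correct and follows essentially the same route as the paper, whose proof is a one-line citation of the corresponding argument in \cite{CKLW15}: the forward direction is exactly the uniqueness of the PCT operator, and the backward direction is the fact that the invariance property determines the inner product from $\theta$ via the uniqueness of the normalized invariant bilinear form (your ``relative Gram operator'' packaging is a legitimate substitute for citing \cite[Cor.~4.11]{CKLW15}). The only caveat is that your reconstruction of $\ip{\cdot,\cdot}$ from $\theta$ (and likewise the cited uniqueness statements) needs $V_0=\C\Omega$, i.e.\ simplicity, a hypothesis that is implicit in your proof exactly as it is in the paper's.
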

\begin{proof}
This proposition follows from the super version of the argument at the beginning of the proof of \cite[Thm 5.21]{CKLW18}, using the super version of \cite[Cor. 4.11]{CKLW18}.
\end{proof}

The following is essentially \cite[Prop. 5.3]{CKLW18}, with the same proof.
\begin{Proposition}\label{propUnitarySimple}
Let $V$ be a unitary vertex operator superalgebra.
Then $V$ is simple if and only if $V_0 = \C \Omega$.
\end{Proposition}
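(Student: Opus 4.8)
The plan is to run both implications off a single consequence of unitarity: the adjoint of any mode $a_{(n)}$ is a finite linear combination of modes of $\theta a$. This comes from expanding the invariance property of Definition \ref{defUnitaryVOSA} and matching coefficients of the formal variable $x$; the only bookkeeping is to track the operators $e^{xL_1}$ and $(-1)^{L_0+2L_0^2}x^{-2L_0}$, exactly as in \cite[\S4--5]{CKLW15}. For instance, specializing $b$ to weight zero (where $L_1b=0$ and $L_0b=0$) collapses these operators to the identity and yields $b_{(m)}^* = (\theta b)_{(-m-2)}$. Granting the general adjoint relation, the argument rests on two lemmas. \emph{Lemma A}: every nonzero ideal $W$ meets $V_0$ nontrivially. \emph{Lemma B}: the orthogonal complement $W^\perp$ of an ideal is again an ideal. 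Throughout I read ``ideal'' as a subspace absorbing all modes $a_{(n)}$ (such a subspace is automatically $L_0$-graded, since $L_0=\nu_{(1)}$ preserves it and has diagonalizable, finite-dimensional eigenspaces), so that simplicity is the absence of ideals other than $\{0\}$ and $V$.

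Lemma B I would prove directly: for homogeneous $a$, $w^\perp \in W^\perp$, and $w \in W$, one has $\ip{a_{(n)}w^\perp, w} = \ip{w^\perp, a_{(n)}^*w}$, and since $a_{(n)}^*$ is a combination of modes of $\theta a$, the absorbing property of $W$ gives $a_{(n)}^*w \in W$, so the pairing vanishes and $a_{(n)}w^\perp \in W^\perp$. Lemma A is the crux. Given $0 \neq W$, choose a nonzero homogeneous $w \in W$ of minimal conformal weight $\Delta$; applying $L_1 = \nu_{(2)}$ would produce an element of $W$ of weight $\Delta-1$, so minimality forces $L_1w=0$, i.e. the minimal-weight vector is automatically quasi-primary. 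I then compute $\ip{\Omega,(\theta w)_{(2\Delta-1)}w}$ from the invariance property, where the quasi-primary simplification gives $e^{xL_1}(-1)^{L_0+2L_0^2}x^{-2L_0}w = (-1)^{\Delta+2\Delta^2}x^{-2\Delta}w$; since $Y(w,x^{-1})\Omega = \sum_{j\ge0}\tfrac{x^{-j}}{j!}L_{-1}^jw$ and only the $j=0$ term pairs nontrivially with $w$, matching the coefficient of $x^{-2\Delta}$ yields $\ip{\Omega,(\theta w)_{(2\Delta-1)}w} = (-1)^{\Delta+2\Delta^2}\ip{w,w} \neq 0$. As $(\theta w)_{(2\Delta-1)}w$ lies in $W$ (absorbing) and in $V_0$ (weight count), this is the required nonzero element of $W \cap V_0$.

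With the lemmas established the two directions are short. If $V_0 = \C\Omega$ and $W$ is a nonzero ideal, Lemma A forces $W \cap V_0 = \C\Omega$, so $\Omega \in W$, and then $a = a_{(-1)}\Omega \in W$ for every $a \in V$, giving $W = V$; hence $V$ is simple. Conversely, if $V$ is not simple I take a proper nonzero ideal $W$; by Lemma B, $W^\perp$ is a nonzero ideal, and by Lemma A both $W \cap V_0$ and $W^\perp \cap V_0$ are nonzero. Being orthogonal subspaces of $V_0$, they force $\dim V_0 \ge 2$, i.e. $V_0 \neq \C\Omega$.

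The conceptual picture I would use to organize the write-up is that weight-zero vectors have constant, central fields. Indeed $L_{-1}$ annihilates $V_0$, since $\ip{L_{-1}u,L_{-1}u} = \ip{u,L_1L_{-1}u} = \ip{u,2L_0u} = 0$, and then $(L_{-1}u)_{(n)} = -n\,u_{(n-1)}$ forces $u_{(m)}=0$ for $m \neq -1$. Consequently $u \mapsto u_{(-1)}$ is an injective unital homomorphism of $(V_0, a*b := a_{(-1)}b)$ into mutually commuting operators (the Borcherds commutator formula of Theorem \ref{thmBorcherds} makes each $u_{(-1)}$ central), with $u_{(-1)}^* = (\theta u)_{(-1)}$, so $V_0$ is a finite-dimensional commutative $C^*$-algebra whose self-adjoint idempotents are precisely the central projections cutting $V$ into complementary ideals; $V$ is simple exactly when this algebra is $\C$. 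The main obstacle I anticipate is Lemma A -- extracting a nonzero weight-zero vector from an arbitrary ideal -- where it is positivity of the inner product, fed through the invariance axiom, that guarantees the candidate $(\theta w)_{(2\Delta-1)}w$ does not vanish.
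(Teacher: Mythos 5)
Your proof is correct. Bear in mind that the paper does not actually supply an argument here --- it states that the result is essentially \cite[Prop.~5.3]{CKLW15} ``with the same proof'' --- so what you have written is a self-contained filling-in of that citation rather than a reproduction of anything in the text. The underlying mechanism is the same one CKLW use: by the invariance axiom the adjoint of a mode is a finite combination of modes (of the vectors $L_1^j\theta a$, not just of $\theta a$ as you wrote, though this changes nothing since an ideal absorbs modes of \emph{all} elements of $V$), hence the orthogonal complement of a graded ideal is again an ideal, and positivity of the inner product finishes the job. Where you genuinely diverge is in the organization: the CKLW-style argument for ``simple $\Rightarrow V_0=\C\Omega$'' exploits the fact that weight-zero vectors have constant central fields (exactly the picture in your closing paragraph), whereas your Lemma A --- a minimal-weight vector $w$ of a nonzero ideal is forced to be quasiprimary, and invariance plus positivity make $(\theta w)_{(2\Delta-1)}w$ a nonzero element of $W\cap V_0$ --- lets both implications drop out of the single statement that every nonzero ideal meets $V_0$, which is a clean and symmetric way to package the proof. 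One last remark: your decision to read ``ideal'' as a graded absorbing subspace not required to contain $\Omega$ is the right one; the paper's literal definition (an ideal is a vertex subalgebra, and a vertex subalgebra must contain $\Omega$) would render the proposition vacuous.
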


If $V$ is a vertex operator superalgebra, $a, b \in V$, and $z \in \C$, we set 
$$
Y(a,z)b = \sum_{n \in \Z} a_{(n)}b z^{-n-1} \in \prod_{j \in \tfrac12 \Z_{\ge 0}} V_j.
$$
We may regard the Hilbert space completion $\cH_V$ of $V$ as the subspace of $\prod V_j$ consisting of vectors $\sum v_j$ with $\sum \norm{v_j}^2 < \infty$.
A useful fact about unitary vertex operator superalgebras is that $Y(a,z)b$ in fact lies in the Hilbert subspace $\cH_V$ when $0 < \abs {z} < 1$, and thus $Y(a,z)$ is a densely defined unbounded operator on $\cH_V$.
\begin{Proposition}\label{propVertexOperatorDenselyDefined}
Let $V$ be a unitary vertex operator superalgebra, let $a,b \in V$, and let $z \in \C$ with $0 < \abs{z} < 1$.
Then the sum defining $Y(a,z)b$ converges absolutely in $\cH_V$, the Hilbert space completion of $V$.
\end{Proposition}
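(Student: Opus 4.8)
The plan is to reduce the statement to a bound on the growth of the norms $\|a_{(n)}b\|$, and then to control that growth by expressing the two-point function $\ip{Y(a,z)b, Y(a,z)b}$, via the unitary structure, as a matrix coefficient of a product of two \emph{ordinary} vertex operators, whose rationality is a purely algebraic consequence of the Borcherds identity.

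First I would reduce to $a \in V_{\Delta_a}$ and $b \in V_{\Delta_b}$ homogeneous, and set $d = \Delta_a + \Delta_b$. Since $a_{(n)}b \in V_{d-n-1}$ and $V_\alpha = 0$ for $\alpha < 0$, the series runs only over $n \le d-1$; reindexing by the conformal weight $m = d-n-1 \ge 0$ gives $Y(a,z)b = z^{-d}\sum_{m\ge 0}(a_{(d-m-1)}b)\,z^m$. Because $L_0$ is self-adjoint for the inner product, the vectors $a_{(d-m-1)}b \in V_m$ are pairwise orthogonal, so convergence in $\cH_V$ is equivalent to finiteness of $\sum_{m\ge 0}\|a_{(d-m-1)}b\|^2|z|^{2m}$ for each $0 < |z| < 1$. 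Moreover, once this square-summability holds at every radius below $1$, genuine \emph{absolute} convergence follows for free: choosing $|z| < \rho < 1$, square-summability at $\rho$ forces $\|a_{(d-m-1)}b\| \le C\rho^{-m}$, whence $\sum_m \|a_{(d-m-1)}b\|\,|z|^m \le C\sum_m (|z|/\rho)^m < \infty$.

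The core step is to prove $\sum_m \|a_{(d-m-1)}b\|^2|z|^{2m} < \infty$ by reading it off a two-point function. The invariance property of Definition \ref{defUnitaryVOSA} yields the adjoint formula $Y(a,z)^* = Y(\hat a_z, \bar z^{-1})$, where $\hat a_z = e^{\bar z L_1}(-1)^{L_0+2L_0^2}\bar z^{-2L_0}\theta a$ is a \emph{finite} combination $\sum_k c_k(z)\,u_k$ of the fixed homogeneous $L_1$-descendants $u_k = L_1^k\theta a$ (finite because $L_1$ lowers conformal weight and annihilates $\theta a$ after finitely many steps), with monomial coefficients $c_k(z)$ in $\bar z$. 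Then $\ip{Y(a,z)b, Y(a,z)b}$ becomes the finite sum $\sum_k \overline{c_k(z)}\,\ip{b, Y(u_k, \bar z^{-1})Y(a, z)b}$ of matrix coefficients of products of two ordinary vertex operators, and orthogonality of the graded pieces guarantees that this reproduces exactly $|z|^{-2d}\sum_m\|a_{(d-m-1)}b\|^2|z|^{2m}$. It then remains to invoke the standard rationality of a product of two vertex operators, a consequence of the Borcherds identity of Theorem \ref{thmBorcherds}: for fixed homogeneous $b, u_k, a$ the formal series $\ip{b, Y(u_k,\zeta)Y(a,w)b}$ is the expansion in the region $|\zeta| > |w|$ of a rational function with singularities only along $\{\zeta = 0\}$, $\{w=0\}$ and $\{\zeta = w\}$. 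Specializing $\zeta = \bar z^{-1}$, $w = z$ with $0 < |z| < 1$ lands inside this region, since $|\bar z^{-1}| = |z|^{-1} > |z|$, and away from all three pole loci, since $\bar z^{-1}\ne z$ (as $|z| \ne 1$) and $\bar z^{-1}, z \ne 0$; so the series converges to a finite value.

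I expect the main obstacle to be making the identity $\ip{Y(a,z)b, Y(a,z)b} = \ip{b, Y(a,z)^*Y(a,z)b}$ rigorous \emph{before} $Y(a,z)b$ is known to be a vector, i.e. justifying the interchange of the sum over an orthonormal basis with the double series, together with the $z$-dependence of $\hat a_z$. The clean resolution is to keep $\zeta$ and $w$ as independent (first formal, then complex) variables, establish the relevant identity and the rationality entirely at the algebraic level, and only afterwards specialize to $\zeta = \bar z^{-1}$, $w = z$, appealing to absolute convergence of the resulting expansion on the overlap of its region of convergence with the pole-free locus. The super signs from the $\Z/2$-grading must be tracked throughout, but they do not affect any of the estimates.
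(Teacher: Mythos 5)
Your proposal is correct, and its engine is the same as the paper's: use the invariance property to turn $\norm{Y(a,z)b}^2$ into a matrix coefficient of a product of two ordinary vertex operators, then invoke the standard absolute convergence (rationality) of $\ip{Y(u,\zeta)Y(a,w)b,b}$ in the region $\abs{\zeta}>\abs{w}>0$ at the point $\zeta=\overline{z}^{-1}$, $w=z$, which lies in that region exactly because $\abs{z}<1$. The one genuine difference is how the non-quasiprimary case is handled. The paper first proves the claim for $L_1a=0$, where the adjoint of $Y(a,z)$ is a single operator $Y(\theta a,\overline{z}^{-1})$ up to a scalar, and then extends to all of $V$ using the spanning set $\{L_{-1}^k a: L_1a=0\}$ together with $Y(L_{-1}a,z)=\frac{d}{dz}Y(a,z)$ and the meromorphy of the Laurent series $\sum_n\norm{a_{(n)}b}^2z^{2(-n-1)}$; you instead keep the full adjoint $e^{\overline{z}L_1}(-1)^{L_0+2L_0^2}\overline{z}^{-2L_0}\theta a$, a finite sum of homogeneous $L_1$-descendants, and treat every homogeneous $a$ uniformly at the cost of tracking finitely many two-point functions. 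Both routes are complete; your flagged worry about interchanging sums is genuinely the delicate point, and your resolution (work with the mode-level adjoint identity, which is purely algebraic since each $a_{(n)}$ preserves the algebraic direct sum, so that only a single series in $m$ remains to be summed, or equivalently keep $\zeta,w$ independent until the end) is sound. Your extra observation that $\ell^2$-summability at every radius below $1$ upgrades to absolute convergence of the vector series is correct and slightly more than the paper makes explicit.
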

\begin{proof}
We assume without loss of generality that $a$ and $b$ are eigenvectors for $L_0$ with eigenvalues $\Delta_a$ and $\Delta_b$, respectively.
For $c \in V$ an eigenvector of $L_0$, we will re-index the modes of $c$ by writing $c_n = c_{(n+\Delta_c - 1)}$, so that $[L_0, c_n] = -n c_n$.

We will first establish the result under the additional assumption that $L_1a = 0$ (i.e. that $a$ is \emph{quasiprimary}), where $L_n = \nu_{(n+1)}$ is the representation of the Virasoro algebra associated to $V$.
In this case, the invariance property for the inner product says that
$$
\ip{c, (\theta a)_{n}d} = (-1)^{\Delta_a + 2\Delta_a^2} \ip{a_{-n}c, d}
$$
for every $c,d \in V$ and $n \in \Z$.
By standard results about vertex operator superalgebras (see \cite[Prop. 8.10.3]{FLM88}), the series
\begin{align*}
\ip{Y(\theta a,w)Y(a, z)b, b} &=\sum_{n,m \in \Z - \Delta_a} \ip{(\theta a)_n a_m b, b} w^{-n-\Delta_a} z^{-m-\Delta_a}\\
 &= (-1)^{\Delta_a + 2\Delta_a^2} \sum_{n,m \in \Z - \Delta_a} \ip{a_m b, a_{-n} b} w^{-n-\Delta_a} z^{-m-\Delta_a}
\end{align*}
converges absolutely (to a rational function in $z$ and $w$) whenever $\abs{w} > \abs{z} > 0$.
In our case, $\abs{z} < 1$, so we have convergence with $w = \overline{z}^{-1}$.
Using that $\ip{a_m b, a_{-n} b} = 0$ when $m+n \ne 0$, we see that 
$$
\sum_{m \in \Z - \Delta_a} \ip{a_m b, a_m b} \abs{z}^{-2m} < \infty.
$$
But this expression is precisely $\abs{z}^{2 \Delta_a} \norm{Y(a,z)b}^2$, and so $Y(a,z)b \in \cH_V$.

We established the above result under the assumption that $L_1a = 0$.
By \cite[Rem. 4.9d]{Kac98}, $V$ is spanned by  
$$
\{L_{-1}^k a : k \in \Z_{\ge 0}, a \in V \mbox{ with } L_1a = 0, L_0 a = \Delta_a a\}.
$$
Hence it suffices to show that if $a$ is an eigenvector for $L_0$ and $Y(a, z)b \in \cH_V$ for all $0 < \abs{z} < 1$, then $Y(L_{-1}a, z)b \in \cH_V$ for all $0 < \abs{z} < 1$.

Assume we have such an $a$.
Then 
$$
\norm{Y(a,z)b}^2 = \sum_{n \in \Z} \norm{a_{(n)} b}^2 \abs{z}^{2(-n-1)} < \infty
$$
for all $0 < \abs{z} < 1$.
Hence the function $f(z)$ given by the Laurent expansion
$$
f(z) = \sum_{n \in \Z} \norm{a_{(n)} b}^2 z^{2(-n-1)}
$$
is meromorphic on the open unit disk $\interior{\D}$, with its only pole at $z=0$.
Hence $z^{-1}(zf^\prime)^\prime(z)$ is given by the Laurent expansion
$$
z^{-1}(zf^\prime)^\prime(z) = 4\sum_{n \in \Z} (n-1)^2\norm{a_{(n)}b}^2 z^{2(-n-2)},
$$
which must converge absolutely for $0 < \abs{z} < 1$.
But
$$
Y(L_{-1}a,z)b = \frac{d}{dz} Y(L_{-1}a, z)b = \sum_{n \in \Z} (n-1)a_{(n)}b z^{-n-2},
$$
and we have therefore shown that
$$
\norm{Y(L_{-1}a,z)b}^2 = \sum_{n \in \Z} (n-1)^2 \norm{a_{(n)}b}^2 \abs{z}^{2(-n-2)} < \infty.
$$
\end{proof}

We now turn our attention to unitary subalgebras of unitary vertex operator superalgebras.

\begin{Definition}
Let $(V,Y,\Omega,\nu,\ip{\,\cdot\, , \, \cdot \,}, \theta)$ be a unitary vertex operator superalgebra.
Then a subalgebra $W \subset V$ is called a \emph{unitary subalgebra} if $\theta(W) \subset W$ and $L_{1}W \subset W$.
\end{Definition}

The following is essentially \cite[Prop. 5.29]{CKLW18}.
\begin{Proposition}
Let $(V,Y,\Omega,\nu,\ip{\,\cdot\, , \, \cdot \,}, \theta)$ be a simple unitary vertex operator superalgebra and let $W \subset V$ be a unitary subalgebra.
Let $\cH_V$ be the Hilbert space completion of $V$, let $\cH_W$ be the closure of $W$ in $\cH_V$, and let $e_W$ be the projection of $\cH_V$ onto $\cH_W$.
Let $\nu^W = e_W\nu$, and let $Y^W$ and $\theta^W$ be the restrictions of $Y$ and $\theta$ to $W$.
Then $\nu^W \in W$ and $\nu^W$ is a conformal vector for $W$ making $(W, Y^W, \nu^W, \ip{\,\cdot\, , \, \cdot \,}, \theta^W)$ into a simple unitary vertex operator superalgebra.
We have $L_i^{W} = L_i|_{W}$ for $i \in \{-1, 0 ,1\}$, and in particular the $\tfrac12 \Z_{\ge 0}$ grading of $W$ coincides with the one inherited from $V$.
\end{Proposition}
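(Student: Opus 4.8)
The plan is to realize the orthogonal projection $e_W$ as an operator commuting with the $W$-action, and then to read off $\nu^W$ and its modes from this commutation together with skew-symmetry; once $L_{-1}, L_0, L_1$ are under control, the Virasoro relations and the remaining axioms fall out by bookkeeping. First I would record two preliminary facts. Taking $b \mapsto \theta a$ in the invariance property of Definition~\ref{defUnitaryVOSA} (and using $\theta^2 = 1$) exhibits the adjoint $a_{(n)}^\dagger$, for $a \in W$, as a finite linear combination of modes of the vectors $L_1^j\theta a$, $j \ge 0$; since $W$ is $\theta$- and $L_1$-invariant, every such vector lies in $W$, so $a_{(n)}^\dagger$ maps $W$ into $W$. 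As $a_{(n)}$ also preserves $W$, both $a_{(n)}$ and $a_{(n)}^\dagger$ preserve $\cH_W$ and hence $\cH_W^\perp$, giving
$$[e_W, a_{(n)}] = 0 \qquad \text{for all } a \in W,\ n \in \Z.$$
Separately, $L_{-1}W \subset W$ (because $L_{-1}a = a_{(-2)}\Omega \in W$), $L_1 W \subset W$ by hypothesis, and $L_0 = \tfrac12[L_1,L_{-1}]$ therefore preserves $W$; using the relation $\ip{L_n a, b} = \ip{a, L_{-n}b}$ from the properties of the PCT operator and the same orthogonal-complement argument, $e_W$ commutes with $L_{-1}, L_0, L_1$. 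Since $\nu \in V_2$ and $e_W$ preserves the $L_0$-grading, $\nu^W = e_W\nu$ lies in $\cH_W \cap V_2 = W_2 \subset W$, and $L_1\nu^W = e_W L_1\nu = 0$, so $\nu^W$ is quasiprimary of weight $2$.

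The crux is then the identity
$$Y(\nu^W, x)a = e_W\, Y(\nu, x)a \qquad \text{for all } a \in W,$$
which I would obtain from two applications of skew-symmetry (a consequence of the Borcherds identity, Theorem~\ref{thmBorcherds}). Writing $\nu^W = e_W\nu$, skew-symmetry gives $Y(\nu^W,x)a = e^{xL_{-1}}Y(a,-x)\nu^W$ (the super-sign is trivial as $\nu^W$ is even); the commutation $[e_W,a_{(n)}]=0$ replaces $Y(a,-x)e_W\nu$ by $e_W Y(a,-x)\nu$; a second use of skew-symmetry turns $Y(a,-x)\nu$ into $e^{-xL_{-1}}Y(\nu,x)a$; and since $e_W$ commutes with $L_{-1}$ the two exponentials cancel. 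Comparing coefficients yields $(\nu^W)_{(n)}a = e_W L_{n-1}a$ for every $n$ and every $a \in W$. For $n \in \{0,1,2\}$ one has $L_{n-1}a \in W$, so $e_W$ acts as the identity and $L_i^W = L_i|_W$ for $i \in \{-1,0,1\}$; in particular $L_{-1}^W = L_{-1}|_W$ is the translation operator $a \mapsto a_{(-2)}\Omega$ of the vertex superalgebra $W$, and $L_0^W = L_0|_W$ recovers the inherited $\tfrac12\Z_{\ge 0}$-grading.

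Next I would verify the Virasoro relations. By the Borcherds commutator formula (Theorem~\ref{thmBorcherds}) the bracket $[L_m^W, L_n^W]$ is determined by the products $(\nu^W)_{(j)}\nu^W$ for $j \ge 0$, and the identity above gives $(\nu^W)_{(j)}\nu^W = e_W L_{j-1}\nu^W$. Feeding in $L_{-1}^W = L_{-1}|_W$, $L_0\nu^W = 2\nu^W$, $L_1\nu^W = 0$, the simplicity of $V$ (so $V_0 = \C\Omega$ by Proposition~\ref{propUnitarySimple}, whence $L_2\nu^W = \ip{\nu^W,\nu}\Omega = \norm{\nu^W}^2\Omega$), and $L_{j-1}\nu^W = 0$ for $j \ge 4$ on weight grounds, produces exactly the data for $\{L_n^W\}$ to be a Virasoro representation of central charge $c^W = 2\norm{\nu^W}^2$. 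Thus $\nu^W$ is a conformal vector; the remaining vertex-superalgebra axioms for $(W,Y^W,\Omega,\nu^W)$ — locality, the vacuum axioms, translation covariance, and finite-dimensional weight spaces — are inherited from $V$. Unitarity is immediate, since once $L_0^W = L_0|_W$ and $L_1^W = L_1|_W$ are known the invariance identity of Definition~\ref{defUnitaryVOSA} restricts verbatim to $W$ and $\theta^W = \theta|_W$ is an antilinear automorphism; finally $W_0 = W \cap \C\Omega = \C\Omega$, so $W$ is simple by Proposition~\ref{propUnitarySimple}.

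The main obstacle is precisely the commutation $[e_W, a_{(n)}] = 0$ and, built on it, the identity $Y(\nu^W,x)a = e_W Y(\nu,x)a$: everything downstream is routine, but these are where the two defining conditions of a unitary subalgebra ($\theta$-invariance and $L_1$-invariance) enter, through the mode-adjoint formula extracted from the invariance property. I would be careful to track the super-signs in skew-symmetry and in the adjoint formula, though they are harmless here because $\nu$ and $\nu^W$ are even.
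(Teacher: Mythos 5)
The paper offers no proof of this proposition, deferring entirely to \cite[Prop.~5.29]{CKLW15}; your argument is correct and follows essentially the same route as that reference, with the commutation $[e_W, a_{(n)}]=0$ (extracted from the mode-adjoint formula given by invariance together with $\theta$- and $L_1$-invariance of $W$) and the skew-symmetry identity $Y(\nu^W,x)a = e_W Y(\nu,x)a$ as the key steps. The only detail worth adding is the check that $\theta^W\nu^W=\nu^W$, which follows because $\theta W=W$ and $\ip{\theta a,\theta b}=\ip{b,a}$ force $\theta$ to commute with $e_W$, so that $\theta\nu^W=e_W\theta\nu=\nu^W$.
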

Note that unitary subalgebras of simple unitary vertex operator superalgebras are again simple by Proposition \ref{propUnitarySimple}.

\begin{Definition}
Let $(V,Y,\Omega, \nu)$ be a vertex operator superalgebra and let $W$ be a subalgebra.
The \emph{coset subalgebra} $W^c \subset V$ is given by
$$
W^c = \{a \in V : [Y(a,x), Y(b,y)]_{\pm} = 0 \mbox{ for all } b \in W \}.
$$
\end{Definition}

\begin{Proposition}\label{propUVOSACoset}
Let $(V,Y,\Omega,\nu,\ip{\,\cdot\, , \, \cdot \,}, \theta)$ be a simple unitary vertex operator superalgebra, and let $W \subset V$ be a unitary subalgebra.
Then $W^c$ is a unitary subalgebra and $\nu = \nu^W + \nu^{W^c}$.
\end{Proposition}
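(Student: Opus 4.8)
The plan is to reduce everything to the mode characterization of the coset together with a single key fact: that the candidate coset conformal vector $\nu - \nu^W$ super-commutes with all of $W$. First I would record that $W^c$ is a $\Z/2$-graded vertex subalgebra containing $\Omega$ and closed under the $n$-th products; this is the super-analogue of the standard fact that a commutant is a subalgebra, closure following from the Borcherds identity. Unwinding the super-commutator via the commutator formula of Theorem \ref{thmBorcherds}, membership is detected on modes,
$$
a \in W^c \iff a_{(j)}b = 0 \text{ for all } j \ge 0 \text{ and all } b \in W,
$$
and by skew-symmetry of vertex operators this condition is symmetric in $a$ and $b$. The $\theta$-invariance is then immediate: since $W$ is a unitary subalgebra and $\theta^2 = 1_V$ we have $\theta(W) = W$, and as $\theta$ is an antilinear automorphism with $\theta(a_{(j)}b) = (\theta a)_{(j)}(\theta b)$, the mode characterization gives $\theta a \in W^c$ whenever $a \in W^c$.

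The heart of the argument is to show $\nu - \nu^W \in W^c$. Throughout write $\mathcal{L}_n = (\nu - \nu^W)_{(n+1)} = L_n - L_n^W$; these are even operators, and I use that $L_m = L_m^W$ on $W$ for $m \in \{-1,0,1\}$ (the super-analogue of \cite[Prop. 5.29]{CKLW15}) and that $V_0 = \C\Omega$ with $V_\alpha = 0$ for $\alpha < 0$ (simplicity, Proposition \ref{propUnitarySimple}). I would first check that $\nu - \nu^W$ super-commutes with $\nu^W$, i.e. $(\nu - \nu^W)_{(j)}\nu^W = (L_{j-1} - L_{j-1}^W)\nu^W = 0$ for $j \ge 0$. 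Here simplicity is decisive: writing $m = j-1 \ge -1$, for $m \ge 3$ both terms vanish for weight reasons; for $m \in \{-1,0,1\}$ they agree since $L_m = L_m^W$ on $W$; and for $m = 2$ both lie in $V_0 = \C\Omega$ and equal $\norm{\nu^W}^2\Omega$, as pairing against $\Omega$ yields $\ip{\nu^W, L_{-2}\Omega} = \ip{\nu^W,\nu} = \norm{\nu^W}^2$ and likewise $\ip{\nu^W, L_{-2}^W\Omega} = \norm{\nu^W}^2$. Consequently $[L_n^W, \mathcal{L}_m] = 0$ for all $m,n$.

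With this in hand I would prove $\mathcal{L}_m b = 0$ for every homogeneous $b \in W$ and every $m \ge -1$ by downward induction on $m$. The inputs are: $\mathcal{L}_m b \in V_{\Delta_b - m}$, which vanishes for $m > \Delta_b$; the identity $\mathcal{L}_{-1}b = (L_{-1} - L_{-1}^W)b = 0$; and the relation $[\mathcal{L}_{-1}, \mathcal{L}_m] = -(m+1)\mathcal{L}_{m-1}$, which follows from translation covariance $[L_{-1}, \mathcal{L}_m] = -(m+1)\mathcal{L}_{m-1}$ together with $[L_{-1}^W, \mathcal{L}_m] = 0$ from the previous step. Applying this relation to $b$ and using $\mathcal{L}_{-1}b = 0$ gives $\mathcal{L}_{-1}(\mathcal{L}_m b) = -(m+1)\mathcal{L}_{m-1}b$, so $\mathcal{L}_m b = 0$ forces $\mathcal{L}_{m-1}b = 0$ as long as $m \ne -1$. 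Starting the descent at $m = \Delta_b + 1$ it reaches $m = -1$ and then halts, because the coefficient $-(m+1)$ vanishes exactly there; thus $\mathcal{L}_m b = 0$ for all $m \ge -1$, which is precisely $(\nu - \nu^W)_{(j)}b = 0$ for $j \ge 0$. Hence $\nu - \nu^W \in W^c$.

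Finally I would assemble the two conclusions. For $L_1$-invariance: if $a \in W^c$ then the symmetric mode characterization gives $L_1^W a = (\nu^W)_{(2)}a = 0$, so $L_1 a = \mathcal{L}_1 a = (\nu - \nu^W)_{(2)}a$, which lies in $W^c$ since $\nu - \nu^W \in W^c$ and $W^c$ is closed under modes; combined with $\theta$-invariance this shows $W^c$ is a unitary subalgebra, so its conformal vector is $\nu^{W^c} = e_{W^c}\nu$. For the identity, I would show $\nu^W \perp W^c$: for $a \in W^c$,
$$
\ip{\nu^W, a} = \ip{L_{-2}^W\Omega, a} = \ip{\Omega, L_2^W a} = \ip{\Omega, (\nu^W)_{(3)}a} = 0,
$$
using $\nu^W = L_{-2}^W\Omega$ and the symmetric characterization for $(\nu^W)_{(3)}a$. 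Thus $e_{W^c}\nu^W = 0$, and since $\nu - \nu^W \in W^c$ is fixed by $e_{W^c}$ we obtain $\nu^{W^c} = e_{W^c}\nu = e_{W^c}\big(\nu^W + (\nu - \nu^W)\big) = \nu - \nu^W$, i.e. $\nu = \nu^W + \nu^{W^c}$. The main obstacle is the key fact $\nu - \nu^W \in W^c$: its first half (commutation with $\nu^W$) is where simplicity enters crucially through $L_2\nu^W \in V_0 = \C\Omega$, and its second half is the delicate descent, whose validity depends on its terminating at exactly $m = -1$.
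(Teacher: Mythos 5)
Your proof is correct. Note that the paper's own ``proof'' of this proposition consists of a two-sentence citation, deferring the unitary-subalgebra claim to \cite[Ex.\ 5.27]{CKLW15} and the conformal-vector identity to \cite[Prop.\ 5.31]{CKLW15}, so you are supplying the content the paper outsources. Your route to $\nu = \nu^W + \nu^{W^c}$ --- showing $\nu - \nu^W \in W^c$ by first forcing $(\nu-\nu^W)_{(j)}\nu^W = 0$ (with simplicity entering through $V_0 = \C\Omega$ precisely at $m=2$) and then running the descent $[\mathcal{L}_{-1},\mathcal{L}_m] = -(m+1)\mathcal{L}_{m-1}$, which terminates exactly at $m=-1$ --- is essentially the argument of the cited reference. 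Where you genuinely diverge is the $L_1$-invariance of $W^c$: the standard argument gets it directly from the Borcherds product formula, expanding $(L_1a)_{(n)}b = (\nu_{(2)}a)_{(n)}b$ and using $L_{-1}W, L_0W, L_1W \subset W$ together with $a_{(k)}b = 0$ for $k\ge0$; you instead deduce it a posteriori from $\nu-\nu^W\in W^c$ via $L_1 a = (\nu-\nu^W)_{(2)}a$ for $a\in W^c$. Both work, and your ordering is not circular, since the proof that $\nu - \nu^W \in W^c$ nowhere uses unitarity of $W^c$. One small point deserves a word of justification: in the final orthogonality computation $\ip{L_{-2}^W\Omega, a} = \ip{\Omega, L_2^W a}$ the vector $a$ lies in $W^c$, not in $W$, so the adjoint relation you invoke is for the modes of $\nu^W$ acting on all of $V$; this follows from the invariance axiom of $V$ applied to the $\theta$-fixed quasi-primary vector $\nu^W$, not merely from the unitarity of $W$ as a vertex operator superalgebra in its own right.
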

\begin{proof}
The proof that $W^c$ is a unitary subalgebra in the super case is the same as the proof given \cite[Ex. 5.27]{CKLW18} in the even case. 
The statement about conformal vectors is proven just as in \cite[Prop. 5.31]{CKLW18}.
\end{proof}

A relatively straightforward construction of unitary vertex operator superalgebras is the tensor product.
\begin{Proposition}\label{propUVOSATensorProduct}
For $i \in \{1,2\}$, let $(V_i, Y^i, \Omega^i, \nu^i, \ip{\, \cdot\, , \, \cdot \,}, \theta_i)$ be unitary vertex operator superalgebras.
For $a^i \in V_i$ homogeneous vectors with parity $p(a^i)$, let $Y(a^1 \otimes a^2, x) = Y^1(a^1, x)\Gamma_{V_1}^{p(a^2)} \otimes Y^2(a^2, x)$.
Then $(V_1 \otimes V_2, Y, \Omega^1 \otimes \Omega^2, \nu^1 \otimes \Omega^2 + \Omega^1 \otimes \nu^2, \ip{\, \cdot \, , \, \cdot \,}, \theta_1 \otimes \theta_2)$ is a unitary vertex operator superalgebra.
\end{Proposition}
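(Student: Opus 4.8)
The plan is to split the statement into its purely algebraic part---that $(V_1 \otimes V_2, Y, \Omega^1 \otimes \Omega^2, \nu^1 \otimes \Omega^2 + \Omega^1 \otimes \nu^2)$ is a vertex operator superalgebra---and its unitary part---that $\ip{\,\cdot\,,\,\cdot\,}$ together with $\theta_1 \otimes \theta_2$ verifies the axioms of Definition \ref{defUnitaryVOSA}. The algebraic part is the standard tensor product construction for vertex superalgebras (see \cite[\S8.10]{FLM88} and \cite[\S4]{Kac98}); I would only check that the factor $\Gamma_{V_1}^{p(a^2)}$ inserted in $Y(a^1 \otimes a^2, x)$ is exactly what is needed so that graded locality on each factor assembles into graded locality on the product. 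The conformal vector produces modes $L_n = L_n^1 \otimes 1 + 1 \otimes L_n^2$, i.e.\ two commuting copies of the Virasoro algebra, hence the Virasoro relations with central charge $c_1 + c_2$; and since $(V_1 \otimes V_2)_\alpha = \bigoplus_{\beta + \gamma = \alpha} (V_1)_\beta \otimes (V_2)_\gamma$ is a finite sum of finite-dimensional spaces, the $\tfrac12\Z_{\ge 0}$-grading axioms hold. The vacuum and $L_{-1}$-covariance axioms are immediate.

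For unitarity, normalization is immediate from $\ip{\Omega^1 \otimes \Omega^2, \Omega^1 \otimes \Omega^2} = \ip{\Omega^1,\Omega^1}\ip{\Omega^2,\Omega^2} = 1$, and $\theta_1 \otimes \theta_2$ is readily seen to be an antilinear automorphism fixing $\Omega$ and $\nu$; the $\Gamma_{V_1}$-signs in the definition of $Y$ are respected because each $\theta_i$ preserves parity. The only real work is the invariance identity. I would first record that the even and odd subspaces of each $V_i$ are orthogonal---a consequence of $\ip{L_0\,\cdot\,,\,\cdot\,} = \ip{\,\cdot\,, L_0\,\cdot\,}$ together with the fact that parity is read off from whether the conformal weight lies in $\Z$ or in $\tfrac12 + \Z$---so that $\ip{a^i, Y^i(\,\cdot\,)c^i}$ vanishes unless $p(a^i) = p(b^i) + p(c^i)$.

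The crux is then the operator identity on $V_1 \otimes V_2$,
\[
e^{xL_1}(-1)^{L_0 + 2L_0^2}x^{-2L_0}\,(b^1 \otimes b^2) = (-1)^{p(b^1)p(b^2)}\, M_1 b^1 \otimes M_2 b^2,
\]
where $M_i = e^{xL_1^i}(-1)^{L_0^i + 2(L_0^i)^2}x^{-2L_0^i}$. Here $L_1$ and $L_0$ exponentiate factorwise, while expanding $L_0^2 = (L_0^1 \otimes 1 + 1 \otimes L_0^2)^2$ produces a cross term contributing $(-1)^{4 L_0^1 \otimes L_0^2}$; since a homogeneous vector is odd exactly when its conformal weight lies in $\tfrac12 + \Z$, this cross-term sign acts on $b^1 \otimes b^2$ as $(-1)^{p(b^1)p(b^2)}$. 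In other words, the $2L_0^2$ term in the PCT formula is engineered precisely to reproduce the super sign carried by the $\Gamma_{V_1}$ insertion in $Y$.

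I would finish by expanding both sides of the invariance property on homogeneous simple tensors $a = a^1 \otimes a^2$, $b = b^1 \otimes b^2$, $c = c^1 \otimes c^2$. Feeding in invariance for each $V_i$ and the factorization above, both sides reduce to $\ip{Y^1(M_1 b^1, x^{-1})a^1, c^1}\,\ip{Y^2(M_2 b^2, x^{-1})a^2, c^2}$ times a product of signs: on the left a factor $(-1)^{p(b^2)p(c^1)}$ from $\Gamma_{V_1}^{p(b^2)}$ acting on $c^1$, and on the right a factor $(-1)^{p(b^1)p(b^2)}$ from the cross term together with $(-1)^{p(b^2)p(a^1)}$ from $\Gamma_{V_1}^{p(b^2)}$ acting on $a^1$. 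Using the parity constraint $p(a^1) \equiv p(b^1) + p(c^1)$, the right-hand exponent $p(b^1)p(b^2) + p(b^2)p(a^1)$ collapses to $p(b^2)p(c^1)$, so the signs agree. The main obstacle throughout is exactly this bookkeeping of super signs---verifying that the cross term from $(-1)^{2L_0^2}$ matches the $\Gamma_{V_1}$ signs and that all residual signs cancel by parity conservation; once this is organized the identity follows, and sesquilinearity extends it to all of $V_1 \otimes V_2$.
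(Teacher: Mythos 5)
Your proposal is correct, and its skeleton matches the paper's: the algebraic part is the standard tensor product construction, and the real content is the invariance axiom. The difference is in how much is carried out explicitly. The paper's proof is essentially a pointer: it cites \cite[Prop. 2.4]{AiLin15} for the statement, verifies locality by applying Dong's lemma to the generating fields $Y^1(a^1,x)\otimes 1_{V_2}$ and $\Gamma^{p(a^2)}\otimes Y^2(a^2,x)$ (which supercommute on the nose), and then defers the invariance identity to \cite[Prop. 2.9]{DongLin14} with the remark that it is ``straightforward.'' You instead supply the invariance computation in full, and your sign analysis is the right one: writing $L_0 = L_0^1\otimes 1 + 1\otimes L_0^2$, the cross term in $2L_0^2$ contributes $(-1)^{4L_0^1\otimes L_0^2}$, which on $b^1\otimes b^2$ equals $(-1)^{p(b^1)p(b^2)}$ because parity is read off from $\Delta \bmod \Z$; combined with the $\Gamma_{V_1}^{p(b^2)}$ signs on each side and the parity-conservation constraint $p(a^1)\equiv p(b^1)+p(c^1)$ (justified, as you note, by orthogonality of the even and odd subspaces), the residual exponents cancel. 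This is precisely the content the paper outsources, so your version buys self-containedness at the cost of length. On the locality side you are slightly less explicit than the paper --- the clean way to say it is the paper's: reduce to pairwise locality of the two generating families via Dong's lemma --- but nothing in your argument is wrong.
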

\begin{proof}
This is asserted in \cite[Prop. 2.4]{AiLin17}, but we will expand on this a little.
To see that $V_1 \otimes V_2$ is a vertex operator superalgebra, the only non-trivial thing to check is locality.
By Dong's lemma \cite[\S 3.2]{Kac98}, it suffices to check that the generators $A^1(x)=Y^1(a^1, x) \otimes 1_{V_2}$ and $A^2(x)=\Gamma^{p(a^2)} \otimes Y^2(a^2, x)$ are pairwise local.
That the $A^i$ are local with respect to themselves is clear.
Additionally, we have
$$
[A^1(x), A^2(x)]_{\pm} = [Y^1(a^1,x) \otimes 1_{V_2}, 1_{V_1} \otimes Y^2(a^2,x)](\Gamma^{p(a^2)} \otimes 1_{V_2}) = 0.
$$
It is clear that $\theta_1 \otimes \theta_2$ is an antilinear automorphism of $V_1 \otimes V_2$, and the proof of invariance is straightforward, as in \cite[Prop. 2.9]{DongLin14}.
\end{proof}
Note that by Proposition \ref{propUnitarySimple}, the tensor product of simple unitary vertex operator superalgebras is again simple.

The following observation is well-known, but we were unable to find a statement in the literature, and so a proof is included for completeness.
\begin{Proposition}\label{propWandCommutantGenerateTensorProduct}
Let $(V,Y,\Omega,\nu,\ip{\, \cdot \,, \, \cdot \,}, \theta)$ be a simple unitary vertex operator superalgebra, and let $W$ be a unitary subalgebra. 
Let $U = \Span \{ a_{(-1)}b : a \in W, b \in W^c\}$.
Then $U$ is a unitary conformal subalgebra of $V$, unitarily isomorphic to $W \otimes W^c$ via the map $u(a \otimes b) = a_{(-1)}b$.
\end{Proposition}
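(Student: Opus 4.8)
The plan is to show that the bilinear map $u(a\otimes b)=a_{(-1)}b$ extends to a \emph{unitary isomorphism of vertex operator superalgebras} from $W\otimes W^c$ onto its image $U$. By Proposition \ref{propUVOSATensorProduct} and the remark following it, $W\otimes W^c$ is a simple unitary vertex operator superalgebra (both $W$ and its coset $W^c$ are unitary subalgebras of the simple $V$, hence simple). The computational engine of the whole argument is a single vanishing lemma: for $a\in W$ and $b\in W^c$ one has $a_{(j)}b=0$ and $b_{(j)}a=0$ for all $j\ge 0$. I would deduce this directly from the Borcherds commutator formula (Theorem \ref{thmBorcherds}): since $a$ and $b$ strongly supercommute by definition of $W^c$, the formula gives $\sum_{j\ge 0}\binom{m}{j}(a_{(j)}b)_{(m+k-j)}=0$ as operators for all $m,k$, and a short induction on $m\ge 0$ forces $(a_{(j)}b)_{(\ell)}=0$ for every $j\ge 0$ and every $\ell$. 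Because the state--field correspondence is injective (a vector is recovered as $v=v_{(-1)}\Omega$), this yields $a_{(j)}b=0$ for $j\ge 0$, and the same computation with the roles reversed gives $b_{(j)}a=0$.

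With the lemma in hand I would verify that $u$ is a homomorphism. Since $W\otimes W^c$ is generated by $W\otimes\Omega$ and $\Omega\otimes W^c$ (indeed $(a\otimes\Omega)_{(-1)}(\Omega\otimes b)=a\otimes b$), and $u(\Omega\otimes\Omega)=\Omega$, it suffices, by the standard principle that a vacuum-preserving linear map intertwining the modes of a generating set is automatically a homomorphism, to check the intertwining relation on these two families. For $\Omega\otimes b$ it is immediate: the Borcherds commutator formula together with $b_{(j)}a'=0$ ($j\ge 0$) gives $b_{(n)}a'_{(-1)}=(-1)^{p(b)p(a')}a'_{(-1)}b_{(n)}$, whence $b_{(n)}(a'_{(-1)}b')=(-1)^{p(b)p(a')}a'_{(-1)}(b_{(n)}b')=u\big((\Omega\otimes b)_{(n)}(a'\otimes b')\big)$. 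For $a\otimes\Omega$ I would apply the Borcherds product formula with $k=-1$ and $c=b'$: the lemma annihilates every term containing $a_{(j)}b'$ with $j\ge 0$, while $a'_{(-1+j)}b'$ vanishes for $j\ge 1$, so only the $j=0$ term survives and $(a_{(n)}a')_{(-1)}b'=a_{(n)}(a'_{(-1)}b')=a_{(n)}\,u(a'\otimes b')$. This step is the technical heart of the proposition, and the hard part will be organizing these Borcherds manipulations cleanly; the pleasant surprise is that, thanks to the vanishing lemma, each expansion collapses to a single surviving term.

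It then remains to assemble the structural conclusions. As a nonzero homomorphism out of the simple superalgebra $W\otimes W^c$, the map $u$ has kernel a proper ideal, hence $\{0\}$, so $u$ is an isomorphism onto $U:=\operatorname{im} u$. As the image of a homomorphism, $U$ is a subalgebra, and it is conformal because $u$ carries the conformal vector $\nu^W\otimes\Omega+\Omega\otimes\nu^{W^c}$ of $W\otimes W^c$ to $\nu^W+\nu^{W^c}=\nu$ by Proposition \ref{propUVOSACoset}. Since $\theta_V$ is an automorphism restricting to $\theta_W$ on $W$ and $\theta_{W^c}$ on $W^c$, we get $\theta_V(a_{(-1)}b)=(\theta_W a)_{(-1)}(\theta_{W^c}b)\in U$, so $\theta_V(U)\subseteq U$; combined with $\nu\in U$ (so $L_1U=\nu_{(2)}U\subseteq U$) this shows $U$ is a unitary subalgebra, with PCT operator $\theta_V|_U$. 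Finally, this same identity reads $\theta_V\circ u=u\circ(\theta_W\otimes\theta_{W^c})$, which says exactly that $u$ intertwines the PCT operators of $W\otimes W^c$ and $U$; hence $u$ is unitary by Proposition \ref{propIsoUnitaryIffTheta}.
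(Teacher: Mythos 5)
Your proof is correct, but it organizes the key computation differently from the paper, and the comparison is instructive. The paper's proof makes no use of your vanishing lemma $a_{(j)}b=0$ ($j\ge 0$) and does not reduce to generators: it verifies the homomorphism property in one shot by applying the Borcherds product formula with $n=-1$ to get $(a_{(-1)}b)_{(k)}c=\sum_{j\in\Z}a_{(j)}b_{(k-1-j)}c$ (using supercommutativity of the modes to merge the two halves of the sum), then directly expands $u\big((a\otimes b)_{(k)}(a'\otimes b')\big)$ and matches it term by term, moving the modes of $b$ past those of $a'$ and writing $b'=b'_{(-1)}\Omega$ along the way. Your route -- isolate the vanishing lemma, check the intertwining relation only on the generating families $W\otimes\Omega$ and $\Omega\otimes W^c$ (where each Borcherds expansion collapses to a single term), and then invoke the principle that a vacuum-preserving map intertwining the modes of a generating set is a homomorphism -- is cleaner at each step but leans on that generation principle, which is standard (it follows by induction on the length of a monomial in generator modes, using the product formula in source and target) yet is neither stated nor proved in the paper; if you use it, you should at least sketch the induction. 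One genuine improvement in your write-up: you prove injectivity of $u$ explicitly, as the kernel is a proper ideal of the simple superalgebra $W\otimes W^c$. The paper passes straight to Proposition \ref{propIsoUnitaryIffTheta} (which is stated for isomorphisms) without addressing injectivity, so your simplicity argument fills a small gap. The concluding structural steps -- $U=\operatorname{im}u$ is a conformal subalgebra since $u$ carries $\nu^W\otimes\Omega+\Omega\otimes\nu^{W^c}$ to $\nu$ by Proposition \ref{propUVOSACoset}, $\theta$-invariance from $\theta_V\circ u=u\circ(\theta_W\otimes\theta_{W^c})$, and unitarity via Proposition \ref{propIsoUnitaryIffTheta} -- coincide with the paper's.
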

\begin{proof}
We first check that $u:a \otimes b \mapsto a_{(-1)}b$ gives a vertex superalgebra homomorphism $W \otimes W^c \to V$. 

It is clear that $u(\Omega \otimes \Omega) = \Omega$.
By Proposition \ref{propUVOSACoset}, we have
$$
\nu^V = \nu^W + \nu^{W^c} = u(\nu^W \otimes \Omega) + u(\Omega \otimes \nu^{W^c}) = u(\nu^{W \otimes W^c}).
$$

Let $a,a^\prime \in W$ and $b,b^\prime \in W^c$ be homogeneous vectors.
By the Borcherds product formula (Theorem \ref{thmBorcherds}), we have for $k \in \Z$ and $c \in V$
$$
(a_{(-1)}b)_{(k)} c = \sum_{j \ge 0} a_{(-1 - j)}b_{(k+j)}c + (-1)^{p(a)p(b)}b_{(k-1-j)}a_{(j)}c = \sum_{j \in \Z} a_{(j)}b_{(k-1-j)}c,
$$
with the last sum finite since all modes of $a$ and $b$ supercommute.

On the other hand, we have 
$$
(a \otimes b)_{(k)}(a^\prime \otimes b^\prime) = (-1)^{p(b)p(a^\prime)} \sum_{j \in \Z} a_{(j)} a^\prime \otimes b_{(k-j-1)} b^\prime,
$$
and so
\begin{align*}
u\Big((a \otimes b)_{(k)} (a^\prime \otimes b^\prime)\Big) &= (-1)^{p(b)p(a^\prime)} \sum_{j \in \Z} (a_{(j)} a^\prime)_{(-1)} b_{(k-j-1)}b^\prime\\
&= (-1)^{p(b)p(a^\prime)} \sum_{j \in \Z} (a_{(j)} a^\prime)_{(-1)} b_{(k-j-1)}b^\prime_{(-1)}\Omega\\
&= (-1)^{p(b)p(a)} (-1)^{p(b^\prime)(p(a)+p(a^\prime))}\sum_{j \in \Z}  b_{(k-j-1)}b^\prime_{(-1)}(a_{(j)} a^\prime)_{(-1)}\Omega\\
&=  (-1)^{p(b)p(a)} (-1)^{p(b^\prime)(p(a)+p(a^\prime))}\sum_{j \in \Z} b_{(k-j-1)}b^\prime_{(-1)} a_{(j)} a^\prime_{(-1)}\Omega\\
&= \sum_{j \in \Z} a_{(j)}b_{(k-j-1)} a^\prime_{(-1)}b^\prime.
\end{align*}
Hence 
$$
(u(a \otimes b))_{(k)} u(a^\prime \otimes b^\prime) = (a_{(-1)}b)_{(k)} a_{(-1)}^\prime b^\prime = u\Big((a \otimes b)_{(k)} (a^\prime \otimes b^\prime)\Big),
$$
which establishes that $a \otimes b \mapsto a_{(-1)}b$ is a map of vertex operator superalgebras.

Since $U$ is the image of $u$, it is a conformal vertex subalgebra of $V$. 
Conformal subalgebras are automatically invariant under $L_1$, so to check that $U$ is a unitary subalgebra we just need to check invariance under $\theta$.
However, this is clear because $W$ and $W^c$ are unitary subalgebras, the latter by Proposition \ref{propUVOSACoset}.

Finally, we have
$$
u(\theta \xi \otimes \theta \eta) = (\theta\xi)_{(-1)} \theta \eta = \theta(u(\xi \otimes \eta)).
$$
By Proposition \ref{propIsoUnitaryIffTheta} this implies that $u$ is isometric. 
\end{proof}

\begin{Example}\label{exFFUVOSA}
In this paper, the most important example of a unitary vertex operator superalgebra is the free fermion, given on the space $\F^0$ introduced in Section \ref{subsecFermionicFockSpace}.
This example is discussed in \cite[\S5.1]{Kac98} under the name `charged free fermions.'
It is generated by the fields 
$$
Y(a(1)^*\Omega, x) = \sum_{n \in \Z} a(z^{-n-1})^*x^{-n-1}, \qquad Y(a(z^{-1})\Omega, x) = \sum_{n \in \Z} a(z^n) x^{-n-1}
$$
and has a conformal vector $\nu = \tfrac12\big(a(z^{-2})a(1)^* + a(z)^*a(z^{-1})\big)\Omega$ with central charge $c=1$.
One can verify directly, as in \cite[Eq. (5.1.0)]{Kac98}, that the grading operator $L_0 = \nu_{(1)}$ coincides (after taking closure) with the operator $L_0$ defined in \eqref{subsecFermionicFockSpace}.

We have already given an inner product on $\F^0$, and so to specify a unitary structure we need only supply a PCT operator.
Let $j \in \cB(L^2(S^1))$ be given by $(jf)(z) = -z^{-1}f(z^{-1})$, and let $\theta:\F^0 \to \F^0$ be the antilinear map given by
$$
\theta a(g_1)\cdots a(g_m) a(f_1)^* \cdots a(f_n)^*\Omega = a(j g_1)^* \cdots a(jg_m)^* a(jf_1) \cdots a(jf_n)\Omega
$$
for $f_i,g_j \in L^2(S^1)$.
\end{Example}

\begin{Proposition}
The data from Example \ref{exFFUVOSA} makes $\F^0$ into a unitary vertex operator superalgebra with $c=1$.
\end{Proposition}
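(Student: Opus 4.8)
The plan is to take the vertex operator superalgebra structure on $\F^0$ as given---it is the charged free fermion of \cite[\S5.1]{Kac98}, and the identification of $L_0$ and the value $c=1$ of the central charge are already recorded in Example \ref{exFFUVOSA}---and to verify only that the inner product inherited from Fock space, together with the antilinear map $\theta$, satisfy the two conditions of Definition \ref{defUnitaryVOSA}. Normalization $\ip{\Omega,\Omega}=1$ is immediate. For the rest I would first establish that $\theta$ is a well-defined antilinear \emph{automorphism}, and then verify the invariance property; by (the super-analogue of) the reduction principle of \cite{CKLW15} it suffices to check invariance on a generating set, for which I would use the two quasi-primary generators $\psi = a(1)^*\Omega$ and $\bar\psi = a(z^{-1})\Omega$.

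The key structural input is the linear involution $j$ on $L^2(S^1)$: from $(jf)(z) = -z^{-1}f(z^{-1})$ one computes $jz^n = -z^{-n-1}$, so that $j^2 = 1$, $j$ is unitary, and $j$ interchanges $pH = H^2(\D)$ with $(1-p)H$, i.e.\ $jpj = 1-p$. This is exactly the compatibility needed for the formula of Example \ref{exFFUVOSA} to define an antilinear operator $\theta$ on $\F^0$ with $\theta^2 = 1$ and $\theta\Omega = \Omega$, and it yields the operator relations $\theta a(f)\theta = a(jf)^*$ and $\theta a(f)^*\theta = a(jf)$. Feeding these into the generating fields gives $\theta\psi = -\bar\psi$ and $\theta\bar\psi = -\psi$, and, modewise, $\theta Y(\psi,x)\theta = Y(\theta\psi,x)$ and likewise for $\bar\psi$; since $\psi,\bar\psi$ generate $\F^0$, Dong's lemma \cite[\S3.2]{Kac98} together with the reconstruction property upgrades this to $\theta Y(a,x)\theta = Y(\theta a,x)$ for all $a$, so $\theta$ is an automorphism, and a short direct computation gives $\theta\nu = \nu$.

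For the invariance property I would exploit that $\psi$ and $\bar\psi$ are quasi-primary of conformal weight $\tfrac12$. For a quasi-primary vector $a$ of weight $\Delta_a$ the invariance identity of Definition \ref{defUnitaryVOSA} collapses, exactly as in the computation in the proof of Proposition \ref{propVertexOperatorDenselyDefined}, to $\ip{c,(\theta a)_n d} = (-1)^{\Delta_a + 2\Delta_a^2}\ip{a_{-n}c,d}$, where $c_n := c_{(n+\Delta_c-1)}$. At $\Delta_a = \tfrac12$ the sign is $(-1)^{1} = -1$, and the identity becomes the statement that, with respect to the Fock inner product, the reindexed modes of $\theta\psi = -\bar\psi$ are, up to this sign, the Hilbert-space adjoints of the modes of $\psi$; this is precisely the relation $a(f)^\dagger = a(f)^*$ combined with $\theta a(f)\theta = a(jf)^*$ and the value $jz^n = -z^{-n-1}$. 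The same computation handles $\bar\psi$, and invariance then extends to all of $\F^0$ exactly as in the even case of \cite{CKLW15}.

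The main obstacle is not any single hard estimate but the consistent bookkeeping of signs: those coming from the anticommutation relations of $\CAR(L^2(S^1))$ and the $\Z/2$-grading, the factor $-z^{-1}$ built into $j$, and the twist $(-1)^{L_0 + 2L_0^2}$ appearing in the invariance formula. Tracking these simultaneously is where errors are easiest to make, and indeed the precise choice of $j$---both the minus sign and the inversion $z\mapsto z^{-1}$---is dictated by forcing these signs to cancel at weight $\tfrac12$. A secondary point requiring care is that the reduction of unitarity to a check on generators is stated in \cite{CKLW15} only for vertex operator algebras; I would note that its extension to the super setting, and to the quasi-primary generators $\psi,\bar\psi$, is routine given the super Borcherds identity of Theorem \ref{thmBorcherds}.
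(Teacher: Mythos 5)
Your proposal is correct and follows essentially the same route as the paper: quote Kac for the VOSA structure with $c=1$, verify that $\theta$ is an antilinear automorphism by checking it on the two generators and extending, then reduce the invariance property to the quasi-primary weight-$\tfrac12$ generators (the paper cites \cite[Prop.\ 2.5]{AiLin15} for this reduction) and verify it there using $\theta a(1)^*\Omega = -a(z^{-1})\Omega$ together with the fact that $a(f)^*$ is the Hilbert-space adjoint of $a(f)$. The only cosmetic difference is that the paper extends the relation $\theta b_k \theta = (\theta b)_k$ from the generators to all of $\F^0$ via the Borcherds product formula, which is the more apt tool for that step than Dong's lemma.
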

\begin{proof}
The discussion in \cite[\S5.1]{Kac98} shows that $\F^0$ is a vertex operator superalgebra with $c=1$, so we only need to verify unitarity.
First, we show that $\theta$ is an antilinear automorphism of $\F^0$.
It is clear that $\theta\Omega = \Omega$ and $\theta \nu = \nu$, and also that $\theta^2 = 1$.
If $b \in \{a(1)^*\Omega, a(z^{-1})\Omega\}$, then by inspection we have $\theta b_k \theta = (\theta b)_k$.
It follows from the Borcherds product formula that this identity extends to all $b \in \F^0$, and thus $\theta$ is an antilinear automorphism.
By \cite[Prop. 2.5]{AiLin17}, it suffices to verify the invariance property
$$
\ip{a, Y(\theta b, x)c} = \ip{Y(e^{x L_1} (-1)^{L_0 + 2L_0^2} x^{-2L_0}b, x^{-1})a, c}
$$
when $b \in \{a(1)^*\Omega, a(z^{-1})\Omega\}$.
Note that both such $b$ have conformal weight $\Delta_b = 1/2$, and thus satisfy $L_1 b = 0$.
Hence we have
\begin{align*}
\ip{a, Y(\theta a(1)^*\Omega, x)c} &= -\ip{a,Y(a(z^{-1})\Omega, x)c}\\
&= - \sum_{n \in \Z} \ip{a, a(z^n)c}x^{-n-1}\\
&= - \sum_{n \in \Z} \ip{a(z^n)^*a,c}x^{-n-1}\\
&= \ip{Y(e^{x L_1} (-1)^{L_0 + 2L_0^2} x^{-2L_0} a(1)^*\Omega, x^{-1})a,c}.
\end{align*}
The proof of invariance when $b = a(z^{-1})\Omega$ is similar.
Finally $\ip{\Omega, \Omega} = 1$, which completes the proof.
\end{proof}

We now make a small digression to summarize the properties of positive energy representations of the Virasoro algebra that we will require; see \cite[\S3.2]{CKLW18} for a detailed overview in the spirit of this paper.

\begin{Definition}
The Virasoro algebra $\Vir$ is the complex Lie algebra spanned by elements $L_n$, $n \in \Z$, and a central element $c$ which satisfy
$$
[L_m, L_n] = (m-n)L_{m+n} + \tfrac{c}{12} (m^3-m)\delta_{m,-n}.
$$
A unitary positive energy representation of $\Vir$ is a representation of $\Vir$ on an inner product space $V$, such that
\begin{enumerate}
\item $\ip{L_n a, b} = \ip{a, L_{-n}b}$ for all $a,b \in V$,
\item $L_0$ is algebraically diagonalizable with non-negative real eigenvalues,
\item the central element $c$ acts by a scalar multiple of the identity.
\end{enumerate}
\end{Definition}
By definition, the modes $L_n = \nu_{(n+1)}$ of the conformal vector of a unitary vertex operator superalgebra give a unitary positive energy representation of the Virasoro algebra.

It is well known (relevant papers include \cite{GoWa85} and \cite{TL99}; see \cite[\S3.2]{CKLW18} for a discussion) that such representations exponentiate to strongly continuous projective unitary positive energy representations of the universal cover of $\Diff_+(S^1)$, $\Diff^{(\infty)}_+(S^1)$.
If the representation arises from a unitary vertex operator superalgebra, we have $e^{4 \pi i L_0} = 1$, and thus this representation factors through the double cover $\Diff_+^{NS}(S^1)$, as in \cite[\S6.3]{CaKaLo08}.

Suppose we have a positive energy representation $L_n$ of $\Vir$ arising from a unitary vertex operator superalgebra $V$, and let $\cH_V$ be the Hilbert space completion of $V$.
Let $L(x) = \sum_{n \in \Z} L_n x^{-n-2}$ be the associated generating function, and for $f \in C^\infty(S^1)$, write
$$
L^0(f) = \sum_{n \in \Z} L_n \hat f_n,
$$
an unbounded operator defined on $V$,
where $f(z) = \sum_{n \in \Z}\hat f_n z^n$ is the Fourier series of $f$.
Let $L(f)$ denote the closure of $L^0(f)$, and let $\cH_V^\infty$ be the smooth vectors for $\overline{1+L_0}$, defined by $\cH_V^\infty = \bigcap_{n=0}^\infty \cD((\overline{1 + L_0 \vspace{12pt}})^n)$.
Then $\cH_V^\infty$ is an invariant core for $L(f)$.
We also have $L(\overline{f}) = L(f)^*$, and if $f$ is real-valued then $L(f)$ is self-adjoint.

The generators of the Virasoro algebra correspond to complex polynomial vector fields on the unit circle $S^1$.
We denote by $\Vect(S^1)$ the space of smooth vector fields, and by $\Vect(S^1)_\C$ its complexification,
whose elements can be represented by $f(z) \frac{d}{dz}$ with $f \in C^\infty(S^1)$.
The real subpace $\Vect(S^1)$ consists of those $f(z) \tfrac{d}{dz}$ for which $-i z^{-1} f(z) \in \R$ for all $z \in S^1$.
For such $f$, there is a corresponding flow $(t,z) \mapsto \gamma_t(z) \in C^\infty(\R \times S^1)$ such that  $\gamma_t \in \Diff_+(S^1)$ is a one-parameter group, denoted
$$
\gamma_t(z) =: \exp\Big(t f(z) \tfrac{d}{dz}\Big).
$$
By definition, the flow satisfies $\tfrac{\partial}{\partial t} \gamma_t(z) = f(\gamma_t(z))$.
There is a unique continuous lift of $\gamma_t$ to $(\psi_t, \gamma_t) \in \Diff_+^{NS}(S^1)$ such that $\psi_0 \equiv 1$.

Let $\pi$ be the representation of $\Vect(S^1)$ via unbounded operators on $\cH_V$ extending the action of the Virasoro algebra. 
That is, the representation given by $\pi(f \tfrac{d}{dz}) = L(z^{-1} f)$.
Let $U_\pi : \Diff_+^{NS} \to \cPU(\cH_V)$ be the associated strongly continuous representation,  which for every $t \in \R$ will satisfy
$$
U_\pi(\psi_t,\gamma_t) = e^{t L(z^{-1} f)}
$$ 
after correcting by a scalar.

Using this description, we wish to prove that the representation $U_{\pi}$ arising from the free fermion unitary vertex operator superalgebra is the Neveu-Schwarz representation $U_{NS}$ introduces in Section \ref{subsecFermions}.
First, we require a preparatory observation.
For the free fermion, we write $\F^\infty$ for the smooth vectors for $\overline{1+L_0}$.
\begin{Proposition}\label{propVirFermionCommRels}
If $g \in C^\infty(S^1)$, then $\F^\infty$ is invariant under $a(g)$ and $a(g)^*$.
If $f,g \in C^\infty(S^1)$ and $\xi \in \F^\infty$, then 
\begin{equation}
\label{eqnSmearedVirasoroAndFermion}
L(f)a(g)\xi = a(g)L(f)\xi -  a(zfg^\prime + \tfrac12 (zf)^\prime g)\xi
\end{equation}
and
\begin{equation}
\label{eqnSmearedVirasoroAndFermionStar}
L(f)a(g)^*\xi = a(g)^*L(f)\xi +  a(z\overline{f}g^\prime + \tfrac12 (z\overline{f})^\prime g)^*\xi.
\end{equation}
\end{Proposition}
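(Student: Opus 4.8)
The plan is to reduce both identities to the mode-level commutators between the Virasoro generators $L_n$ and the fermionic modes $a(z^m)$, $a(z^m)^*$, and then to resum against the Fourier coefficients of $f$ and $g$. The algebraic core is the Borcherds commutator formula of Theorem~\ref{thmBorcherds}, applied with $a = \nu$ (so that $\nu_{(p)} = L_{p-1}$) and with $b$ one of the two generating fields of Example~\ref{exFFUVOSA}. Each such $b$ lies in $V_{1/2}$, and since conformal weights are non-negative we have $L_k b \in V_{1/2-k} = \{0\}$ for $k \geq 1$; thus $\nu_{(j)}b = L_{j-1}b$ equals $L_{-1}b$ for $j=0$, equals $\tfrac12 b$ for $j=1$, and vanishes for $j \geq 2$, using $Y(L_{-1}b,x) = \tfrac{d}{dx}Y(b,x)$. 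Feeding this into the commutator formula and comparing coefficients of powers of $x$ gives, after a short calculation,
$$[L_n, a(z^m)] = -\big(\tfrac{n}{2} + m + \tfrac12\big)a(z^{m+n}), \qquad [L_n, a(z^m)^*] = \big(m + \tfrac12 - \tfrac{n}{2}\big)a(z^{m-n})^*,$$
which one may cross-check at $n=0$ against the known action of $L_0$ on the basis of $\F$.

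Before resumming I would establish the invariance of $\F^\infty = \bigcap_k \cD((\overline{1+L_0})^k)$ under $a(g)$ and $a(g)^*$, which is what makes every expression in the statement well-defined. Specializing the first mode relation gives $[L_0, a(g)] = -a(zg' + \tfrac12 g)$ for smooth $g$, and the operation $g \mapsto zg' + \tfrac12 g$ preserves $C^\infty(S^1)$; iterating, $\operatorname{ad}(\overline{1+L_0})^k$ sends $a(g)$ to $a$ of a smooth function. Combined with $\norm{a(z^m)} = 1$ and the rapid decay of the Fourier coefficients $\hat g_m$—so that $a(g) = \sum_m \hat g_m a(z^m)$ converges in norm and each summand shifts the $L_0$-eigenvalue by $-(m+\tfrac12)$—a direct estimate shows $(\overline{1+L_0})^k a(g)\xi$ is finite for all $k$ and $\xi \in \F^\infty$, giving $a(g)\F^\infty \subseteq \F^\infty$, and similarly for $a(g)^*$.

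With invariance in hand I would compute $[L(f), a(g)]\xi = \sum_n \hat f_n [L_n, a(g)]\xi$ for $\xi \in \F^\infty$ and identify the sum. For a monomial $g = z^m$ the finite computation gives $\sum_n \hat f_n[L_n, a(z^m)] = -a(zf(z^m)' + \tfrac12(zf)'z^m)$, which is \eqref{eqnSmearedVirasoroAndFermion} on monomials; extending by linearity in $g$ and controlling the double sum over $m,n$ by the rapid decay of $\hat f_n$ and $\hat g_m$ together with the smoothness of $\xi$ yields \eqref{eqnSmearedVirasoroAndFermion} in general. For the adjoint relation \eqref{eqnSmearedVirasoroAndFermionStar} I would avoid a second computation and instead take adjoints of \eqref{eqnSmearedVirasoroAndFermion} on the core $\F^\infty$: using $L(f)^* = L(\overline f)$, for $\eta, \xi \in \F^\infty$ one has $\ip{\eta, (L(f)a(g)^* - a(g)^* L(f))\xi} = \ip{-[L(\overline f), a(g)]\eta, \xi}$, and substituting \eqref{eqnSmearedVirasoroAndFermion} for $\overline f$ identifies the right-hand side with $\ip{\eta, a(z\overline f g' + \tfrac12(z\overline f)' g)^*\xi}$, the conjugation of $f$ arising from the antilinearity of $a(\cdot)^*$; density of $\F^\infty$ then gives the operator identity.

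The algebraic mode identities are routine, so the main obstacle is the analytic bookkeeping: proving that $\F^\infty$ is invariant under $a(g)$ and $a(g)^*$, and justifying that $\sum_n \hat f_n[L_n, a(g)]\xi$ converges in $\F$ to $a$ of the stated smooth function rather than merely as a formal power series. I expect the Fourier-decay estimate underlying smooth-vector invariance to be the crux of the argument.
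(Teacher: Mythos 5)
Your proposal is correct and follows essentially the same route as the paper: both derive the mode commutators $[L_n,a(z^m)]$ and $[L_n,a(z^m)^*]$ from the Borcherds commutator formula, extend to smooth $f,g$ by resumming against rapidly decaying Fourier coefficients (the paper via truncated Fourier series and closedness of $L(f)$, you via direct norm estimates), and obtain the starred relation by an adjoint/duality argument using $L(f)^*=L(\overline f)$. The only difference is ordering — the paper proves the commutation relation on $\cD(L(f))$ first and deduces invariance of $\F^\infty$ afterwards, while you establish invariance first — which is immaterial.
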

\begin{proof}
By the Borcherds commutator formula (Theorem \ref{thmBorcherds}), we have 
$$
L_k a(z^n)\xi = a(z^n)L_k\xi -(n + \tfrac{k+1}{2})a(z^{n+k})\xi
$$
for $\xi \in \F^0$.
This is the desired formula \eqref{eqnSmearedVirasoroAndFermion} when $f = z^k$ and $g = z^n$.
By linearity, \eqref{eqnSmearedVirasoroAndFermion} holds when when $f$ and $g$ are trigonometric polynomials.

Now let $f,g \in C^\infty(S^1)$ and $\xi \in \cD(L(f))$.
By definition, $\F^0$ is a core for $L(f)$, and so we may take a sequence $\xi_n \in \F^0$ with $\xi_n \to \xi$ and $L(f)\xi_n \to L(f)\xi$. 
Let
$$
f_M(z) = \sum_{k=-M}^M \hat f_k z^k, \qquad g_N(z) = \sum_{k = -N}^N \hat g_k z^k
$$
be the truncated Fourier series.
We have $f_M \to f$ and $f_M^\prime \to  f^\prime$ in $L^2$ norm, and similarly $g_N \to g$ and $g_N^\prime \to g^\prime$.
By the above argument, 
\begin{equation}\label{eqnAllTrigVirasoroAndFermion}
L(f_M)a(g_N)\xi_n = a(g_N)L(f_M)\xi_n - a(z f_M g_N^\prime + \tfrac12 (zf_M)^\prime g_N)\xi_n.
\end{equation}
By the definition of $L(f)$, we have $L(f_M)a(g_N)\xi_n \to L(f)a(g_N)\xi_n$ as $M \to \infty$.
On the other hand, we can compute the limit of the right-hand side of \eqref{eqnAllTrigVirasoroAndFermion} the same way, and we obtain
\begin{equation}\label{eqnSecondVirasoroAndFermion}
L(f)a(g_N)\xi_n = a(g_N)L(f)\xi_n - a(z f g_N^\prime + \tfrac12 (zf)^\prime g_N)\xi_n.
\end{equation}
As $n \to \infty$, the right-hand side of \eqref{eqnSecondVirasoroAndFermion} converges, and since $L(f)$ is closed we have $a(g_N)\xi \in \cD(L(f))$ and
$$
L(f)a(g_N)\xi = a(g_N)L(f)\xi - a(z f g_N^\prime + \tfrac12 (zf)^\prime g_N)\xi.
$$ 
Letting $N \to \infty$ and repeating the above argument, we get that $a(g)\xi \in \cD(L(f))$ and that \eqref{eqnSmearedVirasoroAndFermion} holds for all smooth $f$ and $g$, and all $\xi \in \cD(L(f))$ (and in particular all $\xi \in \F^\infty$).

Let $D = \overline{1 + L_0} = 1 + L(1)$.
Then by the above $\cD(D)$ is invariant under $a(g)$ for all $g \in C^\infty(S^1)$, and we have
\begin{equation}\label{eqnDCommRel}
Da(g)\xi = a(g)D\xi - a(zg^\prime + \tfrac12g)\xi
\end{equation}
for all $\xi \in \cD(D)$.

Now suppose that $\xi \in \cD(D^2)$, so that $\xi,D\xi \in \cD(D)$.
Then by \eqref{eqnDCommRel} and the fact that $\cD(D)$ is invariant under $a(h)$ when $h$ is smooth, we have $Da(g)\xi \in \cD(D)$. 
Hence $a(g)\xi \in \cD(D^2)$.
Iterating this argument, we see that $\cD(D^n)$ is invariant under $a(g)$, and thus so is $\F^\infty$.

Now let $\xi \in \cD(L(f))$,  $\eta \in \F^\infty$ and $f,g \in C^\infty(S^1)$.
We have $L(f)^* = L(\overline{f})$, and thus $a(g)\eta \in \cD(L(f)^*)$. 
We can now calculate
\begin{align*}
\ip{a(g)^*L(f)\xi, \eta} &= \ip{\xi, L(\overline{f})a(g)\eta}\\
&= \ip{\xi, a(g)L(\overline{f})\eta} - \ip{\xi,a(z\overline{f} g^\prime + \tfrac12(z\overline{f})^\prime g)\eta}\\
&= \ip{a(g)^*\xi, L(\overline{f})\eta}- \ip{a(z\overline{f} g^\prime + \tfrac12(z\overline{f})^\prime g)^*\xi,\eta}\\
\end{align*}
It follows that $\eta \mapsto \ip{a(g)^*\xi, L(\overline{f})\eta}$ is a bounded antilinear functional, and thus $a(g)^*\xi \in \cD(L(f))$ and
$$
\ip{a(g)^*L(f)\xi,\eta} = \ip{L(f)a(g)^*\xi,\eta} - \ip{a(z\overline{f} g^\prime + \tfrac12(z\overline{f})^\prime g)^*\xi,\eta}.
$$
Thus we have \eqref{eqnSmearedVirasoroAndFermionStar}.
One can now use the same argument as above to show that $a(g)^*\cD(L(f)) \subset \cD(L(f))$ implies that $\F^\infty$ is invariant under $a(g)^*$.
\end{proof}

We now return to our original goal.

\begin{Proposition}\label{propExpIsNS}
Let $(\F^0,Y,\Omega,\nu,\ip{\, \cdot \,, \, \cdot \,}, \theta)$ be the free fermion unitary vertex operator superalgebra, and let $U$ be the projective unitary representation of $\Diff_+^{NS}(S^1)$ on $\F$ associated to the positive energy representation $L_n = \nu_{(n+1)}$ of the Virasoro algebra. 
The $U= U_{NS}$, the Neveu-Schwarz representation.
\end{Proposition}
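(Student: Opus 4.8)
The plan is to reduce the statement to a computation on one-parameter subgroups. Both $U = U_\pi$ and $U_{NS}$ are strongly continuous projective unitary representations of $\Diff_+^{NS}(S^1)$, and this group is generated by the one-parameter groups $t \mapsto (\psi_t, \gamma_t)$ obtained as continuous lifts of the flows $\gamma_t = \exp(t f \tfrac{d}{dz})$ of real vector fields $f\tfrac{d}{dz} \in \Vect(S^1)$. (The subgroup generated by these lifts is conjugation invariant and its image surjects onto the simple group $\Diff_+(S^1)$; moreover the lift of the rotation flow $r_\theta$, namely $(e^{-i\theta/2}, r_\theta)$, reaches the nontrivial deck transformation $(-1, \mathrm{id})$ at $\theta = 2\pi$, so the generated subgroup also contains the kernel of the covering and is therefore all of $\Diff_+^{NS}(S^1)$.) Since both representations are projective homomorphisms, it suffices to show that for each such flow the unitaries $U(\psi_t,\gamma_t) = e^{t L(z^{-1}f)}$ and $U_{NS}(\psi_t,\gamma_t)$ agree up to a phase.

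For this I would show that $W_t := e^{t L(z^{-1}f)}$ implements the same Bogoliubov transformation of $\CAR(L^2(S^1))$ as $U_{NS}(\psi_t, \gamma_t)$, namely $u_t := u_{NS}(\psi_t, \gamma_t)$, which by definition acts by $u_t g = \psi_t \cdot (g \circ \gamma_t^{-1})$. Writing $g_t = u_t g$, the one-parameter group law gives $\dot g_t = -f g_t' - \tfrac12 f' g_t$, where the generator is computed from $\tfrac{d}{dt}\big|_{0}\gamma_t^{-1} = -f$ and $\psi_t^2 = (\gamma_t^{-1})'$, whence $\dot\psi_0 = -\tfrac12 f'$. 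On the common core $\F^\infty$ the commutation relation of Proposition \ref{propVirFermionCommRels}, applied with smearing function $z^{-1}f$, reads $[L(z^{-1}f), a(g_t)] = -a(f g_t' + \tfrac12 f' g_t) = a(\dot g_t)$. Differentiating $\Phi(t) := W_{-t}\, a(g_t)\, W_t$ then yields
\begin{equation*}
\tfrac{d}{dt}\Phi(t) = W_{-t}\big( a(\dot g_t) - [L(z^{-1}f), a(g_t)] \big) W_t = 0,
\end{equation*}
so $\Phi(t) \equiv \Phi(0) = a(g)$, i.e. $W_t a(g) W_t^* = a(u_t g)$. Taking adjoints (using that $u_t$ is unitary on $L^2(S^1)$) gives the same relation for $a(g)^*$, so $W_t$ and $U_{NS}(\psi_t,\gamma_t)$ induce the same automorphism of $\CAR(L^2(S^1))$. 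Because this representation on $\F$ is irreducible, two unitaries implementing the same automorphism differ by a phase, so $W_t = \lambda(t)\, U_{NS}(\psi_t, \gamma_t)$ for some $\lambda(t) \in \mathbb{T}$; combined with the generation argument this proves $U = U_{NS}$ as projective representations.

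I expect the main obstacle to be the analytic bookkeeping behind the formal differentiation above: one must verify that $\F^\infty$ is invariant under $a(g)$ (which is Proposition \ref{propVirFermionCommRels}) and under each $W_t$, that $t \mapsto g_t$ is smooth into $C^\infty(S^1)$ so that $t \mapsto a(g_t)\xi$ is strongly differentiable, and that these together justify computing $\tfrac{d}{dt}\Phi(t)\xi$ termwise for $\xi \in \F^\infty$. The invariance of the smooth vectors under $W_t$ is the standard fact that a strongly continuous positive energy representation of $\Diff_+^{NS}(S^1)$ preserves the $C^\infty$ vectors of $\overline{1+L_0}$; once this is in hand the derivative identity is rigorous on $\F^\infty$, and the remaining steps are formal.
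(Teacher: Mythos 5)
Your proposal is correct and follows essentially the same route as the paper: reduce to one-parameter flows via simplicity of $\Diff_+(S^1)$ together with the observation that $(-1,\operatorname{id})$ lies in the rotation flow, compute the generator $Xg = -\tfrac12 f'g - fg'$, and match it against the commutator from Proposition \ref{propVirFermionCommRels} to show $e^{tL(z^{-1}f)}$ implements the Bogoliubov automorphism $a(g)\mapsto a(u_{NS}(\psi_t,\gamma_t)g)$, concluding by irreducibility of the $\CAR$ representation. The only difference is that the paper outsources the differentiation of $\Phi(t)=W_{-t}a(g_t)W_t$ to the exponentiation theorem argument of \cite[\S8 Exp.\ Thm.]{Wa98}, whereas you spell it out together with the analytic caveats (invariance of $\F^\infty$, smoothness of $t\mapsto g_t$), which are exactly the points that citation is meant to cover.
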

\begin{proof}
Let $f \in C^\infty(S^1)$ and suppose that $-iz^{-1}f(z) \in \R$ for all $z$, so that $f(z) \tfrac{d}{dz} \in \Vect(S^1)$.
Let $(\psi_t, \gamma_t)$ be the associated one-parameter group in $\Diff^{NS}_+(S^1)$, so that $(t,z) \mapsto \gamma_t(z) \in C^\infty(\R \times S^1)$.
To prove that $U_\pi(\psi_t, \gamma_t) = U_{NS}(\psi_t, \gamma_t)$ up to scalar, it will suffice to prove that 
\begin{equation}\label{eqnExpAgreesWithNS}
e^{t L(z^{-1} f)}a(g) e^{-t L(z^{-1} f)} = a(u_{NS}(\psi_t,\gamma_t)g)
\end{equation}
for all $g \in C^\infty(S^1)$.
Indeed, by \cite[Thm. 1]{Thurston74}, $\Diff_+(S^1)$ is simple, and it follows that one-parameter groups $(\psi_t,\gamma_t)$ generate  $\Diff^{NS}_+(S^1)$ algebraically (observe that the spin involution $(-1, \operatorname{id}) \in \Diff^{NS}_+(S^1)$ lies in the one-parameter subgroup $(e^{-it/2},e^{it}z)$).
Thus once we establish \eqref{eqnExpAgreesWithNS}, we are done.

Since $u_{NS}(\psi_t,\gamma_t)$ is a strongly continuous one-parameter group, there is a skew-adjoint operator $X$ on $L^2(S^1)$ such that
$$
u_{NS}(\psi_t,\gamma_t) = e^{tX}.
$$
For $g \in C^\infty(S^1)$, we have
$$
\frac{\partial}{\partial t} u_{NS}(\psi_t,\gamma_t)g = -u_{NS}(\psi_t, \gamma_t)(\tfrac12 f^\prime g + f g^\prime),
$$
with the difference quotients converging uniformly as functions of $z$.
Hence $C^\infty(S^1) \subset \cD(X)$, and for $g \in C^\infty(S^1)$ we have $Xg = -\tfrac12 f^\prime g - f g^\prime$.
Since $C^\infty(S^1)$ is invariant under $X$, we also have $C^\infty(S^1) \subset \cD(X^n)$ for all $n$.

By Proposition \ref{propVirFermionCommRels}, we have $L(z^{-1}f)a(g)\xi - a(g)L(z^{-1}f)\xi = a(Xg)\xi$ for $g \in C^\infty(S^1)$ and $\xi \in \F^\infty$.
We can then apply the argument of \cite[\S8 Exp. Thm.]{Wa98} to obtain \eqref{eqnExpAgreesWithNS} and complete the proof.
\end{proof}

\subsection{Fermi conformal nets}

In this section we will briefly give the definition of a Fermi conformal net, the $\Z/2\Z$-graded analog of a local conformal net.
For a more detailed introduction, the interested reader may consult \cite{CaKaLo08}.

We first recall some basic terminology.
A super Hilbert space $\cH$ is a Hilbert space, along with a $\Z/2\Z$ grading $\cH = \cH^0 \oplus \cH^1$.
The grading induces a unitary involution $\Gamma = 1_{\cH^0} \oplus -1_{\cH^1}$ called the \emph{grading involution}.
Elements of $\cH^0$ (resp $\cH^1$) are called \emph{even} (resp. \emph{odd}) \emph{homogeneous vectors}, and if $\xi \in \cH^i$ we denote the parity of $\xi$ by $p(\xi) = i$.
The $\Z/2\Z$ grading on $\cH$ induces one on $\cB(\cH)$, correspnding to the involution $x \mapsto \Gamma x \Gamma$.
The supercommutator $[\, \cdot \, , \, \cdot \,]_{\pm}$ on $\cB(\cH)$ is given by $[x,y]_{\pm} = xy - (-1)^{p(x)p(y)}yx$ for homogeneous $x$ and $y$, and by extending linearly otherwise.

An \emph{interval} $I \subset S^1$ is an open, connected, non-empty, non-dense subset.
We denote by $\cI$ the set of all intervals.
If $I \in \cI$, we denote by $I^\prime$ the complementary interval $\interior{I^c}$.

The group $\Diff_+^{NS}(S^1)$ is the subgroup of $C^\infty(S^1)^\times \rtimes \Diff_+(S^1)$ given by
$$
\Diff_+^{NS}(S^1) = \{ (\psi, \gamma) \in C^\infty(S^1)^\times \rtimes \Diff_+(S^1) : \psi^2 = (\gamma^{-1})^\prime\}.
$$
It is a double cover of $\Diff_+(S^1)$.
We denote by $\Mob^{NS}$ the subgroup of $\Diff_+^{NS}(S^1)$ consisting of $(\psi, \gamma)$ for which $\gamma$ extends to a biholomorphic automorphism of the closed unit disk $\D$.
Finally, in a slight abuse of notation if $I \in \cI$ we write
$$
\Diff_{+}(I) = \{ (\psi, \gamma) \in \Diff_+^{NS}(S^1) : \gamma|_{I^\prime} = \operatorname{id} \mbox{ and } \psi|_{I^\prime} \equiv 1\}.
$$

\begin{Definition}
A Fermi conformal net is given by the data:
\begin{enumerate}
\item A super Hilbert space $\cH = \cH^1 \oplus \cH^0$, with corresponding unitary grading involution $\Gamma$.
\item A strongly continuous projective unitary representation $U:\Diff_+^{NS}(S^1) \to \cPU(\cH)$ which restricts to an honest unitary representation of $\Mob^{NS}$.
\item For every $I \in \cI$, a von Neumann algebra $\cA(I) \subset \cB(\cH)$.
\end{enumerate}
The data is required to satisfy:
\begin{enumerate}
\item The local algebras $\cA(I)$ are $\Z/2\Z$ graded. That is, $\Gamma \cA(I)\Gamma = \cA(I)$.
\item If $I,J \in \cI$ and $I \subset J$, then $\cA(I) \subset \cA(J)$.
\item If $I,J \in \cI$ and $I \cap J = \emptyset$, then $[\cA(I), \cA(J)]_{\pm} = \{0\}$.
\item $U(\psi,\gamma)\cA(I)U(\psi,\gamma)^* = \cA(\gamma(I))$ for all $(\psi,\gamma) \in \Diff^{NS}_+(S^1)$, and $U(\psi,\gamma)xU(\psi,\gamma)^* = x$ when $x \in \cA(I)$ and $(\psi,\gamma) \in \Diff_+(I^\prime)$.
\item There is a unique up to scalar unit vector $\Omega \in \cH$, called the \emph{vacuum vector}, which satisfies $U(\psi,\gamma)\Omega = \Omega$ for all $(\psi, \gamma) \in \Mob^{NS}$.
This vacuum vector is required to be cyclic for the von Neumann algebra $\cA(S^1):=\bigvee_{I \in \cI} \cA(I)$, and it must satisfy $\Gamma \Omega = \Omega$.
\item The generator $L_0$ of the one-parameter group $U(e^{-it/2},e^{it}z)$ is positive.
\end{enumerate}
An isomorphism of Fermi conformal nets $(\cA_1,U_1)$ and $(\cA_2, U_2)$ on $\cH_1$ and $\cH_2$, respectively, is a unitary $u:\cH_1 \to \cH_2$ such that $u\cA_1(I)u^* = \cA_2(I)$ for all $I \in \cI$, and $uU_1(\psi,\gamma)u^* = U_2(\psi,\gamma)$ for all $(\psi,\gamma) \in \Diff_+^{NS}(S^1)$.
\end{Definition}
A Fermi conformal net with $\cH = \cH^0$ is called a local conformal net (or sometimes just a conformal net).
If we set $\cA_b(I) = \{x \in \cA(I) : p(x) = 0\}$, then $\cA_b$ is a local conformal net on $\cH^0$.

Fermi conformal nets have many properties analogous to familiar properties of conformal nets.
We list some basic properties here:
\begin{Theorem}[\cite{CaKaLo08}]\label{thmFermiNetProps}
Let $\cA$ be a Fermi conformal net. Then we have:
\begin{enumerate}
\item (Haag duality) $\cA(I^\prime) = \{x \in \cB(\cH) : [x,y]_{\pm} = 0 \mbox{ for all } y \in \cA(I) \}$
\item $\Gamma U(\psi,\gamma) = U(\psi,\gamma) \Gamma$ for all $(\psi,\gamma) \in \Diff_+^{NS}(S^1)$.
\item $U(-1,\operatorname{id}) = e^{2 \pi i L_0} = \Gamma$
\item $\cA(I)$ is a type III factor for every interval $I \in \cI$.
\item (Reeh-Schlieder) $\cH = \overline{\cA(I)\Omega}$ for every $I \in \cI$.
\end{enumerate}
\end{Theorem}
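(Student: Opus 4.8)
The five assertions are the graded analogues of the classical structural theorems for local conformal nets, so the plan is to bootstrap all of them from a single deep input, the \emph{Bisognano--Wichmann property}, while handling the $\Z/2\Z$ grading by a Klein transformation that converts graded locality into ordinary locality (equivalently, by passing to the even Bose subnet $\cA_b(I) = \{x \in \cA(I) : p(x) = 0\}$). Concretely I would fix the Klein unitary $Z = (1+i)^{-1}(1 + i\Gamma)$, which acts as $1$ on $\cH^0$ and $-i$ on $\cH^1$, and use it to produce an auxiliary \emph{local} (ungraded) net to which the well-developed even-case theory applies; the graded statements are then recovered by untwisting.

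I would first dispatch Reeh--Schlieder (v), which needs the least machinery. By hypothesis $\Omega$ is cyclic for $\cA(S^1) = \bigvee_I \cA(I)$; to promote this to cyclicity for a single $\cA(I)$ I would run the standard analytic-continuation argument. Given $\xi \perp \cA(I)\Omega$ and $x \in \cA(I)$, the function $t \mapsto \ip{\xi, U(\gamma_t)\,x\,\Omega}$ along a one-parameter M\"obius subgroup $\gamma_t$ extends holomorphically to a half-plane because $L_0 \ge 0$, and it vanishes on a real interval (where $\gamma_t$ keeps $I$ inside a fixed interval), hence vanishes identically; M\"obius covariance and additivity then force $\xi = 0$. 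The grading is inert here since $\Gamma\Omega = \Omega$ and $\Gamma$ commutes with $U$. For (ii) and (iii) I would note that setting $t = 2\pi$ in $U(e^{-it/2}, e^{it}z) = e^{itL_0}$ gives $U(-1,\operatorname{id}) = e^{2\pi i L_0}$ directly from the structure of the double cover $\Diff_+^{NS}(S^1)$, which is the first equality of (iii); since $(-1,\operatorname{id})$ is central, once one checks the projective multiplier is trivial on it (equivalently, that the $U(\psi,\gamma)$ are even), (ii) follows. The remaining equality $e^{2\pi i L_0} = \Gamma$ is the spin--statistics relation, which I would extract from the modular analysis below.

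The technical core is (i) and (iv). I would establish Bisognano--Wichmann: for a standard interval $I$, the modular group $\Delta_I^{it}$ of $(\cA(I),\Omega)$ equals $U$ of the dilation subgroup fixing $I$, and the modular conjugation $J_I$ implements the reflection exchanging $I$ and $I'$. In the ungraded case this is Borchers/Wiesbrock: positivity of energy together with the geometric commutation relations between the translation and dilation subgroups of $\Mob$ forces the modular objects to be geometric. Transporting this to the twisted/Bose net and untwisting yields the graded Bisognano--Wichmann property, from which the spin--statistics relation $e^{2\pi i L_0} = \Gamma$ drops out (the $2\pi$ rotation is a product of modular reflections whose square acts as the grading on odd fields). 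Graded Haag duality (i), $\cA(I') = \{x : [x,y]_\pm = 0,\ \forall y \in \cA(I)\}$, then follows from Tomita--Takesaki ($J_I\cA(I)J_I = \cA(I)'$) combined with graded locality, the twist converting the graded commutant into an ordinary one. For (iv), factoriality follows because the center of $\cA(I)$ is globally $U$-invariant and fixed by the now-geometric modular flow, hence reduces to scalars by irreducibility and additivity of the net; the type III$_1$ conclusion follows from the dilation form of the modular group, whose full modular spectrum yields Connes invariant $S(\cA(I)) = [0,\infty)$.

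The main obstacle is precisely the Fermi Bisognano--Wichmann property together with the Klein-twist bookkeeping: arranging the graded modular theory so that it reproduces the geometric M\"obius action, and in particular deriving the spin--statistics identity $e^{2\pi i L_0} = \Gamma$, is where all the genuine difficulty resides. Everything else---Reeh--Schlieder, the first equality of (iii), and the deductions of Haag duality, factoriality, and type III from modular theory---is a relatively routine transcription of the even-case arguments, provided the twist faithfully intertwines the graded and ungraded commutants.
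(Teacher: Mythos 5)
The paper does not prove this theorem; it is quoted as a black box from \cite{CaKaLo08}, so there is no internal argument to compare against. Your outline — Klein twist to reduce graded to ordinary locality, Reeh--Schlieder by analytic continuation in the energy, Bisognano--Wichmann as the engine for twisted Haag duality, spin--statistics ($e^{2\pi i L_0}=\Gamma$), factoriality and type III — is correct and is essentially the strategy carried out in that reference.
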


A family of von Neumann subalgebras $\cB(I) \subset \cA(I)$ is called a \emph{covariant subnet} if $\cB(I) \subset \cB(J)$ when $I \subset J$ and $U(\psi,\gamma)\cB(I)U(\psi,\gamma)^* = \cB(\gamma(I))$ for all $(\psi,\gamma) \in \Mob^{NS}$.

\begin{Theorem}\label{thmSubnetsAreNets}
Let $(\cB, U)$ be a covariant subnet of a Fermi conformal net.
Then there is a unique strongly continuous projective unitary representation of $\Diff_+^{NS}(S^1)$ making $\cB$ into a Fermi conformal net on $\cH_B :=\overline{\cB(S^1)\Omega}$.
\end{Theorem}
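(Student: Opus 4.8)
The plan is to first restrict the ambient M\"obius symmetry to $\cH_B$, verify that the M\"obius-covariant axioms pass to the subnet essentially by inspection, and then treat the extension to all of $\Diff_+^{NS}(S^1)$ as a separate and genuinely harder problem. Write $\cA$ for the ambient Fermi conformal net on $\cH$ with symmetry $U$, and let $e_B$ be the orthogonal projection onto $\cH_B = \overline{\cB(S^1)\Omega}$. First I would record that each $\cB(I)$ leaves $\cH_B$ invariant: since $\cB(I) \subset \cB(S^1)$ and the latter is a von Neumann algebra, $\cB(I)\cB(S^1)\Omega \subseteq \cB(S^1)\Omega$, so $e_B \in \cB(I)^\prime$ for every $I \in \cI$ and the compressions $\cB_I := \cB(I)|_{\cH_B}$ are von Neumann algebras on $\cH_B$. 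Next, for $(\psi,\gamma) \in \Mob^{NS}$ one has $U(\psi,\gamma)\Omega = \Omega$ together with $U(\psi,\gamma)\cB(I)U(\psi,\gamma)^* = \cB(\gamma I)$, so $U(\psi,\gamma)$ carries a generating vector $b_1 \cdots b_n \Omega$ (with $b_j \in \cB(I_j)$) to $(U(\psi,\gamma)b_1 U(\psi,\gamma)^*) \cdots (U(\psi,\gamma)b_n U(\psi,\gamma)^*)\Omega \in \cB(S^1)\Omega$. Hence $U(\Mob^{NS})$ preserves $\cH_B$, and restriction yields an honest unitary representation of $\Mob^{NS}$ on $\cH_B$ (honest because $U$ is honest on $\Mob^{NS}$) which will serve as the M\"obius part of the sought representation $U_B$ and which implements the M\"obius covariance of the family $\cB_I$.

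The second step checks the remaining M\"obius-net axioms on $\cH_B$, all of which are inherited. Isotony is immediate. For the grading, $\Gamma = U(-1,\operatorname{id})$ by Theorem \ref{thmFermiNetProps}, and $(-1,\operatorname{id}) \in \Mob^{NS}$, so $\Gamma$ preserves $\cH_B$ and $\Gamma \cB_I \Gamma = \cB_I$; in particular the subnet is automatically graded and $\Gamma \Omega = \Omega$. Graded locality for $\cB$ follows from $[\cA(I), \cA(J)]_{\pm} = 0$ since $\cB_I \subseteq \cA(I)$. The vector $\Omega$ lies in and is cyclic for $\cB(S^1)$ on $\cH_B$ by the definition of $\cH_B$, is M\"obius invariant, and is the unique such vector because it is already the unique M\"obius-invariant unit vector in the larger space $\cH$; cyclicity for each $\cB_I$ separately is then the Reeh--Schlieder theorem for M\"obius covariant nets. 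Positivity of the energy is automatic since the generator of $U_B(e^{-it/2}, e^{it}z)$ is the restriction $L_0|_{\cH_B}$ of the positive operator $L_0$. At this stage $(\cB, U_B)$ is a M\"obius covariant Fermi net on $\cH_B$, and the only place the super structure intervenes is the replacement of commutators by supercommutators, exactly as in the even case of \cite{CKLW15}.

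The main obstacle is the extension of $U_B$ from $\Mob^{NS}$ to all of $\Diff_+^{NS}(S^1)$, and here it is essential to realize that the full $U_B$ is \emph{not} obtained by restricting $U$ to $\cH_B$: for a general localized $\gamma$ the operator $U(\gamma)$ does not preserve $\cH_B$, because the subnet carries its own intrinsic stress-energy content, just as a coset or a non-conformal unitary subalgebra has a conformal vector different from the ambient one. The construction I would pursue is the super-analogue of the bosonic subnet theorem. The key geometric input is that in any Fermi conformal net a diffeomorphism localized in an interval is implemented inside the corresponding local algebra: if $(\psi,\gamma) \in \Diff_+^{NS}(S^1)$ with $\gamma|_{I^\prime} = \operatorname{id}$ and $\psi|_{I^\prime} \equiv 1$, then by the covariance axiom $U(\psi,\gamma)$ commutes with $\cA(J)$ for every $J \subseteq I^\prime$, and since the representing operators are even, graded Haag duality (Theorem \ref{thmFermiNetProps}) places $U(\psi,\gamma)$ in $\cA(I)$. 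One then uses that $\Diff_+^{NS}(S^1)$ is generated by $\Mob^{NS}$ together with such localized diffeomorphisms (as exploited via Thurston's simplicity theorem in the proof of Proposition \ref{propExpIsNS}) to build $U_B$ on $\cH_B$, the delicate point being to produce, for each localized $\gamma$, a unitary on $\cH_B$ implementing the geometric action of $\gamma$ on $\cB$ and to verify that these assemble into a strongly continuous projective positive-energy representation. I expect this existence statement to be the hard technical core, and I would handle it by adapting the argument of \cite{CaKaLo08} essentially verbatim, inserting supercommutators and the even/odd bookkeeping where needed.

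Finally, uniqueness is comparatively soft. Once $(\cB, U_B)$ is a Fermi conformal net, the same Haag-duality argument applied to $\cB$ itself forces any diffeomorphism localized in $I$ to be implemented by an even element of $\cB_I$, determined up to a phase; combined with positivity of the energy and the already-fixed M\"obius restriction this pins down the projective representation on the localized generators, and hence on all of $\Diff_+^{NS}(S^1)$. I would phrase this as an appeal to the general uniqueness of a positive-energy diffeomorphism-covariant extension of a given M\"obius covariant net.
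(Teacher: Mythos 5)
Your proposal is correct and follows essentially the same route as the paper: the paper's proof likewise observes that the grading is automatic because $\Gamma = U(-1,\operatorname{id})$, isolates covariance as the only nontrivial point, and then defers the existence of $U_B$ to the subnet-covariance theorem for local nets (\cite[Thm.\ 6.2.29]{Weiner05}, which is the argument you sketch via localized diffeomorphisms, Haag duality, and generation of $\Diff_+^{NS}(S^1)$ by $\Mob^{NS}$ and localized elements) and uniqueness to \cite[Thm 6.10]{CKLW15}. Your write-up is more explicit than the paper's about why the restriction of $U$ does not itself work and about what the cited arguments actually do, but the substance is the same.
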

\begin{proof}
Note that $\Gamma \cB(I) = \cB(I) \Gamma$ since $\Gamma = U(-1,\operatorname{id})$.
Hence $\cH_B$ is a graded subspace of $\cH$, and $\cB(I)$ is a graded algebra.
The only non-trivial thing left to verify is covariance.

The existence of a suitable representation $U_B$ of $\Diff_+^{NS}(S^1)$ can be proven just as is done for local conformal nets in \cite[Thm. 6.2.29]{Weiner05}, and uniqueness can be proven just as in \cite[Thm 6.10]{CKLW18}. 
\end{proof}

If $\cB$ is a covariant subnet of a Fermi conformal net, then the usual argument based on Takesaki's theorem (given in e.g. \cite[Lem. 2]{KawahigashiLongo04}, using the Bisognano-Wichmann property \cite[Thm. 2]{CaKaLo08}) shows the following.
\begin{Proposition}\label{propSubnetTrivial}
Let $\cA$ be a Fermi conformal net on $\cH$, and let $\cB \subset \cA$ be a covariant subnet.
For $x \in \cA(I)$, we have $x \in \cB(I)$ if and only if $x\Omega \in \cH_B$.
In particular, $\cB = \cA$ if and only if $\cH_B = \cH$.
\end{Proposition}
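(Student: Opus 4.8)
The plan is to deduce both implications directly from the characterization of $\cB(I)$ recalled immediately above the statement: for $x \in \cA(I)$ one has $x \in \cB(I)$ if and only if $x\Omega \in \cH_B$. This criterion, which rests on the Bisognano--Wichmann property via the standard argument of \cite{KawahigashiLongo04}, does all the heavy lifting, and the remaining steps are purely formal bookkeeping.

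For the forward direction I would assume $\cB = \cA$, so that $\cB(I) = \cA(I)$ for every $I \in \cI$ and hence $\cB(S^1) = \cA(S^1)$. Then $\cH_B = \overline{\cB(S^1)\Omega} = \overline{\cA(S^1)\Omega}$, and the latter equals $\cH$ because the vacuum $\Omega$ is cyclic for $\cA(S^1)$ by the definition of a Fermi conformal net (equivalently, by the Reeh--Schlieder property, Theorem \ref{thmFermiNetProps}). Thus $\cH_B = \cH$.

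For the reverse direction I would assume $\cH_B = \cH$ and fix an arbitrary interval $I \in \cI$. Since $\cB$ is a covariant subnet we already have the inclusion $\cB(I) \subseteq \cA(I)$, so it suffices to prove the reverse containment. Given any $x \in \cA(I)$, the vector $x\Omega$ automatically lies in $\cH = \cH_B$, so the recalled criterion forces $x \in \cB(I)$. Hence $\cA(I) \subseteq \cB(I)$, which yields $\cB(I) = \cA(I)$ for every $I$ and therefore $\cB = \cA$.

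Since each step is immediate once the equivalence ``$x \in \cB(I) \iff x\Omega \in \cH_B$'' is granted, there is no genuine obstacle internal to this argument; the only real content — that very equivalence — has already been imported from the Bisognano--Wichmann machinery in the discussion preceding the proposition, and I would simply cite it rather than reprove it here.
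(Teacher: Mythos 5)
Your argument is correct and matches the paper exactly: the proposition appears there as an immediate consequence ("In particular, we have:") of the criterion $x \in \cB(I) \iff x\Omega \in \cH_B$, which is precisely the equivalence you invoke. Your forward direction via cyclicity of $\Omega$ for $\cA(S^1)$ and your reverse direction via the criterion are both exactly what the paper intends.
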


There is a notion of graded tensor product $\cA_1 \grotimes \cA_2$ of a pair of Fermi conformal nets.
If $\cH_1$ and $\cH_2$ are super Hilbert spaces and $x_i \in \cB(H_i)$, define $x_1 \grotimes x_2 = x_1 \Gamma^{p(x_2)} \otimes x_2 \in \cB(\cH_1 \otimes \cH_2)$ for homogeneous $x_2$, and by extending linearly otherwise.
We have $(x_1 \grotimes x_2)(y_1 \grotimes y_2) = (-1)^{p(x_2)p(y_1)} x_1y_1 \grotimes x_2 y_2$.
Note that $\cH_1 \otimes \cH_2$ is a super Hilbert space with grading $\Gamma \otimes \Gamma$.

If $(\cA_1, U_1)$ and $(\cA_2, U_2)$ are Fermi conformal nets,  define $(\cA_1 \grotimes \cA_2)(I) = \{x_1 \grotimes x_2 : x_i \in \cA_i(I)\}^{\prime\prime}$, where the double commutant $S^{\prime \prime}$ is the von Neumann algebra generated by a self-adjoint set $S$.
Then $(\cA_1 \grotimes \cA_2, U_1 \otimes U_2)$ is a Fermi conformal net \cite[\S 2.6]{CaKaLo08}.

\begin{Example}
Let $H = L^2(S^1)$, and let $p \in \cB(H)$ be the projection onto the classical Hardy space $H^2(\D)$.
Let $\F := \F_{H, p}$, and let $U_{NS}$ be the Neveu-Schwarz representation of $\Diff_+^{NS}(S^1)$ on $\F$ (see Section \ref{subsecFermionicFockSpace}).
Then the assignment $\cM(I) = \{a(f), a(f)^* : f \in L^2(S^1), \,\, \supp f \subset I \}^{\prime \prime}$ gives a Fermi conformal net, which we call the free fermion conformal net.
Verification of the axioms of a Fermi conformal net is straightforward, although we point out that the cyclicity of the vacuum is contained in \cite[\S 15]{Wa98}, as are direct proofs of many of the properties of the free fermion conformal net.
\end{Example}

\subsection{Composition operators}\label{subsecCompositionOperators}

Let $\D$ denote the closed unit disk, and let $\interior{\D}$ be its interior.
Let $H^2(\D) = \overline{ \Span \{ z^n : n \ge 0 \} } \subset L^2(S^1)$ be the Hardy space, and recall that we can identify $H^2(\D)$ with the space of holomorphic functions on $\interior{\D}$ with almost everywhere non-tangential $L^2$ boundary values.
Let $\phi: \interior{\D} \to \interior{\D}$ be a holomorphic map, and define the composition operator $C_\phi \in \cB(H^2(\D))$ by $C_\phi f = f \circ \phi$.
For a thorough introduction to composition operators, the reader may consult \cite{Shapiro93}.

We will primarily be interested in the case when $\phi$ is a univalent (i.e. injective) map, with image $U = \phi(\interior{\D})$ a Jordan domain with $C^\infty$ boundary.
Let $\admisall$ denote the semigroup of such $\phi$.

By the smooth Riemann mapping theorem, if $\phi \in \admisall$, then $\phi$ and all of its derivatives extend continuously to $\D$, and $\phi$ induces a diffeomorphism $S^1 \to \partial U$, where $S^1$ is the unit circle.
We will denote the extension of $\phi$ to $\D$, as well as the restriction of this extension to $S^1$, by $\phi$ when there is no danger of confusion.

The deriative $\phi^\prime$ is non-vanishing on $\D$.
Let $\admisall^{NS}$ be the double cover of $\admisall$ consisting of maps $\phi$ equipped with a choice of holomorphic square root $(\phi^\prime)^{1/2}$.
Given $\phi \in \admisall^{NS}$, we define the weighted composition operator $W_\phi \in \cB(H^2(\D))$ by 
$$
(W_{\phi}f)(z) = \phi^\prime(z)^{1/2} f(\phi(z)).
$$
There is a natural structure of a semigroup on $\admisall^{NS}$, and with respect to this composition we have $W_{\phi_1}W_{\phi_2} = W_{\phi_2 \circ \phi_1}$.
The group of invertible elements in $\admisall^{NS}$ is naturally isomorphic to $\Mob^{NS}$, and $W_\phi$ corresponds to the image of $\phi^{-1}|_{S^1} \in \Mob^{NS}$ under the Neveu-Schwarz representation $u_{NS}$.
In particular, if $\phi(\D) = \D$, then $W_{\phi}$ is unitary.

\begin{Proposition}
Let $\phi \in \admisall$, and suppose that $\phi(\interior{\D}) \subsetneq \interior{\D}$ and there exists an $a \in \interior{\D}$ such that $\phi(a) = a$. 
Then there exists a unique univalent map $\sigma:\interior{\D} \to \C$ such that $\sigma(a) = 0$, $\sigma^\prime(a)=1$ and $\sigma(\phi(z)) = \phi^\prime(a) \sigma(z)$ for all $z \in \interior{\D}$.
\end{Proposition}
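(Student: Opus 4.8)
This is the classical linearization theorem of Koenigs, and the plan is to produce $\sigma$ as a locally uniform limit of suitably rescaled iterates of $\phi$ and then read off its properties.

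First I would record the facts about the multiplier $\lambda := \phi'(a)$. Because $\phi$ is univalent its derivative is nowhere zero, so $\lambda \neq 0$; conjugating $\phi$ by the disk automorphism sending $0$ to $a$ and applying the Schwarz lemma, together with the hypothesis $\phi(\interior{\D}) \subsetneq \interior{\D}$ (which forbids $\phi$ from being an automorphism), gives $0 < \abs{\lambda} < 1$. The Schwarz--Pick inequality then shows that $\phi$ strictly contracts the hyperbolic metric, so the iterates $\phi^{\circ n}$ converge to the constant $a$ uniformly on compact subsets of $\interior{\D}$; I will need this both to extend the construction globally and to prove injectivity.

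Next I would define $\sigma_n(z) := \lambda^{-n}(\phi^{\circ n}(z) - a)$ and prove local uniform convergence near $a$. Writing $\phi(a + w) = a + \lambda w + O(w^2)$, the telescoping difference is $\sigma_{n+1}(z) - \sigma_n(z) = \lambda^{-(n+1)}\big(\phi(\phi^{\circ n}(z)) - a - \lambda(\phi^{\circ n}(z) - a)\big)$, which is $O\big(\abs{\phi^{\circ n}(z)-a}^2\big)$. On a small disk $\abs{z-a}\le r$ one has a geometric bound $\abs{\phi^{\circ n}(z) - a} \le C\mu^n$ for any fixed $\mu \in (\abs{\lambda}, 1)$; choosing $\mu \in (\abs{\lambda}, \abs{\lambda}^{1/2})$ (a nonempty interval precisely because $\abs{\lambda}<1$) makes $\mu^2/\abs{\lambda} < 1$, so $\sum_n \abs{\sigma_{n+1} - \sigma_n}$ converges uniformly on that disk. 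Since any compact $K \subset \interior{\D}$ is carried into $\abs{z-a}\le r$ by some fixed iterate $\phi^{\circ N}$, the identity $\sigma_n = \lambda^{-N}\,\sigma_{n-N}\circ \phi^{\circ N}$ upgrades this to locally uniform convergence on all of $\interior{\D}$. The limit $\sigma$ is holomorphic, with $\sigma(a) = 0$ and $\sigma'(a) = \lim_n \lambda^{-n}(\phi^{\circ n})'(a) = 1$, and shifting the index by one in the definition of $\sigma_n$ yields Schr\"oder's equation $\sigma(\phi(z)) = \lambda\,\sigma(z)$.

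It then remains to check univalence and uniqueness, both of which are formal. For univalence, $\sigma'(a)=1$ makes $\sigma$ injective on a neighborhood of $a$; if $\sigma(z_1)=\sigma(z_2)$, Schr\"oder's equation gives $\sigma(\phi^{\circ n}(z_1)) = \sigma(\phi^{\circ n}(z_2))$, and for $n$ large both iterates lie in that neighborhood, forcing $\phi^{\circ n}(z_1)=\phi^{\circ n}(z_2)$ and hence $z_1=z_2$ by injectivity of $\phi$. For uniqueness, given another solution $\tau$, the map $F := \sigma \circ \tau^{-1}$ (well defined near $0$ since $\tau'(a)=1$) fixes $0$ with $F'(0)=1$ and satisfies $F(\lambda w) = \lambda F(w)$; matching Taylor coefficients shows $c_k(\lambda^k - \lambda) = 0$, and $\lambda^k \neq \lambda$ for $k \ge 2$ since $0 < \abs{\lambda} < 1$, so $F = \operatorname{id}$ and $\sigma = \tau$ on all of $\interior{\D}$ by the identity theorem. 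The heart of the argument --- and the only genuinely analytic step --- is the geometric convergence estimate near the fixed point; everything else is bookkeeping with the functional equation.
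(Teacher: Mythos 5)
Your proof is correct and complete: the geometric estimate $\mu^2/\abs{\lambda}<1$ for $\mu\in(\abs{\lambda},\abs{\lambda}^{1/2})$ is exactly what makes the telescoping sum converge, and the univalence and uniqueness arguments are sound. The paper itself does not prove this proposition but simply cites Shapiro's book, and your argument is precisely the classical Koenigs linearization proof given there, so there is nothing to compare beyond noting that you have supplied the proof the paper outsources.
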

The map $\sigma$ is called the Koenigs function associated to $\phi$. 
Observe that the map $\phi(\interior{\D}) \to \phi(\interior{\D})$ obtained by conjugating $\phi$ by its Koenigs function is simply scaling by $\phi^\prime(a)$.
Also, when $\sigma \in H^2(\D)$ it is an eigenvector for $C_\phi$.
See \cite[\S6.1]{Shapiro93} for an extended discussion and a proof of the above propositoin.

\begin{Definition}
Let $\admis$ be the subsemigroup of $\admisall$ consisting of univalent maps $\phi: \interior{\D} \to \interior{\D}$ which have the additional properties
\begin{itemize}
\item $\phi(0) = 0$ and $\phi^\prime(0) \in \R_{> 0}$.
\item If $\sigma$ is the Koenigs map associated to $\phi$, then $\sigma(\interior{\D})$ is a Jordan domain with $C^\infty$ boundary.
\end{itemize}
\end{Definition}
Note that if $\phi \in \admis$, then $\phi$ is a Riemann map for $\phi(\interior{\D})$, and so by the smooth Riemann mapping theorem $\phi$ extends smoothly\footnote{
That is, $\phi$ and all of its derivatives extend continuously from $\interior{\D}$ to $\D$, so that the restriction to $S^1$ of the continuous extension of $\phi$ is smooth.} 
to $\D$
, and induces a diffeomorphism between $S^1$ and $\partial \phi(\D)$.

It is easy to produce elements of $\admis$. 
If $U$ is any Jordan domain with $C^\infty$ boundary containing $0$, $\sigma:\interior{\D} \to U$ is a Riemann map with $\sigma(0) = 0$, and $\lambda \in \R_{> 0}$ satisfies $\lambda U \subset U$, then $\phi(z) = \sigma^{-1}(\lambda \sigma(z))$ gives an element of $\admis$.
Indeed, after rescaling, $\sigma$ is the Koenigs map of $\phi$, and every element of $\admis$ arises in this way.
Note that there is a natural embedding $\admis \subset \admisall^{NS}$ by choosing the square root of $\phi^\prime$ so that $\phi^\prime(0)^{1/2} \in \R_{> 0}$.

For $k \in \Z$, let $L^2(S^1)_{\ge k} = z^k H^2(\D) = \overline{ \Span \{ z^n : n \ge k \} }$.
We think of elements of $L^2(S^1)_{\ge k}$ as holomorphic functions on $\interior{\D} \setminus \{0\}$ with (almost everywhere, non-tangential) $L^2$ boundary values.
If $k \ge 0$, then $L^2(S^1) \subseteq H^2(\D)$, and if $\phi \in \admis$ then $L^2(S^1)_{\ge k}$ is invariant under $W_\phi$.
In fact, even when $k < 0$, $W_\phi$ induces a bounded operator on $L^2(S^1)_{\ge k}$ by the usual formula $(W_{\phi}f)(z) = \phi^\prime(z)^{1/2} f(\phi(z)).$
Indeed, if $f \in H^2(\D)$, then
$$
(W_{\phi}z^k f)(z) = z^k \phi^\prime(z)^{1/2}\frac{\phi(z)^k}{z^k} f(\phi(z)) \in L^2(S^1)_{\ge k}.
$$
Hence $W_{\phi}$ induces a linear map $L^2(S^1)_{\ge k} \to L^2(S^1)_{\ge k}$, and this map is bounded since $L^2(S^1)_{\ge k}$ differs from $H^2(\D)$ by a finite dimensional space.

Similarly, we define $L^2(S^1)_{\le k} = \overline{ \Span \{z^n : n \le k\} }$.
Let $c:L^2(S^1) \to L^2(S^1)$ be the map $(cf)(z) = \overline{zf(z)}$. 
That is, $c$ is the antilinear map satisfying $c z^k = z^{-k-1}$.
Hence $cL^2(S^1)_{\ge k} = L^2(S^1)_{\le -k -1}$, and so $\tilde W_\phi := cW_{\phi}c$ is a bounded linear map on each space $L^2(S^1)_{\le k}$.

\begin{Definition}\label{defGoodSemigroups}
Let $\scG$ be the set of families  $(\phi_t)_{t \ge 0} \subset \admis$ which satisfy
\begin{itemize}
\item $\phi_0(z) = z$ for all $z \in \D$.
\item $\phi_t(\phi_s(z)) = \phi_{t+s}(z)$ for all $t,s \in \R_{\ge 0}$ and all $z \in \D$.
\item $(t,z) \mapsto \phi_t(z)$ is a continuous function on $\R_{\ge 0} \times \interior{\D}$.
\item $\phi_t \not \equiv \operatorname{id}$. That is, $\phi_t(z) \ne z$ for some $t > 0$ and $z \in \interior{\D}$ .
\item $\lim_{t \downarrow 0} \frac{\phi_t(z)-z}{t} = -\frac{\sigma(z)}{\sigma^\prime(z)}$ for all $z \in \interior{\D}$, where $\sigma$ is the Koenigs map of $\phi_1$.
\end{itemize}
\end{Definition}
While the final condition of Definition \ref{defGoodSemigroups} may appear strict and unmotivated, we will see below that it is simply a way of choosing an element of the orbit of $\phi_t$ under reparametrization $\phi_t \mapsto \phi_{\alpha t}$.
We will primarily be interested, not in semigroups $\phi_t \in \scG$, but in domains $U$ of the form $U = \phi_t(\interior{\D})$ for some $\phi_t \in \scG$ and some $t > 0$.
Thus we do not lose anything by imposing this final restriction, and it will simplify notation at times.

It is not difficult to produce semigroups $\phi_t \in \scG$.
Let $U$ be a Jordan domain with $C^\infty$ boundary with $0 \in U$ and which is starlike about $0$.
That is, if $z \in U$, then $U$ contains the line segment between $z$ and $0$.
Then if $\sigma:\interior{\D} \to U$ is a Riemann map with $\sigma(0) = 0$ and we set $\phi_t(z) = \sigma^{-1}(e^{-t} \sigma(z))$, we have $\phi_t \in \scG$.

For example, consider the domain $U$ pictured in \eqref{eqnDegenSemigroup} on the left, with the subregion $e^{-t}U$ shaded.
\begin{equation}\label{eqnDegenSemigroup}
\begin{tikzpicture}[scale=1.5,baseline={([yshift=-.5ex]current bounding box.center)}]
	\begin{scope}[xscale=-1,yscale=-1]
	\coordinate (a) at (180:.7cm);
	\coordinate (b) at (225:0.3cm);
	\coordinate (d) at (-45:.7cm);
	\coordinate (e) at (45:.7cm);
	\coordinate (f) at (170:1.5cm);
	\coordinate (g) at (180:1.2cm);
	\coordinate (aa) at (180:.82cm);
	\coordinate (bb) at (225:0.36cm);
	\coordinate (dd) at (-45:.82cm);
	\coordinate (ee) at (45:.82cm);
	\coordinate (ff) at (170:1.8cm);
	\coordinate (gg) at (180:1.44cm);
%
	\filldraw[fill=white] (gg) .. controls ++(0:.4cm) and ++(180:.4cm) .. (aa) .. controls ++(0:.4cm) and ++(135:.4cm) .. (bb) .. controls ++(315:.4cm) and ++(225:.4cm) .. (dd) .. controls ++(45:.4cm) and ++ (295:.4cm) .. (ee) .. controls ++(115:.4cm) and ++(0:.4cm) .. (ff) .. controls ++(180:.4cm) and ++ (180:.4cm) .. (gg);
	\filldraw[fill=red!10!blue!20!gray!30!white] (g) .. controls ++(0:.4cm) and ++(180:.4cm) .. (a) .. controls ++(0:.4cm) and ++(135:.2cm) .. (b) .. controls ++(315:.4cm) and ++(225:.4cm) .. (d) .. controls ++(45:.4cm) and ++ (295:.4cm) .. (e) .. controls ++(115:.4cm) and ++(0:.4cm) .. (f) .. controls ++(180:.3cm) and ++ (180:.4cm) .. (g);
	\node at (0,0) {\scriptsize{\textbullet}};
	\node at (0:.15cm) {0};
%
	\end{scope}
\end{tikzpicture}
\qquad 
\overset{\sigma}{\xleftarrow{\hspace*{2cm}}}
\qquad
\begin{tikzpicture}[scale=1.2,baseline={([yshift=-.5ex]current bounding box.center)}]
	\coordinate (a) at (50:1cm);
	\coordinate (b) at (-10:1cm);
	\coordinate (c) at (150:.7cm);
	\coordinate (d) at (210:.7cm);
	\fill[fill=white] (0,0) circle (1cm);
	\draw (0,0) circle (1cm);
	\filldraw[fill=red!10!blue!20!gray!30!white] (a) .. controls ++(140:.6cm) and ++(80:.4cm) .. (c) .. controls ++(260:.4cm) and ++(135:.4cm) .. (d) .. controls ++(315:.4cm) and ++(260:.6cm) .. (b) arc (-10:50:1cm);

	\node at (0,0) {\scriptsize{\textbullet}};
	\node at (180:.2cm) {$0$};

\end{tikzpicture}
\end{equation}
Observe that $\phi_t(\D)$, the region shaded on the right, intersects $S^1$ in an interval.

In fact, all $\phi_t \in \scG$ arise via the above construction.
If $\phi_t \in \scG$, it is straightforward to check that the $\phi_t$ share a common Koenigs map $\sigma$, and thus $\sigma \circ \phi_t \circ \sigma^{-1}$ acts on $\sigma(\D)$ by rescaling $z \mapsto \lambda(t)z$.
Since this is a semigroup and $\phi_t^\prime(0) > 0$ by assumption, we must have $\lambda(t) = e^{-\alpha t}$ for some $\alpha \in \R_{> 0}$, and so 
\begin{equation}\label{eqSemigroupForm}
\phi_t(z) = \sigma^{-1}(e^{-\alpha t} \sigma(z)).
\end{equation}
From this formula we can see that $(t,z) \mapsto \phi_t(z)$ is smooth.

By \cite[Thm 1.1]{BerksonPorta78}, there is a unique holomorphic function $G: \interior{\D} \to \C$ such that
$$
\frac{\partial}{\partial t} \phi_t(z) = G(\phi_t(z))
$$
for $t \in \R_{> 0}$ and $z \in \interior{\D}$. 
Using \eqref{eqSemigroupForm}, we can compute
$$
G(z) = \lim_{t \downarrow 0} \frac{\partial}{\partial t} \phi_t(z) = -\alpha \frac{\sigma(z)}{\sigma^\prime(z)}.
$$
For each $z \in \interior{\D}$, the the map $t \mapsto \phi_t(z)$ extends smoothly to a neighborhood of $\R_{\ge 0}$, and so we have
$$
\lim_{t \downarrow 0} \frac{\partial}{\partial t} \phi_t(z) = \lim_{t \downarrow 0} \frac{\phi_t(z)-z}{t}.
$$
Since $\phi_t \in \scG$, we must therefore have $\alpha = 1$.
By standard results (see e.g. \cite[Thm. 2.9]{Duren83}), since $\sigma(\interior{\D})$ is starlike, we have 
\begin{equation}\label{eqPositiveRealPart}
\Re \frac{\sigma(z)}{z \sigma^\prime(z)} = -\Re \frac{G(z)}{z} \ge 0.
\end{equation}

We summarize the above discussion in the following proposition.
\begin{Proposition}\label{propSemigroupSummary}
Let $U$ be a Jordan domain with $C^\infty$ boundary.
Suppose that $0 \in U$ and that $U$ is starlike about $0$.
Let $\sigma:\interior{\D} \to U$ be a Riemann map with $\sigma(0) = 0$.
Let $\phi_t(z) = \sigma^{-1}(e^{-t} \sigma(z))$.
Then $\sigma$ and $\phi_t$ extend smoothly to $\D$, $\phi_t$ induces a diffeomorphism between $S^1$ and $\partial \phi_t(\D)$, and $(\phi_t)_{t \ge 0} \in \scG$.
Moreover, every semigroup $(\phi_t)_{t \ge 0} \in \scG$ arises in this way, and after rescaling by common scalar, $\sigma$ is the Koenigs map for every $\phi_t$ with $t > 0$.
The holomorphic map
$$
\rho(z) = \frac{\sigma(z)}{z \sigma^\prime(z)}
$$
satisfies $\Re(\rho(z)) \ge 0$ for all $z \in \D$.
\end{Proposition}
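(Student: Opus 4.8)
The plan is to assemble the proposition directly from the analysis already carried out in the preceding discussion, organizing it into the forward implication, the converse, and finally the positivity of $\rho$. For the forward direction, I would start from the smooth Riemann mapping theorem: since $U$ is a Jordan domain with $C^\infty$ boundary, the Riemann map $\sigma$ (and its inverse) extend to mutually inverse diffeomorphisms between $\D$ and $U$ restricting to a diffeomorphism $S^1 \to \partial U$. Starlikeness of $U$ about $0$ gives $e^{-t}U \subset U$ for every $t \ge 0$, so $\phi_t(z) = \sigma^{-1}(e^{-t}\sigma(z))$ is a well-defined univalent self-map of $\interior{\D}$ fixing $0$ with $\phi_t^\prime(0) = e^{-t} \in \R_{> 0}$; its Koenigs map is $\sigma$ (after normalizing $\sigma^\prime(0) = 1$), whose image $U$ is a Jordan domain with $C^\infty$ boundary, so $\phi_t \in \admis$. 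The semigroup identities $\phi_0 = \operatorname{id}$ and $\phi_{t+s} = \phi_t \circ \phi_s$ are immediate from the exponential, continuity of $(t,z) \mapsto \phi_t(z)$ follows from smoothness of $\sigma^{\pm 1}$, and $\phi_t \ne \operatorname{id}$ for $t > 0$ since $\sigma$ is injective and $e^{-t} \ne 1$. For the final condition of Definition \ref{defGoodSemigroups}, I would differentiate Schr\"oder's equation $\sigma(\phi_t(z)) = e^{-t}\sigma(z)$ in $t$ at $t = 0$ to get $\sigma^\prime(z)\,\partial_t\phi_t(z)|_{t=0} = -\sigma(z)$; since $t \mapsto \phi_t(z)$ extends smoothly past $0$, the limit of difference quotients equals this derivative $-\sigma(z)/\sigma^\prime(z)$.

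Conversely, given $(\phi_t)_{t \ge 0} \in \scG$, I would first observe that every $\phi_t$ with $t > 0$ is a non-automorphism: a univalent self-map of $\interior{\D}$ fixing $0$ with positive derivative that was surjective would be the identity, which (using the semigroup law) would force $\phi_t \equiv \operatorname{id}$ for all $t$, contradicting the defining non-triviality. Hence $\phi_t(\interior{\D}) \subsetneq \interior{\D}$ and the Koenigs proposition applies. Uniqueness of the normalized Koenigs map together with the semigroup relation forces a single $\sigma := \sigma_t$ independent of $t$, and conjugating by $\sigma$ turns each $\phi_t$ into multiplication by $\lambda(t) = \phi_t^\prime(0)$ on $\sigma(\interior{\D})$. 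Since $t \mapsto \lambda(t)$ is a continuous semigroup homomorphism with values in $(0,1]$, we have $\lambda(t) = e^{-\alpha t}$ for some $\alpha > 0$; comparing the Berkson--Porta generator $G(z) = -\alpha\,\sigma(z)/\sigma^\prime(z)$ from \cite{BerksonPorta78} against the final condition of Definition \ref{defGoodSemigroups} pins down $\alpha = 1$. Thus $\phi_t(z) = \sigma^{-1}(e^{-t}\sigma(z))$, and applying $\sigma$ to $\phi_t(\interior{\D}) \subset \interior{\D}$ shows $e^{-t}U \subset U$ for all $t$, i.e. $U := \sigma(\interior{\D})$ is starlike about $0$.

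For the positivity of $\rho$, since $U = \sigma(\interior{\D})$ is starlike, the standard characterization of starlike univalent functions \cite[Thm. 2.9]{Duren83} gives $\Re\big(z\sigma^\prime(z)/\sigma(z)\big) > 0$ on $\interior{\D}$. Because a nonzero complex number and its reciprocal have real parts of the same sign, this is equivalent to $\Re \rho(z) = \Re\big(\sigma(z)/(z\sigma^\prime(z))\big) > 0$ on $\interior{\D}$, which is exactly \eqref{eqPositiveRealPart}. Since $\sigma$ extends smoothly to $\D$ with non-vanishing derivative, $\rho$ extends continuously to $S^1$, and the strict inequality passes in the limit to $\Re\rho(z) \ge 0$ on all of $\D$.

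The hard part will be the converse direction: establishing that the whole family $(\phi_t)$ shares a single Koenigs map, and then converting the reparametrization freedom $\phi_t \mapsto \phi_{\alpha t}$ into the precise normalization $\alpha = 1$ by matching the generator against the last condition of Definition \ref{defGoodSemigroups}. The auxiliary fact that $\phi_t(\interior{\D}) \subsetneq \interior{\D}$ for \emph{every} $t > 0$ (needed before the Koenigs theory is even available) is the subtlest point, as it relies on the semigroup law rather than on any single $\phi_t$ in isolation.
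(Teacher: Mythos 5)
Your proposal is correct and follows essentially the same route as the paper, which presents this proposition as a summary of the preceding discussion: a common Koenigs map for the family, the semigroup law plus $\phi_t'(0)>0$ forcing $\lambda(t)=e^{-\alpha t}$, the Berkson--Porta generator $G(z)=-\alpha\sigma(z)/\sigma'(z)$ matched against the last condition of Definition \ref{defGoodSemigroups} to get $\alpha=1$, and \cite[Thm.~2.9]{Duren83} for $\Re\rho\ge 0$. You also fill in the details the paper leaves as ``straightforward'' (notably that $\phi_t(\interior{\D})\subsetneq\interior{\D}$ for all $t>0$, and the verification of the generator condition in the forward direction), and these are handled correctly.
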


\section{Free fermion Segal CFT for degenerate Riemann surfaces}\label{secSegalCFTForDegenerateSurfaces}

\subsection{Degenerate Riemann surfaces and their Hardy spaces}

The main idea of this paper is to extend the notion of Segal CFT to allowed degenerate Riemann surfaces such as
\begin{equation}\label{eqnDegenerateSurfaceCartoons}
\begin{tikzpicture}[baseline={([yshift=-.5ex]current bounding box.center)}]
	\coordinate (a) at (120:1cm);
	\coordinate (b) at (240:1cm);
	\coordinate (c) at (180:.25cm);
	\fill[fill=red!10!blue!20!gray!30!white] (0,0) circle (1cm);
	\draw (0,0) circle (1cm);
	\fill[fill=white] (a)  .. controls ++(210:.6cm) and ++(90:.4cm) .. (c) .. controls ++(270:.4cm) and ++(150:.6cm) .. (b) -- ([shift=(240:1cm)]0,0) arc (240:480:1cm);
	\draw ([shift=(240:1cm)]0,0) arc (240:480:1cm);
	\draw (a) .. controls ++(210:.6cm) and ++(90:.4cm) .. (c);
	\draw (b) .. controls ++(150:.6cm) and ++(270:.4cm) .. (c);
\end{tikzpicture}
\qquad 
\mbox{ and }
\qquad
\,\,\begin{tikzpicture}[baseline={([yshift=-.5ex]current bounding box.center)}]
	\coordinate (a) at (120:1cm);
	\coordinate (b) at (240:1cm);
	\coordinate (c) at (180:.25cm);
	\fill[fill=red!10!blue!20!gray!30!white] (0,0) circle (1cm);
	\draw (0,0) circle (1cm);
	\fill[fill=white] (a)  .. controls ++(210:.6cm) and ++(90:.4cm) .. (c) .. controls ++(270:.4cm) and ++(150:.6cm) .. (b) -- ([shift=(240:1cm)]0,0) arc (240:480:1cm);
	\draw ([shift=(240:1cm)]0,0) arc (240:480:1cm);
	\draw (a) .. controls ++(210:.6cm) and ++(90:.4cm) .. (c);
	\draw (b) .. controls ++(150:.6cm) and ++(270:.4cm) .. (c);
	\filldraw[fill=white] (180:.65cm) circle (.25cm); 
\end{tikzpicture}.
\end{equation}
In these pictures, the shaded region should be thought of as consisting of a (shaded) disk with one or two (unshaded) regions removed, in such a way that the boundary of the outer disk partially coincides with the boundary of the removed regions.
We will only consider degenerate Riemann surfaces of a special form, in which the annular region is obtained by removing from the unit disk its image under an element of a one-parameter semigroup of holomorphic self maps of the disk.
These degenerate surfaces are sufficient to produce conformal nets, although in future work we hope to treat more general degenerate surfaces.

We now make precise exactly what data we will use for degenerate Riemann surfaces in this paper.

\begin{Definition}\label{defDegenerateAnnulus}
A \emph{degenerate annulus} is a tuple $(\phi_t, t)$ with $(\phi_t)_{t \in \R_{\ge 0}} \in \scG$ (see Definition \ref{defGoodSemigroups}) and $t \in \R_{> 0}$.
The \emph{underlying space} of $(\phi_t, t)$ is the compact space $\Sigma = \D \setminus \phi_t(\interior{\D})$.
The \emph{incoming} and \emph{outgoing boundaries} are given by $\partial \Sigma^0 = \phi_t(S^1)$ and $\partial \Sigma^1 = S^1$, respectively.
The \emph{boundary} $\partial \Sigma$ is by definition the disjoint union $\partial \Sigma = \partial \Sigma^0 \sqcup \partial \Sigma^1$.
We let $\pi_0(\partial \Sigma)$ and $\pi_0(\partial \Sigma^i)$ be the sets of connected components, i.e.
$$
\pi_0(\partial \Sigma^1) = \{S^1\}, \pi_0(\partial\Sigma^0) = \{\phi_t(S^1)\}, \pi_0(\partial \Sigma) = \{S^1, \phi_t(S^1)\}.
$$
\emph{Boundary parametrizations} for a degenerate annulus are a pair of diffeomorphisms 
$$
\gamma=(\gamma_j)_{j \in \pi_0(\partial \Sigma)} : \bigsqcup_{j \in \pi_0(\partial \Sigma)} S^1 \to \partial \Sigma
$$
which preserve counterclockwise orientations about $0$, along with choices of smooth square roots
$$
\psi_j^2 = \frac{d}{dz} \gamma_j =: \gamma_j^\prime.
$$
The \emph{standard boundary parametrization} for a degenerate annulus $(\phi_t,t)$ is given by the diffeomorphisms $\operatorname{id}$ and $\phi_t|_{S^1}$, along with the choices of square roots $1$ and $\psi_t|_{S^1}$, where $\psi_t$ is the square root of $\phi_t$ with $\psi_t^\prime(0) > 0$.
We denote by $\cDA$ the collection of all degenerate annuli with boundary parametrizations $X=(\phi_t, t, \gamma, \psi)$, and by $\cDA_{st}$ the subcollection of ones that have the standard boundary parametrization.
We will often refer to an element $(\phi_t, t) \in \cDA_{st}$, leaving the boundary parametrizations implicit.
\end{Definition}
One should think of the data $(\phi_t, t)$ as capturing the degenerate surface $\D \setminus \phi_t(\interior{\D})$, as depicted on the left in \eqref{eqnDegenerateSurfaceCartoons}.
We think of this degenerate surface as inheriting a spin structure from the standard spin structure on $\C$, and the boundary parametrizations provide trivializations of the restriction of this spin structure to the boundary.
It would perhaps be more accurate to call elements of $\cDA$ `degenerate \emph{spin} annuli.'
When $\phi_t(\D) \subset \interior{\D}$, so that the `degenerate' surface is actually a genuine Riemann surface with boundary, this philosophy can be made precise (see Proposition \ref{propAgreesOnNondegenerate}).
Of course, Definition \ref{defDegenerateAnnulus} only captures the special class of degenerate annuli that are induced by one-parameter families of univalent maps $(\phi_t)_{t \ge 0}$.

We now move from degenerate annuli to degenerate pairs of pants.
\begin{Definition}\label{defDegeneratePants}
A \emph{degenerate pair of pants} is a degenerate annulus $(\phi_t, t)$, along with choices $w \in \interior{\D}$ and $s \in \R_{>0}$ such that $w + s\D \subset \interior{\D} \setminus \phi_t(\D)$.
The \emph{underlying space} of a degenerate pair of pants is the compact space $\Sigma = \D \setminus \big((w + s \interior{\D}) \cup \phi_t(\interior{\D})\big)$.
The \emph{incoming} and \emph{outgoing boundaries} are given by $\partial \Sigma^0 = \phi_t(S^1) \cup (w + s S^1)$ and $\partial \Sigma^1 = S^1$, respectively.
The \emph{boundary} $\partial \Sigma$ is by definition the disjoint union $\partial \Sigma = \partial \Sigma^0 \sqcup \partial \Sigma^1$.
We let $\pi_0(\partial \Sigma)$ and $\pi_0(\partial \Sigma^i)$ be the sets of connected components, i.e.
$$
\pi_0(\partial \Sigma^1) = \{S^1\}, \pi_0(\partial\Sigma^0) = \{\phi_t(S^1), w + sS^1\}, \pi_0(\partial \Sigma) = \{S^1, \phi_t(S^1), w + sS^1\}.
$$
\emph{Boundary parametrizations} for a degenerate pair of pants are a triple of diffeomorphisms 
$$
\gamma=(\gamma_j)_{j \in \pi_0(\partial \Sigma)} : \bigsqcup_{j \in \pi_0(\partial \Sigma)} S^1 \to \partial \Sigma
$$
which preserve the counterclockwise orientations of $S^1$ and $\phi_t(S^1)$ about $0$, and the counterclockwise orientation of $w + s S^1$ about $w$, along with choices of smooth square roots
$$
\psi_j^2 = \frac{d}{dz} \gamma_j =: \gamma_j^\prime.
$$
The \emph{standard boundary parametrizations} for a degenerate pair of pants is given by the map $w + sz$ and the positive square root of $s$, along with the standard boundary parametrization for the degenerate annulus $(\phi_t, t)$.
We denote by $\cDP$ the collection of all degenerate annuli with boundary parametrizations $X=(\phi_t, t, w,s,\gamma, \psi)$, and by $\cDP_{st}$ the subcollection of ones that have the standard boundary parametrization.
We will often refer to an element $(\phi_t, t, w, s) \in \cDP_{st}$, leaving the boundary parametrizations implicit.
Elements of $\cDA$ or $\cDP$ are called \emph{degenerate Riemann surfaces}, and we set $\cDR = \cDA \sqcup \cDP$ and $\cDR_{st} = \cDA_{st} \sqcup \cDP_{st}$.
We say that $X \in \cDR$ is \emph{non-degenerate} if $\phi_t(\D) \subset \interior{\D}$, or equivalently if $\Sigma$ is a Riemann surface with boundary.
\end{Definition}
\begin{figure}[h!]
$$
(\phi_t, t, w, s)
\qquad
\longleftrightarrow
\qquad
\begin{tikzpicture}[scale=2,baseline={([yshift=-.5ex]current bounding box.center)}]
	\coordinate (a) at (120:1cm);
	\coordinate (b) at (240:1cm);
	\coordinate (c) at (180:.00cm);
	\fill[fill=red!10!blue!20!gray!30!white] (0,0) circle (1cm);
	\draw (0,0) circle (1cm);
	\fill[fill=white] (a)  .. controls ++(210:.6cm) and ++(90:.4cm) .. (c) .. controls ++(270:.4cm) and ++(150:.6cm) .. (b) -- ([shift=(240:1cm)]0,0) arc (240:480:1cm);
	\draw ([shift=(240:1cm)]0,0) arc (240:480:1cm);
	\draw (a) .. controls ++(210:.6cm) and ++(90:.4cm) .. (c);
	\draw (b) .. controls ++(150:.6cm) and ++(270:.4cm) .. (c);
	\filldraw[fill=white] (180:.65cm) circle (.20cm); 
	\node at (180:.65cm) {\scriptsize{\textbullet}};
	\node at (188:.7cm) {w};
	\node at (0:0.1) {\scriptsize{\textbullet}};
	\node at (0:.2cm) {0};
	\node at (180:.45) {\scriptsize{\textbullet}};
	\node at (180:.25) {\scriptsize{$w$+$s$}};
	\node at (155:1.1cm) {$S^1$};
	\node at (113:.6cm){$\phi_t(S^1)$};
\end{tikzpicture}
$$
\caption{The geometric interpretation of a degenerate pair of pants}
\end{figure}
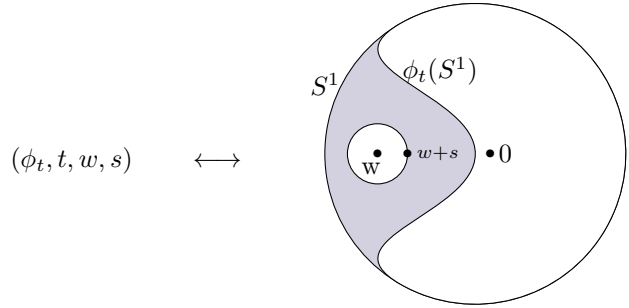

As with degenerate annuli, it would perhaps be more accurate to call elements of $\cDP$ `degenerate \emph{spin} pairs of pants,' but we will generally not do so.
As before, the given definitions of degenerate pair of pants and degenerate Riemann surface obviously only include a special class of a more general notion.

We will now extend the free fermion Segal CFT to take values on $X \in \cDR$.
As with non-degenerate Riemann surfaces, we first need to define a Hardy space $H^2(X)$.
\begin{Definition}\label{defDegenerateHardy}
Let $X \in \cDR$, and let $\Sigma$ be the underlying space of $X$.
The pre-quantized boundary Hilbert spaces are given by $H_{\partial \Sigma}^i = \bigoplus_{j \in \pi_0(\partial\Sigma^i)} L^2(S^1)$, and $H_{\partial \Sigma} = H_{\partial \Sigma}^1 \oplus H_{\partial \Sigma}^0$.
Let $\cO(\Sigma)$ be the space of functions holomorphic on some open set $U$ containing $\Sigma$, and let $(\psi_j,\gamma_j)_{j \in \pi_0(\partial \Sigma)}$ be the boundary parametrization for $X$.
The Hardy space $H^2(X)$ is given by
$$
H^2(X) = \overline{\{  \psi \cdot (F \circ \gamma) : F \in \cO(\Sigma)\}} \subset H_{\partial \Sigma}.
$$
\end{Definition}

As with the free fermion Segal CFT, we want to assign to $X$ the space of linear maps with satisfy the $H^2(X)$ commutation relations.
Let $H = L^2(S^1)$, and let $p \in \cB(H)$ be the projection onto the classical Hardy space $H^2(\D)$, and let 
$$
p_i = \bigoplus_{j \in \pi_0(\partial\Sigma^i)} p \in \cB(H_{\partial \Sigma}^i).
$$
As usual, we will write $\F$ for $\F_{H, p}$, and we set $\F_{\partial \Sigma}^i = \bigotimes_{j \in \pi_0(\partial \Sigma^i)} \F$. 
When $X \in \cDP$,  we identify $\F_{\partial \Sigma}^0$ with $\F_{H_{\partial \Sigma}^0,p_0}$ via Proposition \ref{propFockSumToTensor}, ordering the tensor factors so that the one indexed by $w + s S^1$ comes first.

\begin{Definition}\label{defDegenerateFermionSegalCFT}
Let $X \in \cDR$. 
Then we define $E(X)$ to be space of all bounded linear maps $T \in \cB(\F_{\partial \Sigma}^0, \F_{\partial \Sigma}^1)$ which satisfy the $H^2(X)$ commutation relations.
That is, those $T$ which satisfy
\begin{equation*}
a(f^1)T = (-1)^{p(T)} Ta(f^0), \qquad a(\overline{zf^1})^*T = (-1)^{p(T)} T a(\overline{zf^0})^*
\end{equation*}
for every $(f^1,f^0) \in H^2(X) \subset H_{\partial \Sigma}^1 \oplus H_{\partial \Sigma}^0$.
For non-homogeneous $T$, the commutation relations are extended linearly, or equivalently by requiring that both the even and odd part of $T$ satisfy the commutation relations.
\end{Definition}

\begin{Remark}
To match the definition of the Segal CFT for non-degenerate surfaces, it would have been better to define $E(X)$ as the space of maps satisfying
$$
a(f^1)T = (-1)^{p(T)} Ta(f^0), \qquad a(g^1)^*T = -(-1)^{p(T)} T a(g^0)^*
$$
for every $(f^1,f^0) \in H^2(X)$ and every $(g^1,g^0) \in H^2(X)^\perp$.
In the non-degenerate case, we have $H^2(X)^\perp = \overline{ M_{\pm z}H^2(X)}$, where $M_{\pm z}$ is multiplication by $(-1)^{i} z$ on $H_{\partial \Sigma}^i$, and complex conjugation is taken pointwise, and so the two definitions are equivalent.
In the non-degenerate case, it is easy to show that $\overline{M_{\pm z} H^2(X)} \subset H^2(X)^\perp$, and so the space $E(X)$ described in Definition \ref{defDegenerateFermionSegalCFT} could, \emph{a priori}, be larger than the space defined using $H^2(X)$.
However, we will show that even with the weaker constraint, we have $\dim E(X) = 1$, and so the spaces $E(X)$ are not too large.
\end{Remark}

The goal of Section \ref{secSegalCFTForDegenerateSurfaces} is to show that the spaces $E(X)$ are one-dimensional, and that they admit a nice description in terms of the Virasoro algebra action on $\F$ and free fermion vertex operators.

For simplicity, we would like to work with degenerate Riemann surfaces $X \in \cDR_{st}$ with standard boundary parametrization.
The following proposition allows us to reduce any questions about $E(X)$ for arbitrary $X \in \cDR$ to one about the corresponding element $X_{st} \in \cDR_{st}$.

\begin{Proposition}\label{propDegenerateOperatorReparametrization}
Let $X_1, X_2 \in \cDR$.
Suppose that $X_1$ and $X_2$ are the same except for having differing boundary parametrizations $(\psi^{(1)},  \gamma^{(1)})$ and $(\psi^{(2)},  \gamma^{(2)})$, respectively.
Let $\Sigma$ be the underlying surface of the $X_i$, and for $j \in \pi_0(\partial \Sigma)$ let $\gamma_j = (\gamma_j^{(2)})^{-1} \circ \gamma_j^{(1)} \in \Diff_+(S^1)$ and $\psi_j = \frac{\psi_j^{(2)}}{\psi_j^{(1)} \circ \gamma_j^{-1}}$, so that $(\psi_j, \gamma_j) \in \Diff_+^{NS}(S^1)$.
Then 
\begin{equation}\label{eqnHardyReparamRel}
H^2(X_2) = \left(\bigoplus_{j \in \pi_0(\partial \Sigma)} u_{NS}(\psi_j, \gamma_j) \right)H^2(X_1)
\end{equation}
and 
\begin{equation}\label{eqnSegalReparamRel}
E(X_2) = U_{NS}(\psi_{S^1},\gamma_{S^1}) E(X_1) \left(\bigotimes_{j \in \pi_0(\partial \Sigma^0)} U_{NS}(\psi_j,\gamma_j)^* \right).
\end{equation}
\end{Proposition}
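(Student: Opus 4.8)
The plan is to treat the two displayed identities in turn, deriving \eqref{eqnSegalReparamRel} from \eqref{eqnHardyReparamRel} together with the intertwining property of the basic representation.

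For \eqref{eqnHardyReparamRel}, the essential point is that $X_1$ and $X_2$ share the same underlying surface $\Sigma$, hence the same space $\cO(\Sigma)$ of functions holomorphic near $\Sigma$; only the boundary data $(\psi^{(i)}, \gamma^{(i)})$ differ. By Definition \ref{defDegenerateHardy}, $H^2(X_i)$ is the closure of the span of the vectors $\psi^{(i)} \cdot (F \circ \gamma^{(i)})$ as $F$ ranges over $\cO(\Sigma)$. I would fix a component $j$ and apply $u_{NS}(\psi_j, \gamma_j)$ to the generator $\psi^{(1)}_j \cdot (F \circ \gamma^{(1)}_j)$ directly from $u_{NS}(\psi_j,\gamma_j)h = \psi_j \cdot (h \circ \gamma_j^{-1})$. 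Using $\gamma_j = (\gamma_j^{(2)})^{-1}\circ \gamma_j^{(1)}$ one gets $\gamma_j^{(1)} \circ \gamma_j^{-1} = \gamma_j^{(2)}$, so the composition factor becomes $F \circ \gamma_j^{(2)}$; using $\psi_j = \psi_j^{(2)}/(\psi_j^{(1)}\circ \gamma_j^{-1})$ the prefactor collapses to $\psi_j^{(2)}$. Thus $u_{NS}(\psi_j,\gamma_j)$ carries the generating vectors of $H^2(X_1)$ precisely to those of $H^2(X_2)$; since each $u_{NS}(\psi_j,\gamma_j)$ is unitary it preserves closures, giving \eqref{eqnHardyReparamRel}. (That $(\psi_j,\gamma_j) \in \Diff_+^{NS}(S^1)$, asserted in the statement, is a routine cocycle check from $(\psi_j^{(i)})^2 = (\gamma_j^{(i)})'$.)

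For \eqref{eqnSegalReparamRel}, write $U_1 = U_{NS}(\psi_{S^1}, \gamma_{S^1})$ for the unitary on the outgoing factor $\F_{\partial\Sigma}^1 = \F$, and $U_0 = \bigotimes_{j \in \pi_0(\partial\Sigma^0)} U_{NS}(\psi_j,\gamma_j)$ for the unitary on $\F_{\partial\Sigma}^0$; under the identification of Proposition \ref{propFockSumToTensor}, $U_0$ implements $u^{(0)} = \bigoplus_{j} u_{NS}(\psi_j,\gamma_j)$ in the sense $U_0 a(h) U_0^* = a(u^{(0)} h)$, and likewise $U_1$ implements $u^{(1)} = u_{NS}(\psi_{S^1},\gamma_{S^1})$. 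I claim $T \in E(X_1)$ if and only if $U_1 T U_0^* \in E(X_2)$, which yields the identity at once, since $T \mapsto U_1 T U_0^*$ is a bijection and the right-hand side is independent of the projective phases (these cancel in the conjugation and leave the subspace invariant). To prove the claim, I parametrize a test pair $(f^1,f^0) \in H^2(X_2)$ as $f^i = u^{(i)} g^i$ with $(g^1,g^0) \in H^2(X_1)$, which is exactly the content of \eqref{eqnHardyReparamRel}. Substituting into $a(f^1)(U_1 T U_0^*) = (-1)^{p(T)}(U_1 T U_0^*) a(f^0)$ and using $a(u^{(i)} g^i) = U_i a(g^i) U_i^*$ reduces it, after cancelling $U_1$ and $U_0$, to the $H^2(X_1)$ relation $a(g^1)T = (-1)^{p(T)} T a(g^0)$; here I use that the $U_{NS}(\psi,\gamma)$ are even, so conjugation preserves parity and $p(U_1 T U_0^*) = p(T)$.

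The one genuinely non-formal ingredient, which I expect to be the main obstacle, is the creation-operator relation $a(\overline{zf^1})^*(U_1 T U_0^*) = (-1)^{p(T)} (U_1 T U_0^*) a(\overline{zf^0})^*$. Writing $c$ for the antilinear involution $(cf)(z) = \overline{zf(z)}$ of Section \ref{subsecCompositionOperators}, this relation involves $a(cf^i)^*$, so I must commute $c$ past $u^{(i)}$. I would establish the key identity $c\, u_{NS}(\psi,\gamma) = u_{NS}(\psi,\gamma)\, c$ by a direct computation on $S^1$: writing $\gamma^{-1}(e^{i\theta}) = e^{i\beta(\theta)}$ and using $\psi^2 = (\gamma^{-1})'$ one finds $\psi(z)/\overline{\psi(z)} = \gamma^{-1}(z)/z$, which is precisely what makes $c$ and $u_{NS}(\psi,\gamma)$ commute. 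Granting this, $c f^i = c\, u^{(i)} g^i = u^{(i)}\, c g^i$, and the same cancellation as before, now using $a(u^{(i)} h)^* = U_i a(h)^* U_i^*$, reduces the starred $X_2$-relation to the starred $X_1$-relation for $T$. Since every step is reversible, this proves the claimed equivalence and hence \eqref{eqnSegalReparamRel}.
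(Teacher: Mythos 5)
Your proposal is correct and follows essentially the same route as the paper: the Hardy space identity \eqref{eqnHardyReparamRel} is obtained by applying $u_{NS}(\psi_j,\gamma_j)$ directly to the generating vectors $\psi^{(1)}\cdot(F\circ\gamma^{(1)})$ and taking closures, and the key non-formal ingredient for \eqref{eqnSegalReparamRel} is exactly the identity $\overline{z\psi(z)} = \psi(z)\overline{\gamma^{-1}(z)}$ showing that $c$ commutes with $u_{NS}(\psi,\gamma)$, which the paper also establishes before transporting the starred commutation relations. The only difference is that you spell out the final conjugation argument $T \mapsto U_1 T U_0^*$, which the paper delegates to a citation of the non-degenerate case.
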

\begin{proof}
Let $U$ be a neighborhood of $\Sigma$, and let $F \in \cO(U)$, so that 
$$
f^{(i)}:=\psi^{(i)} \cdot (F \circ \gamma^{(i)}) \in H^2(X^{(i)}).
$$
Moreover,
$$
u_{NS}(\psi_j,\gamma_j) f^{(1)}_j = f^{(2)}_j,
$$
and consequently
$$
\{  \psi^{(2)} \cdot (F \circ \gamma^{(2)}) : F \in \cO(\Sigma)\} = \Big(\bigoplus_{j \in \pi_0(\partial \Sigma)} u_{NS}(\psi_j, \gamma_j) \Big)\{  \psi^{(1)} \cdot (F \circ \gamma^{(1)}) : F \in \cO(\Sigma)\}.
$$
Taking closures yields the desired relation \eqref{eqnHardyReparamRel} for the Hardy spaces $H^2(X^{(i)})$.

For $(\psi, \gamma) \in \Diff_+^{NS}(S^1)$, we have 
$$
\abs{\psi(z)}^2 = \abs{(\gamma^{-1})^\prime(z)} = \frac{z (\gamma^{-1})^\prime(z)}{\gamma^{-1}(z)} = \frac{z \psi(z)^2}{\gamma^{-1}(z)},
$$
and thus
\begin{equation}\label{eqnccommutes}
\overline{z \psi(z)} = \psi(z)\overline{\gamma^{-1}(z)}
\end{equation}
for all $z \in S^1$.
A direct consequence of \eqref{eqnccommutes} is that $u_{NS}(\psi,\gamma)$ commutes with the antilinear map $c:L^2(S^1) \to L^2(S^1)$ given by $(cf)(z) = \overline{z f(z)}$.
Hence
\begin{equation}\label{eqnHardyPerpReparamRel}
\overline{M_{\pm z} H^2(X_2)} = \left( \bigoplus_{j \in \pi_0(\partial \Sigma)} u_{NS}(\psi_j,\gamma_j) \right) \overline{M_{\pm z} H^2(X_1)},
\end{equation}
where $M_{\pm z}$ is multiplication by $z$ on copies of $L^2(S^1)$ indexed by $j \in \pi_0(\partial \Sigma^1)$ and multiplication by $-z$ on copies corresponding to incoming boundary.
Now the relation \eqref{eqnSegalReparamRel} can be verified directly from \eqref{eqnHardyReparamRel} and \eqref{eqnHardyPerpReparamRel}, just as in \cite[Prop. 4.12]{Ten16}.
\end{proof}

Now to study $E(X)$ when $X \in \cDR_{st}$, we will often want to approximate $X$ by non-degenerate spin Riemann surfaces, as follows.
\begin{Definition}\label{defNondegenerateExtension}
Let $X = (\phi_t, t, w, s) \in \cDR_{st}$ and let $R > 1$.
Then the \emph{non-degenerate extension} of $X$ by $R$ is the spin Riemann surface with parametrized boundary $X_R=(\Sigma_R,L_R,\Phi_R,\beta_R) \in \cR$ given as follows.
Let $\Sigma$ be the underlying space of $X$, and let $\Sigma_R = \Sigma \cup \{1 \le \abs{z} \le R\}$.
Let $(L_R, \Phi_R)$ be the spin structure on $\Sigma_R$ obtained from restricting the standard Neveu-Schwarz spin structure on $\C$.
Define the boundary parametrization $\beta_R : \bigsqcup_{j \in \pi_0(\partial \Sigma_R)} (S^1,NS) \to L|_{\partial \Sigma}$ so that
$$
\beta_{R,j}^*f = \left\{ 
\begin{array}{cl}
R^{1/2}f(R z), &\quad j = RS^1, \,\,f \in C^\infty(RS^1)\\
\psi_t(z)f(\phi_t(z)), &\quad j = \phi_t(S^1), \,\, f \in C^\infty(\phi_t(S^1))
\end{array}
\right.
$$
where $R^{1/2}$ is the positive square root, and $\psi_t$ is the square root of $\phi_t^\prime$ with $\psi_t^\prime(0) > 0$.
If $X \in \cDP_{st}$, then additionally choose $\beta_{R,w+sS^1}$ so that $\beta_{R,w+sS^1}^*f = s^{1/2}f(w + sz)$ for $f \in C^\infty(w + sS^1)$, where $s^{1/2}$ is the positive square root.
\end{Definition}

\begin{figure}[h!]
$$
X \, = \,
\begin{tikzpicture}[scale=1.2,baseline={([yshift=-.5ex]current bounding box.center)}]
	\coordinate (a) at (120:1cm);
	\coordinate (b) at (240:1cm);
	\coordinate (c) at (180:.25cm);
	\fill[fill=red!10!blue!20!gray!30!white] (0,0) circle (1cm);
	\draw (0,0) circle (1cm);
	\fill[fill=white] (a)  .. controls ++(210:.6cm) and ++(90:.4cm) .. (c) .. controls ++(270:.4cm) and ++(150:.6cm) .. (b) -- ([shift=(240:1cm)]0,0) arc (240:480:1cm);
	\draw ([shift=(240:1cm)]0,0) arc (240:480:1cm);
	\draw (a) .. controls ++(210:.6cm) and ++(90:.4cm) .. (c);
	\draw (b) .. controls ++(150:.6cm) and ++(270:.4cm) .. (c);
	\filldraw[fill=white] (180:.65cm) circle (.25cm); 
	\node at (0:1cm) {\scriptsize{\textbullet}};
	\node at (0:0.8cm) {1};
	\node at (180:.65cm) {\scriptsize{\textbullet}};
	\node at (193:.7cm) {w};

\end{tikzpicture}
\,\, 
\qquad \leadsto \qquad 
X_R = 
\begin{tikzpicture}[scale=1.2,baseline={([yshift=-.5ex]current bounding box.center)}]
	\coordinate (a) at (120:1cm);
	\coordinate (b) at (240:1cm);
	\coordinate (c) at (180:.25cm);
	\fill[fill=red!10!blue!20!gray!30!white] (0,0) circle (1.2cm);
	\draw (0,0) circle (1.2cm);
	\fill[fill=white] (a)  .. controls ++(210:.6cm) and ++(90:.4cm) .. (c) .. controls ++(270:.4cm) and ++(150:.6cm) .. (b) -- ([shift=(240:1cm)]0,0) arc (240:480:1cm);
	\draw ([shift=(240:1cm)]0,0) arc (240:480:1cm);
	\draw (a) .. controls ++(210:.6cm) and ++(90:.4cm) .. (c);
	\draw (b) .. controls ++(150:.6cm) and ++(270:.4cm) .. (c);
	\filldraw[fill=white] (180:.65cm) circle (.25cm); 
	\node at (0:1cm) {\scriptsize{\textbullet}};
	\node at (0:0.8cm) {1};
	\node at (0:1.2cm) {\scriptsize{\textbullet}};
	\node at (0:1.4cm) {R};
	\node at (180:.65cm) {\scriptsize{\textbullet}};
	\node at (193:.7cm) {w};

\end{tikzpicture}.
$$
\caption{The non-degenerate extension $X_R$ of $X$ is a Riemann surface with boundary.}
\end{figure}
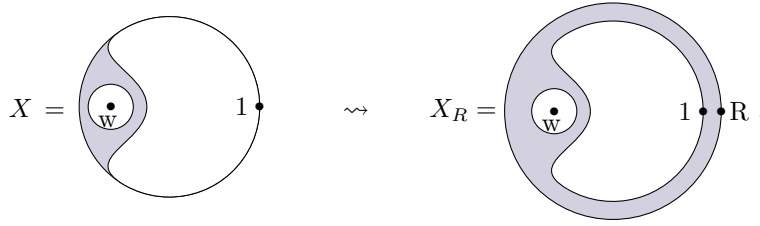

Informally, we think of $X$ as the limit of $X_R$ as $R \downarrow 1$, and in the proof of Theorem \ref{thmFermionPantsBoundedness} we will show that elements $T \in E(X)$ can be obtained as limits (in the strong operator topology) of $T_R \in E(X_R)$.

While it is somewhat involved to show that the spaces $E(X)$ are non-trivial, it is easier to check that they are not too large.

\begin{Proposition}\label{propDimAtMostOne}
Let $X \in \cDR$. Then $\dim E(X) \le 1$, and elements of $E(X)$ are even.
\end{Proposition}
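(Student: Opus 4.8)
The plan is to show that the single linear functional $T \mapsto \langle \Omega_1, T\Omega_0\rangle$ is injective on $E(X)$ and takes values only on the span of one \emph{even} vector; this yields both $\dim E(X) \le 1$ and evenness simultaneously. Here $\Omega_0 \in \F^0_{\partial\Sigma}$ and $\Omega_1 \in \F^1_{\partial\Sigma}$ denote the vacuum vectors. By Proposition \ref{propDegenerateOperatorReparametrization}, conjugation by the even unitaries $U_{NS}$ is a dimension- and parity-preserving bijection $E(X_{st}) \to E(X)$, so I would first reduce to $X \in \cDR_{st}$. Throughout I will use the antilinear map $c$ of Section \ref{subsecCompositionOperators}, $(cf)(z) = \overline{zf(z)}$, so that the two families of relations defining $E(X)$ in Definition \ref{defDegenerateFermionSegalCFT} read $a(f^1)T = (-1)^{p(T)}Ta(f^0)$ and $a(cf^1)^* T = (-1)^{p(T)} T a(cf^0)^*$ for $(f^1,f^0) \in H^2(X)$, together with the identity $\langle f, cg\rangle = \tfrac{1}{2\pi i}\oint fg\,dz$ that $c$ encodes on $S^1$.

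First I would prove injectivity of $T \mapsto T\Omega_0$. Suppose $T\Omega_0 = 0$ and consider the closed subspace $\ker T \subset \F^0_{\partial\Sigma}$. The first family of relations shows $\ker T$ is invariant under $a(f^0)$ for every $f^0$ in the closure of $P_0 H^2(X)$, the projection of the Hardy space onto the incoming factor $H^0_{\partial\Sigma}$, while the second family (and antiunitarity of $c$) shows it is invariant under $a(h)^*$ for every $h$ in the closure of $cP_0 H^2(X)$. The geometric input is that $P_0 H^2(X)$ is dense in $H^0_{\partial\Sigma}$: since $0 \notin \Sigma$ and the components of $\C \setminus \Sigma$ are the removed disks and the exterior of $\D$, the space $\cO(\Sigma)$ contains all Laurent monomials $z^n$ and, in the pair-of-pants case, all $(z-w)^{-k}$, which (after multiplication by the boundary half-densities) restrict to a dense subset of each $L^2(S^1)$ summand, the summands being separable because the boundary circles are disjoint. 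Granting this, $\ker T$ is invariant under all of $\CAR(H^0_{\partial\Sigma})$, which acts irreducibly; as $\Omega_0 \in \ker T$ forces $\ker T \neq 0$, we conclude $\ker T = \F^0_{\partial\Sigma}$, i.e.\ $T = 0$.

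Next I would bound $\{T\Omega_0 : T \in E(X)\}$. Evaluating the relations at $\Omega_0$ and using $a(f^0)\Omega_0 = 0$ for $f^0 \in p_0 H^0_{\partial\Sigma}$ together with $a(cf^0)^*\Omega_0 = 0$ under the same condition, one finds that $T\Omega_0$ is annihilated by $a(w)$ and by $a(cw)^*$ for every $w$ in $W := \{f^1 : (f^1,f^0)\in H^2(X) \text{ for some } f^0 \in p_0 H^0_{\partial\Sigma}\}$. The orthogonality $\overline{W} \perp c\overline{W}$ is immediate from $\langle w, cw'\rangle = \tfrac{1}{2\pi i}\oint ww'\,dz = 0$, since $ww'\,dz$ extends to a holomorphic $1$-form on $\Sigma$ and $\partial\Sigma$ bounds. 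If moreover $\overline{W} \oplus c\overline{W} = H^1_{\partial\Sigma}$, these two subspaces are the ranges of a projection $q$ and its complement, the annihilation conditions identify $T\Omega_0$ with the Shale--Stinespring vector $\hat\Omega_q$ of Theorem \ref{thmShaleStinespring}, and hence $\{T\Omega_0\}$ is at most one-dimensional. Since $\hat\Omega_q$ is even, an odd $T$ would have $T\Omega_0$ odd, hence $0$, hence $T = 0$ by the injectivity above; this proves evenness, and together with injectivity proves $\dim E(X) \le 1$.

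The main obstacle is the completeness statement $\overline{W} \oplus c\overline{W} = H^1_{\partial\Sigma}$ (equivalently, maximality of the isotropic subspace $\overline{W}$, and the Hilbert--Schmidt comparison of $q$ with $p_1$), along with the parity of $\hat\Omega_q$. Cauchy's theorem gives isotropy essentially for free, but maximality is precisely where the fine geometry of the degenerate surface must enter: I expect to establish it by solving, for each $h \perp \overline{W}$, the relevant boundary-value problem on $\Sigma$ to exhibit $h \in c\overline{W}$, and to obtain both the Hilbert--Schmidt estimate and the even parity count by comparison with the non-degenerate surfaces $X_R$ of Definition \ref{defNondegenerateExtension}, for which the corresponding data is classical.
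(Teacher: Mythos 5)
Your first half---injectivity of $T \mapsto T(\Omega\otimes\Omega)$ via invariance of $\ker T$ under a dense set of generators of $\CAR(H^0_{\partial \Sigma})$ together with irreducibility---is correct and is essentially the paper's argument (the paper likewise gets the density of the projection of $H^2(X)$ onto the incoming factor from Runge's theorem). The gap is in the second half: you reduce the statement ``$T(\Omega\otimes\Omega)$ spans at most a one-dimensional even space'' to the completeness claim $\overline{W}\oplus c\overline{W} = H^1_{\partial\Sigma}$, plus a Hilbert--Schmidt comparison of the resulting projection $q$ with $p_1$ (needed to invoke Theorem \ref{thmShaleStinespring}), plus the evenness of $\hOmega_q$---and then you prove none of these, saying only that you ``expect to establish'' them by solving a boundary-value problem on $\Sigma$ and by comparison with the non-degenerate extensions of Definition \ref{defNondegenerateExtension}. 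As written, the entire second half of the proposition rests on those unproven claims, so this is a genuine gap.

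The step you flag as ``the main obstacle'' is, however, not an obstacle at all, and the paper's proof shows why. For $X=(\phi_t,t,w,s)\in\cDP_{st}$ and $n\ge 0$, the function $z^n\in\cO(\Sigma)$ pulls back to $(z^n,\, s^{1/2}(sz+w)^n,\, W_{\phi_t}z^n)\in H^2(X)$, and both incoming components lie in $H^2(\D)=pH$; hence $z^n\in W$ for every $n\ge 0$, so $\overline{W}\supseteq p_1H^1_{\partial\Sigma}$ and $c\overline{W}\supseteq (1-p_1)H^1_{\partial\Sigma}$. Thus $\overline{W}+c\overline{W}=H^1_{\partial\Sigma}$ comes for free, and your isotropy observation then forces $\overline{W}=p_1H^1_{\partial\Sigma}$ exactly: $q=p_1$ on the nose, there is no Hilbert--Schmidt estimate to make, and $\hOmega_q$ is the honest vacuum $\Omega$, which is manifestly even. (The paper phrases this concretely: $T(\Omega\otimes\Omega)$ is annihilated by $a(z^n)$ for $n\ge 0$ and by $a(z^{-n-1})^*$ for $n\ge 0$, hence is a multiple of $\Omega$ by Theorem \ref{thmShaleStinespring}; an odd $T$ then has $T(\Omega\otimes\Omega)=0$, hence $T=0$ by your injectivity step.) Replacing your appeal to an unspecified boundary-value problem by this two-line computation closes the gap and makes your argument coincide with the paper's.
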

\begin{proof}
By Proposition \ref{propDegenerateOperatorReparametrization}, we may assume without loss of generality that $X \in \cDR_{st}$.
The arguments are essentially identical in the two cases $X \in \cDA_{st}$ and $X \in \cDP_{st}$. 
We will assume that we are in the latter case.

Let $T \in E(X) \subset \cB(\F \otimes \F, \F)$, with the inputs ordered so that the one corresponding to $sS^1 + w$ comes first.
When $n \ge 0$, we have $(z^n, s^{1/2}(sz+w)^n, W_{\phi_t} z^n) \in H^2(X)$, where $W_{\phi_t}$ is the weighted composition operator defined in Section \ref{subsecCompositionOperators}.
Hence
$$
a(z^n)T(\Omega \otimes \Omega) = T(a(s^{1/2}(sz + w)^n)\Omega \otimes \Omega) + T(\Omega \otimes a(W_{\phi_t} z^n)\Omega) = 0
$$
since $(sz+w)^n, W_{\phi_t}z^n \in H^2(\D)$.

For $n \ge 0$ we also have
\begin{equation}\label{eqnTOmegaSatisfiesVacuum1}
(z^{-n-1}, s^{1/2}z^{-1}(sz^{-1} + \overline{w})^n, \overline{z W_{\phi_t} z^n}) \in \overline{z H^2(X)}
\end{equation}
Since $f \mapsto \overline{z f}$ exchanges $H^2(\D)$ and $H^2(\D)^\perp$, we have $z^{-1}(sz^{-1} + \overline{w})^n, \overline{z W_{\phi_t} z^n} \in H^2(\D)^\perp$.
Hence
\begin{equation}\label{eqnTOmegaSatisfiesVacuum2}
a(z^{-n-1})^*T(\Omega \otimes \Omega) = T(a(s^{1/2}z^{-1}(sz^{-1} + \overline{w})^n)^*\Omega \otimes \Omega) + T(\Omega \otimes a(\overline{z W_{\phi_t} z^n})^*) = 0.
\end{equation}
Combining \eqref{eqnTOmegaSatisfiesVacuum1} and \eqref{eqnTOmegaSatisfiesVacuum2}, we see that $T(\Omega \otimes \Omega) = \alpha \Omega$ for some $\alpha \in \C$ by Theorem \ref{thmShaleStinespring}.

Thus to show $\dim E(X) \le 1$, it suffices to show that $T \in E(X)$ is determined by $T(\Omega \otimes \Omega)$.
We will assume that $T(\Omega \otimes \Omega) = 0$, and prove that $T = 0$.
It is clear that if $T \in E(X)$, then both the even and odd parts of $T$ also lie in $T(X)$, and so we may assume without loss of generality that $T$ is homogeneous.

By assumption, $\ker T \ne 0$.
We will show that $\ker T$ is invariant under $\CAR(H_{\partial \Sigma}^0)$, and since this algebra acts irreducibly on $\F_{\partial \Sigma}^1 = \F \otimes \F$, the result will follow.

Let $K$ be the image of $H^2(X)$ under the projection of $H_{\partial \Sigma}$ onto $H_{\partial \Sigma}^0$. 
By Runge's theorem, $\cl K = H_{\partial \Sigma}^0$.
By definition, given $f^0 \in K$, there exists a $f^1 \in H_{\partial \Sigma}^1$ such that $(f^1, f^0) \in H^2(X)$.
Hence $a(f^1)T = (-1)^{p(T)}Ta(f^0)$, from which we can see that $\ker T$ is invariant under $a(f^0)$.
Since $f \mapsto a(f)$ is continuous, we get that $\ker T$ is invariant under $a(f)$ for all $f \in H_{\partial \Sigma}^0$.

Similarly, let $\tilde K$ be the image of $\overline{ z H^2(X)}$ under the projection of $H_{\partial \Sigma}$ onto $H_{\partial \Sigma}^0$, which is dense in $H_{\partial \Sigma}^0$.
For $g^0 \in \tilde K$, we have $a(g^1)^*T = (-1)^{p(T)} Ta(g^0)^*$ for some $g^1 \in H_{\partial \Sigma}^1$.
Hence $\ker T$ is invariant under $a(g^0)^*$ for all $g^0 \in \tilde K$, and thus it is invariant under $a(g)^*$ for all $g \in H_{\partial \Sigma}^0$.

We have shown that $\ker T$ is a non-zero $\CAR(H_{\partial \Sigma}^0)$ subrepresentation of $\F_{\partial \Sigma}^0$, thus $T = 0$, as desired. Moreover, if $T$ is odd and $T \in E(X)$, then we must have $T(\Omega \otimes \Omega) = 0$, and so $T = 0$ by the above argument.
\end{proof}

Using the uniqueness result of Proposition \ref{propDimAtMostOne}, we can show that when $X \in \cDR$ is non-degenerate the definition of $E(X)$ coincides with the space assigned by the free fermion Segal CFT.

\begin{Proposition}\label{propAgreesOnNondegenerate}
Let $X  \in \cDR$, and suppose that $X$ is non-degenerate, so that the underlying space $\Sigma$ is a Riemann surface with boundary.
Let $(\Phi,L)$ be the spin structure on $\Sigma$ obtained by restricting the standard Neveu-Schwarz spin structure on $\C$, given by \eqref{eqnStandardCSpin}.
Let $\beta:\bigsqcup_{j \in \pi_0(\partial \Sigma)} (S^1, NS) \to L|_{\partial \Sigma}$ be the spin isomorphism characterized by $\beta^* f = \psi \cdot (f \circ \gamma)$ for $f \in C^\infty(L|_{\partial \Sigma})$.
Let $\tilde X = (\Sigma, L, \Phi, \beta) \in \cR$, and let $H^2(\tilde X)$ and $E(\tilde X)$ be the Hardy space and Segal CFT for non-degenerate surfaces, as in Section \ref{subsubsecFreeFermionSegalCFT}.
Then $H^2(X) = H^2(\tilde X)$ and $E(X) = E(\tilde X)$.
\end{Proposition}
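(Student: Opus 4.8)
The plan is to reduce everything to a single density statement about holomorphic functions on $\Sigma$, and then to deduce the claim about $E(X)$ from the already-established bound $\dim E(X) \le 1$. Since $L$ is the trivial bundle restricted from $\C$, sections in $\cO(\Sigma; L)$ are simply functions holomorphic on $\interior{\Sigma}$ and smooth up to $\partial\Sigma$, and by the characterization of $\beta$ we have $\beta^* F = \psi \cdot (F \circ \gamma)$. Thus both $H^2(\tilde X)$ and $H^2(X)$ are obtained by applying the \emph{same} map $f \mapsto \psi \cdot (f \circ \gamma)$ (which is bounded with bounded inverse on the relevant $L^2$ spaces, as $\gamma$ is a diffeomorphism and $\psi$ is smooth and non-vanishing) to boundary values of holomorphic functions and then closing up; the two spaces differ only in the class of functions used, namely $\cO(\Sigma)$ (holomorphic on a neighborhood of $\Sigma$) versus $\cO(\Sigma; L)$ (holomorphic on $\interior{\Sigma}$, smooth on $\partial\Sigma$). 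Since $\cO(\Sigma) \subseteq \cO(\Sigma; L)$, the inclusion $H^2(X) \subseteq H^2(\tilde X)$ is immediate, and it remains to prove the reverse.

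For the reverse inclusion I would show that $\cO(\Sigma)$ is dense in $\cO(\Sigma; L)$ in the topology of $L^2(\partial\Sigma)$-boundary values. In the non-degenerate case $\Sigma$ is a smoothly bounded, finitely connected domain: the unit disk with one hole (the $\cDA$ case) or two holes (the $\cDP$ case) removed, each hole compactly contained in $\interior{\D}$ with smooth boundary. Given $F \in \cO(\Sigma; L)$, I would split it, using the Cauchy transform of its smooth boundary values, into a sum $F = F_0 + \sum_j F_j$, where $F_0$ is holomorphic on all of $\interior{\D}$ and smooth up to $S^1$, and each $F_j$ is holomorphic on the complement of the $j$-th closed hole and smooth up to that hole's boundary. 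Each summand can then be approximated in $L^2(\partial\Sigma)$ by a rescaling that expands its domain of holomorphy past the relevant boundary component: $F_0(rz)$ with $r \uparrow 1$ for the outer piece, and a scaling toward each hole's center that shrinks the hole for the inner pieces. Convergence is uniform on the boundary component where $F$ is only smooth and locally uniform on the others, so summing the approximants yields an element of $\cO(\Sigma)$ close to $F$, giving density and hence $H^2(X) = H^2(\tilde X)$. This approximation step is the main obstacle, as it is the only genuinely analytic ingredient.

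Granting $H^2(X) = H^2(\tilde X)$, the equality $E(X) = E(\tilde X)$ follows formally. The first family of commutation relations, $a(f^1)T = (-1)^{p(T)}Ta(f^0)$ for $(f^1,f^0) \in H^2$, is literally identical in Definition \ref{defDegenerateFermionSegalCFT} and in the non-degenerate definition of Section \ref{subsubsecFreeFermionSegalCFT}. For the adjoint relations, I would use the easy inclusion $\overline{M_{\pm z}H^2(\tilde X)} \subseteq H^2(\tilde X)^\perp$ noted in the remark following Definition \ref{defDegenerateFermionSegalCFT}: applying the non-degenerate relation $a(g^1)^*T = -(-1)^{p(T)}Ta(g^0)^*$ to the test vectors $(g^1,g^0) = (-\overline{zf^1}, \overline{zf^0}) \in H^2(\tilde X)^\perp$ and using the antilinearity of $f \mapsto a(f)^*$ recovers exactly the degenerate relation $a(\overline{zf^1})^*T = (-1)^{p(T)}Ta(\overline{zf^0})^*$ for $(f^1,f^0) \in H^2(X)$. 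Since any $T \in E(\tilde X)$ is trace class, hence bounded, this shows $E(\tilde X) \subseteq E(X)$. Finally, $\dim E(\tilde X) = 1$ by Theorem \ref{thmFermionSegalCFT}, while $\dim E(X) \le 1$ by Proposition \ref{propDimAtMostOne}, so the inclusion forces $E(X) = E(\tilde X)$, completing the proof.
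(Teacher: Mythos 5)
Your overall architecture is exactly the paper's: establish $H^2(X)=H^2(\tilde X)$ by a density argument, observe that the non-degenerate commutation relations imply the degenerate ones (so $E(\tilde X)\subseteq E(X)$, elements of $E(\tilde X)$ being trace class and hence bounded), and finish with $\dim E(\tilde X)=1$ against $\dim E(X)\le 1$. The paper compresses the density step into a citation of Runge's theorem and asserts the relations are identical via the full equality $H^2(\tilde X)^\perp=\overline{M_{\pm z}H^2(\tilde X)}$; your handling of the $E$-spaces, using only the easy inclusion $\overline{M_{\pm z}H^2(\tilde X)}\subseteq H^2(\tilde X)^\perp$ together with the sign bookkeeping for $(g^1,g^0)=(-\overline{zf^1},\overline{zf^0})$, is correct and if anything a little cleaner, since the one-sided inclusion $E(\tilde X)\subseteq E(X)$ is all the dimension count requires.

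The one step that does not work as written is approximating the inner pieces $F_j$ by ``a scaling toward each hole's center that shrinks the hole.'' That requires the hole to be starlike about the chosen center. The removed disk $w+s\interior{\D}$ is fine, but the hole $\phi_t(\interior{\D})$ is only a smooth Jordan domain: it is the image under $\sigma^{-1}$ of a starlike domain and need not itself be starlike about any point, so the rescaled $F_j$ need not be holomorphic on a neighborhood of $\Sigma$. The repair is routine: transport $F_j$ by the exterior Riemann map of the hole and dilate there, or approximate $F_j$ on the compact set $(\C\cup\{\infty\})\setminus\phi_t(\interior{\D})$ by rational functions with a single pole inside the hole via Mergelyan's theorem --- the device the paper itself uses in the proof of Theorem \ref{thmFermionIsGood}. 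With that patch, the Cauchy-transform decomposition $F=F_0+\sum_j F_j$ together with the outer dilation $F_0(rz)$, $r\uparrow 1$, does give the density you need, and the rest of the argument stands.
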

\begin{proof}
Both $H^2(X)$ and $H^2(\tilde X)$ are given by pullbacks of holomorphic functions on $\Sigma$, with the only difference being that $H^2(X)$ requires that the functions be holomorphic in a nieghborhood of $\Sigma$, and $H^2(\tilde X)$ only requires that they extend smoothly to the boundary.
However, by Runge's theorem we may approximate any element of $H^2(X)$ arbitrarily well by an element of $H^2(\tilde X)$, and since both spaces are closed, they coincide.
Both $E(X)$ and $E(\tilde X)$ consist of maps which satisfy certain commutation relations derived from $H^2(X)$.
Since $H^2(X)^\perp = \overline{M_{\pm z} H^2(X)}$ by \cite[Thm. 6.1]{Ten16}, the commutation relations they're required to satisfy are identical.
However, elements of $E(\tilde X)$ are also required to be trace class, so that $E(\tilde X) \subset E(X)$.
But $\dim E(\tilde X) = 1$ by Theorem \ref{thmFermionSegalCFT}, and by Proposition \ref{propDimAtMostOne} $\dim E(X) \le 1$, so the two spaces must coincide.
\end{proof}

\subsection{Calculation of Segal CFT operators}\label{subsecCalculationSegalCFT}

In Section \ref{subsecCalculationSegalCFT}, we will give an explicit description of the spaces $E(X)$ for $X \in \cDR$, in terms of the free fermion vertex operator superalgebra.
We will briefly recall notation; for a more detailed overview, see Section \ref{secPreliminaries}.
Let $H = L^2(S^1)$, and let $p \in \cB(H)$ be the projection onto the Hardy space $H^2(\D)$.
Let $\F=\F_{H,p}$ be fermionic Fock space, and we write $a(f)$ for the action of $\CAR(H)$ on $\F$.
Let $\F^0$ be the subspace of finite energy vectors, and let $L_n$ be the unitary positive energy representation of $\Vir$ on $\F^0$ coming from the conformal vector $\nu \in \F^0$.
Let $\F^\infty \subset \F$ be the space of smooth vectors for $\overline{1 + L_0}$.
Given a function $f \in C^\infty(S^1)$, we write $L(f)$ for the closure of $\sum_{n \in \Z} \hat f _n L_n$.
If $\rho \in H^2(\D)$ and $\rho$ extends smoothly to $S^1$, then we write $L(\rho)$ for $L(\rho|_{S^1})$.

The main result of this section is the following.
\begin{Theorem}\label{thmBoundednessAndExistence}
Let $X=(\phi_t, t) \in \cDA_{st}$ be a degenerate annulus, and let $Y=(\phi_t, t, w, s) \in \cDP_{st}$ be a degenerate pair of paints obtained by removing a disk from $(\phi_t, t)$.
Let $\sigma$ be the Koenigs map of the $\phi_t$, and let $\rho(z) = \frac{\sigma(z)}{z \sigma^\prime(z)}$.
Let $(\F^0, Y, \Omega, \nu, \ip{\,\cdot\,,\,\cdot\,},\theta)$ be the free fermion vertex operator superalgebra, and let $L_n$ be the positive energy represenation of $\Vir$ associated to $\nu$.
Let $\F$ be the Hilbert space completion of $\F^0$.
Then $e^{-t L(\rho)}$ and $T(\xi \otimes \eta) = Y(s^{L_0}\xi,w)e^{- tL(\rho)}\eta$ define bounded maps on $\F$ and $\F \otimes \F$, respectively. 
Moreover $E(X) = \C e^{-t L(\rho)} $ and $E(Y) = \C T$.
\end{Theorem}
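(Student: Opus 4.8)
The plan is to lean on Proposition \ref{propDimAtMostOne}, which already gives $\dim E(X)\le 1$ and $\dim E(Y)\le 1$; it therefore suffices to exhibit nonzero \emph{bounded} operators of the asserted form lying in $E(X)$ and $E(Y)$. I would treat the annulus $X$ first, since the pair of pants $Y$ differs only by the extra removed disk $w+s\interior{\D}$ and its operator $T$ is the composition of $e^{-tL(\rho)}$ with the vertex factor $Y(s^{L_0}\,\cdot\,,w)$ coming from the non-degenerate pair-of-pants Example.

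For membership in $E(X)$, the guiding idea is that $e^{-tL(\rho)}$ is the second quantization of the weighted composition semigroup $W_{\phi_t}$ of Section \ref{subsecCompositionOperators}. I would prove the intertwining relations
\[
a(f)\,e^{-tL(\rho)} = e^{-tL(\rho)}\,a(W_{\phi_t}f), \qquad a(g)^*\,e^{-tL(\rho)} = e^{-tL(\rho)}\,a(\tilde W_{\phi_t}g)^*
\]
for $f,g$ in a dense domain. These are exactly the $H^2(X)$ commutation relations of Definition \ref{defDegenerateFermionSegalCFT}: the standard parametrization of $X$ identifies $H^2(X)$ with the closure of the graph $\{(F|_{S^1},W_{\phi_t}(F|_{S^1})):F\in\cO(\Sigma)\}$, since the incoming boundary $\phi_t(S^1)$ is parametrized by $\phi_t$ with square root $\psi_t=(\phi_t')^{1/2}$. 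To prove the intertwining I would differentiate in $t$: Proposition \ref{propVirFermionCommRels} yields $[L(\rho),a(f)]=a(Gf'+\tfrac12 G'f)$ where $G=-\sigma/\sigma'$ is the Berkson--Porta generator (using $z\rho=\sigma/\sigma'=-G$), and $f\mapsto Gf'+\tfrac12 G'f$ is precisely the infinitesimal generator of $W_{\phi_t}=e^{tA}$; hence both sides solve the same linear ODE with equal initial data. Nonvanishing is immediate from the vacuum: since $\rho\in H^2(\D)$ we have $L(\rho)=\sum_{n\ge 0}\hat\rho_n L_n$, which annihilates $\Omega$, so $e^{-tL(\rho)}\Omega=\Omega$.

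The real difficulty, and the step I expect to dominate, is \emph{boundedness}: the relations above are purely formal until $e^{-tL(\rho)}$ is known to be a genuine bounded operator, and $W_{\phi_t}$ is not a contraction, so naive second quantization need not be bounded. The mechanism that rescues this is the positivity $\Re\rho\ge 0$ from Proposition \ref{propSemigroupSummary}. At the level of generators, $\tfrac12\big(L(\rho)+L(\rho)^*\big)=L(\Re\rho)$, and the Fewster--Hollands quantum energy inequality \cite{FewsterHollands} bounds this smeared Virasoro field below, $L(\Re\rho)\ge -c$, with $c$ depending on $\Re\rho$ and the central charge; thus $\Re\langle -L(\rho)\xi,\xi\rangle\le c\|\xi\|^2$, so $-L(\rho)-c$ is dissipative and $\|e^{-tL(\rho)}\|\le e^{ct}$. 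For the pair of pants, where $Y(s^{L_0}\,\cdot\,,w)$ must be controlled jointly with $e^{-tL(\rho)}$, I would package this more systematically: recognize $T^*$ as an \emph{implementing operator}, i.e.\ the second quantization of an explicit bounded (non-contractive) one-particle map $L^2(S^1)\to L^2(S^1)\oplus L^2(S^1)$ built from $W_{\phi_t}$, the scaling by $s$, and the geometry of the removed disk at $w$, and then invoke the general boundedness criterion for implementing operators (Theorem \ref{thmAdmissibleBoundedness}) with the quantum energy inequality as its positivity input. Controlling this non-contractive second quantization is by far the hardest part.

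With boundedness established, the intertwining relations place $e^{-tL(\rho)}$ in $E(X)$, and incorporating the vertex-operator relations from the non-degenerate pair-of-pants Example (together with the $w+sS^1$ boundary, encoded by functions holomorphic on $\D\setminus(w+s\interior{\D})$) places $T$ in $E(Y)$. Since $T(\Omega\otimes\Omega)=Y(\Omega,w)\Omega=\Omega\neq 0$, both are nonzero, and combined with $\dim E(X),\dim E(Y)\le 1$ we conclude $E(X)=\C\,e^{-tL(\rho)}$ and $E(Y)=\C\,T$. As a consistency check one could instead realize these operators as strong-operator limits of the non-degenerate extensions $X_R$ of Definition \ref{defNondegenerateExtension} as $R\downarrow 1$, where $E(X_R)=\C\,R^{-L_0}e^{-tL(\rho)}$ by Proposition \ref{propAgreesOnNondegenerate} and Theorem \ref{thmFermionSegalCFT}, the uniform norm bound again being furnished by the quantum energy inequality.
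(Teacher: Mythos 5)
Your proposal is correct and follows essentially the same route as the paper: the quantum energy inequality plus Lumer--Phillips for boundedness of $e^{-tL(\rho)}$, the differentiation-in-$t$ argument for the intertwining relations with $W_{\phi_t}$ and $\tilde W_{\phi_t}$, and the identification of $T^*$ as an implementing operator checked against the admissibility criterion of Theorem \ref{thmAdmissibleBoundedness}, with uniqueness supplied by Proposition \ref{propDimAtMostOne}. The only presentational difference is that what you call a ``consistency check'' via the non-degenerate extensions $X_R$ is in fact the paper's primary mechanism for the pair of pants: the limit $S=\lim_{R\downarrow 1}T_R^*$ and the gluing/unitarity properties of the non-degenerate Segal CFT are what identify $S\Omega$ as a vector $\hat\Omega_{q^\prime}$ with $q^\prime-q$ trace class, which is a hypothesis of the admissibility theorem.
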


Theorem \ref{thmBoundednessAndExistence} is the union of Proposition \ref{propAnnulusBoundednessAndExistence} and Theorem \ref{thmFermionPantsBoundedness}, both proven in this section.

\begin{Corollary}
Let $X \in \cDR$. Then $\dim E(X) = 1$.
\end{Corollary}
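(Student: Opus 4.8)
The plan is to assemble the corollary from the two results already in hand. By Proposition \ref{propDimAtMostOne} we have $\dim E(X) \le 1$ for every $X \in \cDR$, so it only remains to exhibit a nonzero element of $E(X)$, and this I would do by reducing to the standard boundary parametrization.

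First I would use Proposition \ref{propDegenerateOperatorReparametrization} to compare an arbitrary $X \in \cDR$ with the surface $X_{st} \in \cDR_{st}$ having the same underlying data but the standard boundary parametrization. These two surfaces satisfy the hypotheses of that proposition (they differ only in their boundary parametrizations), so \eqref{eqnSegalReparamRel} presents $E(X)$ as the image of $E(X_{st})$ under the map
$$
T \longmapsto U_{NS}(\psi_{S^1}, \gamma_{S^1})\, T \left( \bigotimes_{j \in \pi_0(\partial \Sigma^0)} U_{NS}(\psi_j, \gamma_j)^* \right),
$$
where the $(\psi_j, \gamma_j)$ are the transition parametrizations. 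As each $U_{NS}(\psi_j, \gamma_j)$ is unitary, hence invertible, this map is a linear bijection of the ambient space of bounded operators, and it therefore restricts to an isomorphism $E(X_{st}) \xrightarrow{\sim} E(X)$. In particular $\dim E(X) = \dim E(X_{st})$, so it suffices to treat $X \in \cDR_{st}$.

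For $X \in \cDR_{st}$ I would simply quote Theorem \ref{thmBoundednessAndExistence}: if $X \in \cDA_{st}$ then $E(X) = \C\, e^{-tL(\rho)}$, and if $X \in \cDP_{st}$ then $E(X) = \C\, T$ with $T(\xi \otimes \eta) = Y(s^{L_0}\xi, w)e^{-tL(\rho)}\eta$. In both cases the spanning operator is a bounded, nonzero map (for instance $e^{-tL(\rho)}\Omega = \Omega$ and $T(\Omega \otimes \Omega) = \Omega$, since $\rho \in H^2(\D)$ forces $L(\rho)\Omega = 0$), whence $\dim E(X_{st}) = 1$. Combining the two steps gives $\dim E(X) = 1$ for all $X \in \cDR$. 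There is no real obstacle in the corollary itself: every genuinely difficult point---above all the boundedness of $e^{-tL(\rho)}$ and of $T$---is absorbed into Theorem \ref{thmBoundednessAndExistence}, and the only thing requiring a moment's care is that \eqref{eqnSegalReparamRel} be read as an isomorphism of spaces rather than a mere inclusion, which is guaranteed by the invertibility of the reparametrization unitaries.
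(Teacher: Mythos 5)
Your proof is correct and follows essentially the same route as the paper: reduce to the standard boundary parametrization via Proposition \ref{propDegenerateOperatorReparametrization} and then invoke Theorem \ref{thmBoundednessAndExistence}. The extra observations (invertibility of the reparametrization unitaries, nonvanishing of the spanning operators) are accurate but are exactly the points the paper leaves implicit.
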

\begin{proof}
In the special case when $X \in \cDR_{st}$, this is just Theorem \ref{thmBoundednessAndExistence}.
The general case follows from the reparametrization formula Proposition \ref{propDegenerateOperatorReparametrization}.
\end{proof}

The first step in proving Theorem \ref{thmBoundednessAndExistence} is to get control of the maps $e^{- t L(\rho)}$.
The key ingredient is a `quantum energy inequality' of Fewster and Hollands \cite[Thm. 4.1]{FewsterHollands}, reformulated for Virasoro fields on the circle (as described in Remark 3 following \cite[Thm. 4.1]{FewsterHollands}, and in a forthcoming article of Carpi and Weiner \cite{CarpiWeinerLocal}).

\begin{Theorem}[\cite{FewsterHollands}, \cite{CarpiWeinerLocal}] \label{thmQEI}
Let $(L_n,V)$ be a unitary positive energy representation of the Virasoro algebra with central charge $c$, and let $\cH$ be the Hilbert space completion of $V$.
Let $f \in C^\infty(S^1, \R)$ be a function with $f \ge 0$, and let $L(f)$ be the associated smeared Virasoro field.
Then there is a number $K_f > 0$, depending only on $f$, such that 
$$
\ip{L(f)\xi,\xi} \ge -c \, K_f \norm{\xi}^2
$$
for all smooth vectors $\xi \in \cH^\infty$.
\end{Theorem}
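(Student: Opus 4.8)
The plan is to prove the inequality by exploiting the only positivity available in a unitary representation, namely $\norm{L(h)\xi}^2 \ge 0$, together with the fact that the central (anomalous) term of the Virasoro relations is \emph{the same} in every positive energy representation, up to the scalar $c$. The essential difficulty is that $L(f)$ is \emph{linear} in the modes $L_n$, so there is no direct way to exhibit it as a sum of squares. The standard device, going back to Fewster, is \emph{point-splitting}: one recovers $\ip{\xi, L(f)\xi}$ as a limit of the smeared two-point expression built from $L(z) = \sum_n L_n z^{-n-2}$, and extracts the lower bound from the universal short-distance singularity of that expression.

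First I would reduce to a convenient square root of $f$. Replacing $f$ by $f+\epsilon$ changes $\ip{L(f)\xi,\xi}$ by $\epsilon\norm{\xi}^2 \ge 0$ and alters $K_f$ in a controlled way, so one may assume $f > 0$; a strictly positive smooth function on $S^1$ has a smooth square root, and for trigonometric polynomials the Fej\'er--Riesz theorem yields $f = \abs{h}^2$ with $h(z) = \sum_{k \ge 0}\hat h_k z^k$ a \emph{positive-frequency} boundary value. The general smooth case then follows by approximating $f$ together with the derivatives of $f$ that enter $K_f$.

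The algebraic core is the computation of $\ip{\xi, L(\overline{h})L(h)\xi} = \norm{L(h)\xi}^2 \ge 0$, using $L(h)^* = L(\overline{h})$. Normal-ordering $L(\overline h)L(h) = \sum_{k,l \ge 0}\overline{\hat h_k}\hat h_l\, L_{-k}L_l$ via
$$[L_{-k}, L_l] = -(k+l)L_{l-k} - \tfrac{c}{12}(k^3-k)\delta_{k,l}$$
splits this expectation into three pieces: a manifestly nonnegative reordered product applied to $\xi$; a term again of the form $\ip{\xi, L(g)\xi}$, where $g$ is a first-order differential expression in $h$ (a derivative-type modification of $f$); and a purely numerical piece $-\tfrac{c}{12}\sum_k \abs{\hat h_k}^2(k^3-k)$ coming entirely from the central term. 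This last piece is \emph{representation-independent up to the overall factor $c$}. Performing the point-split limit $w \to z$ — equivalently integrating this identity over a suitable family of shifts/dilations of $h$ — reassembles $\ip{\xi, L(f)\xi}$ from the differential terms and collapses the numerical piece into a single constant $-cK_f$, determined by the universal vacuum kernel $c/\bigl(2(z-w)^4\bigr)$ smeared against $f \ge 0$. Since the state enters only through the nonnegative quantities $\norm{L(h)\xi}^2$, the resulting bound is uniform over all smooth $\xi$ and independent of the representation.

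The main obstacle is analytic rather than algebraic: one must justify the point-split limit on the invariant core $\cH^\infty$ and verify that the \emph{regular} part of $\ip{\xi, L(z)L(w)\xi}$, after smearing against $f$, is genuinely nonnegative. This is exactly where positivity of the representation and the positive-frequency nature of $h$ are used jointly, and where one must avoid appealing to an actual vacuum vector, since a general positive energy representation need not contain one. The resolution is precisely that the singular, $c$-dependent part of the two-point function is fixed by the Virasoro relations alone, so the vacuum subtraction is carried out purely algebraically. A careful execution of this scheme is what Fewster and Hollands \cite{FewsterHollands} provide, and the reformulation for the circle needed here is recorded by Carpi and Weiner \cite{CarpiWeinerLocal}; accordingly we take the stated inequality as given.
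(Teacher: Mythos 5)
This theorem is imported by the paper as an external result: the text cites \cite[Thm.~4.1]{FewsterHollands} and its circle reformulation in \cite{CarpiWeinerLocal} and gives no proof of its own, so your decision to sketch the mechanism and then take the inequality as given matches the paper's treatment exactly. Your outline of the Fewster--Hollands argument (positive-frequency square root of $f$, normal ordering of $L(\overline{h})L(h)$ against the positivity $\norm{L(h)\xi}^2 \ge 0$, with the representation-independent central term producing $-cK_f$) is a fair account of how the cited result is actually proved. One small slip worth fixing in your reduction: replacing $f$ by $f+\epsilon$ changes $\ip{L(f)\xi,\xi}$ by $\epsilon\ip{L_0\xi,\xi}$, not by $\epsilon\norm{\xi}^2$, since $L(\epsilon)=\epsilon L_0$; the reduction still works because $L_0\ge 0$ in a positive energy representation and one may send $\epsilon\downarrow 0$ for fixed $\xi\in\cH^\infty$ provided $K_{f+\epsilon}\to K_f$, but as written the justification is not quite right.
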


Using the estimate from Theorem \ref{thmQEI}, we may apply the Lumer-Phillips theorem to control the norm of $e^{-t L(\rho)}$.

\begin{Proposition}\label{propExponentiatedVirasoro}
Let $V$ be an inner product space equipped with a unitary positive energy representation of the Virasoro algebra $L_n$.
Assume that $V_\alpha := \ker L_0 - \alpha 1_V$ is finite-dimensional for all $\alpha  \in \R_{\ge 0}$.
Let $\rho:\interior{\D} \to \C$ be a holomorphic function which extends smoothly to $\D$.
Let $L(\rho) = \sum_{n \in \Z_{\ge 0}} \hat \rho_n L_n$, where $\hat \rho_n$ are the Fourier coefficients of $\rho|_{S^1}$.
Then for every $\xi \in V$ and $t \in \R$, the sum defining $e^{t L(\rho)} \xi$ converges to an element of $V$, and $e^{t L(\rho)}$ is invertible on $V$.
If $\Re \rho(z) \ge 0$ for all $z \in \D$ then for all $t \ge 0$, $e^{-t L(\rho)}$  extends to a bounded operator on the Hilbert space completion $\cH_V$ of $V$, and $\big(e^{-t L(\rho)}\big)_{t \ge 0}$ is a strongly continuous semigroup.
\end{Proposition}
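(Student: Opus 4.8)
The plan is to separate the purely algebraic assertions (convergence of the exponential series on $V$ and invertibility there) from the genuinely analytic one (boundedness on $\cH_V$): the former I would settle by a finite-dimensionality argument, and the latter by the Lumer--Phillips theorem with dissipativity supplied by the quantum energy inequality of Theorem \ref{thmQEI}.

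First I would treat the algebraic statement. Since $\rho$ is holomorphic on $\interior{\D}$, we have $\hat\rho_n = 0$ for $n<0$, so $L(\rho) = \sum_{n\ge 0}\hat\rho_n L_n$. Because $[L_0,L_n] = -nL_n$, each $L_n$ with $n\ge 0$ sends $V_\alpha$ into $V_{\alpha - n}$, which is $\{0\}$ once $\alpha - n < 0$; hence $L(\rho)$ is grading-lowering and $L(\rho)\xi$ is a \emph{finite} sum for every $\xi\in V$. Given $\xi\in V$, its $L_0$-eigenvalues lie in a finite set with maximum $M$, and the set of energies reachable by repeatedly subtracting non-negative integers while staying $\ge 0$ is finite; so $\xi$ lies in a finite-dimensional $L(\rho)$-invariant subspace $W = \bigoplus_\beta V_\beta$ (the direct sum finite, each summand finite-dimensional). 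On $W$ the operator $e^{tL(\rho)}$ is an ordinary matrix exponential, which converges for all $t\in\R$, defines a one-parameter group, and is invertible with inverse $e^{-tL(\rho)}$. This proves the first assertion.

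For the boundedness statement I would first record the dissipativity estimate. Every $\xi\in V$ is a finite combination of $L_0$-eigenvectors, hence lies in $\cH_V^\infty$, so Theorem \ref{thmQEI} applies to it. Using unitarity, $\ip{L_n\xi,\xi} = \ip{\xi, L_{-n}\xi}$, one computes $2\Re\ip{L(\rho)\xi,\xi} = \ip{(L(\rho)+L(\rho)^*)\xi,\xi} = \ip{L(f)\xi,\xi}$, where $f = \rho + \overline{\rho} = 2\Re\rho$ restricted to $S^1$. The hypothesis $\Re\rho(z)\ge 0$ on $\D$ makes $f$ a smooth, real, non-negative function, so Theorem \ref{thmQEI} gives $\ip{L(f)\xi,\xi} \ge -cK_f\norm{\xi}^2$. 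Setting $\omega = \tfrac12 cK_f$, the operator $A := -L(\rho) - \omega$ with domain $V$ satisfies $\Re\ip{A\xi,\xi}\le 0$, i.e. $A$ is densely defined and dissipative. For the range condition, note that for $\lambda>0$ the map $\lambda - A = (\lambda+\omega) + L(\rho)$ carries each finite-dimensional invariant $W$ into itself; dissipativity forces $\norm{(\lambda-A)\xi}\ge\lambda\norm{\xi}$, so $\lambda-A$ is injective and therefore bijective on each finite-dimensional $W$. Thus its range contains every such $W$, hence all of $V$, which is dense. By the Lumer--Phillips theorem $\overline{A}$ generates a strongly continuous contraction semigroup, so $-L(\rho) = \overline{A}+\omega$ generates a strongly continuous semigroup $(T_t)_{t\ge 0}$ with $\norm{T_t}\le e^{\omega t}$. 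Since $(T_t)$ and the algebraically defined $e^{-tL(\rho)}$ share the generator $-L(\rho)$ on the core $V$ and are both strongly continuous, they coincide on $V$; hence $e^{-tL(\rho)}$ extends to the bounded operator $T_t$ and $(e^{-tL(\rho)})_{t\ge 0}$ is a strongly continuous semigroup.

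The substantive input is the quantum energy inequality, which I am granting from Theorem \ref{thmQEI}; given it, the real work is bookkeeping rather than estimation. The step I expect to require the most care is the interface with Lumer--Phillips: choosing $V$ as the core, verifying dissipativity and the range condition on it, and then identifying the abstract semigroup with the explicit finite-dimensional exponential from the first part via uniqueness of generators on a core. It is worth emphasizing that what is used is the one-sided \emph{lower} bound on $L(f)$, not boundedness (which is false for unbounded smeared Virasoro fields); correspondingly the argument collapses precisely when $\Re\rho$ loses positivity, which is exactly where the starlikeness hypothesis recorded in Proposition \ref{propSemigroupSummary} enters.
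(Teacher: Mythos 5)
Your proposal is correct and follows essentially the same route as the paper: the same finite-dimensional invariant subspaces $W = \bigoplus_{n\ge 0}V_{\alpha-n}$ for the algebraic part, the identity $\Re\ip{L(\rho)\xi,\xi} = \ip{L(\Re\rho)\xi,\xi}$ combined with the quantum energy inequality for dissipativity, and Lumer--Phillips for the norm bound. The only difference is packaging: the paper applies Lumer--Phillips on each finite-dimensional $W_\alpha$ separately to get the uniform bound $\norm{e^{-tL(\rho)}|_{W_\alpha}}\le e^{cK t}$ directly, which lets it skip the range condition and the identification of the abstract semigroup with the explicit exponential that your version must carry out.
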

\begin{proof}
For $\alpha \in \R_{\ge 0}$, let $W_\alpha = \bigoplus_{n \ge 0} V_{\alpha-n}$. 
Then $W_\alpha$ is finite-dimensional and invariant under $L(\rho)$.
Hence $L(\rho)$ induces a bounded operator on $W_\alpha$, and for $\xi \in W_\alpha$ the sum defining $e^{t L(\rho)}\xi$ converges.
Moreover, the operator $e^{t L(\rho)}$ on $W_\alpha$ is invertible.
Since $V = \bigcup_{\alpha \ge 0} W_\alpha$, $e^{t L(\rho)}\xi$ is well-defined for $\xi \in V$, and $e^{t L(\rho)}$ is invertible on $V$.

Now assume $\Re \rho(z) \ge 0$ for all $z \in \D$.
For each $\alpha, t \ge 0$, $e^{-t L(\rho)}$ is a bounded operator on $W_\alpha$.
We need to verify that the norm of the restriction to $W_\alpha$ is uniformly bounded as $\alpha$ varies.
By the Lumer-Phillips theorem \cite[Thm 3.3]{Goldstein85}, if $M \in \R$ has the property that
\begin{equation}\label{eqnQuasidissapative}
\Re \ip{L(\rho)\xi, \xi} \ge M \norm{\xi}^2
\end{equation}
for all $\xi \in W_\alpha$, then for $t \ge 0$ we have
$$
\|e^{ - t L(\rho)}|_{W_\alpha}\| \le e^{-Mt}.
$$
Thus to prove that $e^{-t L(\rho)}$ is  bounded on $\cH_V$, it suffices to show that there exists an $M$ such that \eqref{eqnQuasidissapative} holds for all $\xi \in V$.
Since $\ip{L(\rho)\xi,\xi} = \ip{\xi,L(\overline{\rho})\xi}$, we have
$$
\Re \ip{L(\rho)\xi,\xi} = \ip{L( \Re \rho)\xi, \xi},
$$
and since $\Re \rho(z) \ge 0$, the condition \eqref{eqnQuasidissapative} follows immediately from Theorem \ref{thmQEI} with $M = -c K_{\Re \rho}$.

It is clear that $e^{-t L(\rho)}$ is a semigroup on $V$, and that the function $t \mapsto e^{-t L(\rho)}\xi$ is continuous for $\xi \in V$ and $t \ge 0$. 
Since $\|e^{-t L(\rho)}\|$ is locally bounded, this implies that $e^{-t L(\rho)}$ is a strongly continuous semigroup.
\end{proof}

\begin{Remark}
Since the bound on the spectrum of $\Re L(\rho)$ from Theorem \ref{thmQEI} is independent of the smallest eigenvalue $h$ of $L_0$, we may extend Proposition \ref{propExponentiatedVirasoro} to arbitrary direct sums, allowing us to drop the assumption that the $L_0$ eigenspaces are finite dimensional.
We will not, however, use this fact.
\end{Remark}

Now given that the operators $e^{-tL(\rho)}$ are bounded, we return to the example of the free fermion, and compute commutation relations between $e^{-tL(\rho)}$ and the generators of the CAR algebra.

\begin{Lemma}\label{lemExponentiatedVirasoroFermion}
Let $(\phi_t)_{t \ge 0} \in \scG$ and let $\sigma$ be the Koenigs map associated to $\phi_t$.
Let $\rho(z) = \frac{\sigma(z)}{z \sigma^\prime(z)}$.
Let $H_{\ge k} = \overline{\Span \{z^n : n \ge k\}} \subset H$ and let $H_{\le k} = \overline{\Span \{z^n : n \le k\}}$.
Then for $f \in H_{\ge k}$, $g \in H_{\le k}$, and $t \ge 0$, we have
\begin{equation}\label{eqnExponentiatedVirasoroImplementer}
a(f)e^{-t L(\rho)} = e^{-t L(\rho)} a(W_{\phi_t} f), \qquad a(g)^*e^{-tL(\rho)} = e^{-t L(\rho)} a(\tilde W_{\phi_t}g)^*.
\end{equation}
\end{Lemma}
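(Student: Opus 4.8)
The plan is to recognize the claimed identities as the \emph{exponentiated} form of the infinitesimal commutation relations of Proposition~\ref{propVirFermionCommRels}, integrated using the bounded semigroup from Proposition~\ref{propExponentiatedVirasoro}. Write $v = z\rho = \sigma/\sigma'$ and, for smooth $h$, set $\mathcal{L}_v h = z\rho\, h' + \tfrac12 (z\rho)' h$ (the Lie derivative of $h$, viewed as a $\tfrac12$-density, along $v\tfrac{d}{dz}$). Taking $f \mapsto \rho$ and $g \mapsto f$ in \eqref{eqnSmearedVirasoroAndFermion} gives $[L(\rho), a(f)] = -a(\mathcal{L}_v f)$ on $\F^\infty$. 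Since $a(\cdot)$ and $W_{\phi_t}$ are bounded (the latter by Section~\ref{subsecCompositionOperators}), both sides of the first identity depend norm-continuously on $f$, so it suffices to treat smooth $f \in H_{\ge k}$ and extend by density; for such $f$ every $W_{\phi_t}f$ is again smooth and lies in $H_{\ge k}$.

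The geometric input is the Schröder equation $\sigma\circ\phi_t = e^{-t}\sigma$ (Proposition~\ref{propSemigroupSummary}): differentiating it yields $v\circ\phi_t = v\cdot\phi_t'$, i.e. $\phi_t$ preserves the field $v\tfrac{d}{dz}$. This gives two consequences: $\tfrac{d}{dt}W_{\phi_t}f = -\mathcal{L}_v(W_{\phi_t}f)$, and $W_{\phi_t}$ commutes with $\mathcal{L}_v$. Now fix $\xi \in \F^0$ and set $\Psi_h(t) = a(h)e^{-tL(\rho)}\xi - e^{-tL(\rho)}a(W_{\phi_t}h)\xi$. Using that $e^{-tL(\rho)}\xi \in \F^0$, that $a(\text{smooth})$ preserves $\F^\infty \subset \mathcal{D}(L(\rho))$, and the two consequences above, a direct computation with the commutator produces the linear system $\tfrac{d}{dt}\Psi_f(t) = -L(\rho)\Psi_f(t) - \Psi_{\mathcal{L}_v f}(t)$ with $\Psi_f(0)=0$. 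Writing $S(t)=e^{-tL(\rho)}$ and applying Duhamel's formula $N$ times, the semigroup property collapses the nested propagators into a single $S(t-s)$ and the time-simplex contributes $(t-s)^{N-1}/(N-1)!$, giving $\|\Psi_f(t)\| \le C\,t^N\,\|\mathcal{L}_v^N f\|_{L^2}/N!$.

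The crucial estimate is that $t \mapsto W_{\phi_t}f = \sum_N \tfrac{(-t)^N}{N!}\mathcal{L}_v^N f$ is \emph{real-analytic}: since $\phi_t(z) = \sigma^{-1}(e^{-t}\sigma(z))$ extends holomorphically in $t$ to a neighborhood of $[0,\epsilon)$ uniformly on $S^1$, one has $\|\mathcal{L}_v^N f\|_{L^2} \le C_r\, N!/r^N$ for each $r < \epsilon$. Hence $\|\Psi_f(t)\| \le C' (t/r)^N \to 0$ for $t < r < \epsilon$, so $\Psi_f(t)=0$ on $[0,\epsilon)$. To pass to all $t\ge 0$ I would bootstrap: the identity for $t_1,t_2$ together with the semigroup laws $e^{-(t_1+t_2)L(\rho)} = e^{-t_1 L(\rho)}e^{-t_2 L(\rho)}$ and $W_{\phi_{t_1}}W_{\phi_{t_2}} = W_{\phi_{t_1+t_2}}$ (the latter applied to $W_{\phi_{t_1}}f \in H_{\ge k}$) yields the identity for $t_1+t_2$. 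Density in $f$ then completes the first relation.

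The second relation is proved by the same scheme with $W_{\phi_t}$ replaced by $\tilde W_{\phi_t} = cW_{\phi_t}c$, whose generator is $c\mathcal{L}_v c$: using \eqref{eqnSmearedVirasoroAndFermionStar} one gets $[L(\rho), a(g)^*] = a(z\overline{\rho}\,g' + \tfrac12(z\overline{\rho})'g)^*$, and a direct pointwise computation on $S^1$ (using $\overline{h'} = -z^2(\overline{h})'$ and $\overline{z} = z^{-1}$) verifies the key identity $z\overline{\rho}\,g' + \tfrac12(z\overline{\rho})'g = -\,c\mathcal{L}_v c\, g$, so that $[L(\rho),a(g)^*] = -a(c\mathcal{L}_v c\,g)^*$; combined with $\tfrac{d}{dt}\tilde W_{\phi_t}g = -c\mathcal{L}_v c(\tilde W_{\phi_t}g)$ and $[\,c\mathcal{L}_v c, \tilde W_{\phi_t}\,]=0$, the Duhamel and bootstrap arguments carry over verbatim. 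I expect the \textbf{main obstacle} to be precisely the small-$t$ base case: because $\Re\rho \ge 0$ makes $e^{-tL(\rho)}$ only a bounded, non-unitary, and \emph{not necessarily analytic} semigroup, one cannot exponentiate the commutation relation via unitary conjugation as in Proposition~\ref{propExpIsNS}; the Duhamel iteration above circumvents this, trading the missing analyticity of the operator semigroup for the genuine real-analyticity of the classical flow $\phi_t$.
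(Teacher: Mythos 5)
Your reduction to the infinitesimal relation $[L(\rho),a(f)]=-a(\mathcal{L}_v f)$, the identity $v\circ\phi_t=v\cdot\phi_t'$ from Schr\"oder's equation, and the resulting ODE for $\Psi_f$ are all sound, and the Duhamel iteration correctly yields $\norm{\Psi_f(t)}\le C\,t^N\norm{\mathcal{L}_v^N f}_{L^2(S^1)}/N!$. The gap is in the step you yourself flag as crucial: the factorial bound $\norm{\mathcal{L}_v^N f}_{L^2(S^1)}\le C_r N!/r^N$, equivalently the real-analyticity of $t\mapsto W_{\phi_t}f$ at $t=0$ as an $L^2(S^1)$-valued map, is false in the generality of the lemma. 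The operator $\mathcal{L}_v^N$ applied to even a Laurent polynomial $f$ involves up to $N-1$ derivatives of $z\rho=\sigma/\sigma'$ evaluated on $S^1$, and factorial growth control of these is exactly the statement that $\sigma$ is \emph{analytic} up to $S^1$. But $\sigma$ maps onto a Jordan domain $U$ that is only assumed to have $C^\infty$ boundary, so $\sigma$ (hence $\rho$) extends smoothly but in general not holomorphically across $S^1$. Equivalently: for $z\in S^1$ one has $\phi_t(z)=\sigma^{-1}(e^{-t}\sigma(z))$ with $\sigma(z)\in\partial U$, and $\sigma^{-1}$ does not extend holomorphically across $\partial U$, so $t\mapsto\phi_t(z)$ is smooth but not real-analytic at $t=0$ (a small complex perturbation of $t$ rotates $e^{-t}\sigma(z)$ out of $\overline{U}$). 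Consequently the series $\sum_N\frac{(-t)^N}{N!}\mathcal{L}_v^N f$ need not converge, the remainder in your iteration need not tend to $0$, and the base case of your bootstrap does not close. A Gronwall argument cannot substitute, since the inhomogeneity $\Psi_{\mathcal{L}_v f}$ is not controlled by $\Psi_f$.

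The paper closes exactly this hole by a finite-dimensional reduction rather than by analyticity. Since $a(\cdot)$ and $W_{\phi_t}$ are bounded, one may take $f$ a Laurent polynomial in $W=\Span\{z^{-M},\dots,z^M\}$ and $\xi\in\F_{\le n}$ with $M>n$; because $L(\rho)=\sum_{n\ge0}\hat\rho_nL_n$ lowers energy, $e^{\pm tL(\rho)}$ restricts to an invertible one-parameter \emph{group} on the finite-dimensional space $\F_{\le n}$, and $qW_{\phi_t}=e^{tX}$ is a genuine one-parameter semigroup on the finite-dimensional space $W$ (the tail of $W_{\phi_t}f$ in $H_{>M}$ annihilates $\F_{\le n}$). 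One then differentiates $t\mapsto e^{-tL(\rho)}a(W_{\phi_t}f)e^{tL(\rho)}\xi$ once, identifies $X$ with $-q\mathcal{L}_v$ via a contour integral over $\tfrac12 S^1$ (where everything \emph{is} analytic), and finds the derivative vanishes by Proposition \ref{propVirFermionCommRels} --- no iteration, and no regularity of $\sigma$ beyond smoothness on $\overline{\D}$, is needed. If you want to salvage your scheme, you should run your ODE argument inside these finite-dimensional truncations, at which point it essentially becomes the paper's proof.
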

\begin{proof}
Our argument is similar to \cite[Exp. Thm. \S8]{Wa98}. 
We begin with the first equality of \eqref{eqnExponentiatedVirasoroImplementer}, namely that
$$
a(f)e^{-t L(\rho)}\xi = e^{-t L(\rho)} a(W_{\phi_t} f)\xi
$$
for all $\xi \in \F$.
Fix $\xi \in \F$.

By Proposition \ref{propSemigroupSummary}, we have $\Re \rho(z) \ge 0$ for all $z \in \D$, and thus by Proposition \ref{propExponentiatedVirasoro}, for $t \ge 0$ and $\eta \in \F^0$, the sum defining $e^{-t L(\rho)}\eta$ converges, and the resulting operators are bounded and form a strongly continuous semigroup.

For $n \in \tfrac12\Z_{\ge 0}$, let $\F_n$ be the eigenspace of $L_0$ with eigenvalue $n$.
We may assume without loss of generality that $\xi \in \F_n$ for some $n$.
Since $\norm{a(f)} = \norm{f}$ and $W_{\phi_t}$ is bounded on $H_{\ge k}$, we may assume without loss of generality that $f$ is a Laurent polynomial.
Let $M \in \Z_{>0}$ be a number with $M > n$ and $f \in W:= \Span\{z^{-M}, z^{-M+1}, \ldots, z^{M-1}, z^M\}$.

Let $\F_{\le k} = \bigoplus_{j =0}^{2k} \F_{j/2}$.
Then $\F_{\le n}$ and $\F_{\le n+M}$ are finite-dimensional, and as in the proof of Proposition \ref{propExponentiatedVirasoro}, $L(\rho)$ is a bounded operator on both spaces.
Hence $e^{-t L(\rho)}$ is defined on both spaces for all $t \in \R$, and yields a one-parameter group.

Now let us think of $W_{\phi_t}$ as an operator on $H_{\ge -M}$, with $H_{> M}$ an invariant subspace.
Hence if $q$ is the projection of $H_{\ge -M}$ onto $W = H_{\ge -M} \ominus H_{> M}$, we have $q W_{\phi_t} = q W_{\phi_t} q$.
Thus $q W_{\phi_t}$ is a strongly continuous one-parameter semigroup on the finite dimensional space $W$, and so there exists an $X \in \cB(W)$ such that $q W_{\phi_t} = e^{t X}$.

Observe that since $e^{t L(\rho)}\xi \in \F_{\le n}$, $W_{\phi_t}f \in W$, and $M > n$, we have 
$$
a(W_{\phi_t}f)e^{t L(\rho)}\xi = a(q W_{\phi_t}f)e^{tL(\rho)}\xi = a(e^{tX}f)e^{t L(\rho)}\xi.
$$
Hence the function $\R_{\ge 0} \to \F^0_{\le n+M}$ given by $t \mapsto e^{-t L(\rho)}a(W_{\phi_t}f)e^{t L(\rho)}\xi$ can be smoothly extended to all of $\R$, and more importantly its derivative is given by
\begin{equation}\label{eqAnnulusImplementationDerivative}
\frac{d}{dt} e^{-t L(\rho)}a(W_{\phi_t}f)e^{t L(\rho)}\xi = e^{-t L(\rho)}\Big(a(Xe^{tX} f) - [L(\rho), a(e^{tX}f)]\Big)e^{t L(\rho)}\xi.
\end{equation}

By Proposition \ref{propSemigroupSummary}, we have $\phi_t(z) = \sigma^{-1}(e^{-t}\sigma(z))$ for all $z \in \D$ and $t \ge 0$.
From the formula, we can see that $(t,z) \mapsto \phi_t(z)$ extends to a smooth function in a neighborhood of $\R_{\ge 0} \times \tfrac12 S^1$.
Hence for $z \in \tfrac12 S^1$ and $g \in W$, we can compute
$$
\lim_{t \downarrow 0} \frac{(W_{\phi_t}g)(z)-g(z)}{t} = \left. \frac{d}{dt} (W_{\phi_t}g)(z)\right|_{t=0} = -(z\rho(z)g^\prime(z) + \tfrac12(z\rho)^\prime(z)g(z))
$$
with uniform convergence in $z$ on $\tfrac12 S^1$.
Hence for all $k \in \Z$ we have
\begin{align*}
\lim_{t \downarrow 0} \ip{\frac{W_{\phi_t}g-g}{t},z^k} &= \lim_{t \downarrow 0} \frac{1}{2\pi i}\oint_{\tfrac12 S^1} \frac{(W_{\phi_t}g)(z)-g(z)}{t} z^{-k-1}dz\\  
&= -\frac{1}{2\pi i} \oint_{\tfrac12 S^1} (z\rho(z)g^\prime(z) + \tfrac12(z\rho)^\prime(z)g(z)) z^{-k-1} dz\\
&= -\ip{z\rho g^\prime + \tfrac12 (z \rho)^\prime g,z^k}.
\end{align*}
Since $q W_{\phi_t} = e^{t X}$ on the finite-dimensional space $\im q = W = \Span \{z^{-M}, \ldots, z^M\}$, this implies that
$$
Xg = -q(z\rho g^\prime + \tfrac12 (z \rho)^\prime g).
$$

By Proposition \ref{propVirFermionCommRels}, we have
\begin{align*}
[L(\rho), a(e^{tX}f)]e^{t L(\rho)}\xi = -a((z\rho g^\prime + \tfrac12 (z \rho)^\prime g) e^{t X}f)e^{t L(\rho)}\xi = a(Xe^{t X}f)e^{t L(\rho)}\xi,
\end{align*}
where in the last equality we use that for $h \in H_{\ge -M} \ominus W$, we have $a(h)e^{t L(\rho)}\xi = 0$ since $e^{t L(\rho)}\xi \in \F_{\le n}^0$.
Substituting this result into \eqref{eqAnnulusImplementationDerivative}, we see that $e^{-t L(\rho)} a(W_{\phi_t}f ) e^{t L(\rho)}$ is independent of $t$, and evaluating at $t=0$ we see that
$$
e^{-t L(\rho)} a(W_{\phi_t}f ) e^{t L(\rho)}\xi = a(f)\xi
$$
for all $\xi \in \F_{\le n}$.
Hence $a(f)e^{-t L(\rho)}\xi = e^{-t L(\rho)} a(W_{\phi_t} f)\xi$ for all $\xi \in \F_{\le n}$, which was to be shown.

We now turn to showing that
\begin{equation}\label{eqnStarVirasoroRelMidProof}
a(g)^*e^{-tL(\rho)}\xi = e^{-t L(\rho)} a(\tilde W_{\phi_t}g)^*\xi
\end{equation}
for all $g \in H_{\le k}$ and all $\xi \in \F$.
As above, it suffices to consider $\xi \in \F_{\le n}$ and $g \in W := \{z^{-M-1}, \ldots, z^M\}$, where we choose $M > n$.
Recall that $\tilde W_{\phi_t} = c W_{\phi_t} c$, where $cf = \overline{zf}$.
Note that we have slightly adjusted the definition of $W$ in this case so that $cW = W$.

Using the same ideas as above, let $\tilde q$ be the projection of $H_{\le M}$ onto $W$, so that $\tilde q \tilde W_{\phi_t}$ is a continuous semigroup on $W$. 
We have $\tilde q \tilde W_{\phi_t} = c e^{t X} c = e^{t \tilde X}$, where $\tilde X = c Xc$.
In fact, it is straightforward to compute $\tilde X$ explicitly, and we get
$$
\tilde Xg = \tilde q(z \overline{\rho} g^\prime + \tfrac12 (z \overline{\rho})^\prime g).
$$
Differentiating as above, we get
\begin{equation*}
\frac{d}{dt} e^{-t L(\rho)}a(\tilde W_{\phi_t}f)^*e^{t L(\rho)}\xi = e^{-t L(\rho)}\Big(a(\tilde Xe^{t\tilde X} f)^* - [L(\rho), a(e^{t\tilde X}f)^*]\Big)e^{t L(\rho)}\xi,
\end{equation*}
which vanishes by Proposition \ref{propVirFermionCommRels}.
This establishes \eqref{eqnStarVirasoroRelMidProof}, and completes the proof of the lemma.
\end{proof}

So far, we have collected enough results to establish Theorem \ref{thmBoundednessAndExistence} for $X \in \cDA_{st}$.

\begin{Proposition}\label{propAnnulusBoundednessAndExistence}
Let $X=(\phi_t, t) \in \cDA_{st}$, and let $\rho$ be as in Theorem \ref{thmBoundednessAndExistence}.
Then $E(X) = \C e^{-t L(\rho)}$.
\end{Proposition}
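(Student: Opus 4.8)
The plan is to produce the generator $T := e^{-tL(\rho)}$ explicitly, show it lies in $E(X)$, and then close the argument with the dimension bound. First I would record that $T$ is a well-defined, nonzero bounded operator: by Proposition~\ref{propSemigroupSummary} the function $\rho$ satisfies $\Re\rho(z)\ge 0$ on $\D$, so Proposition~\ref{propExponentiatedVirasoro} applies and gives that $e^{-tL(\rho)}$ is bounded on $\F$ and invertible on $\F^0$. Since $L(\rho)=\sum_n \hat\rho_n L_n$ is built from the grading-preserving operators $L_n$, the operator $T$ is even. It therefore suffices to prove that $T$ satisfies the $H^2(X)$ commutation relations of Definition~\ref{defDegenerateFermionSegalCFT} with $p(T)=0$: then $\C T\subseteq E(X)$, while $\dim E(X)\le 1$ by Proposition~\ref{propDimAtMostOne} and $T\ne 0$, forcing $E(X)=\C e^{-tL(\rho)}$.

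The core of the proof is to match the two defining relations of $E(X)$ with the commutation formulas of Lemma~\ref{lemExponentiatedVirasoroFermion}. For a degenerate annulus both boundary Fock spaces are single copies of $\F$, with $a(f^1)$ acting on the codomain (indexed by $S^1$) and $a(f^0)$ on the domain (indexed by $\phi_t(S^1)$). By Definition~\ref{defDegenerateHardy} a pair $(f^1,f^0)$ lies in $H^2(X)$ exactly when it is an $L^2\oplus L^2$ limit of pairs $(F|_{S^1}, W_{\phi_t}F|_{S^1})$ with $F\in\cO(\Sigma)$, where $\Sigma=\D\setminus\phi_t(\interior{\D})$. The obstruction is that Lemma~\ref{lemExponentiatedVirasoroFermion} only furnishes $a(f)T = Ta(W_{\phi_t}f)$ for $f\in H_{\ge k}$ (and the adjoint version for $g\in H_{\le k}$), whereas a generic boundary value $F|_{S^1}$ has infinitely many negative Fourier modes and lies in no $H_{\ge k}$.

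I expect the real work to be in circumventing this, and the tool is Runge's theorem. Since $\phi_t(0)=0$, the point $0$ belongs to the bounded complementary component of the compact set $\Sigma$, while $\{\abs z>1\}$ is the unbounded one; hence every $F$ holomorphic near $\Sigma$ is a uniform-on-$\Sigma$ limit of Laurent polynomials $p_n$ (rational functions with poles only in $\{0\}\cup\{\infty\}$), and this covers the pinched, genuinely degenerate case as well. Because $\Sigma$ contains both $S^1$ and $\phi_t(S^1)$, uniform convergence on $\Sigma$ yields $p_n|_{S^1}\to F|_{S^1}$ and $W_{\phi_t}p_n|_{S^1}\to W_{\phi_t}F|_{S^1}$ in $L^2$; conversely each Laurent polynomial is holomorphic on $\C\setminus\{0\}\supseteq\Sigma$, hence lies in $\cO(\Sigma)$. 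Thus $H^2(X)$ is precisely the closure of the pairs $(p|_{S^1}, W_{\phi_t}p|_{S^1})$ with $p$ a Laurent polynomial, which is the key reduction; the point needing care is that the approximation be uniform up to the boundary so that both components converge in $L^2$.

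Finally I would verify the relations on Laurent polynomials and pass to the limit. A Laurent polynomial $p$ has finitely many negative modes, so $p\in H_{\ge k}$ for some $k$, and writing $(cf)(z)=\overline{zf(z)}$ (so $cz^n=z^{-n-1}$) one has $cp\in H_{\le k'}$. Lemma~\ref{lemExponentiatedVirasoroFermion} then gives $a(p)T = Ta(W_{\phi_t}p)$, which is the first $H^2(X)$ relation for the pair $(p,W_{\phi_t}p)$, and $a(cp)^*T = Ta(\tilde W_{\phi_t}(cp))^* = Ta(c\,W_{\phi_t}p)^*$, using $\tilde W_{\phi_t}=cW_{\phi_t}c$ and $c^2=1$, which is exactly the second relation (with $(-1)^{p(T)}=1$). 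Since $f\mapsto a(f)$ and $f\mapsto a(f)^*$ are isometric for the operator norm, $c$ is an antilinear isometry, and $T$ is bounded, both identities survive the $L^2$ limit along $(p_n|_{S^1}, W_{\phi_t}p_n|_{S^1})\to(f^1,f^0)$. Hence $T$ satisfies the relations for all of $H^2(X)$, so $T\in E(X)$ and the dimension bound completes the proof that $E(X)=\C e^{-tL(\rho)}$.
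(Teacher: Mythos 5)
Your proposal is correct and follows essentially the same route as the paper: boundedness and evenness of $e^{-tL(\rho)}$ from Proposition \ref{propExponentiatedVirasoro}, reduction of $H^2(X)$ to the closed span of the pairs $(z^k, W_{\phi_t}z^k)$ via Runge's theorem, verification of both commutation relations from Lemma \ref{lemExponentiatedVirasoroFermion} together with $\tilde W_{\phi_t}=cW_{\phi_t}c$, and the dimension bound of Proposition \ref{propDimAtMostOne} to conclude. Your write-up is if anything slightly more explicit than the paper's about why the uniform-on-$\Sigma$ approximation yields $L^2$ convergence of both boundary components and why the limiting step is legitimate.
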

\begin{proof}
By Proposition \ref{propExponentiatedVirasoro}, $e^{-tL(\rho)} \in \cB(\F)$, and $e^{-t L(\rho)}$ is clearly even.
By Runge's theorem, we have $H^2(X) = \overline{ \Span \{(z^k, W_{\phi_t} z^k) : k \in \Z \}}$, and so by Lemma \ref{lemExponentiatedVirasoroFermion} $e^{-t L(\rho)}$ satisfies
$$
a(f^1)e^{-tL(\rho)} = e^{-t L(\rho)}a(f^0)
$$
for all $(f^1, f^0) \in H^2(X)$.

Now let $c:L^2(S^1) \to L^2(S^1)$ be the antilinear unitary $cf = \overline{zf}$.
By definition, we have
$$
\overline{M_{\pm z} H^2(X)} = \overline{ \Span \{ (c z^k, -c W_{\phi_t} z^k) : k \in \Z \}} = \overline{ \Span \{ (z^k, -\tilde W_{\phi_t} z^k) : k \in \Z\}}.
$$
Thus is follows directly from Lemma \ref{lemExponentiatedVirasoroFermion} that
$$
a(g^1)^*e^{-t L(\rho)} = e^{-t L(\rho)} a(g^0)^*
$$
for every $(g^1, g^0) \in \overline{z H^2(X)}$.
Hence $e^{- tL(\rho)} \in E(X)$.
But this finishes the proof, as we established that $\dim E(X) \le 1$ in Proposition \ref{propDimAtMostOne}.
\end{proof}

We now switch from studying degenerate annuli to studying degenerate pairs of pants 
$$
X = (\phi_t,t,w,s) \in \cDP_{st}.
$$
We wish to show that $E(X)$ is spanned by the map $T:\F \otimes \F \to \F$ given by
$$
T(\xi \otimes \eta) = Y(s^{L_0}\xi,w)e^{-t L(\rho)}\eta,
$$
which is defined on $\F^0 \otimes \F^0$ by Proposition \ref{propExponentiatedVirasoro} and Proposition \ref{propVertexOperatorDenselyDefined}.

The strategy for showing this is somewhat indirect, and so we first give a short summary.
Let $R > 1$, and let $X_R$ be the non-degenerate extension of $X$ (Definition \ref{defNondegenerateExtension}).
Then by the gluing property of the (non-degenerate) Segal CFT, there is an element $T_R \in E(X_R)$ satisfying $T_R(\Omega \otimes \Omega) = \Omega$.
First, will verify that $T_R(\xi \otimes \eta) = R^{-L_0}T(\xi \otimes \eta)$ for $\xi,\eta \in \F^0$, and thus that $T_R \to T$ as $R \downarrow 1$, pointwise on the algebraic tensor product $\F^0 \otimes_{alg} \F^0$.
Next, we will show that $T_R^*$ converges pointwise on $\F^0$ to a densely defined map which we call $S$.
We will see that $S$ is an example of what we call an \emph{implementing operator}.
That is, there is a vector $\hOmega \in \F \otimes \F$ and a map $r \in \cB(H,H \oplus H)$ such that
$$
Sa(\xi_1)^* \cdots a(\xi_n)^*a(\eta_1) \cdots a(\eta_m)^*\Omega = a(r \xi_1)^* \cdots a(r \xi_n)^* a(r \eta_1) \cdots a(r \eta_m)\hOmega
$$
whenever $\xi_i \in pH = H^2(\D)$ and $\eta_j \in (1-p)H$.
Here, we have identified $\F \otimes \F \cong \F_{H \oplus H, p \oplus p}$ as in Proposition \ref{propFockSumToTensor}.
In Section \ref{secImplementingOperators}, we develop tools for proving boundedness of implementing operators, and these will tell us that $S$ is bounded.
It then follows that $T_R^* = SR^{-L_0}$, and thus $\norm{T_R^*} \le \norm{S}$.
Hence $\norm{T_R}$ remains bounded as $R \downarrow 1$, from which we can conclude that $T$ is bounded and that $T_R \to T$ in the strong operator topology.
It is then easy to verify that $T$ satisfies the necessary commutation relations to lie in $E(X)$

Our first task is to establish a formula for $T_R$ in terms of vertex operators.
We will need a version of the Borcherds commutator formula for free fermion vertex operators evaluated at a complex number.
\begin{Proposition}\label{propConcreteCommutator}
Suppose $w \in \C$ with $0 < \abs{w} < 1$, and that $s > 0$ satisfies $w + s\D \subset \interior{\D} \setminus \{0\}$.
Then for all $\xi,\eta \in \F^0$ and every $n \in \Z$,
\begin{equation}\label{eqnConcreteCommutator}
a(z^n)Y(s^{L_0}\xi,w)\eta = Y(s^{L_0} a(s^{1/2}(sz+w)^n)\xi,w)\eta + (-1)^{p(\xi)}Y(s^{L_0}\xi,w)a(z^n)\eta 
\end{equation}
and
\begin{equation}\label{eqnConcreteCommutatorStar}
a(z^{-n-1})^* Y(s^{L_0}\xi,w)\eta =  Y(s^{L_0}a(s^{1/2}z^{-1}(sz^{-1}+\overline{w})^n)^*\xi,w)\eta+(-1)^{p(\xi)} Y(s^{L_0}\xi,w)a(z^{-n-1})^*\eta
\end{equation}
where the equations are understood as holding when $\xi$ is homogeneous, and extended linearly otherwise.
\end{Proposition}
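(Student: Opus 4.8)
The plan is to obtain both formulas from the Borcherds commutator formula (Theorem~\ref{thmBorcherds}) applied to the two generating fields of the free fermion. The point of departure is that, in the notation of Example~\ref{exFFUVOSA}, the operators on the left-hand sides are precisely the modes of these generators: reading off coefficients in $Y(a(z^{-1})\Omega,x)=\sum_n a(z^n)x^{-n-1}$ gives $a(z^n)=(a(z^{-1})\Omega)_{(n)}$, and in $Y(a(1)^*\Omega,x)=\sum_n a(z^{-n-1})^*x^{-n-1}$ gives $a(z^{-n-1})^*=(a(1)^*\Omega)_{(n)}$. Both generators are odd and quasiprimary of conformal weight $\tfrac12$, so the parity factor $(-1)^{p(a)p(b)}$ in Theorem~\ref{thmBorcherds} collapses to $(-1)^{p(\xi)}$ when $b=\xi$, matching the stated signs.

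For \eqref{eqnConcreteCommutator} I would take $a=a(z^{-1})\Omega$ and $b=s^{L_0}\xi$ in the field form of Theorem~\ref{thmBorcherds}, obtaining the formal identity
\[
a(z^n)Y(s^{L_0}\xi,x)-(-1)^{p(\xi)}Y(s^{L_0}\xi,x)a(z^n)=\sum_{j\ge 0}\binom{n}{j}Y(a(z^j)s^{L_0}\xi,x)\,x^{n-j}.
\]
Applied to $\eta\in\F^0$, each side is an $\F$-valued function of $x$ that is holomorphic on $0<|x|<1$ by Proposition~\ref{propVertexOperatorDenselyDefined} (and $a(z^n)$ is bounded), so the identity may be evaluated at $x=w$, using $0<|w|<1$. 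To recognize the right-hand side I would conjugate by $s^{L_0}$: the weight-$\tfrac12$ relation $[L_0,a(z^k)]=-(k+\tfrac12)a(z^k)$ gives $a(z^j)s^{L_0}\xi=s^{L_0}a(s^{\,j+\frac12}z^j)\xi$, turning the $j$-th summand into $\binom{n}{j}s^{\,j+\frac12}w^{n-j}\,Y(s^{L_0}a(z^j)\xi,w)\eta$.

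The final step is a binomial resummation. Since $\xi$ has a largest $L_0$-eigenvalue $N$ and $a(z^j)$ lowers energy by $j+\tfrac12$, one has $a(z^j)\xi=0$ for $j$ sufficiently large, so the sum is finite and may be pulled inside $a(\cdot)$ and $Y(s^{L_0}\,\cdot\,,w)$ without any analytic worry; it equals $Y(s^{L_0}a(g_n)\xi,w)\eta$ with $g_n=\sum_{j\ge0}\binom{n}{j}s^{\,j+\frac12}w^{n-j}z^j$. I would then identify $g_n=s^{1/2}(sz+w)^n$ in $H^2(\D)$: the hypothesis $w+s\D\subset\interior{\D}\setminus\{0\}$ forces $s<|w|$, so $(sz+w)^n=w^n(1+sz/w)^n$ is holomorphic and nonvanishing near $\D$ and its $L^2(S^1)$ Fourier coefficients are exactly $\binom{n}{j}s^j w^{n-j}$. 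This gives \eqref{eqnConcreteCommutator}. The starred identity \eqref{eqnConcreteCommutatorStar} follows identically with $a=a(1)^*\Omega$, using $a(z^{-j-1})^*s^{L_0}\xi=s^{L_0}a(s^{\,j+\frac12}z^{-j-1})^*\xi$; here the antilinearity of $f\mapsto a(f)^*$ produces the conjugate $\overline{w}$, and the resummation yields the function $s^{1/2}z^{-1}(sz^{-1}+\overline{w})^n$.

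The step I expect to demand the most care is the transition from the \emph{formal} power-series identity to a genuine identity of vectors at $x=w$, together with the convergence of the series defining $g_n$ when $n<0$. Both are governed by the same two inputs: Proposition~\ref{propVertexOperatorDenselyDefined}, which makes every vertex-operator series a holomorphic $\F$-valued function on $0<|x|<1$ (so that evaluation at $w$ is legitimate), and the geometric inequality $s<|w|$ coming from $0\notin w+s\D$, which is exactly what forces the binomial expansion of $(sz+w)^n$ to converge on $S^1$ so that its Fourier coefficients can be read off term by term. Everything else reduces, via the finite-energy cutoff $a(z^j)\xi=0$ for $j$ large, to a finite algebraic manipulation.
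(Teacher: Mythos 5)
Your proof is correct and follows essentially the same route as the paper's: apply the Borcherds commutator formula to the generating fields $a(z^{-1})\Omega$ and $a(1)^*\Omega$, justify evaluation of the resulting formal identity at $x=w$ via Proposition \ref{propVertexOperatorDenselyDefined} and absolute convergence, commute the modes past $s^{L_0}$ using the weight-$\tfrac12$ relation, and resum the binomial series into $s^{1/2}(sz+w)^n$ (respectively $s^{1/2}z^{-1}(sz^{-1}+\overline{w})^n$, with the conjugate arising from the antilinearity of $f\mapsto a(f)^*$) using $s<\abs{w}$. The points you flag as delicate — evaluation at $x=w$ and convergence of the binomial expansion for $n<0$ — are exactly the ones the paper addresses, and by the same means.
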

\begin{proof}
Observe that all of the terms in \eqref{eqnConcreteCommutator} and \eqref{eqnConcreteCommutatorStar} are well-defined elements of $\F$, with the defining sums converging absolutely,  by Proposition \ref{propVertexOperatorDenselyDefined} and the fact that $a((sz+w)^n)$ and $a(z^{-1}(sz^{-1} + \overline{w}))^*$ map $\F^0$ into itself.
Assume without loss of generality that $\xi$ and $\eta$ are eigenvectors for $L_0$, and that $\eta^\prime $ is as well.
Then by the Borcherds commutator formula (Theorem \ref{thmBorcherds}), we have an identity of formal series
\begin{equation}\label{eqnFormalCommutator}
a(z^n)Y(s^{L_0}\xi,x)\eta = (-1)^{p(\xi)}Y(s^{L_0}\xi,x)a(z^n)\eta + \sum_{k \ge 0} \binom{n}{k} Y(a(z^k)s^{L_0}\xi,x)x^{n-k}\eta,
\end{equation}
where the sum in $k$ is finite.
But the three terms of \eqref{eqnFormalCommutator} all give absolutely convergent series when evaluated at $x=w$, and so we have an equality of elements of $\F$:
\begin{align*}
a(z^n)Y(s^{L_0}\xi,w)\eta &= (-1)^{p(\xi)}Y(s^{L_0}\xi,w)a(z^n)\eta + \sum_{k \ge 0} \binom{n}{k} Y(a(z^k)s^{L_0}\xi,w)w^{n-k}\eta\\
&= (-1)^{p(\xi)}Y(s^{L_0}\xi,w)a(z^n)\eta + \sum_{k \ge 0} w^{n}s^{1/2} \binom{n}{k} Y(s^{L_0}a((sz/w)^k)\xi,w)\eta\\
&= (-1)^{p(\xi)}Y(s^{L_0}\xi,w)a(z^n)\eta + Y(s^{L_0}a(s^{1/2}(sz+w)^n)\xi, w)\eta,
\end{align*}
where we used that $s < \abs{w}$ by assumption, and the finiteness of the sum in $k$.

The proof of relation \eqref{eqnConcreteCommutatorStar} is similar.
By the Borcherds commutator formula, we have an identity of formal series
$$
a(z^{-n-1})^*Y(s^{L_0}\xi,x)\eta = (-1)^{p(\xi)}Y(s^{L_0}\xi,x)a(z^{-n-1})^*\eta + \sum_{k \ge 0} \binom{n}{k} Y(a(z^{-k-1})^*s^{L_0}\xi,x)x^{n-k}\eta.
$$
Evaluating at $x=w$ and arguing as above, we get
\begin{align*}
a(z^{-n-1})^*Y(s^{L_0}\xi,w)\eta &= (-1)^{p(\xi)}Y(s^{L_0}\xi,w)a(z^{-n-1})^*\eta +  \sum_{k \ge 0} \binom{n}{k} Y(a(z^{-k-1})^*s^{L_0}\xi,w)w^{n-k}\eta\\
&= (-1)^{p(\xi)}Y(s^{L_0}\xi,w)a(z^{-n-1})^*\eta +  \sum_{k \ge 0} w^n \binom{n}{k}s^{1/2} Y(s^{L_0} a(z^{-1} (s/(\overline{w}z))^k)^*\xi,w)\eta\\
&= (-1)^{p(\xi)}Y(s^{L_0}\xi,w)a(z^{-n-1})^*\eta +  Y(s^{L_0}a(s^{1/2}z^{-1} (sz^{-1}+\overline{w}))^*\xi,w)\eta.
\end{align*}
\end{proof}

We can now establish the desired formula for $T_R$.
\begin{Proposition}\label{propLimitIsRegularizedVertexOperator}
Let $X = (\phi_t, t, w, s) \in \cDP_{st}$, let $R > 1$ and let $X_R$ be the non-degenerate extension of $X$.
Let $T_R \in E(X_R)$ be the element with $T_R(\Omega \otimes \Omega) = \Omega$.
For $\xi,\eta \in \F^0$, let $T(\xi \otimes \eta) = Y(s^{L_0}\xi,w)e^{-t L(\rho)}\eta$, where $\rho$ is as in Theorem \ref{thmBoundednessAndExistence}.
Then $T_R(\xi \otimes \eta) = R^{-L_0} T(\xi \otimes \eta)$, and $T_R(\xi \otimes \eta) \to T(\xi \otimes \eta)$ as $R \downarrow 1$, for all $\xi,\eta \in \F^0$.
\end{Proposition}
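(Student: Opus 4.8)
The plan is to verify directly that the densely defined operator $R^{-L_0}T$ satisfies the $H^2(X_R)$ commutation relations characterizing $E(X_R)$, and then to invoke uniqueness to identify it with $T_R$ on finite energy vectors. Since $X_R$ is a genuine (non-degenerate) spin Riemann surface, $E(X_R)$ is one-dimensional by Theorem \ref{thmFermionSegalCFT}, and $T_R$ is by construction the even, trace class element normalized so that $T_R(\Omega\otimes\Omega)=\Omega$. Because $L_0\Omega=0$ and $e^{-tL(\rho)}\Omega=\Omega$ (as $\rho\in H^2(\D)$ contributes only non-negative Virasoro modes, which annihilate $\Omega$), we also have $R^{-L_0}T(\Omega\otimes\Omega)=\Omega$; thus the two operators agree on the vacuum and the work is to show they satisfy the same relations.

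First I would record the relations defining $E(X_R)$. For a function $F$ holomorphic near the underlying space $\Sigma_R$, the boundary parametrizations of Definition \ref{defNondegenerateExtension} produce the triple with outgoing datum $R^{1/2}F(Rz)$ on $RS^1$ and incoming data $s^{1/2}F(w+sz)$ on $w+sS^1$ and $W_{\phi_t}F=\psi_t\cdot(F\circ\phi_t)$ on $\phi_t(S^1)$; via Proposition \ref{propFockSumToTensor} (ordering the $w+sS^1$ factor first) the incoming $\CAR$ action on $\F\otimes\F$ is $a(s^{1/2}F(w+s\,\cdot))\otimes 1+\Gamma\otimes a(W_{\phi_t}F)$. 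By Runge's theorem it suffices to check the relations for $F(z)=z^n$, $n\in\Z$, and then pass to the closure.

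The key point is that each of the three factors of $T_R=R^{-L_0}\circ Y(s^{L_0}\,\cdot\,,w)\circ e^{-tL(\rho)}$ is exactly the operator intertwining one boundary circle. For the outgoing circle, the annulus relation $a(R^{n+1/2}z^n)R^{-L_0}=R^{-L_0}a(z^n)$ (equivalently $[L_0,a(z^n)]=-(n+\tfrac12)a(z^n)$) moves $a(R^{1/2}F(Rz))$ through $R^{-L_0}$, replacing it by $a(F)$. Next, Proposition \ref{propConcreteCommutator}, equation \eqref{eqnConcreteCommutator}, moves $a(F)$ through $Y(s^{L_0}\xi,w)$, producing the $w+sS^1$ term $Y(s^{L_0}a(s^{1/2}F(w+s\,\cdot))\xi,w)$ together with a commuted remainder carrying the sign $(-1)^{p(\xi)}$. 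Finally, Lemma \ref{lemExponentiatedVirasoroFermion}, equation \eqref{eqnExponentiatedVirasoroImplementer}, moves the remaining $a(F)$ through $e^{-tL(\rho)}$, replacing it by $a(W_{\phi_t}F)$. Assembling these three moves and comparing with the incoming action above yields $a(R^{1/2}F(Rz))R^{-L_0}T=R^{-L_0}T\big(a(s^{1/2}F(w+s\,\cdot))\otimes 1+\Gamma\otimes a(W_{\phi_t}F)\big)$ on $\F^0\otimes\F^0$, the grading sign $(-1)^{p(\xi)}$ from \eqref{eqnConcreteCommutator} matching the $\Gamma$ in the tensor action. The adjoint relations follow identically from \eqref{eqnConcreteCommutatorStar}, the second identity of \eqref{eqnExponentiatedVirasoroImplementer}, and $R^{-L_0}a(z^m)^*R^{L_0}=R^{-(m+1/2)}a(z^m)^*$; all series converge absolutely by Proposition \ref{propVertexOperatorDenselyDefined}, so the manipulations are legitimate on the finite energy domain.

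Having shown that $R^{-L_0}T$ and $T_R$ satisfy the same $H^2(X_R)$ relations and agree on $\Omega\otimes\Omega$, I would conclude they agree on all of $\F^0\otimes\F^0$ by the determination-by-vacuum argument of Proposition \ref{propDimAtMostOne}: the relations, together with the density (via Runge) of the projections of $H^2(X_R)$ and $\overline{zH^2(X_R)}$ to the incoming factor, allow one to compute either operator on any product of incoming $\CAR$ generators applied to $\Omega\otimes\Omega$, and such products span $\F^0\otimes\F^0$. This gives $T_R(\xi\otimes\eta)=R^{-L_0}T(\xi\otimes\eta)$. The convergence is then immediate: since $L_0\ge 0$, the operators $R^{-L_0}$ converge strongly to the identity as $R\downarrow 1$, so for fixed $\xi,\eta\in\F^0$ the vector $R^{-L_0}\big(T(\xi\otimes\eta)\big)$ tends to $T(\xi\otimes\eta)$ in $\F$. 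I expect the main obstacle to be purely bookkeeping—keeping straight the incoming/outgoing assignment, the square-root factors $R^{1/2},s^{1/2},\psi_t$, the ordering of tensor factors, and the parity signs—so that the symbols produced by the three commutator moves align precisely with the incoming $\CAR$ action; the analytic content is entirely supplied by the cited results.
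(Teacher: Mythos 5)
Your computational core is sound and is essentially the same as the paper's: the three ``moves'' you describe (the $L_0$-scaling relation $a(R^{n+1/2}z^n)R^{-L_0}=R^{-L_0}a(z^n)$, the evaluated Borcherds commutator of Proposition \ref{propConcreteCommutator}, and the intertwining relation of Lemma \ref{lemExponentiatedVirasoroFermion}) are exactly the ingredients the paper combines, and your observation that $R^{-L_0}T(\Omega\otimes\Omega)=\Omega$ is correct. The gap is in the final identification step. You propose to conclude via the determination-by-vacuum argument of Proposition \ref{propDimAtMostOne}, but that argument is genuinely an argument about \emph{bounded} operators: it shows $\ker T$ is a closed, $\CAR(H^0_{\partial\Sigma})$-invariant subspace and invokes irreducibility, where both the closedness of the kernel and the passage from a dense set of symbols $f$ to all of $H^0_{\partial\Sigma}$ (via norm continuity of $f\mapsto a(f)$) require boundedness. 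At this stage $R^{-L_0}T$ is only densely defined --- its boundedness is precisely what Theorem \ref{thmFermionPantsBoundedness} is working toward, using the present proposition as input --- so applying the uniqueness argument here is circular. The purely algebraic substitute you gesture at also fails: the relations only give you the \emph{coupled} incoming operators $a(s^{1/2}F(w+s\,\cdot))\otimes 1+\Gamma\otimes a(W_{\phi_t}F)$ for $F$ holomorphic near $\Sigma_R$, and you cannot isolate the two tensor factors algebraically (an $F$ with $W_{\phi_t}F=0$ vanishes identically). Runge's theorem gives density of the projections, which suffices for the closed span but not for the algebraic span of $\F^0\otimes\F^0$; and without boundedness, agreement on a dense-but-not-algebraically-spanning set proves nothing.

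The paper closes this gap in two steps that your proposal does not replicate. First, the base case is not just $\eta=\Omega,\ \xi=\Omega$ but $\eta=\Omega$ with $\xi\in\F^0$ \emph{arbitrary}: this is obtained from the gluing axiom of the non-degenerate Segal CFT (capping off the $\phi_t(S^1)$ input), which handles the first tensor factor wholesale and is exactly what compensates for the impossibility of decoupling it via the relations. Second, the induction in the remaining variable $\eta$ runs over the vectors $a(\tilde W_{\phi_t}z^{m_1})\cdots a(W_{\phi_t}z^{n_1})^*\Omega$, and the paper proves these \emph{algebraically} span $\F^0$ by using that $e^{-tL(\rho)}$ is a linear bijection of $\F^0$ intertwining $a(z^n)$ with $a(W_{\phi_t}z^n)$; this is the spanning statement your argument needs but asserts without justification. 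If you want to salvage your formulation, you must either supply the gluing-based base case in $\xi$ together with this algebraic spanning argument in $\eta$ (at which point you have reproduced the paper's induction), or prove boundedness of $R^{-L_0}T$ independently first --- which is much harder than the proposition itself.
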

\begin{proof}
By Proposition \ref{propExponentiatedVirasoro}, $e^{-tL(\rho)}$ maps $\F^0$ bijectively onto itself.
Vectors of the form
$$
a(z^{m_1}) \cdots a(z^{m_p}) a(z^{n_q})^* \cdots a(z^{n_1})^*\Omega
$$ 
with $m_j < 0$ and $n_i \ge 0$ form a spanning set for $\F^0$, and since
$$
e^{-t L(\rho)} a(\tilde W_{\phi_t}z^{m_1}) \cdots a(\tilde W_{\phi_t}z^{m_p}) a(W_{\phi_t} z^{n_q}) \cdots a(W_{\phi_t}z^{n_1})^*\Omega = a(z^{m_1}) \cdots a(z^{m_p}) a(z^{n_q})^* \cdots a(z^{n_1})^*\Omega
$$
by Lemma \ref{lemExponentiatedVirasoroFermion},
vectors of the form
\begin{equation}\label{eqnDegeneratePantsLimitVectorForm}
\eta = a(\tilde W_{\phi_t}z^{m_1}) \cdots a(\tilde W_{\phi_t}z^{m_p}) a(W_{\phi_t} z^{n_q})^* \cdots a(W_{\phi_t}z^{n_1})^*\Omega
\end{equation}
span $\F^0$.
Thus it suffices to verify that 
\begin{equation}\label{eqnVertexOperatorLimit}
T_R(\xi \otimes \eta) =R^{-L_0}Y(s^{L_0}\xi,w)e^{-tL(\rho)}\eta
\end{equation}
when $\eta$ is of the form \eqref{eqnDegeneratePantsLimitVectorForm}.
We also assume without loss of generality that $\xi$ is homogeneous.
We now proceed by induction on $p$ and $q$.

When $\eta = \Omega$, by the gluing proprety of the (non-degenerate) free fermion Segal CFT we have $T_R(\xi \otimes \Omega) = \alpha R^{-L_0} Y(s^{L_0}\xi, w)\Omega$ for some $\alpha \in \C^\times$. 
But we normalized $T_R$ so that $T_R(\Omega \otimes \Omega) = \Omega$, and thus $\alpha = 1$. 
Hence \eqref{eqnVertexOperatorLimit} holds when $\eta = \Omega$.

Now assume that \eqref{eqnVertexOperatorLimit} holds for all $\xi \in \F^0$ and for a vector $\eta$ of the form \eqref{eqnDegeneratePantsLimitVectorForm}, and we will show that it holds for $\eta^\prime = a(W_{\phi_t} z^n)\eta$ and $\eta^{\prime\prime} = a(\tilde W_{\phi_t} z^{-n-1})^*\eta$ for all $n \in \Z$.

From the holomorphic function $z^n \in \cO(X_R)$, we have
$$
(R^{n+1/2} z^n, s^{\tfrac12}(sz+w)^n, W_{\phi_t} z^n) \in H^2(X_R),
$$
where we have ordered the boundary components $S^1$, then $s S^1 + w$, then $\phi_t(S^1)$.
Hence by the definition of the operators $E(X_R)$ we have
\begin{equation}\label{eqnVertexOpLimitHardyEqn}
R^{n+1/2} a(z^n) T_R = T_R(a(s^{\tfrac12}(sz+w)^n) \otimes 1) + T_R(\Gamma \otimes a(W_{\phi_t}z^n)),
\end{equation}
where $\Gamma$ is the grading.
Hence
\begin{equation}\label{eqnVertexOpLimitHardyEqn2}
R^{n+1/2} a(z^n) T_R(\xi \otimes \eta) = T_R(a(s^{\tfrac12}(sz+w)^n) \xi \otimes \eta) + (-1)^{p(\xi)}T_R(\xi \otimes \eta^\prime).
\end{equation}

On the other hand, by the inductive hypothesis, Proposition \ref{propConcreteCommutator}, and Lemma \ref{lemExponentiatedVirasoroFermion} we have

\begin{align}\label{eqnCommutatorApplication}
R^{n+1/2}a(z^n)T_R(\xi \otimes \eta) &=R^{n+1/2}a(z^n)R^{-L_0}Y(s^{L_0}\xi,w)e^{-t L(\rho)}\eta\nonumber\\
&= R^{-L_0}a(z^n)Y(s^{L_0}\xi,w)e^{-t L(\rho)}\eta \nonumber\\
&= R^{-L_0}Y(s^{L_0}a(s^{1/2}(sz+w)^n)\xi,w)\eta  + (-1)^{p(\xi)}R^{-L_0}Y(s^{L_0}\xi,w)e^{-t L(\rho)}\eta^\prime \nonumber\\
&= T_R(a(s^{1/2}(sz+w)^n)\xi \otimes \eta) + (-1)^{p(\xi)}R^{-L_0}Y(s^{L_0}\xi,w)e^{-t L(\rho)}\eta^\prime.
\end{align}
Combining \eqref{eqnVertexOpLimitHardyEqn2} and \eqref{eqnCommutatorApplication}, we get 
$$
T_R(\xi \otimes \eta^\prime) = R^{-L_0}Y(s^{L_0}\xi,w)e^{-t L(\rho)} \eta^\prime,
$$
as desired.

Establishing \eqref{eqnVertexOperatorLimit} for $\eta^{\prime\prime} = a(\tilde W_{\phi_t} z^{-n-1})^*\eta$ is similar.
By \cite[Thm. 6.1]{Ten16}, we have $H^2(X_R)^\perp = \overline{M_{\pm z} z H^2(X)}$, where $M_{\pm}$ is multiplication by $z$ on copies of $L^2(S^1)$ indexed by outgoing boundary componenets, and by $-z$ on copies of $L^2(S^1)$ indexed by incoming boundary components. 
Hence
$$
(R^{n+1/2} z^{-n-1}, -s^{1/2} z^{-1}(sz^{-1} + \overline{w})^n, -\overline{ z W_{\phi_t} z^n}) \in H^2(X_R)^\perp.
$$
By definition, $\tilde W_{\phi_t} \overline{zf} = \overline{z W_{\phi_t}  f}$ when $f \in L^2(S^1)_{\ge k}$, so we have 
$$
(R^{n+1/2} z^{-n-1}, -s^{1/2} z^{-1}(sz^{-1} + \overline{w})^n, -\tilde W_{\phi_t} z^{-n-1}) \in H^2(X_R)^\perp.
$$
Hence by the definition of $E(X_R)$ we have
\begin{equation}\label{eqnVertexOpLimitHardyEqnStar}
R^{n+1/2} a(z^{-n-1})^* T_R =  T_R(a(s^{1/2}z^{-1}(sz^{-1} + \overline{w})^n)^* \otimes 1) + T_R(\Gamma \otimes a(\tilde W_{\phi_t} z^{-n-1})^*),
\end{equation}
and thus
\begin{equation}\label{eqnVertexOpLimitHardyEqnStar2}
R^{n+1/2} a(z^{-n-1})^* T_R(\xi \otimes \eta) =  T_R(a(s^{1/2}z^{-1}(sz^{-1} + \overline{w})^n)^*\xi \otimes \eta) + (-1)^{p(\xi)} T_R(\xi \otimes \eta^{\prime\prime}).
\end{equation}
Expanding $(sz^{-1} + \overline{w})^n$ in the domain $\abs{z}^{-1} < \abs{w/s}$, we see that 
$$
a(z^{-1}(sz^{-1} + \overline{w})^n)^*\xi \in \F^0.
$$
As before, we may apply the inductive hypothesis, Proposition \ref{propConcreteCommutator}, and Lemma \ref{lemExponentiatedVirasoroFermion} to establish
$$
R^{n+1/2} a(z^{-n-1})^* T_R(\xi \otimes \eta)  = T_R(a(s^{1/2}z^{-1}(sz^{-1} + \overline{w})^n)^*\xi + (-1)^{p(\xi)} R^{-L_0} Y(s^{L_0}\xi,w)e^{-t L(\rho)}\eta^{\prime\prime},
$$
from which the desired conclusion follows.

Given that $T_R(\xi \otimes \eta) = R^{-L_0} T(\xi \otimes \eta)$, and that $T(\xi \otimes \eta) \in \F$ when $\xi,\eta \in \F^0$ by Proposition \ref{propVertexOperatorDenselyDefined}, it follows immediately that $T_R(\xi \otimes \eta) \to T(\xi \otimes \eta)$ for such $\xi,\eta$.
\end{proof}

Next, we want to understand the limit $\lim_{R \downarrow 1} T_R^*$.

\begin{Proposition}\label{propAdjointsConverge}
Let $X = (\phi_t, t, w, s) \in \cDP_{st}$, let $R > 1$ and let $X_R$ be the non-degenerate extension of $X$.
Let $T_R \in E(X_R)$ be the element with $T(\Omega \otimes \Omega) = \Omega$. 

Then the limit $S\xi := \lim_{R \downarrow 1} T_R^*\xi$ converges for all $\xi \in \F^0$.
The limit operator $S$ satisfies
\begin{equation}\label{eqnAdjointLimitCommRelsStar}
S a(z^n)^*\xi =  (s^{1/2}a((sz + w)^n)^* \otimes 1  + \Gamma \otimes a(W_{\phi_t} z^n)^*)S\xi
\end{equation}
and
\begin{equation}\label{eqnAdjointLimitCommRelsNoStar}
S a(z^{-n-1}) \xi = (s^{1/2} a(z^{-1}(sz^{-1} + \overline{w})^n) \otimes 1 + \Gamma \otimes a( \tilde W_{\phi_t} z^{-n-1}))S\xi
\end{equation}
for all $\xi \in \F^0$ and all $n \in \Z$, where $\Gamma$ is the grading operator.
\end{Proposition}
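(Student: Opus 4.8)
The plan is to transport the commutation relations \eqref{eqnVertexOpLimitHardyEqn} and \eqref{eqnVertexOpLimitHardyEqnStar} that characterize $T_R \in E(X_R)$ to their adjoints, then establish convergence of $T_R^*$ on $\F^0$ by an induction on particle number, and finally read off \eqref{eqnAdjointLimitCommRelsStar} and \eqref{eqnAdjointLimitCommRelsNoStar} by letting $R \downarrow 1$. First I would take adjoints. Since $T_R$ is even and bounded (Proposition \ref{propDimAtMostOne}) and $R^{n+1/2} \in \R_{>0}$, the adjoint of \eqref{eqnVertexOpLimitHardyEqn} reads
$$
R^{n+1/2}\, T_R^* a(z^n)^* = \big(s^{1/2}a((sz+w)^n)^* \otimes 1 + \Gamma \otimes a(W_{\phi_t}z^n)^*\big) T_R^*,
$$
and the adjoint of \eqref{eqnVertexOpLimitHardyEqnStar} reads
$$
R^{n+1/2}\, T_R^* a(z^{-n-1}) = \big(s^{1/2}a(z^{-1}(sz^{-1}+\overline{w})^n) \otimes 1 + \Gamma \otimes a(\tilde W_{\phi_t}z^{-n-1})\big) T_R^*.
$$
The decisive point is that the operators on the right are bounded and \emph{independent of $R$}, since $W_{\phi_t}$ and $\tilde W_{\phi_t}$ carry no $R$; the sole $R$-dependence is the scalar $R^{n+1/2} \to 1$.

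For the base case I would show that $T_R^*\Omega$ is actually independent of $R$. Using $T_R(\xi \otimes \eta) = R^{-L_0}T(\xi\otimes\eta)$ from Proposition \ref{propLimitIsRegularizedVertexOperator} together with $L_0\Omega = 0$, for $\xi,\eta \in \F^0$ one has
$$
\langle T_R^*\Omega,\, \xi \otimes \eta\rangle = \langle \Omega,\, R^{-L_0}T(\xi\otimes\eta)\rangle = \langle R^{-L_0}\Omega,\, T(\xi \otimes \eta)\rangle = \langle \Omega,\, T(\xi\otimes\eta)\rangle,
$$
which is manifestly independent of $R$. As $\F^0 \otimes \F^0$ is dense in $\F \otimes \F$ and each $T_R^*\Omega$ is a genuine vector, agreement of all these inner products forces $T_R^*\Omega$ to be a single $R$-independent vector; in particular the limit exists trivially, and I would call it $S\Omega$.

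For the inductive step I would use that the vectors $a(z^{m_1})\cdots a(z^{m_p}) a(z^{n_1})^* \cdots a(z^{n_q})^*\Omega$ with $m_i < 0$ and $n_j \ge 0$ span $\F^0$, noting that each creation operator appearing is of the form $a(z^{-n-1})$ or $a(z^n)^*$ with $n \ge 0$. Dividing the two adjoint relations by $R^{n+1/2}$ expresses $T_R^* a(z^n)^*\xi$ and $T_R^* a(z^{-n-1})\xi$ as $R^{-(n+1/2)}$ times a fixed bounded operator applied to $T_R^*\xi$. Hence whenever $T_R^*\xi$ converges, so do $T_R^* a(z^n)^*\xi$ and $T_R^* a(z^{-n-1})\xi$; inducting on $p+q$ then gives convergence of $T_R^*\xi$ for every $\xi \in \F^0$ and defines $S$. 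Passing to $R \downarrow 1$ in the two adjoint relations — now legitimate for all $n \in \Z$, since $a(z^n)^*\xi, a(z^{-n-1})\xi \in \F^0$ so that $T_R^*$ of them also converges — yields exactly \eqref{eqnAdjointLimitCommRelsStar} and \eqref{eqnAdjointLimitCommRelsNoStar}.

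The main thing to get right is the base case, coupled with the observation that the intertwining coefficient operators are $R$-independent and bounded: once $T_R^*\Omega$ is recognized to be constant in $R$ and the intertwiners are fixed, convergence propagates automatically up the particle tower and the commutation relations drop out in the limit, so no boundedness of $S$ itself is needed here. The only genuine bookkeeping is tracking the grading $\Gamma$ and the tensor-factor ordering through the adjoints, and checking that the arguments $(sz+w)^n$, $W_{\phi_t}z^n$, $z^{-1}(sz^{-1}+\overline{w})^n$, and $\tilde W_{\phi_t}z^{-n-1}$ lie in the subspaces on which the relevant weighted composition operators are bounded, so that the corresponding CAR generators are genuine bounded operators.
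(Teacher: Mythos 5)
Your proposal is correct and follows essentially the same route as the paper: convergence of $T_R^*\Omega$ first (the paper gets this from $T_{R_1}^* = T_{R_2}^*(R_1/R_2)^{-L_0}$ and $L_0\Omega = 0$, which is the operator-level version of your inner-product computation), then induction up the particle tower using the adjoints of \eqref{eqnVertexOpLimitHardyEqn} and \eqref{eqnVertexOpLimitHardyEqnStar}, whose intertwining operators are bounded and $R$-independent. The commutation relations for $S$ then follow by letting $R \downarrow 1$, exactly as you describe.
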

\begin{proof}
It suffices to establish the result with
$$
\xi = a(z^{m_1}) \cdots a(z^{m_p}) a(z^{n_1})^* \cdots a(z^{n_q})^*\Omega.
$$
We will proceed inductively, first considering when $\xi = \Omega$.

If $R_1 > R_2 > 1$, then we have $T_{R_1} = (R_1/R_2)^{-L_0} T_{R_2}$. 
Hence $T_{R_1}^* = T_{R_2}^* (R_1/R_2)^{-L_0}$, and 
$$
T_{R_1}^*\Omega = T_{R_2}^* (R_1/R_2)^{-L_0}\Omega = T_{R_2}^*\Omega.
$$
Hence $\lim_{R \downarrow 1} T_R^* \Omega$ converges.

We now assume that $\lim_{R \downarrow 1} T_R^* \xi$ converges, and show that the same holds for $a(z^{-n-1})\xi$ and $a(z^n)^*\xi$.
Indeed, applying the adjoint of the commutation relation from \eqref{eqnVertexOpLimitHardyEqn} one has
\begin{equation}\label{eqnAdjointLimitCommRelsStarBody}
T_R^* a(z^n)^*\xi = R^{-n-1/2} (s^{1/2} a((sz + w)^n)^* \otimes 1  + \Gamma \otimes a(W_{\phi_t} z^n)^*)T_R^*\xi.
\end{equation}
It follows that $\lim_{R \downarrow 1} T_R^* a(z^n)^*\xi$ converges, and that \eqref{eqnAdjointLimitCommRelsStar} holds for $\xi$.

Similarly, applying the adjoint of the commutation relation from \eqref{eqnVertexOpLimitHardyEqnStar} one has
\begin{equation}\label{eqnAdjointLimitCommRelsNoStarBody}
T_R^* a(z^{-n-1}) \xi = R^{-n-1/2}(s^{1/2} a(z^{-1}(sz^{-1} + \overline{w})^n) \otimes 1 + \Gamma \otimes a( \tilde W_{\phi_t} z^{-n-1}))T_R^*\xi,
\end{equation}
from which we see that $\lim_{R \downarrow 1} T_R^* a(z^{-n-1})\xi$ converges, and \eqref{eqnAdjointLimitCommRelsNoStar} holds for $\xi$.
\end{proof}

The commutation relations given in Proposition \ref{propAdjointsConverge} almost characterize the densely defined limit operator $S=\lim_{R \downarrow 1} T_R^*$. 
Indeed, since the operators $a(z^n)$ and $a(z^{-n-1})^*$ act cyclically on the vacuum vector $\Omega$, the limit operator is specified by \eqref{eqnAdjointLimitCommRelsStar} and \eqref{eqnAdjointLimitCommRelsNoStar} once we have identified the vector $S\Omega$.
Similarly, the commutation relations from Lemma \ref{lemExponentiatedVirasoroFermion} will allow us to describe $(e^{-t L(\rho)})^*$ once we better understand $(e^{-t L(\rho)})^*\Omega$.

\begin{Proposition}\label{propAdjointsOnVacuum}
Let $(\phi_t,t,w,s) = X \in \cDP_{st}$, let $R >1$, and let $X_R$ be the non-degenerate extension of $X$.
Let $\rho$ be as in Theorem \ref{thmBoundednessAndExistence}.
Then there exist $(\psi,\gamma) \in \Diff_+^{NS}(S^1)$ and $\alpha \in \C^\times$ such that $(e^{-t L(\rho)})^* \Omega = \alpha U_{NS}(\psi,\gamma)\Omega$.
If $T_R \in E(X_R)$ is the element satisfying $T_R(\Omega \otimes \Omega) = \Omega$, then there exists a non-degenerate spin Riemann surface $Y \in \cR$, with no incoming boundary components and two outgoing boundary componenets, such that $S\Omega = \lim_{R \downarrow 1} T_R^* \Omega \in E(Y)$.
Moreover, $S \Omega \ne 0$.
\end{Proposition}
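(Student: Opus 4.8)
The plan is to prove all three assertions by reducing everything to the \emph{non-degenerate} Segal CFT, where the gluing, reparametrization, and unitarity clauses of Theorem~\ref{thmFermionSegalCFT} are available, and then to invoke two standard facts: $E$ of the standard disk is $\C\Omega$, and $E$ of a disk with reparametrized boundary is $\C\,U_{NS}(\psi,\gamma)\Omega$. The key simplification, already used in the proof of Proposition~\ref{propAdjointsConverge}, is that $T_R^*\Omega$ is \emph{independent of} $R$: since $T_{R_1}=(R_1/R_2)^{-L_0}T_{R_2}$ and $L_0\Omega=0$, we get $T_R^*\Omega=T_{R'}^*\Omega$ for all $R,R'>1$, so $S\Omega=T_R^*\Omega$ for any fixed $R>1$. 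Thus no genuine limit is taken, and every gluing below joins honest Riemann surfaces.

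For the first assertion I would work with the non-degenerate extension $A_R=R\D\setminus\phi_t(\interior{\D})\in\cR$ of the degenerate annulus $(\phi_t,t)\in\cDA_{st}$, which is genuinely non-degenerate since $\phi_t(\D)\subset\D\subset\interior{R\D}$. First, by an argument analogous to Proposition~\ref{propLimitIsRegularizedVertexOperator} (with the removed disk $w+s\interior{\D}$ absent), $E(A_R)$ is spanned by $A_R':=R^{-L_0}e^{-tL(\rho)}$, which one checks satisfies the $H^2(A_R)$ relations by combining Lemma~\ref{lemExponentiatedVirasoroFermion} with the elementary intertwining relations of $R^{-L_0}$ with the $a(z^n)$. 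Because $\rho\in H^2(\D)$ gives $L(\rho)=\sum_{n\ge0}\hat\rho_nL_n$ and $L_n\Omega=0$ for $n\ge-1$, we have $L(\rho)\Omega=0$, hence $e^{-tL(\rho)}\Omega=\Omega$ and $A_R'\Omega=\Omega$. Taking adjoints and using $L(\rho)^*=L(\overline\rho)$ and $R^{-L_0}\Omega=\Omega$ yields $(A_R')^*\Omega=e^{-tL(\overline\rho)}R^{-L_0}\Omega=e^{-tL(\overline\rho)}\Omega=(e^{-tL(\rho)})^*\Omega$. By unitarity $(A_R')^*$ spans $E(\overline{A_R})$, and $\Omega$ spans $E(\D)$; gluing the disk $\D$ to the incoming $RS^1$-boundary of $\overline{A_R}$ caps the annulus into a genus-zero surface with a single outgoing boundary, i.e.\ a disk with $C^\infty$ boundary parametrization. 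By the smooth Riemann mapping theorem this parametrization is $\alpha_{NS}(\psi,\gamma)$ for a unique $(\psi,\gamma)\in\Diff_+^{NS}(S^1)$, so the reparametrization property and $E(\D)=\C\Omega$ give $(e^{-tL(\rho)})^*\Omega=\alpha\,U_{NS}(\psi,\gamma)\Omega$. Here $\alpha\neq0$ because $\ip{\Omega,(e^{-tL(\rho)})^*\Omega}=\ip{e^{-tL(\rho)}\Omega,\Omega}=\ip{\Omega,\Omega}=1$.

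For the second and third assertions I would fix any $R>1$ and argue directly with the non-degenerate pair of pants $X_R\in\cR$. By unitarity $T_R^*$ lies in $E(\overline{X_R})$, and $\Omega$ spans $E(\D)$; gluing $\D$ to the incoming $RS^1$-boundary of $\overline{X_R}$ caps the outer boundary of the conjugated pair of pants and produces a genus-zero surface $Y\in\cR$ whose boundary consists of the two outgoing circles $\overline{w+sS^1}$ and $\overline{\phi_t(S^1)}$, with no incoming circles (these two circles are disjoint by the standing assumption $w+s\D\subset\interior{\D}\setminus\phi_t(\D)$). By the gluing property $S\Omega=T_R^*\Omega\in E(Y)$, which is the second assertion, and $Y$ indeed has no incoming and two outgoing boundary components. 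For non-vanishing, I would pair against $\Omega\otimes\Omega$: using the normalization $T_R(\Omega\otimes\Omega)=\Omega$, one has $\ip{S\Omega,\Omega\otimes\Omega}=\ip{T_R^*\Omega,\Omega\otimes\Omega}=\ip{\Omega,T_R(\Omega\otimes\Omega)}=\ip{\Omega,\Omega}=1$, so $S\Omega\neq0$.

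The main obstacle is bookkeeping rather than analysis: one must track orientations and boundary parametrizations through the conjugation $X\mapsto\overline X$ and through the capping gluings carefully enough to confirm that each capped surface is a bona fide element of $\cR$ (non-degenerate, no closed components) with exactly the claimed incoming/outgoing data, so that the gluing and reparametrization clauses of Theorem~\ref{thmFermionSegalCFT} genuinely apply. In particular, for the first assertion one must verify that capping $\overline{A_R}$ yields a simply connected Riemann surface with $C^\infty$ boundary, so that its $E$-space is a reparametrized vacuum; the smoothness of the resulting boundary diffeomorphism rests on the smooth Riemann mapping theorem together with the smoothness of $\phi_t$ on $\D$ guaranteed by $\phi_t\in\admis$.
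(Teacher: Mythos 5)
Your proposal is correct and follows essentially the same route as the paper: exploit the $R$-independence of $T_R^*\Omega$ to reduce to a fixed non-degenerate surface, then use the unitarity, gluing, and reparametrization properties of Theorem \ref{thmFermionSegalCFT} together with the smooth Riemann mapping theorem and $E(\D)=\C\Omega$ to identify the capped surfaces. The only (harmless) deviations are cosmetic: you verify non-vanishing by pairing against the vacuum rather than by the injectivity/non-degeneracy clause, and you conjugate before capping rather than capping with the dual-vacuum surface $\C\cup\{\infty\}\setminus R\interior{\D}$ and then conjugating.
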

\begin{proof}
We will make free use of the properties of the free fermion Segal CFT (for non-degenerate surfaces) given in Theorem \ref{thmFermionSegalCFT}.

Let $Z_R \in \cR$ be the non-degenerate spin Riemann surface obtained from $X_R$ by filling in the disk centered at $w$.
By the gluing property of the free fermion Segal CFT and the formula for $T_R$ in Proposition \ref{propLimitIsRegularizedVertexOperator}, we have $R^{-L_0}e^{-tL(\rho)} \in E(Z_R)$.
Hence by unitarity $(e^{-tL(\rho)})^*R^{-L_0} \in E(\overline{Z_R})$.
Let $Z$ be the spin Riemann surface, with no incoming boundary and one outgoing boundary component, obtained by gluing a standard disk to the input of $\overline{Z_R}$.
Then $(e^{-t L(\rho)})^*R^{-L_0}\Omega = (e^{-t L(\rho)})^*\Omega \in E(\overline{Z_R})$, and $(e^{-t L(\rho)})^*\Omega \ne 0$ since non-zero elements of $E(\overline{Z_R})$ are injective by the nondegeneracy property of the CFT.
By the smooth Riemann mapping theorem, $\overline{Z_R}$ is spin equivalent to the standard unit disk with some boundary parametrization, and thus by the reparametrization property of the CFT we have $(e^{-tL(\rho)})^*\Omega = \alpha U_{NS}(\psi,\gamma)\Omega$ for some spin diffeomorphism $(\psi,\gamma)$ and some $\alpha \in \C^\times$.

We now handle $S\Omega$.
As we saw in the proof of Proposition \ref{propAdjointsConverge}, $T_R^*\Omega$ is independent of $R$, so we fix $R > 1$ and show that $T_R^*\Omega \in E(Y)$ for some $Y$.
Since $T_R^*$ is injective by the unitarity and non-degeneracy of the CFT, we have $T_R^* \Omega \ne 0$.
The vector $T_R^*\Omega$ has dual vector $\lambda \in (\F \otimes \F)^*$ given by
$$
\lambda(\xi \otimes \eta) = \ip{T_R(\xi \otimes \eta), \Omega}.
$$
The dual vacuum vector $\ip{ \; \cdot \; , \Omega} \in \F^*$ lies in $E(\C \cup \{\infty\} \setminus R\interior{\D})$, and so by the gluing property of the CFT, $\lambda \in E(Y)$, where $Y$ is obtained by gluing $\C \cup \{\infty\} \setminus R\interior{\D}$ onto $X_R$.
Hence by the unitarity property of the CFT, $T_R^*\Omega = \lambda^* \in E(\overline{Y})$.
\end{proof}

We will now show that $\xi \otimes \eta \mapsto Y(s^{L_0}\xi,w)e^{-tL(\rho)}\eta$ defines a bounded operator.
We will require the terminology and results of Section \ref{secImplementingOperators}, which we now summarize.

\begin{Definition*}[Definition \ref{defImplementer}]
Let $H$ and $K$ be separable infinite dimensional Hilbert spaces, and let $p$ and $q$ be projections on $H$ and $K$, respectively, with $pH$ and $(1-p)H$ both infinite dimensional.
Let $\{\xi_i\}_{i \in \Z}$ be an orthonormal basis for $H$ with $\xi_i \in pH$ when $i \ge 0$ and $\xi_i \in (1-p)H$ when $i < 0$.
Such an orthonormal basis is called \emph{compatible with $p$.}
Let $r \in \cB(H,K)$ and let $\hOmega \in \F_{K,q}$.
Then we have an orthonormal basis $a(\xi_I)^*a(\xi_J)\Omega_p$ for $\F_{H,p}$ (see notation \eqref{eqnFermionProductNotationEarly}) indexed by finite subsets $I \subset \Z_{\ge 0}$ and $J \subset \Z_{<0}$, 
and the densely defined map $R:\F_{H,p} \to \F_{K,q}$ given by
$$
Ra(\xi_I)^*a(\xi_J)\Omega = a(r \xi_I)^*a(r\xi_J)\hOmega
$$
is called the \emph{implementing operator} associated to $(r, \hOmega)$.
\end{Definition*}
The results of Propositions \ref{propAdjointsConverge} and \ref{propAdjointsOnVacuum} show that $S$ is an implementing operator, and so we may use the following result to prove that $S$, and consequently $Y(s^{L_0}\xi,w)e^{-tL(\rho)}\eta$, are bounded.

\begin{Theorem*}[Theorem \ref{thmAdmissibleBoundedness}]
Let $H$ and $K$ be separable Hilbert spaces, and let $p$ and $q$ be projections on $H$ and $K$, respectively, with $pH$ and $(1-p)H$ infinite dimensional. 
Let $\{\xi_i\}_{i \in \Z}$ be a basis compatible with $p$. 
Let $r \in \cB(H,K)$, and assume that $qr(1-p)$ is trace class.
Let $q^\prime$ be a projection on $K$ with $q^\prime - q$ trace class, and let $\hOmega_{q^\prime}$ be a non-zero vector satisfying $a(f)\hOmega_{q^\prime} = a(g)^*\hOmega_{q^\prime} = 0$ for all $f \in q^\prime K$ and all $g \in (1-q^\prime)K$.
Then the implementing operator associated to $(r,\hOmega_{q^\prime})$ is bounded if and only if $E(r):=qrp + (1-q)r(1-p)$ can be written as the sum $E(r) = a + x$ with $a,x \in \cB(H,K)$, $\norm{a} \le 1$ and $x$ trace class.
\end{Theorem*}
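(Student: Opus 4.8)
The plan is to treat $R$ as a quasi-free (``Gaussian'') operator and to decouple its boundedness into two independent features: the \emph{transport} of the creation channel, governed by $E(r)$, and the \emph{pairing} of creation against annihilation, governed by the off-diagonal blocks $qr(1-p)$ and $(1-q)rp$. Before analyzing these, I would make two harmless reductions. Since $q'-q$ is trace class, hence Hilbert--Schmidt, Theorem \ref{thmShaleStinespring} provides a unitary $\F_{K,q'}\to\F_{K,q}$ carrying $\Omega_{q'}$ to $\hOmega_{q'}$; composing $R$ with its inverse lets me replace $\hOmega_{q'}$ by the genuine vacuum $\Omega_q$ at the cost of reading $r$ against $q'$ instead of $q$. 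Both the boundedness of $R$ and the splitting criterion for $E(r)$ are insensitive to this, because passing from $q$ to $q'$ changes $E(r)$ only by a trace-class operator. So I may assume $\hOmega_{q'}=\Omega_q$. Next, writing $E(r)=a+x$ and using that $E(r)$ is block diagonal with respect to $(p,q)$, I replace $a$ by its block-diagonal compression (still a contraction, since compressing by the grading involutions $2q-1$, $2p-1$ does not increase the norm) and absorb the off-block-diagonal remainder, which is trace class, into $x$. Thus without loss of generality $a$ and $x$ are block diagonal, $\norm{a}\le 1$, and $x\in\cB_1$.

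The second step is a Gaussian (Wick) factorization of $R$. Starting from the intertwining relations $Ra(f)^*=a(rf)^*R$ for $f\in pH$ and $Ra(g)=a(rg)R$ for $g\in(1-p)H$, and decomposing each operator $a(r\xi)^{\#}$ into its creation and annihilation parts relative to $\Omega_q$, I would put $R$ into normal-ordered form as a product of three operators on a common invariant core: an exponential $C$ of a creation pairing built from the off-diagonal data, the functorial second quantization of the block-diagonal map $E(r)$, and an exponential $A$ of an annihilation pairing built from the remaining off-diagonal data. The sufficiency direction then reduces to bounding the three factors separately. The exponential $C$ is bounded because the hypothesis that $qr(1-p)$ is trace class forces the relevant creation pairing to be trace class, so its exponential converges to a bounded operator; the factor $A$ is bounded by the adjoint of this estimate; and the middle factor reduces to the sharp boundedness of the second quantization of $E(r)$.

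The heart of the matter, and the step I expect to be the main obstacle, is this sharp boundedness of the second quantization of $E(r)$. For a genuine contraction the functorial second quantization on antisymmetric Fock space has norm at most $1$, by the standard principle that the second quantization of a contraction is a contraction; the real work is to show that a trace-class perturbation $a\mapsto a+x$ preserves boundedness, and conversely that no deficiency weaker than trace class suffices. I would prove the ``if'' direction by a direct estimate on finite-particle vectors, bounding the norm of the second quantization by the product of the singular values $\sigma_k(E(r))$ that exceed $1$; this product is finite exactly when $\sum_k\big(\sigma_k(E(r))-1\big)_+<\infty$, i.e.\ when $\big(\abs{E(r)}-1\big)_+$ is trace class, which is precisely the reformulation of ``$E(r)=a+x$ with $\norm{a}\le1$ and $x$ trace class.''

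For the converse I would argue by testing. Evaluating $R$ on the multiparticle vectors $a(\xi_{i_1})^*\cdots a(\xi_{i_n})^*\Omega$ drawn from the top singular directions of $E(r)$ shows $\norm{R}\ge\prod_{\sigma_k(E(r))>1}\sigma_k(E(r))$, so boundedness forces $\sum_k(\sigma_k(E(r))-1)_+<\infty$ and hence the splitting; an analogous test against paired creation states extracts the constraint coming from the off-diagonal pairing. The delicate points throughout will be the convergence of the Wick factorization on a single dense invariant core, and the careful bookkeeping that isolates exactly which off-diagonal block is responsible for the creation pairing (and therefore must be trace class) as opposed to the annihilation pairing (which is controlled automatically). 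Verifying that this bookkeeping matches the asymmetry between $qr(1-p)$ and $(1-q)rp$ in the hypotheses is the part of the argument I would check most carefully.
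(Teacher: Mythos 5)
Your proposal is correct in substance, and its skeleton matches the paper's: reduce $\hOmega_{q^\prime}$ to the true vacuum $\Omega_q$ (the paper's Lemma \ref{lemChangeVacuumVector} does this by conjugating with the basic representation and absorbing a trace-class correction into $E(r)$, exactly the insensitivity you invoke), isolate the off-diagonal blocks, and then analyze the second quantization of the block-diagonal part through its singular values. The differences are in the middle step and in sharpness. For the off-diagonal data the paper does not normal-order: it first observes that $(1-q)rp$ is simply invisible against $\Omega_q$ (those annihilation parts anticommute past the remaining creation operators and kill the vacuum), and then removes the trace-class block $qr(1-p)$ by the explicit perturbation formula $T=\sum_L\lambda_L\,a(\eta_L)Ra(\xi_L)^*$ of Lemma \ref{lemOffDiagonalTraceClassPerturbation}, whose norm bound $\prod_\ell(1+\abs{\lambda_\ell})$ genuinely needs trace class. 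Your Wick factorization, once the bookkeeping you flag is done, collapses to a \emph{single} quadratic annihilation factor on the input side with coefficient $(qr(1-p))^*qrp$ --- the creation-pairing factor $C$ is vacuous after the vacuum reduction --- and that exponential is bounded with bounded inverse already when the coefficient is Hilbert--Schmidt; this is a legitimate alternative and would in fact yield the theorem under a weaker hypothesis on $qr(1-p)$. For the diagonal part, your exact norm identity $\norm{\Lambda(s)}=\prod_{k}\max(1,s_k)$ over the singular values $s_k$ of $s$, combined with the equivalence of $\sum_k(s_k-1)_+<\infty$ with the contraction-plus-trace-class splitting, handles both directions at once and is sharper than the paper's sufficiency bound; the paper's necessity argument (polar decomposition, spectral cut of $\abs{s}$ at $1$, and the estimate $\ip{\Lambda(x+1)\psi_1\wedge\cdots\wedge\psi_n,\psi_1\wedge\cdots\wedge\psi_n}\ge\sum_j\ip{x\psi_j,\psi_j}$) is precisely your testing argument made concrete. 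One small correction: the converse does not need to ``extract the constraint coming from the off-diagonal pairing,'' since trace-classness of $qr(1-p)$ is a standing hypothesis of the theorem rather than part of its conclusion.
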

Maps $r \in \cB(H,K)$ which have the properties that $qr(1-p)$ is trace class, and $E(r)$ can be written as sum $a + x$ as in the theorem, are called \emph{admissible maps}, and we let $\cA(H,K)$ denote the space of admissible maps.

Using this theorem, we can now prove Theorem \ref{thmBoundednessAndExistence} in the case where $X \in \cDP_{st}$.
\begin{Theorem}\label{thmFermionPantsBoundedness}
Let $X = (\phi_t,t,w,s) \in \cDP_{st}$, and let $\rho$ be as in Theorem \ref{thmBoundednessAndExistence}.
Then the map $T:\F \otimes \F \to \F$ given by $T(\xi \otimes \eta) = Y(s^{L_0}\xi, w)e^{-t L(\rho)}\eta$ is bounded, where $Y$ is the free fermion state-field correspondence. 
Moreover, $E(X) = \C T$.
\end{Theorem}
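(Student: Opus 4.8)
The plan is to follow the strategy sketched before Proposition~\ref{propConcreteCommutator}. Since Proposition~\ref{propLimitIsRegularizedVertexOperator} gives $T_R = R^{-L_0}T$ on $\F^0 \otimes \F^0$, it suffices to prove that the densely defined limit $S := \lim_{R \downarrow 1} T_R^*$ (which exists by Proposition~\ref{propAdjointsConverge}) is bounded, and for this I will recognize $S$ as an implementing operator and invoke Theorem~\ref{thmAdmissibleBoundedness}. Reading the commutation relations \eqref{eqnAdjointLimitCommRelsStar} and \eqref{eqnAdjointLimitCommRelsNoStar} through the isomorphism $\F \otimes \F \cong \F_{H \oplus H, p \oplus p}$ of Proposition~\ref{propFockSumToTensor}, under which $a(h) \otimes 1 + \Gamma \otimes a(k)$ becomes $a((h,k))$, I find that $S$ is exactly the implementing operator (Definition~\ref{defImplementer}) associated to the pair $(r, S\Omega)$, where $r : H \to H \oplus H$ is the bounded map determined by
\[
r z^n = (W_\psi z^n,\, W_{\phi_t} z^n), \qquad r z^{-n-1} = (\tilde W_\psi z^{-n-1},\, \tilde W_{\phi_t} z^{-n-1}) \qquad (n \ge 0),
\]
with $\psi(z) = sz + w$ and the weighted composition operators of Section~\ref{subsecCompositionOperators}. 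Applying the two relations repeatedly to $\Omega$ reproduces the defining formula $S a(\xi_I)^* a(\xi_J)\Omega = a(r\xi_I)^* a(r\xi_J) S\Omega$ on the standard basis of $\F^0$.

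Next I verify the hypotheses of Theorem~\ref{thmAdmissibleBoundedness} with $q = p \oplus p$. Because $W_\psi$ and $W_{\phi_t}$ preserve $H^2(\D) = pH$ (the symbols $sz+w$ and $\phi_t$ being self-maps of $\D$), their conjugates $\tilde W_\psi = cW_\psi c$ and $\tilde W_{\phi_t}$ preserve $(1-p)H$, so $qr(1-p) = 0$ is trivially trace class. For the vacuum, Proposition~\ref{propAdjointsOnVacuum} identifies $S\Omega$ as a nonzero element of $E(\overline Y)$ for a smooth spin surface $\overline Y$ with only outgoing boundary; hence $S\Omega$ satisfies the Shale--Stinespring annihilation relations (Theorem~\ref{thmShaleStinespring}) for the projection $q'$ onto the Hardy space $H^2(\overline Y) \subset H \oplus H$, and $q' - q$ is trace class because $\overline Y$ has $C^\infty$ boundary parametrizations.

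The crux is the admissibility of $r$, i.e. writing $E(r) = qrp + (1-q)r(1-p)$ as $a + x$ with $\norm{a} \le 1$ and $x$ trace class. I decompose $E(r)$ into its two $H$-valued components. The $\psi$-component is $W_\psi p + \tilde W_\psi(1-p)$; since $\psi(z) = sz + w$ maps $\D$ into the compact subdisk $w + s\D \subset \interior\D$, the composition operator $C_\psi$ lies in the trace class (\cite{Shapiro93}), so this component is trace class and is absorbed into $x$. The $\phi_t$-component is $r_{\phi_t} := W_{\phi_t} p + \tilde W_{\phi_t}(1-p)$; taking adjoints in Lemma~\ref{lemExponentiatedVirasoroFermion} and using Proposition~\ref{propAdjointsOnVacuum} shows that the bounded operator $(e^{-tL(\rho)})^*$ is precisely the implementing operator for $(r_{\phi_t}, (e^{-tL(\rho)})^*\Omega)$. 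Since $e^{-tL(\rho)}$ is bounded (Proposition~\ref{propExponentiatedVirasoro}, via the quantum energy inequality Theorem~\ref{thmQEI}), the \emph{only if} direction of Theorem~\ref{thmAdmissibleBoundedness} forces $r_{\phi_t}$ to be admissible, $E(r_{\phi_t}) = a_{\phi_t} + x_{\phi_t}$. Taking $a = (0, a_{\phi_t})$ and $x = (E(r_\psi), x_{\phi_t})$ then gives the required decomposition of $E(r)$, so $r$ is admissible and, by the \emph{if} direction of Theorem~\ref{thmAdmissibleBoundedness}, $S$ is bounded.

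Finally, boundedness of $S$ feeds back to $T$: for $u \in \F^0 \otimes \F^0$ and $\zeta \in \F^0$, the identities $T_R = R^{-L_0}T$, $T_R^*\zeta \to S\zeta$, and $R^{-L_0} \to 1$ strongly give $\ip{T(u),\zeta} = \ip{u, S\zeta}$, whence $\norm{T(u)} \le \norm{S}\,\norm{u}$, so $T$ extends to a bounded operator with $T^* = S$. Consequently $T_R = R^{-L_0}T \to T$ in the strong operator topology, and passing to the limit in the $H^2(X_R)$ commutation relations \eqref{eqnVertexOpLimitHardyEqn} and \eqref{eqnVertexOpLimitHardyEqnStar} (using that the generators of $H^2(X_R)$ converge to those of $H^2(X)$) shows that $T$ satisfies the $H^2(X)$ relations, i.e. $T \in E(X)$. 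Since $T(\Omega \otimes \Omega) = e^{-tL(\rho)}\Omega \ne 0$ and $\dim E(X) \le 1$ by Proposition~\ref{propDimAtMostOne}, we conclude $E(X) = \C T$. The main obstacle is the admissibility step, but its deep analytic content is already carried by Proposition~\ref{propExponentiatedVirasoro}; within the present argument the work is the correct identification of $S$ as an implementing operator and the trace-class bookkeeping.
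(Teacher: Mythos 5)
Your proposal reproduces the paper's proof almost step for step: the identification of $S=\lim_{R\downarrow 1}T_R^*$ as the implementing operator for $(r,S\Omega)$ with $r f=\big((W_{sz+w}\oplus\tilde W_{sz+w})f,(W_{\phi_t}\oplus\tilde W_{\phi_t})f\big)$, the observation that the $sz+w$ block is trace class, and the use of the \emph{only if} direction of Theorem \ref{thmAdmissibleBoundedness} applied to the bounded operator $(e^{-tL(\rho)})^*$ (whose vacuum image is controlled by Proposition \ref{propAdjointsOnVacuum}) to extract admissibility of the $\phi_t$ block are exactly the paper's argument. Your duality step $\ip{Tu,\zeta}=\ip{u,S\zeta}$ is a harmless variant of the paper's uniform bound $\norm{T_R^*}=\norm{SR^{-L_0}}\le\norm{S}$.

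The one place where the write-up falls short is the verification that $T\in E(X)$. Passing to the limit in \eqref{eqnVertexOpLimitHardyEqn} and \eqref{eqnVertexOpLimitHardyEqnStar} only yields the commutation relations attached to the elements of $H^2(X)$ arising from the Laurent monomials $z^n$, and these do \emph{not} span a dense subspace of $H^2(X)$: the complement of $\Sigma$ in the sphere has a component containing $w$, so by Runge's theorem one also needs functions with a pole at $w$. Concretely, the middle components $s^{1/2}(sz+w)^n$ all lie in $H^2(\D)$ (since $|w|>s$), whereas $H^2(X)$ contains the boundary values of $(z-w)^{-1}$, whose middle component $s^{-1/2}z^{-1}$ does not; so your relations hold only on a proper closed subspace of $H^2(X)$, which is not enough to conclude membership in $E(X)$. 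The fix is the one the paper uses: for an arbitrary $F$ holomorphic in a neighborhood of $\Sigma$, $\beta_R^*F\in H^2(X_R)$ converges uniformly, hence in $L^2$, to $\psi\cdot(F\circ\gamma)$ as $R\downarrow 1$, so $a(\beta_R^*F)\to a(\psi\cdot(F\circ\gamma))$ in norm and the full set of $H^2(X)$ relations (and their conjugates) passes to the limit. With that adjustment the proof is complete.
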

\begin{proof}
Let $R > 1$ and let $X_R = (\Sigma_R, L_R, \Phi_R, \beta_R)$ be the non-degenerate extension of $X$ (Definition \ref{defNondegenerateExtension}).
Let $T_R \in E(X_R)$ be the element with $T_R(\Omega \otimes \Omega) = \Omega$. 
For $\xi \in \F^0$, let $S\xi = \lim_{R \downarrow 1} T_R^*\xi$, as in Proposition \ref{propAdjointsConverge}.

Let $H = L^2(S^1)$ and $pH = H^2(\D)$.
Let $V:\F \otimes \F \to \F_{H \oplus H, p \oplus p}$ be the isomorphism of $\CAR(H \oplus H)$ representations (Proposition \ref{propFockSumToTensor}), and let $S^\prime = VS$.
For $n \in \Z$, let $\xi_n = z^n \in H$, so that $a(\xi_J)a(\xi_I)^*\Omega$ gives an orthonormal basis for $\F$ indexed by pairs of finite subsets $J \subset \Z_{< 0}$ and $I \subset \Z_{\ge 0}$.
Let $W_{sz+w} \in \cB(H^2(\D))$ and $\tilde W_{sz+w} \in \cB(H^2(\D)^\perp)$ be the weighted composition operators associated to the map $z \mapsto sz +w$, corresponding to the positive square root $s^{1/2}$.
Let $W_{\phi_t} \in \cB(H^2(\D))$ and $\tilde W_{\phi_t} \in \cB(H^2(\D)^\perp)$ be the weighted composition operators associated to $\phi_t$ and the square root $\psi_t^2 = \phi_t^\prime$ with $\psi_t(0) > 0$ (as defined in Section \ref{subsecCompositionOperators}).
Let $W_1 = W_{sz + w} \oplus \tilde W_{sz + w} \in \cB(H)$ and $W_2 = W_{\phi_t} \oplus \tilde W_{\phi_t} \in \cB(H)$.
Note that $W_1$ and $W_2$ commute with $p$.

By Proposition \ref{propAdjointsConverge}, $S^\prime$ is the implementing operator defined in terms of the basis $\xi_i$ associated to $(r, \hOmega)$, where $r:H \to H \oplus H$ is given by $rf = (W_1f, W_2f)$ and $\hOmega = S^\prime\Omega$. 
By Proposition \ref{propAdjointsOnVacuum}, $\hOmega \in \F_{H \oplus H, p \oplus p}$ is, up to non-zero scalar, the vector assigned to a non-degenerate Riemann surface by the free fermion Segal CFT.
By \cite[Thm. 6.2]{Ten16}, such vectors are of the form $\hOmega_{q^\prime}$ for a projection $q^\prime$ with the property that $q^\prime - p \oplus p$ is trace class.
Thus we can study the boundedness of $S^\prime$ using Theorem \ref{thmAdmissibleBoundedness}, with $K = H \oplus H$ and $q = p \oplus p$.

By construction, $r = qrp + (1-q)r(1-p)$, and so to show that $S^\prime$ is bounded it suffices to prove that $r \in \cA(H,K)$.
Since $s\D + w \subset \interior{\D}$, $W_1$ is trace class (by \cite[Prop. 5.3]{ShapiroTaylor73}, for example).
Thus it suffices to show that $W_2$ can be written as the sum of a contraction and a trace class operator.

By Lemma \ref{lemExponentiatedVirasoroFermion}, $(e^{-tL(\rho)})^*: \F \to \F$ is the implementing operator associated to $(W_2, (e^{-t L(\rho)})^*\Omega)$.
By Proposition \ref{propAdjointsOnVacuum}, $(e^{-t L(\rho)})^*\Omega = \alpha U_{NS}(\psi,\gamma)\Omega$ for some $(\psi,\gamma) \in \Diff_+^{NS}(S^1)$ and some $\alpha \in \C^\times$.
By \cite[Prop. 6.8.2 and Prop. 6.3.1]{PrSe86}, $[u_{NS}(\psi,\gamma), p]$ is trace class, and thus $\alpha U_{NS}(\psi,\gamma)\Omega = \hat \Omega_{q^{\prime\prime}}$ for some projection $q^{\prime\prime}$ on $H$ with $q^{\prime\prime} - q$ trace class.
Since $e^{-t L(\rho)}$ is bounded by Lemma \ref{lemExponentiatedVirasoroFermion}, by Theorem \ref{thmAdmissibleBoundedness} we have $W_2 \in \cA(H,H)$.
Hence $r \in \cA(H,K)$ as well, and so by Theorem \ref{thmAdmissibleBoundedness}, $S^\prime$ is bounded.
It follows that $S = V^*S^\prime$ is bounded as well.

Now let $R_1 > R_2 > 1$.
We have $T_{R_1} = (R_2/R_1)^{L_0} T_{R_2}$, and thus $T_{R_1}^* = T_{R_2}^*(R_2/R_1)^{L_0}$.
Hence if $\xi \in \F$ is an eigenvector of $L_0$ with eigenvalue $m$, we have
$$
(R_1/R_2)^m T_{R_1}^*\xi = T_{R_2}^*\xi.
$$
Taking the limit of both sides as $R_2 \downarrow 1$, we get 
$$
T_{R_1}^*\xi = R_1^{-m} S\xi = SR_1^{-L_0}\xi.
$$
Since $R_1 > 1$ was arbitrary and eigenvectors $\xi$ for $L_0$ span a dense subspace of $\F$, we have $T_R^* = SR^{-L_0}$ for all $R > 1$.
Hence $\norm{T_R^*} \le \norm{S}$.

But $\norm{T_R^*} = \norm{T_R}$, and so the operators $T_R$ remain uniformly bounded in norm as $R \downarrow 1$.
Since we have already established that $T_R(\xi \otimes \eta) \to T(\xi \otimes \eta)$ for $\xi,\eta \in \F^0$ (Proposition \ref{propLimitIsRegularizedVertexOperator}), the uniform bound in norm is sufficient to guarantee that $T$ is bounded and that $T_R \to T$ in the strong operator topology.

We now show that $T \in E(X)$ by verifying that it satisfies the appropriate commutation relations.
It suffices to verify that 
$$
a(f^1)T = Ta(f^0), \qquad a(\overline{zf^1})^*T = Ta(\overline{zf^0})^*
$$
for $f=(f^1,f^0)$ lying in a dense subspace of $H^2(X)$.

Let $\Sigma$ be the underlying space of $X$, and let $(\psi,\gamma)$ be the standard boundary parametrization.
Then by definition, $H^2(X)$ is the closure of the set of $\psi \cdot (F \circ \gamma)$, where $F$ ranges over functions holomorphic in a neighborhood $U$ of $\Sigma$.
Given $f$ of the form $\psi \cdot (F \circ \gamma)$,
and $R > 1$ sufficiently small, $\beta_R^*F \in H^2(X_R)$. 
Moreover, $f_R := \beta_R^*F \to f$ in $L^2$ norm (in fact, uniformly) as $R \downarrow 1$, and so $a(f^i_R) \to a(f^i)$ in norm.
Hence taking limits in the expression $a(f^1_R)T_R = T_R a(f^0_R)$, we get $a(f^1)T = Ta(f^0)$, which establishes the first half of the commutation relations for $T$.

By \cite[Thm. 6.1]{Ten16}, $H^2(X_R)^\perp = \overline{M_{\pm z} H^2(X_R)}$, and so 
$$
a(\overline{z f^1_R})^* T_R = T_R a(\overline{z f^0_R})^*.
$$
Hence taking limits we get 
$$
a(\overline{z f^1})^* T = T a(\overline{z f^0})^*.
$$
We conclude that $T \in E(X)$, and since $\dim E(X) \le 1$ by Proposition \ref{propDimAtMostOne}, we conclude that $E(X) = \C T$.
\end{proof}

\section{Localized vertex operators and conformal nets}

\subsection{Localized vertex operators for the free fermion}

Recall that we use the term \emph{interval} to mean an open connected subset of $S^1$ which is non-empty and not dense, and that we denote by $I^\prime$ the complementary interval $\interior{I^c}$.

\begin{Definition}
Let $I \subset S^1$ be an interval.
We will write $\cDP(I)$ for the collection of 
$$
X =(\phi_t, t, w, s, \gamma,\psi) \in \cDR
$$
with the property that the boundary parametrizations $(\psi_j,\gamma_j)_{j \in \pi_0(\partial \Sigma)}$ satisfy 
$$
(\psi_{S^1},\gamma_{S^1})|_{I^\prime} = (\psi_{\phi_t(S^1)},\gamma_{\phi_t(S^1)})|_{I^\prime}.
$$
Given $T \in E(X)$ and $\xi \in \F$, let $T_\xi \in \cB(\F)$ be given by $T_\xi(\eta) = T(\xi \otimes \eta)$, where as usual we have ordered the incoming boundary components so that $w + s S^1$ comes first.
Define the set of vertex operators localized in $I$ 
$$
LV(I; \F) = \{ T_\xi : X \in \cDR(I), T \in E(X), \xi \in \F \} \subset \cB(\F).
$$
\end{Definition}
Graphically, we identify $X \in \cDR(I)$ with $T \in E(X)$, and depict them as on the left in Figure \ref{figLocalizedVertexOperator}.
A localized vertex operator is depicted by inserting a state $\xi$ into one of the input disks.
\noindent
\begin{minipage}{\linewidth}
$$
X = 
\begin{tikzpicture}[baseline={([yshift=-.5ex]current bounding box.center)}]
	\coordinate (a) at (120:1cm);
	\coordinate (b) at (240:1cm);
	\coordinate (c) at (180:.25cm);
	\fill[fill=red!10!blue!20!gray!30!white] (0,0) circle (1cm);
	\draw (0,0) circle (1cm);
	\fill[fill=white] (a)  .. controls ++(210:.6cm) and ++(90:.4cm) .. (c) .. controls ++(270:.4cm) and ++(150:.6cm) .. (b) -- ([shift=(240:1cm)]0,0) arc (240:480:1cm);
	\draw ([shift=(240:1cm)]0,0) arc (240:480:1cm);
	\draw (a) .. controls ++(210:.6cm) and ++(90:.4cm) .. (c);
	\draw (b) .. controls ++(150:.6cm) and ++(270:.4cm) .. (c);
	\filldraw[fill=white] (180:.65cm) circle (.25cm); 
	\draw (130:1.2cm) -- (130:1.4cm);
	\draw (230:1.2cm) -- (230:1.4cm);
	\draw (130:1.3cm) arc (130:230:1.3cm);
	\node at (180:1.5cm) {\scriptsize{$I$}};
\end{tikzpicture}
\,\,
, \qquad
T_\xi = 
\begin{tikzpicture}[baseline={([yshift=-.5ex]current bounding box.center)}]
	\coordinate (a) at (120:1cm);
	\coordinate (b) at (240:1cm);
	\coordinate (c) at (180:.25cm);
	\fill[fill=red!10!blue!20!gray!30!white] (0,0) circle (1cm);
	\draw (0,0) circle (1cm);
	\fill[fill=white] (a)  .. controls ++(210:.6cm) and ++(90:.4cm) .. (c) .. controls ++(270:.4cm) and ++(150:.6cm) .. (b) -- ([shift=(240:1cm)]0,0) arc (240:480:1cm);
	\draw ([shift=(240:1cm)]0,0) arc (240:480:1cm);
	\draw (a) .. controls ++(210:.6cm) and ++(90:.4cm) .. (c);
	\draw (b) .. controls ++(150:.6cm) and ++(270:.4cm) .. (c);
	\filldraw[fill=white] (180:.65cm) circle (.25cm); 
	\node at (180:.65cm) {\scriptsize{$\xi$}};
	\draw (130:1.2cm) -- (130:1.4cm);
	\draw (230:1.2cm) -- (230:1.4cm);
	\draw (130:1.3cm) arc (130:230:1.3cm);
	\node at (180:1.5cm) {\scriptsize{$I$}};
\end{tikzpicture}
$$
\captionsetup{justification=centering,width=0.8\linewidth}
\begin{centering}

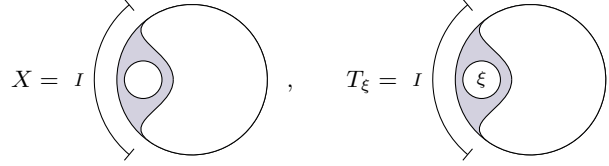
\captionof{figure}{A degenerate Riemann surface $X \in \cDP(I)$, and a localized vertex operator $T_\xi \in LV(I; \F)$}
\label{figLocalizedVertexOperator}
\end{centering}
\end{minipage}
\bigskip
\begin{Remark}
While the boundary parametrizations are not clear in Figure \ref{figLocalizedVertexOperator}, we are implicitly assuming that that $\gamma_{S^1}|_{I^\prime} = \gamma_{\phi_t}(S^1)|_{I^\prime}$ and $\psi_{S^1} |_{I^\prime} = \psi_{\phi_t(S^1)} |_{I^\prime}$, and that $\gamma_{S^1}(I^\prime) = I^\prime$.
More generally, the definition of $\cDP(I)$ allows $(\psi_{S^1}, \gamma_{S^1})|_J$ and $(\psi_{\phi_t(S^1)}, \gamma_{\phi_t(S^1)})|_J$ for an interval $J$ with $\gamma_{S^1}(J) = I^\prime$.
\end{Remark}

We begin with a straightforward observation on the parity of localized vertex operators.

\begin{Proposition}\label{propLocalizedVertexOperatorParity}
Let $X \in \cDP$ and $T \in E(X)$.
Let $\xi \in \F$ and let $T_\xi \in LV(I;\F)$ be the corresponding localized vertex operator.
Then $T_\xi$ is homogeneous if and only if $\xi$ is, and $p(T_\xi) = p(\xi)$.
\end{Proposition}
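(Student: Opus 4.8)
The plan is to exploit that every $T \in E(X)$ is even, which was established in Proposition \ref{propDimAtMostOne}. Recall that an operator $x$ on a super Hilbert space is homogeneous of parity $p(x)$ exactly when $\Gamma x \Gamma = (-1)^{p(x)}x$, where $\Gamma$ is the grading involution, and that on $\F \otimes \F$ the grading is $\Gamma \otimes \Gamma$. Thus evenness of $T \in \cB(\F \otimes \F, \F)$ means $\Gamma T = T(\Gamma \otimes \Gamma)$. The whole argument rests on the identity $\Gamma T_\xi \Gamma = T_{\Gamma \xi}$, which I would verify by the one-line computation
$$
\Gamma T_\xi \Gamma \eta = \Gamma T(\xi \otimes \Gamma \eta) = T(\Gamma \otimes \Gamma)(\xi \otimes \Gamma \eta) = T(\Gamma \xi \otimes \eta) = T_{\Gamma \xi}(\eta),
$$
using $\Gamma^2 = 1$.

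Granting this identity, the forward implication is immediate: if $\xi$ is homogeneous then $\Gamma \xi = (-1)^{p(\xi)}\xi$, so $\Gamma T_\xi \Gamma = (-1)^{p(\xi)}T_\xi$, which says precisely that $T_\xi$ is homogeneous with $p(T_\xi) = p(\xi)$. For the converse I would argue by linearity together with injectivity of the map $\xi \mapsto T_\xi$ (taking $T$ to be a nonzero generator of the one-dimensional space $E(X)$; the case $T = 0$ is degenerate and excluded). If $T_\xi$ is homogeneous of parity $j$, then $\Gamma T_\xi \Gamma = (-1)^j T_\xi$, so the identity gives $T_{\Gamma \xi - (-1)^j \xi} = 0$; injectivity then forces $\Gamma \xi = (-1)^j \xi$, i.e.\ $\xi$ is homogeneous of parity $j$, and in particular $p(T_\xi) = p(\xi)$.

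The one genuine point to settle — and the main obstacle — is the injectivity of $\xi \mapsto T_\xi$, equivalently that $T_\xi \neq 0$ whenever $\xi \neq 0$. I would reduce this to the standard parametrization via Proposition \ref{propDegenerateOperatorReparametrization}: writing $T = V_{\mathrm{out}}\, T_{\mathrm{st}}(V_1^* \otimes V_2^*)$ with $V_{\mathrm{out}}, V_1, V_2$ the relevant operators $U_{NS}(\cdot)$ (which are even and invertible up to scalar), one obtains $T_\xi = V_{\mathrm{out}} (T_{\mathrm{st}})_{V_1^* \xi}\, V_2^*$, so that $T_\xi = 0$ iff $(T_{\mathrm{st}})_{V_1^* \xi} = 0$, while $V_1^*$ is an even unitary bijection preserving homogeneity. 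It therefore suffices to treat $X \in \cDP_{st}$, where Theorem \ref{thmBoundednessAndExistence} supplies the explicit formula $(T_{\mathrm{st}})_\xi = Y(s^{L_0}\xi, w)e^{-tL(\rho)}$. Since $e^{-tL(\rho)}$ is invertible on $\F^0$ by Proposition \ref{propExponentiatedVirasoro}, it is enough to check $Y(s^{L_0}\xi, w) \neq 0$ for $\xi \neq 0$; this follows from the creation property $Y(a,w)\Omega = e^{wL_{-1}}a$, which gives $Y(s^{L_0}\xi, w)\Omega = e^{wL_{-1}}s^{L_0}\xi \neq 0$ because $s^{L_0}$ and $e^{wL_{-1}}$ are invertible on $\F^0$. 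Concretely, choosing $\eta = e^{tL(\rho)}\Omega \in \F^0$ yields $(T_{\mathrm{st}})_\xi \eta = e^{wL_{-1}}s^{L_0}\xi \neq 0$, which establishes the required injectivity and completes the converse.
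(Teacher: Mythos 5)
Your proof is correct and rests on the same core fact as the paper's: elements of $E(X)$ are even, so $\Gamma T_\xi \Gamma = T_{\Gamma\xi}$ and the forward implication is immediate. The paper's own proof is a three-line version of exactly this (it reduces to $X \in \cDP_{st}$ via Proposition \ref{propDegenerateOperatorReparametrization} and quotes the evenness of the spanning map $T$), and it silently glosses over the ``only if'' direction; you correctly identify that the converse needs injectivity of $\xi \mapsto T_\xi$ and you supply it, which is a genuine improvement in completeness. One small caveat: your concrete computation $(T_{st})_\xi\Omega = e^{wL_{-1}}s^{L_0}\xi$ via the creation property is literally valid only for $\xi \in \F^0$, whereas the proposition allows $\xi \in \F$ and $T_\xi$ is then defined by the bounded extension of $T$. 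To cover general $\xi$, note that the bounded map $B\xi := T_{st}(\xi\otimes\Omega)$ has adjoint agreeing with $s^{L_0}e^{\overline{w}L_1}$ on $\F^0$; since $L_1$ lowers the grade, $e^{\overline{w}L_1}$ maps $\F^0$ bijectively onto itself, so $B^*$ has dense range and hence $\ker B = \{0\}$ on all of $\F$. (Also, $e^{wL_{-1}}$ does not map $\F^0$ into itself, so ``invertible on $\F^0$'' should be read as ``injective,'' which is what you actually use.) With that adjustment the argument is complete.
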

\begin{proof}
By Proposition \ref{propDegenerateOperatorReparametrization} and the fact that the representation $U_{NS}$ of $\Diff_+^{NS}$ takes values in even operators, it suffices to consider the case $X \in \cDP_{st}.$
In this case, we saw in Theorem \ref{thmFermionPantsBoundedness} that $E(X)$ is spanned by an even map $T$.
Thus $T_\xi$ has the same parity as $\xi$.
\end{proof}

We now have our main result on localized vertex operators for the free fermion.

\begin{Theorem}\label{thmFermionIsGood}
Every $LV(I;\F)$ is non-empty, and $LV(I;\F) \subset \cM(I)$, where $\cM(I)$ is the local algebra of the free fermion conformal net.
Moreover, $\cM(I)$ is generated by $LV(I;\F)$ as a von Neumann algebra.
\end{Theorem}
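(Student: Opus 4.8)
The plan is to prove the three assertions in turn, relying throughout on the explicit description $E(X) = \C T$ from Theorem \ref{thmBoundednessAndExistence} together with the reparametrization formula of Proposition \ref{propDegenerateOperatorReparametrization}. \emph{Non-emptiness} is the easiest. Given an interval $I$, I would first produce a semigroup $(\phi_t)_{t\ge 0} \in \scG$ whose associated starlike domain meets the unit circle in an arc, so that $A := \phi_t(S^1) \cap S^1$ contains an interval; such semigroups exist by the construction of Proposition \ref{propSemigroupSummary}, as illustrated in \eqref{eqnDegenSemigroup}. Choosing $w,s$ with $w + s\D \subset \interior{\D}\setminus \phi_t(\D)$ and boundary parametrizations $\gamma_{S^1}, \gamma_{\phi_t(S^1)}$ that restrict to a common diffeomorphism $I' \to A$ (with matching spin square roots) yields an element $X \in \cDP(I)$, for which $E(X) = \C T$ with $T \neq 0$; hence $T_\xi \in LV(I;\F)$.

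For the inclusion $LV(I;\F) \subset \cM(I)$, the crux is a single geometric lemma: if $f \in L^2(S^1)$ is supported in $I'$, then the triple $(f,0,f)$ --- meaning $f$ on the outgoing $S^1$-slot, $0$ on the $w+sS^1$-slot, and $f$ on the $\phi_t(S^1)$-slot --- lies in $H^2(X)$. Granting this, I read off the relations of Definition \ref{defDegenerateFermionSegalCFT}: since $T$ is even and $a((0,f)) = \Gamma \otimes a(f)$ under Proposition \ref{propFockSumToTensor}, the relation $a(f)T = Ta((0,f))$ becomes $a(f)T_\xi = (-1)^{p(\xi)} T_\xi a(f)$, and likewise $a(\overline{zf})^* T_\xi = (-1)^{p(\xi)} T_\xi a(\overline{zf})^*$. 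As $p(T_\xi) = p(\xi)$ by Proposition \ref{propLocalizedVertexOperatorParity} and $\overline{zf}$ is again supported in $I'$, this says exactly that $T_\xi$ supercommutes with every $a(g), a(g)^*$ with $\supp g \subset I'$, i.e. with the generators of $\cM(I')$; Haag duality (Theorem \ref{thmFermiNetProps}) then gives $T_\xi \in \cM(I)$. To prove the lemma I would use that along $A$ the surface $\Sigma$ is pinched, so $S^1$ and $\phi_t(S^1)$ coincide there and the two parametrizations agree on $I'$ by the definition of $\cDP(I)$; a function holomorphic in a neighborhood of $\Sigma$ is holomorphic across the pinch and hence takes equal boundary values on the two overlapping arcs. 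An $L^2$ approximation of the boundary data concentrated on $A$ by pullbacks $\beta^*F$, via a Runge/Mergelyan-type argument on the pinched compact set $\Sigma$, places $(f,0,f)$ in the closure $H^2(X)$.

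For the reverse inclusion I would show that $\cB(I) := LV(I;\F)''$ is a covariant subnet of $\cM$ and invoke Proposition \ref{propSubnetTrivial}. Isotony $\cB(I) \subseteq \cB(J)$ for $I \subseteq J$ is immediate, since agreement on $I'$ implies agreement on $J' \subseteq I'$, whence $\cDP(I) \subseteq \cDP(J)$. Covariance follows from Proposition \ref{propDegenerateOperatorReparametrization}: conjugating $T_\xi$ by $U_{NS}(\psi,\gamma)$ with $(\psi,\gamma) \in \Mob^{NS}$ amounts to reparametrizing the outgoing $S^1$ and the incoming $\phi_t(S^1)$ boundaries while leaving the $w+sS^1$-slot (hence $\xi$) untouched, which preserves the overlap condition but transports it to $(\gamma I)'$; thus $U_{NS}(\psi,\gamma) LV(I;\F) U_{NS}(\psi,\gamma)^* = LV(\gamma I;\F)$. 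Graded locality and $\Gamma$-invariance of $\cB(I)$ are inherited from $\cM(I) \supseteq \cB(I)$. It then remains to check that $\Omega$ is cyclic for $\cB(S^1) = \bigvee_I \cB(I)$. Here I would choose $X \in \cDP(I)$ in which the $\phi_t(S^1)$-parametrization differs from the standard one by a Möbius transformation and the inner disk carries its standard parametrization, so that, using $e^{-tL(\rho)}\Omega = \Omega$ and $Y(a,w)\Omega = e^{wL_{-1}}a$, one gets $T_\xi \Omega = U_{NS}(\psi,\gamma)\, e^{wL_{-1}} s^{L_0}\xi$. Letting $w$ vary over an open set with the parametrizations fixed, analyticity of $w \mapsto \langle v, U_{NS}(\psi,\gamma) e^{wL_{-1}}\zeta\rangle$ forces any $v \perp \cB(S^1)\Omega$ to be orthogonal to all of $\F^0$, hence $v = 0$; so $\overline{\cB(S^1)\Omega} = \F$ and Proposition \ref{propSubnetTrivial} gives $\cB(I) = \cM(I)$.

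The main obstacle is the geometric lemma $(f,0,f) \in H^2(X)$ for $f$ supported in $I'$: this is precisely where the degeneracy, namely the pinching of $\Sigma$ along $A$, produces locality. The subtlety is that the naive duality check only shows $(f,0,f) \perp \overline{M_{\pm z}H^2(X)}$, which in the degenerate case may be strictly smaller than $H^2(X)^\perp$ (cf. the Remark after Definition \ref{defDegenerateFermionSegalCFT}); establishing genuine membership therefore requires an approximation argument controlling all three boundary components at once, rather than a formal computation. Everything else reduces to bookkeeping with the established properties of $E(X)$, Haag duality, and the subnet criterion.
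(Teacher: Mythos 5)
Your proposal is correct, and for most of the theorem it follows the paper's own route: the reduction by diffeomorphism covariance via Proposition \ref{propDegenerateOperatorReparametrization}, the construction of a non-empty $LV(I;\F)$ from a semigroup whose image meets $S^1$ in an arc, and above all the key lemma that $(f,0,f)\in H^2(X)$ for $f$ supported in $I^\prime$ --- which the paper also proves by extending the boundary data by zero to the pinched compact set $\D\setminus\phi_t(\interior{\D})$ and applying Mergelyan's theorem, then passing to $\overline{zH^2(X)}$ and invoking Haag duality exactly as you describe. You are also right that genuine membership in $H^2(X)$, rather than mere orthogonality to $\overline{M_{\pm z}H^2(X)}$, is what the degenerate definition of $E(X)$ requires. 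The one step where you genuinely diverge is the cyclicity of the vacuum for the subnet. The paper fixes $X\in\cDP(I)$, takes an interval $J$ with $\gamma_{S^1}^{-1}(J)$ compactly contained in $S^1\setminus\phi_t(S^1)$, and runs a second Mergelyan approximation (rational functions with poles at $0$, $\infty$ and $w$) to show that the closure of $\Span\{T_\xi\Omega\}$ is invariant under $a(f)$ and $a(f)^*$ for $f$ supported in $J$, so that it contains $\overline{\cM(J)\Omega}=\F$ by Reeh--Schlieder for the fermion net. Your argument instead exploits the explicit formula $T_\xi\Omega = U_{NS}(\psi,\gamma)e^{wL_{-1}}s^{L_0}\xi$ --- valid because $e^{-tL(\rho)}\Omega=\Omega$ ($\rho$ has no negative Fourier modes) and because the incoming reparametrization is M\"obius, hence fixes $\Omega$ --- together with analyticity in the insertion point $w$ on the open set of admissible $w$. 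This avoids both the second approximation argument and the appeal to Reeh--Schlieder, at the cost of checking that the required parametrizations exist (one needs $\gamma_{\phi_t(S^1)}=\phi_t\circ\mu$ with $\mu$ M\"obius carrying $I^\prime$ into $\phi_t^{-1}(S^1\cap\phi_t(S^1))$, extended compatibly to the outgoing circle --- which is straightforward). Both arguments work; yours is somewhat more self-contained. One small correction: take $\cB(I)=(LV(I;\F)\cup LV(I;\F)^*)^{\prime\prime}$ rather than $LV(I;\F)^{\prime\prime}$, since the double commutant of a non-self-adjoint set need not be $*$-closed and the subnet machinery of Theorem \ref{thmSubnetsAreNets} and Proposition \ref{propSubnetTrivial} requires genuine von Neumann algebras; this is also what ``generated as a von Neumann algebra'' means in the statement.
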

\begin{proof}
We will just write $LV(I)$ instead of $LV(I;\F)$.
Let $(\hat \psi,\hat \gamma) \in \Diff_+^{NS}(S^1)$ be a spin diffeomorphism with $\hat \gamma(I) = J$.
Then we claim that 
\begin{equation}\label{eqnLocalVertexOperatorsReparametrization}
U_{NS}(\hat \psi,\hat \gamma)LV(I)U_{NS}(\hat \psi,\hat \gamma)^* = LV(J).
\end{equation}
By symmetry, it suffices to show 
\begin{equation}\label{eqnLocalVertexOperatorsReparametrizationSubset}
U_{NS}(\hat \psi,\hat \gamma)LV(I)U_{NS}(\hat \psi,\hat \gamma)^* \subseteq LV(J).
\end{equation}

Suppose $T_\xi \in LV(I)$, corresponding to $\xi \in \F$ and  $T \in E(X)$, where $X = (\phi_t, t,w,s,\gamma,\psi) \in \cDP(I)$.
Let $\Sigma$ be the underlying space of $X$, and for $j \in \pi_0(\partial \Sigma)$ set
$$
\tilde \gamma_j = \left\{\begin{array}{ll}\gamma_j \circ \hat \gamma^{-1} & j \in \{S^1, \phi_t(S^1)\} \\
\gamma_j & j = w + s S^1
\end{array}\right. ,
\qquad
\tilde \psi_j = \left\{\begin{array}{ll}
\hat \psi \cdot (\psi_j \circ \hat \gamma^{-1}) & j \in \{S^1, \phi_t(S^1)\} \\
\psi_j & j = w + s S^1.
\end{array}\right.
$$
Set $\tilde X = (\phi_t, t, w, s, \tilde \gamma, \tilde \psi)$, and observe that $\tilde X \in \cDP(J)$.

By Proposition \ref{propDegenerateOperatorReparametrization}, $E(\tilde X)$ is spanned by $\tilde T := U_{NS}(\hat \psi, \hat \gamma) T (1 \otimes U_{NS}(\hat \psi, \hat \gamma)^*)$.
Hence 
$$
U_{NS}(\hat \psi, \hat \gamma) T_\xi U_{NS}(\hat \psi, \hat \gamma)^* \in LV(J).
$$
We have proven \eqref{eqnLocalVertexOperatorsReparametrizationSubset} and thus \eqref{eqnLocalVertexOperatorsReparametrization}.

It follows that $LV(I)$ is non-empty if and only if $LV(J)$ is.
Moreover, since 
$$
\cM(J) = U_{NS}(\psi,\gamma) \cM(I) U_{NS}(\psi,\gamma)^*
$$
by diffeomorphism covariance, $LV(I)$ generates $\cM(I)$ if and only if $LV(J)$ generates $\cM(J)$.

To show that some $LV(I)$ is non-empty, pick some one-parameter semigroup $\phi_t$ and a small value of $t$ so that $\phi_t(S^1) \cap S^1$ contains an interval but $\D \setminus \phi_t(\D)$ has non-empty interior, as in \eqref{eqnDegenSemigroup}.
Let $w,s \in \D$ be arbitrary values such that $w + s \D \subseteq \interior{\D} \setminus \phi_t(\D)$.
Let $I^\prime$ be an interval whose closure is compactly contained in $\phi_t(S^1) \cap S^1$.
We may choose a parametrization for $\phi_t(S^1)$ such that $(\psi_{\phi_t(S^1)}, \gamma_{\phi_t(S^1)})|_{I^\prime} \equiv (1, \operatorname{id})$.
Let $(\psi_{S^1},\gamma_{S^1}) = (1,\operatorname{id})$, and choose $(\psi_{w + sS^1},\gamma_{w + sS^1})$ arbitrarily.
Let $X = (\phi_t, t, w, s, \psi,\gamma) \in \cDP$, and by construction $X \in \cDP(I)$.
Then for any $\xi \in \F$ and $T \in E(X)$, we have $T_\xi \in LV(I)$.

Next, we show that $LV(I) \subset \cM(I)$.
Let $X =(\phi_t, t, w, s, \psi,\gamma) \in \cDP(I)$.
We claim that for arbitrary $f \in L^2(S^1)$ with $\supp f \subset I^\prime$, we have $(f,0,f) \in H^2(X)$, where as usual we have orderd the boundary components $S^1$, then $w + s S^1$, then $\phi_t(S^1)$.
Since $H^2(X)$ is closed, it suffices to prove the claim for continuous $f$.

Let $J = \gamma_{S^1}(I^\prime)$.
Since $\gamma_{S^1}|_{I^\prime} = \gamma_{\phi_t(S^1)}|_{I^\prime}$, we must have $J \subset S^1 \cap \phi_t(S^1)$.
Let $h$ be the continuous function on $S^1$ such that $\psi_{S^1} \cdot (h \circ \gamma_{S^1}) = f$.

Let $K = \D \setminus \phi_t(\interior{\D})$, and let $H:K \to \C$ be the continuous function obtained by extending $h$ to be $0$ on $K \setminus J$.
By Mergelyan's theorem \cite[\S 20]{BigRudin}, there exists a sequence of rational functions $H_n$ with poles at $0$ and $\infty$ such that $H_n \to H$ uniformly on $K$.
Let $(f_n, g_n, k_n) = \psi \cdot (H_n \circ \gamma) \in H^2(X)$ be the corresponding boundary values.
By construction, we have $g_n \to 0$ uniformly.
We also have that $f_n = \psi_{S^1} \cdot (H_n \circ \gamma_{S^1})$ converges uniformly to $f = \psi_{S^1} \cdot (h \circ \gamma_{S^1})$.
Moreover, since $\gamma_{S^1}|_{I^\prime} = \gamma_{\phi_t(S^1)}|_{I^\prime}$, we have that $k_n$ converges uniformly to $f$ on $I^\prime$.
By construction, $H_n$ is converging uniformly to $0$ on $\gamma_{\phi_t(S^1)}(I)$, and $f$ vanishes on $I$, so $k_n = \psi_{\phi_t(S^1)} \cdot (H_n \circ \gamma_{\phi_t(S^1)})$ converges uniformly to $f$ on $I$ as well, and hence on all of $S^1$.
Thus $(f,0,f) = \lim (f_n, g_n, k_n) \in H^2(X)$, as claimed.

Now let $T \in E(X)$ and $\xi \in \F$.
Then by the definition of $E(X)$, we have
$$
a(f)T = T(\Gamma \otimes a(f)),
$$
and thus $a(f)T_\xi = (-1)^{p(\xi)} T_\xi a(f)$.
As usual, formulas written involving the parity hold for homogeneous vectors, and are extended linearly otherwise.
Since $p(T) = p(\xi)$ by Proposition \ref{propLocalizedVertexOperatorParity}, the above equation yields
\begin{equation}\label{eqnLVcommutesnostar}
a(f)T_{\xi} = (-1)^{p(T)} T_{\xi} a(f),
\end{equation}

Since $(f,0,f) \in H^2(X)$ for every $f \in L^2(S^1)$ with $\supp f \subset I^\prime$, and $f \mapsto \overline{zf}$ gives a bijection from functions supported in $I^\prime$ to itself, we have $(f, 0, f) \in \overline{z H^2(X)}$ for all such $f$.
Thus by the definition of $E(X)$ we have
\begin{equation}\label{eqnLVcommutesstar}
a(f)^*T_{\xi} = (-1)^{p(T)} T_{\xi} a(f)^*.
\end{equation}
Combining \eqref{eqnLVcommutesnostar} and \eqref{eqnLVcommutesstar}, we have $T_\xi \in \cM(I)$ by Haag duality for the fermion net (Proposition \ref{thmFermiNetProps}).

Now let\footnote{
Here, we have used the notation that for $S \subset \cB(H)$,  $S^\prime$ denotes the \emph{commutant} of $S$, i.e. the algebra of all operators commuting with each element of $S$. If $S$ is closed under taking adjoints, then the von Neumann double commutant theorem says that $S^{\prime \prime}$ is the von Neumann algebra generated by $S$.}
$$
\cA(I) := (LV(I) \cup LV(I)^*)^{\prime \prime} \subseteq \cM(I).
$$ 
By \eqref{eqnLocalVertexOperatorsReparametrization}, $\cA(I)$ is a covariant subnet of $\cM(I)$, and so to show that $\cA(I) = \cM(I)$ if suffices to show that $\cA(I)\Omega$ is dense in $\F$ (by Proposition \ref{propSubnetTrivial}).

Let $X=(\phi_t,w,s,\psi,\gamma) \in \cDP(I)$, and let $T \in E(X)$ be nonzero.
We will show that
$$
W :=\Span \{ T_\xi \Omega : \xi \in \F\}
$$
is dense in $\F$.

Let $J$ be an interval with $\gamma_{S^1}^{-1}(J)$ compactly contained in $S^1 \setminus \phi_t(S^1)$, and let $f$ be a continuous function supported in $J$.
Let $h \in C(S^1)$ be such that $\psi_{S^1} \cdot (h \circ \gamma_{S^1}) = f$,  so that $h$ is supported in $\gamma_{S^1}^{-1}(J)$.
Let $H$ be the continuous function on $S^1 \cup \phi_t(S^1)$ obtained by defining $h$ to be $0$ outside of $\gamma_{S^1}^{-1}(J)$.

By Mergelyan's theorem, there exists a sequence of rational functions $H_n$ with poles at $0$, $\infty$ and $w$ such that $H_n \to h$ uniformly on $S^1 \cup \phi_t(S^1)$.
Let $(f_n,g_n,k_n)  = \psi \cdot (H \circ \gamma) \in H^2(X)$.
By construction $f_n \to f$ and $k_n \to 0$ uniformly.

Let $\xi\ \in \F$.
By the definition of $E(X)$, we have
$$
a(f_n)T(\xi \otimes \Omega) - (-1)^{p(\xi)}T(\xi \otimes a(k_n)\Omega) = T(a(g_n)\xi \otimes \Omega).
$$
Hence 
$$
a(f_n)T_\xi \Omega - (-1)^{p(\xi)}T_\xi a(k_n)\Omega = T_{a(g_n)\xi}\Omega.
$$
The left-hand side converges to $a(f)T_\xi \Omega$ as $n \to \infty$.
On the other hand, the right-hand side lies in $W$.
Hence $\overline{W}$ is invariant under $a(f)$ for every continuous $f$ supported in $J$, and thus for any $f \in L^2(S^1)$ with the same support.

A similar arugment shows that $\overline{W}$ is invariant under $a(f)^*$ where again $f \in L^2(S^1)$ is an arbitrary $L^2$ function supported in $J$.
Thus $\overline{W}$ contains $\overline{\cM(J)\Omega}$, which is all of $\F$ by the Reeh-Schlieder property (see \cite[\S 15]{Wa98}, or \cite[Thm. 1]{CaKaLo08}).
We conclude that $\overline{\cA(J)\Omega} = \F$ and thus $\cA = \cM$.
\end{proof}

\begin{Remark}
Using the notation of Figure \ref{figLocalizedVertexOperator}, the conclusion of  Theorem \ref{thmFermionIsGood} can be depicted
$$
\cM(I) = \left\{\begin{tikzpicture}[baseline={([yshift=-.5ex]current bounding box.center)}]
	\coordinate (a) at (120:1cm);
	\coordinate (b) at (240:1cm);
	\coordinate (c) at (180:.25cm);
	\fill[fill=red!10!blue!20!gray!30!white] (0,0) circle (1cm);
	\draw (0,0) circle (1cm);
	\fill[fill=white] (a)  .. controls ++(210:.6cm) and ++(90:.4cm) .. (c) .. controls ++(270:.4cm) and ++(150:.6cm) .. (b) -- ([shift=(240:1cm)]0,0) arc (240:480:1cm);
	\draw ([shift=(240:1cm)]0,0) arc (240:480:1cm);
	\draw (a) .. controls ++(210:.6cm) and ++(90:.4cm) .. (c);
	\draw (b) .. controls ++(150:.6cm) and ++(270:.4cm) .. (c);
	\filldraw[fill=white] (180:.65cm) circle (.25cm); 
	\node at (180:.65cm) {\scriptsize{$\xi$}};
	\draw (130:1.2cm) -- (130:1.4cm);
	\draw (230:1.2cm) -- (230:1.4cm);
	\draw (130:1.3cm) arc (130:230:1.3cm);
	\node at (180:1.5cm) {\scriptsize{$I$}};
\end{tikzpicture}
\,\, \Bigg| \,\,
\xi \in \F, \,\,\partial\mbox{-parametrizations } \right\}^{\prime\prime},
$$
which we refer to as a `geometric realization' of the algebraic CFT $\cM$.
\end{Remark}

\subsection{Localized vertex operators for other vertex operator superalgebras}\label{secOtherSuperalgebras}

Let $(V,Y,\Omega,\nu,\ip{\,\cdot\,,\,\cdot\,},\theta)$ be a simple unitary vertex operator superalgebra, with Hilbert space completion $\cH$.
Let $U:\Diff^{NS}_+(S^1) \to \cPU(\cH)$ be the positive energy representation of $\Diff^{NS}_+(S^1)$ coming from the conformal vector $\nu$.
\begin{Definition}
For $X = (\phi_t, t, w ,s, \psi,\gamma) \in \cDP$, we define $E(X;V)$ to be the one-dimensional vector space of (a priori unbounded) linear maps $\cH \otimes \cH \to \cH$ spanned by 
$$
T(\xi \otimes \eta) = U(\hat \psi_{S^1}, \hat \gamma_{S^1})Y(s^{L_0}U(\hat \psi_{w + s S^1}, \hat \gamma_{w + s S^1})^*\xi, w)e^{-t L(\rho)}U(\hat \psi_{\phi_t(S^1)}, \hat \gamma_{\phi_t(S^1)})^*\eta,
$$
where $\hat \gamma_j$ and $\hat \psi_j$ are given in terms of the standard boundary parametrization $(\psi_{st}, \gamma_{st})$ by 
$$
\hat \gamma_j = \gamma_{j}^{-1} \circ \gamma_{j,st} \in \Diff_+(S^1), \quad \mbox{ and } \quad \hat \psi_j = \frac{\psi_j}{\psi_{j,st} \circ \hat \gamma_j^{-1}}.
$$
This definition is characterized by the fact that when $X \in \cDP_{st}$, $E(X)$ is spanned by the map
$$
T(\xi \otimes \eta) = Y(s^{L_0} \xi, w)e^{-tL(\rho)}\eta,
$$
and the spaces satisfy the same diffeomorphism covariance property that the free fermion localized vertex operators enjoyed.

As before, for $\xi \in \cH$, set $T_\xi(\eta) = T(\xi \otimes \eta)$ and set 
$$
LV(I;V) = \{T_\xi \, : \, X \in \cDR(I),\, T \in E(X;V), \,\xi \in \cH\}.
$$
\end{Definition}
By Proposition \ref{propVertexOperatorDenselyDefined}, elements of $E(X;V)$ are densely defined, but we do not have proof that they are bounded in general, or even that they extend to the algebraic tensor product $\cH \otimes_{alg} \cH$.
However, the maps $T_\xi$ are densely defined for $\xi$ lying in a dense subspace.

In the case where $V$ is the free fermion $\F^0$, however, $E(X;\F^0)$ agrees with the one-dimensional space $E(X)$ from Section \ref{secSegalCFTForDegenerateSurfaces} by Theorem \ref{thmFermionPantsBoundedness} and Proposition \ref{propDegenerateOperatorReparametrization}.
The free fermion will be our motivating example for defining what it means for a unitary vertex operator superalgebra to have a `good' theory of localized vertex operators.

\begin{Definition}
Let $V$ be a simple unitary vertex operator superalgebra, and let $\cH$ be its Hilbert space completion.
We say that $V$ has \emph{bounded localized vertex operators} if 
\begin{itemize}
\item Maps $T \in E(X;V)$ extend to bounded linear maps in $\cB(\cH \otimes \cH, \cH)$.
\item For intervals $I$, if we set $\cA_V(I) := (LV(I;V) \cup LV(I;V)^*)^{\prime\prime}$, then $\cA_V$ is a Fermi conformal net with conformal symmetry $U:\Diff_+^{NS}(S^1) \to \cPU(\cH_V)$ coming from the conformal vector $\nu$ of $V$.
\end{itemize}
\end{Definition}

Many of the required axioms of a Fermi conformal net are automatically satisfied once the maps $T \in E(X;V)$ are bounded, and so we give a set of sufficient conditions that one can check.
\begin{Proposition}\label{refSufficientForBLVO}
Let $V$ be a simple unitary vertex operator superalgebra, let $U:\Diff_+^{NS}(S^1) \to \cPU(\cH_V)$ be the associated projective representation of $\Diff_+^{NS}(S^1)$, and suppose that the following hold:
\begin{itemize}
\item Maps $T \in E(X;V)$ extend to bounded linear maps in $\cB(\cH \otimes \cH, \cH)$.
\item The algebras $\cA_V(I) = (LV(I;V) \cup LV(I;V)^*)^{\prime\prime}$ satisfy graded locality (i.e., when $I \cap J = \emptyset$, we have $[\cA_V(I), \cA_V(J)]_{\pm} = \{0\}$).
\item $U(\psi,\gamma)$ commutes elementwise with $\cA(I)$ whenever $(\psi,\gamma) \in \Diff_+(I^\prime)$.
\end{itemize}
Then $\cA_V$ is a Fermi conformal net with conformal symmetry $U$.
\end{Proposition}
\begin{proof}
The sets $LV(I;V)$ are $\Z/2\Z$-graded and satisfy $LV(I;V) \subset LV(J;V)$ when $I \subset J$, and the corresponding properties of $\cA_V(I)$ are immediate consequences.
Similarly $U(\psi,\gamma)LV(I;V)U(\psi,\gamma)^* = LV(\gamma(I);V)$, and diffeomorphism covariance of $\cA_V$ follows, given our assumption that $U(\psi,\gamma)$ commutes with $\cA_V(I)$ when $(\psi,\gamma) \in \Diff_+(I^\prime)$.
Since we have also assumed that $\cA_V$ satisfies graded locality, the only thing to check is the vacuum axiom.

Since $V$ is simple, $\Omega$ is the unique (up to scalar) vector fixed by $\Mob^{NS}$. 
Fix an interval $I$, and let $\cK = \overline{\cA(I)\Omega}$.
If $(\psi,\gamma) \in \Diff_+(I)$, then 
$$
U(\psi,\gamma)LV(I;V) = LV(I;V)U(\psi,\gamma) = LV(I;V).
$$
Hence $U(\psi,\gamma)\cK \subseteq \cK$.
By the Reeh-Schlieder property for the Virasoro nets, it follows that $U(\psi,\gamma)\Omega \in \cK$ for all $(\psi,\gamma) \in \Diff_+^{NS}(S^1)$.
Now from the definition of $LV(I;V)$, we can see that $\cK$ contains $Y(a,z)\Omega$ for all $a \in V$, for at least one $z \in \interior{\D}$.

Since $\cK \subseteq \overline{\cA_V(S^1)\Omega}$, we have $Y(a,z)\Omega \in \overline{\cA_V(S^1)\Omega}$ for the same $a$ and $z$ as above.
But $\overline{\cA_V(S^1)\Omega}$ is clearly unvariant under the rotation subgroup of $\Diff_+^{NS}$, and thus is $\tfrac12\Z$-graded.
Thus when $a$ is homogeneous, we must have $a \in \overline{\cA_V(S^1)\Omega}$, which establishes that $\overline{\cA_V(S^1)\Omega} = \cH$.
\end{proof}

\begin{Remark}
The first two conditions in the statement of Proposition \ref{refSufficientForBLVO} are analagous to the conditions required in \cite{CKLW18} to construct a conformal net;
the first is analogous to energy boundedness, and the second to strong locality.
We expect that it is not too difficult to show that the third condition holds automatically in the presence of the first two, but we will not discuss this question as the third condition is easily verified in all of our examples.
\end{Remark}

\begin{Remark}
The fact that we have defined $\cA_V(I)$ to be generated by $LV(I; V) \cup LV(I;V)^*$ instead of just $LV(I;V)$ is an artifact of the fact that we have only considered a special class of degenerate annuli and pairs of pants lying in one-parameter families (see also the discussion in Section \ref{secOutlookGeometry}).
If we were to instead define $LV(I;V)$ to be maps assigned to \emph{all} degenerate pairs of pants we would have $LV(I; V)= LV(I;V)^*$.
\end{Remark}

Our next project is to show that the property of having bounded localized vertex operators is well-behaved with respect to tensor products and taking unitary subalgebras.

\begin{Proposition}\label{propTensorProductGood}
Let $V_1$ and $V_2$ be simple unitary vertex operator superalgebras.
Then $V_1 \otimes V_2$ has bounded localized vertex operators if and only if $V_1$ and $V_2$ do. In this case, $\cA_{V_1 \otimes V_2} = \cA_{V_1} \otimes \cA_{V_2}$.
\end{Proposition}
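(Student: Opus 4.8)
The plan is to exploit the multiplicativity of the whole construction in the two tensor factors. Write $\cH_1,\cH_2$ for the completions of $V_1,V_2$ and $\cH=\cH_1\otimes\cH_2$ for that of $V_1\otimes V_2$. Since the conformal vector of $V_1\otimes V_2$ is $\nu^1\otimes\Omega^2+\Omega^1\otimes\nu^2$, every Virasoro generator acts as $L_n=L_n\otimes 1+1\otimes L_n$; in particular $s^{L_0}=s^{L_0}\otimes s^{L_0}$, $e^{-tL(\rho)}=e^{-tL(\rho)}\otimes e^{-tL(\rho)}$ (the function $\rho$ depends only on the semigroup $\phi_t$, not on the vertex algebra), and $U^{V_1\otimes V_2}=U^{V_1}\otimes U^{V_2}$ up to phase. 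Feeding this into the tensor state-field correspondence of Proposition \ref{propUVOSATensorProduct}, a direct computation on $\cDP_{st}$ (and then on all of $\cDP(I)$, since the reparametrizing operators $U^{V_1}\otimes U^{V_2}$ are even) yields the factorization
$$T_{\xi^1\otimes\xi^2}=T^1_{\xi^1}\grotimes T^2_{\xi^2},$$
where $T^i$ is the localized vertex operator of $V_i$ for the same surface; here one uses $e^{-tL(\rho)}\Gamma=\Gamma e^{-tL(\rho)}$ together with $p(T^i_{\xi^i})=p(\xi^i)$ (Proposition \ref{propLocalizedVertexOperatorParity} and its evident analogue for $V_i$). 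This identity drives the entire argument.

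Boundedness is then immediate in both directions. If $V_1,V_2$ are good, then $T=(T^1\grotimes T^2)\circ\beta$, where $\beta$ is the unitary super-braiding reordering $(\cH_1\otimes\cH_2)^{\otimes 2}$ into $(\cH_1\otimes\cH_1)\otimes(\cH_2\otimes\cH_2)$, is a composite of bounded maps and hence bounded. Conversely, if $T$ is bounded, inserting the vacuum $\Omega^2$ into the second slot gives $T^1=P_2\,T\,W_2$ with $W_2(\xi^1\otimes\eta^1)=(\xi^1\otimes\Omega^2)\otimes(\eta^1\otimes\Omega^2)$ an isometry and $P_2(\zeta^1\otimes\zeta^2)=\langle\Omega^2,\zeta^2\rangle\zeta^1$ a contraction (using $e^{-tL(\rho)}\Omega^2=\Omega^2$), so $T^1$ is bounded; symmetrically for $T^2$.

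For the net equality, assume first $V_1,V_2$ are good. The factorization immediately gives $\cA_{V_1\otimes V_2}(I)\subseteq(\cA_{V_1}\grotimes\cA_{V_2})(I)$ by taking strong limits of combinations of product states. The family $N:=\cA_{V_1}\grotimes\cA_{V_2}$ (graded tensor product of Fermi nets) is a Fermi conformal net with vacuum $\Omega^1\otimes\Omega^2$ and symmetry $U^{V_1}\otimes U^{V_2}=U^{V_1\otimes V_2}$ by \cite[\S2.6]{CaKaLo08}, and $M:=\cA_{V_1\otimes V_2}$ is a covariant subnet of it (isotony from $\cDP(I)\subseteq\cDP(J)$, Möbius covariance from the reparametrization argument of Theorem \ref{thmFermionIsGood}, gradedness from Proposition \ref{propLocalizedVertexOperatorParity}). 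By Proposition \ref{propSubnetTrivial} it then suffices to show $\overline{M(I)\Omega}=\cH$. Here the factorization and $Y(a,w)\Omega=e^{wL_{-1}}a$ give $T_{\xi^1\otimes\xi^2}\Omega=(e^{wL_{-1}}s^{L_0}\xi^1)\otimes(e^{wL_{-1}}s^{L_0}\xi^2)$; letting $w$ range over a nonempty open set of admissible disk-centers and $\xi^i$ over $V_i$, the inner product $\langle Z,T_{\xi^1\otimes\xi^2}\Omega\rangle$ is an analytic function of $w$, so if it vanishes it vanishes identically, and reading off its $w^0$-coefficient yields $\langle Z,s^{L_0}\xi^1\otimes s^{L_0}\xi^2\rangle=0$ for all $\xi^1,\xi^2$, forcing $Z=0$. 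Thus $M=N$, and since $N$ is a Fermi net so is $M$, establishing that $V_1\otimes V_2$ is good with $\cA_{V_1\otimes V_2}=\cA_{V_1}\grotimes\cA_{V_2}$.

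The main obstacle is exactly this reverse inclusion. The generators of $\cA_{V_1\otimes V_2}(I)$ are the \emph{synchronized} products $T^1_{\xi^1}\grotimes T^2_{\xi^2}$, and because the identity is never a localized vertex operator ($T^i_{\Omega^i}=e^{-tL(\rho)}\neq 1$) one cannot separate the tensor legs algebraically to recover $\cA_{V_1}(I)\grotimes 1$ and $1\grotimes\cA_{V_2}(I)$; it is the cyclicity-plus-subnet-triviality route above that circumvents the failure of naive generation. The same obstruction reappears in the $\Rightarrow$ direction, where the net axioms (notably graded locality) for each $\cA_{V_i}$ must be extracted from those of $\cA_{V_1\otimes V_2}$ by inserting $\Omega^2$: the even auxiliary operators $T^2_{\Omega^2}$ left in the complementary factor block a direct cancellation, and resolving this cleanly is the delicate point of that implication, to be handled either by choosing the auxiliary surfaces so these factors are controlled or by using that $V_i\cong V_i\otimes\Omega$ is a unitary subalgebra of $V_1\otimes V_2$.
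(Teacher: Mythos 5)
Your proposal is correct and rests on the same key identity as the paper's proof, namely the factorization $T_{\xi^1\otimes\xi^2}=T^1_{\xi^1}\grotimes T^2_{\xi^2}$ coming from $Y(\xi^1\otimes\xi^2,x)=Y^1(\xi^1,x)\Gamma^{p(\xi^2)}\otimes Y^2(\xi^2,x)$ together with the splitting $L_n=L_n\otimes 1+1\otimes L_n$. Where you genuinely diverge is in the passage from $LV(I;V_1\otimes V_2)=LV(I;V_1)\grotimes LV(I;V_2)$ to $\cA_{V_1\otimes V_2}(I)=\cA_{V_1}(I)\grotimes\cA_{V_2}(I)$: the paper simply asserts that this ``follows,'' whereas you correctly observe that the von Neumann algebra generated by the \emph{synchronized} products $x_1\grotimes x_2$ (with $x_i$ ranging over generating sets not containing $1$) is not a priori the full graded tensor product, and you close this gap by exhibiting $\cA_{V_1\otimes V_2}$ as a covariant subnet of $\cA_{V_1}\grotimes\cA_{V_2}$ and invoking Proposition~\ref{propSubnetTrivial} together with the cyclicity computation $T_{\xi^1\otimes\xi^2}\Omega=e^{wL_{-1}}s^{L_0}(\xi^1\otimes\xi^2)$ and analyticity in $w$. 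This is exactly the mechanism the paper itself deploys in Theorems~\ref{thmFermionIsGood} and~\ref{thmSubtheoryGood}, and importing it here is a real improvement in rigor; what it costs is that you must first know $\cA_{V_1}\grotimes\cA_{V_2}$ is a Fermi conformal net, which is fine since that is supplied by \cite[\S 2.6]{CaKaLo08}.

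One caution on the implication ``$V_1\otimes V_2$ good $\Rightarrow$ $V_i$ good,'' which you flag as delicate: of your two proposed fixes for verifying the net axioms of the individual $\cA_{V_i}$, the second (regarding $V_i\cong V_i\otimes\C\Omega$ as a unitary subalgebra of $V_1\otimes V_2$ and appealing to Theorem~\ref{thmSubtheoryGood}) is circular, because $V_i\otimes\C\Omega$ is not a conformal subalgebra, and the non-conformal case of Theorem~\ref{thmSubtheoryGood} is proved in the paper precisely by invoking this direction of the present proposition. You should instead carry out the direct argument: graded locality and covariance for $\cA_{V_1}$ are extracted from those of $\cA_{V_1\otimes V_2}$ by compressing against vectors of the form $\eta^1\otimes\Omega^2$, using that the auxiliary factors $T^2_{\Omega^2}=e^{-tL(\rho)}$ are even and fix the vacuum. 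With that route made explicit, your argument is complete.
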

\begin{proof}
Let $Y$ be the state-field correspondence for $V_1 \otimes V_2$, and let $Y^i$ be the state-field correspondence for $V_i$.
By definition, we have $Y(\xi_1 \otimes \xi_2, x) = Y^1(\xi_1, x)\Gamma^{p(\xi_2)} \otimes Y^2(\xi_2, x)$ for homogeneous $\xi_i \in V_i$.
If $X \in \cDP(I)$ and $T \in E(X; V_1 \otimes V_2)$, we have $T_{\xi_1 \otimes \xi_2} = T_{\xi_1} \grotimes T_{\xi_2}$.
Thus the boundedness of elements of $LV(I; V_1 \otimes V_2)$ is equivalent to the boundedness of elements of $LV(I; V_1) \grotimes LV(I; V_2)$, and we have $LV(I; V_1 \otimes V_2) \subseteq LV(I; V_1) \grotimes LV(I; V_2)$.

First consider when $V_1$ and $V_2$ have bounded localized vertex operators. 
Then $\cA_{V_1 \otimes V_2}$ is a diffeomorphism covariant subnet of $\cA_{V_1} \otimes \cA_{V_2}$, and to check equality it suffices to show that $\overline{\cA_{V_1 \otimes V_2} \Omega} = \cH_{V_1 \otimes V_2}$.
This can be done just as in the proof of Proposition \ref{refSufficientForBLVO}.

Now consider when $V_1 \otimes V_2$ has bounded localized vertex operators.
The inclusion $\cA_{V_1 \otimes V_2}(I) \subseteq \cA_{V_1}(I) \otimes \cA_{V_2}(I)$ is clear, but it requires a small argument to establish the reverse inclusion.
Let $\cH_i$ be the Hilbert space completion of $V_i$, and let $U_i$ be the projective representation of $\Diff_+^{NS}(S^1)$ on $\cH_i$ obtained by integrating the representation of the Virasoro algebra coming from $V_i$.
Let $\cK_i$ be the subspace of $\cH_i$ generated by $\Omega$ under $U_i$, and let 
$$
\cB_i(I) = \{U_i(\psi,\gamma) : (\psi, \gamma) \in \Diff_+(I)\}^{\prime\prime}.
$$
Then $\cB_i(I)$ and $\cB_i(J)$ commute when $I$ and $J$ are disjoint (see \cite[\S 3.2]{CKLW18}).

Let $\cC(I) \subseteq \cA_{V_1 \otimes V_2}(I)$ be the local algebra of the Virasoro subnet, given by
$$
\cC(I) = \{U_1(\psi, \gamma) \otimes U_2(\psi,\gamma) : (\psi,\gamma) \in \Diff_+(I)\}^{\prime \prime},
$$
and observe that $\cC(I) \subset \cB_1(I) \otimes \cB_2(I)$, and that $\cC(I)$ commutes with $\cB_1(J) \otimes \cB_2(J)$ when $I$ and $J$ are disjoint.

We now set out to verify that $\cB_1(I) \otimes \cB_2(I) \subseteq \cA_{V_1 \otimes V_2}(I)$.
Fix $X \in \cDP(I)$, let $T \in E(X; V_1 \otimes V_2)$, and let $T_1, T_2$ be such that $T_1 \otimes T_2 = T_{\Omega \otimes \Omega}$.
From the definition of $E(X; V_1 \otimes V_2)$, $T_1\Omega \otimes T_2\Omega$ lies in $\overline{\cC(S^1)(\Omega \otimes \Omega)}$, whose finite energy vectors are the subrepresentation of the Virasoro algebra $L_n^{V_1 \otimes V_2}$ generated by the vaccum $\Omega \otimes \Omega$.
Hence $T_1 \otimes T_2 \in \cC(I)$ by Proposition \ref{propSubnetTrivial}, and thus $T_1 \otimes T_2 \in \cB_1(I) \otimes \cB_2(I)$.
It is a standard, but non-trivial, fact about von Neumann algebras that we may now conclude $T_i \in \cB_i(I)$.

Now suppose that $Y \in \cDP(J)$ for some interval $J$ disjoint from $I$, and let $S \in E(Y; V_1 \otimes V_2)$.
Then writing $S_{\Omega \otimes \Omega} = S_1 \otimes S_2$, we have $S_i \in \cB_I(J)$, as above, and thus $[S_i, T_i] = 0$.
Now if we select $a,b \in V_1$, we have
$$
T_{a \otimes \Omega} = \tilde T_1 \otimes T_2, \qquad S_{b \otimes \Omega} = \tilde S_1 \otimes S_2
$$
for some operators $\tilde T_1 \in LV(X, V_1)$ and $\tilde S_1 \in LV(Y, V_1)$.
Since $V_1 \otimes V_2$ has bounded localized vertex operators, $T_{a \otimes \Omega}$ and $S_{b \otimes \Omega}$ supercommute.
But since $T_2$ and $S_2$ are even and commute, and their product is nonzero, $\tilde T_1$ and $\tilde T_2$ supercommute as well.
Since all elements of $LV(X,V_1)$ and $LV(Y,V_1)$ arise as above, and we may apply the same argument to the adjoints, we 
get that $\cA_{V_1}(I)$ and $\cA_{V_1}(J)$ supercommute elementwise.
Applying the same argument to the second tensor factor shows that $\cA_{V_2}(I)$ and $\cA_{V_2}(J)$ also supercommute, and we conclude that the algebras $\cA_{V_i}(I)$ are graded local.
The same argument can also be used to show that $\cA_{V_i}(I)$ commutes with $\cB_i(J)$ when $I$ and $J$ are disjoint, which completes the proof that both $V_i$ have bounded localized vertex operators, by Proposition \ref{refSufficientForBLVO}.
\end{proof}

\begin{Theorem}\label{thmSubtheoryGood}
Let $V$ be a simple unitary vertex operator superalgebra with bounded localized vertex operators, and let $W$ be a unitary subalgebra. 
Then $W$ has bounded localized vertex operators.
\end{Theorem}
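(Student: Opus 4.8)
The plan is to reduce to the case of a \emph{conformal} subalgebra, where the Virasoro representation is merely the restriction of the one on $V$, and then to identify the candidate net with the restriction of a covariant subnet of $\cA_V$.

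First I would enlarge $W$ to $W \otimes W^c$. By Proposition~\ref{propUVOSACoset} the coset $W^c$ is again a unitary subalgebra, and by Proposition~\ref{propWandCommutantGenerateTensorProduct} the span $P := \Span\{a_{(-1)}b : a \in W,\, b \in W^c\}$ is a unitary \emph{conformal} subalgebra of $V$ that is unitarily isomorphic to $W \otimes W^c$; in particular $\nu^P = \nu$, so the positive energy representation $U$ of $\Diff_+^{NS}(S^1)$ and the smeared fields $L(\rho)$, $e^{-tL(\rho)}$ all restrict from $\cH$ to $\cH_P$. Since having bounded localized vertex operators is preserved by unitary isomorphism of unitary vertex operator superalgebras (the induced unitary on Hilbert space completions intertwines the Virasoro actions, the state-field correspondences, and the PCT operators, hence carries the spaces $E(X;\,\cdot\,)$, the sets $LV(I;\,\cdot\,)$ and the local algebras to one another), and since Proposition~\ref{propTensorProductGood} shows $W \otimes W^c$ has bounded localized vertex operators if and only if both $W$ and $W^c$ do, it suffices to prove that the conformal subalgebra $P$ has bounded localized vertex operators.

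Next I would check boundedness and the key invariance. Because $P$ is conformal, $\cH_P$ is invariant under every $L_n$, hence under $e^{-tL(\rho)}$, under $U(\Diff_+^{NS}(S^1))$, and under $Y(a,w)$ for $a \in P$; moreover $\theta(P) \subseteq P$. Consequently, for $\xi \in P$ both $T_\xi$ and its adjoint preserve $\cH_P$ — the adjoint via the invariance property of Definition~\ref{defUnitaryVOSA}, which expresses $Y(a,w)^*$ through modes of $\theta a \in P$ — and $T_\xi|_{\cH_P}$ is exactly the $P$-localized vertex operator. In particular these operators are bounded, since the corresponding operators for $V$ are. Thus the projection $e_P$ onto $\cH_P$ reduces the von Neumann algebra $\cB(I) \subset \cA_V(I)$ generated by $\{T_\xi,\, T_\xi^* : \xi \in P,\, X \in \cDP(I)\}$, and $\cB(I)|_{\cH_P} = \cA_P(I)$.

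Finally I would run the subnet machinery. Reparametrization leaves the inserted state unchanged (as in \eqref{eqnLocalVertexOperatorsReparametrization}), so $\cB$ is isotonic and Möbius covariant, i.e.\ a covariant subnet of the Fermi conformal net $\cA_V$; by Theorem~\ref{thmSubnetsAreNets} it is a Fermi conformal net on $\cH_{\cB} := \overline{\cB(S^1)\Omega}$ with conformal symmetry the restriction of $U$, which (as $\nu^P = \nu$) is precisely the representation attached to the conformal vector of $P$. Since $\cB(I)$ preserves $\cH_P$ and $\Omega \in \cH_P$, we get $\cH_{\cB} \subseteq \cH_P$; the reverse inclusion is the crux. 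For it I would use that $L(\rho)\Omega = 0$, so that $T_\xi\Omega = Y(s^{L_0}\xi,w)\Omega = e^{wL_{-1}} s^{L_0}\xi$ for $\xi \in P$ and a standard surface; letting $w$ vary and invoking analyticity in $w$, any vector of $\cH_P$ orthogonal to all of $\cB(I)\Omega$ must be orthogonal to $P$, hence zero. This gives $\cH_{\cB} = \cH_P$ and $\cA_P = \cB|_{\cH_P}$, so $P$ has bounded localized vertex operators, and the theorem follows from the reduction above. The two delicate points are exactly this density statement and the invariance of $\cH_P$ under the adjoints $T_\xi^*$.
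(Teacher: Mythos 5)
Your overall strategy coincides with the paper's: reduce to the conformal case via $P \cong W \otimes W^c$ (Propositions \ref{propUVOSACoset}, \ref{propWandCommutantGenerateTensorProduct} and \ref{propTensorProductGood}), show the compression by $e_P$ of the $V$-localized operators gives the $P$-localized operators, and then run the covariant-subnet machinery of Theorem \ref{thmSubnetsAreNets}. The identification of $T_\xi|_{\cH_P}$ with the $P$-localized operator and the invariance of $\cH_P$ under $T_\xi^*$ are handled in the paper by the super version of \cite[Lem.~5.28]{CKLW15}, which is essentially the argument you sketch via $\theta(P)\subseteq P$ and the invariance axiom; that part is fine.

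The genuine gap is in the step you yourself call the crux, $\cH_{\cB} \supseteq \cH_P$. Your argument rests on the formula $T_\xi\Omega = Y(s^{L_0}\xi,w)\Omega = e^{wL_{-1}}s^{L_0}\xi$, but this only holds for surfaces with the \emph{standard} boundary parametrization, and no standard-parametrized surface lies in $\cDP(I)$: membership in $\cDP(I)$ requires $(\psi_{S^1},\gamma_{S^1})$ and $(\psi_{\phi_t(S^1)},\gamma_{\phi_t(S^1)})$ to agree on $I^\prime$, which fails for $(1,\operatorname{id})$ versus $(\psi_t,\phi_t)|_{S^1}$. For an actual generator $T_\xi$ of $\cB(I)$ one has
$$
T_\xi\Omega = U(\hat\psi_{S^1},\hat\gamma_{S^1})\,Y\big(s^{L_0}U(\hat\psi_{w+sS^1},\hat\gamma_{w+sS^1})^*\xi,w\big)\,e^{-tL(\rho)}\,U(\hat\psi_{\phi_t(S^1)},\hat\gamma_{\phi_t(S^1)})^*\Omega,
$$
and since the reparametrizations are genuine diffeomorphisms (not Möbius), $U(\hat\psi_{\phi_t(S^1)},\hat\gamma_{\phi_t(S^1)})^*\Omega \ne \Omega$ and $e^{-tL(\rho)}$ does not drop out; your expansion in powers of $w$ therefore does not show that a vector orthogonal to $\cB(I)\Omega$ is orthogonal to $P$. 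The paper closes this gap differently: it establishes $e_P\,\cA_V(I)\,e_P = e_P\,\cB(I)$ from the compression identity $e_P T_\xi e_P = e_P T_{e_P\xi}$, and then gets $\cH_{\cB} \supseteq \overline{e_P\,\cA_V(I)\,\Omega} = e_P\cH_V = \cH_P$ from the Reeh--Schlieder property of the \emph{ambient} net $\cA_V$ (Theorem \ref{thmFermiNetProps}), which is available because $V$ is assumed to have bounded localized vertex operators. You should replace your analyticity argument with this (or an equivalent) use of Reeh--Schlieder for $\cA_V$.
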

\begin{proof}
First consider when $W$ is a conformal subalgebra; that is, when the conformal vector $\nu^V$ of $V$ lies in $W$.
Let $e_W \in \cB(\cH_V)$ be the projection onto $\cH_W$, the closure of $W$, and let
$$
LV(I;V)_W = \{ T_\xi : X \in \cDP(I), T \in E(X; V), \xi \in \cH_W\}.
$$
Since $W$ is a conformal subalgebra, $e_W$ commutes with all unitaries $U(\psi, \gamma)$ and with $e^{-t L(\rho)}$.

Let $X \in \cDP(I)$ and let $T \in E(X;V)$.
Recall that $T_\xi$ is given by the formula
$$
T_\xi(\eta) = U(\psi_1,\gamma_1)Y(s^{L_0} U(\psi_2,\gamma_2)^*\xi, w)e^{-t L(\rho)}U(\psi_3,\gamma_3)^*\eta
$$
when $\xi \in U(\psi_2,\gamma_2)V$ and $\eta \in U(\psi_3,\gamma_3)V$,  for some $(\psi_j,\gamma_j) \in \Diff^{NS}_+(S^1)$.
By the super version of \cite[Lem. 5.28]{CKLW18}, we have 
$$
e_W U(\gamma_1)Y(s^{L_0} U(\gamma_2)^*\xi, w) e^{-t L(\rho)} U(\gamma_3)^* e_W \eta = U(\gamma_1)Y(s^{L_0} U(\gamma_2)^*e_W \xi, w) e^{-t L(\rho)} U(\gamma_3)^* e_W \eta
$$
for all such $\xi$ and $\eta$, and
$$
U(\gamma_1)Y(s^{L_0} U(\gamma_2)^*\xi, w) e^{-t L(\rho)} U(\gamma_3)^* e_W \eta = e_W U(\gamma_1)Y(s^{L_0} U(\gamma_2)^*e_W \xi, w) e^{-t L(\rho)} U(\gamma_3)^* \eta
$$
for $\xi \in U(\gamma_2)W$ and $\eta \in U(\gamma_3)W$.

Since $T \in E(X;V)$ is bounded by assumption, these relations extend to all of $\cH_V$ and $\cH_W$, and we get
\begin{equation}\label{eqnLVCompression}
e_W T_\xi e_W = T_{e_W \xi} e_W = e_W T_{e_W \xi}
\end{equation}
for all $\xi \in \cH_V$.
Thus $LV(I;W) = e_W LV(I;V) e_W = e_W LV(I;V)_W$, so $LV(I;W)$ consists of bounded operators.

Let $\cB(I) = (LV(I;V)_W \cup {LV(I;V)_W}^*)^{\prime \prime}$. 
It is clear that $\cB(I) \subset \cA_V(I)$ and that $\cB(J) \subset \cB(I)$ when $J \subset I$.
For any $(\psi,\gamma) \in \Diff^{NS}_+(S^1)$ we have $U(\psi,\gamma)LV(I;V)_W U(\psi,\gamma)^* = LV(\gamma(I),V)_W$, and thus $U(\psi,\gamma)\cB(I)U(\psi,\gamma)^* = \cB(\gamma(I))$.
Hence $\cB$ is a  covariant subnet of $\cA_V$.

Let 
$$
\cB(S^1) = \bigvee_{I \in \cI} \cB(I)
$$
be the von Neumann algebra generated by all $\cB$ local algebras assigned to intervals.
Let $\cH_\cB = \overline{\cB(S^1)\Omega}$, so that $\cB$ is a Fermi conformal net on $\cH_\cB$ by Theorem \ref{thmSubnetsAreNets}.
We will show that $\cH_B = \cH_W$ and that $\cB(I)e_W = \cA_W(I)$, which will establish that $\cA_W(I)$ is a Fermi conformal net with confomal symmetry $U e_W$.

Since elements of $LV(I;V)_W$ commute with $e_W$ by \eqref{eqnLVCompression}, we have $\cH_\cB \subset \cH_W$.
Since $e_W LV(I;V) e_W = e_W LV(I;V)_W$ and $e_W LV(I;V)^* e_W = e_W {LV(I;V)_W}^*$, we have $e_W \cA_V(I) e_W = e_W \cB(I)$ and thus 
$$
\cH_\cB = e_W \cH_\cB \supseteq \overline{e_W \cB(I)\Omega} = \overline{e_W \cA_V(I) \Omega} = \cH_W.
$$
Hence $\cH_\cB = \cH_W$, and thus we have a Fermi conformal net $e_W \cB(I)$ on $\cH_W$ with conformal symmetry $e_W U$.
Moreover, since $LV(I;W) = e_W LV(I;V)_W$ and $LV(I;W)^* =  e_W {LV(I;V)_W}^*$, we have $e_W \cB(I) = \cA_W(I)$, which completes the proof when the inclusion $W \subset V$ is conformal.

Now consider when the inclusion $W \subset V$ is not conformal.
Let $\tilde W = \{ \xi_{(-1)} \eta : \xi \in W, \, \eta \in W^c\} \subseteq V$. 
By Proposition \ref{propWandCommutantGenerateTensorProduct}, $\tilde W$ is a unitary conformal subalgebra of $V$, so by the above proof $\tilde W$ has bounded localized vertex operators. 
But by the same proposition, $\tilde W$ is unitarily equivalent to $W \otimes W^c$, so by Proposition \ref{propTensorProductGood}, $\tilde W$ has bounded localized vertex operators as well.
\end{proof}

Theorem \ref{thmFermionIsGood}, combined with Proposition \ref{propExpIsNS},  says that the free fermion vertex operator algebra $\F^0$ has bounded localized vertex operators.
We can use Proposition \ref{propTensorProductGood} and Theorem \ref{thmSubtheoryGood} to extend this to more examples.
\begin{Theorem}\label{thmFermiSubalgebrasAreGood}
Let $W$ be a unitary subalgebra of $(\F^0)^{\otimes N}$ for some $N \in \Z_{\ge 1}$.
Then $W$ has bounded localized vertex operators. 
\end{Theorem}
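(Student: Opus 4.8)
The plan is to deduce Theorem \ref{thmFermiSubalgebrasAreGood} directly from the two structural results already established, namely Proposition \ref{propTensorProductGood} (tensor products preserve the property of having bounded localized vertex operators) and Theorem \ref{thmSubtheoryGood} (unitary subalgebras inherit it). The only genuinely new input needed is the base case: that the free fermion $\F^0$ itself has bounded localized vertex operators. Everything else is bookkeeping about how $W$ sits inside $(\F^0)^{\otimes N}$.

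First I would record the base case. By Theorem \ref{thmFermionPantsBoundedness} (together with Proposition \ref{propDegenerateOperatorReparametrization} to handle non-standard boundary parametrizations), every $T \in E(X;\F^0)$ is a bounded operator, so the first bullet in the definition of bounded localized vertex operators holds for $\F^0$. For the second bullet, Theorem \ref{thmFermionIsGood} shows that $LV(I;\F^0) \subset \cM(I)$ and that $\cM(I)$ is generated by $LV(I;\F^0)$, so $\cA_{\F^0}(I) = (LV(I;\F^0) \cup LV(I;\F^0)^*)^{\prime\prime} = \cM(I)$ is exactly the free fermion conformal net, which is a Fermi conformal net. Finally, Proposition \ref{propExpIsNS} identifies the conformal symmetry $U$ coming from the conformal vector $\nu$ of $\F^0$ with the Neveu-Schwarz representation $U_{NS}$, which is precisely the conformal symmetry of $\cM$. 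Hence $\F^0$ has bounded localized vertex operators.

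Next I would apply Proposition \ref{propTensorProductGood} inductively: since $\F^0$ has bounded localized vertex operators, an easy induction on $N$ gives that $(\F^0)^{\otimes N}$ has bounded localized vertex operators for every $N \ge 1$, with $\cA_{(\F^0)^{\otimes N}} = \cA_{\F^0} \grotimes \cdots \grotimes \cA_{\F^0}$. (One should note that $(\F^0)^{\otimes N}$ is simple by Proposition \ref{propUnitarySimple} and the remark following Proposition \ref{propUVOSATensorProduct}, so that all the relevant hypotheses are met.) Then I would invoke Theorem \ref{thmSubtheoryGood}: $W$ is by hypothesis a unitary subalgebra of the simple unitary vertex operator superalgebra $(\F^0)^{\otimes N}$, which has bounded localized vertex operators, so $W$ does as well. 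This completes the argument.

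I do not expect a serious obstacle here, since the theorem is essentially a corollary assembling pieces that are proved earlier; the substantive analytic work lives in Theorem \ref{thmFermionPantsBoundedness}, Theorem \ref{thmSubtheoryGood}, and Proposition \ref{propTensorProductGood}. The only point requiring a modicum of care is verifying that the conformal symmetry of the net produced for $W$ really is the positive energy representation induced by the conformal vector $\nu^W$ of $W$ rather than some unrelated representation; this is handled inside the proof of Theorem \ref{thmSubtheoryGood}, where the compression relation $e_W T_\xi e_W = T_{e_W\xi}e_W$ and the equality $e_W \cB(I) = \cA_W(I)$ are established, giving conformal symmetry $U e_W$, which by the subalgebra structure (Proposition \ref{propUVOSACoset} and the identification $L_i^W = L_i|_W$) is exactly the representation attached to $\nu^W$. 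Thus no new verification is needed beyond citing these results in the correct order.
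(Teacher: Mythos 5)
Your proposal is correct and follows exactly the route the paper takes: the base case that $\F^0$ has bounded localized vertex operators is supplied by Theorem \ref{thmFermionIsGood} together with Proposition \ref{propExpIsNS}, and the theorem then follows by applying Proposition \ref{propTensorProductGood} to pass to $(\F^0)^{\otimes N}$ and Theorem \ref{thmSubtheoryGood} to pass to the unitary subalgebra $W$. Your additional remarks about simplicity and about the conformal symmetry being the one attached to $\nu^W$ are accurate but already handled inside the cited results, just as you note.
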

\begin{proof}
By Proposition \ref{propTensorProductGood}, $(\F^0)^{\otimes N}$ has bounded localized vertex operators, and so by Theorem \ref{thmSubtheoryGood} the same is true of any unitary subalgebra.
\end{proof}

We are led naturally to ask which unitary vertex operator algebras can arise as unitary subalgebras of $(\F^0)^{\otimes N}$.
We have nothing approaching an exhaustive answer, but this class includes many important examples.

\begin{Example}[The free boson]
The free boson arises as the charge zero component of $\F^0$, a result which comprises one half of the fermion-boson correspondence (see \cite[\S5.1-5.2]{Kac98}).
The free boson is a unitary subalgebra of $\F^0$ since it is conformal (and in particular, $L_1$-invariant), and $\theta$-invariant, as $\theta$ exchanges the charge $M$ and charge $-M$ subspaces of $\F^0$.
\end{Example}

\begin{Example}[Sublattices of $\Z^N$]\label{exLattice}
Given a positive definite integral lattice $\Lambda$, there is a corresponding simple vertex operator superalgebra $V_\Lambda$ (see \cite[\S5.5]{Kac98}) which has a natural unitary structure (\cite[Thm 4.12]{DongLin14} and \cite[Thm 2.9]{AiLin17}). 
As discussed in \cite[Ex. 5.5a]{Kac98}, $(\F^0)^{\otimes N}$ is the vertex operator superalgebra corresponding to the lattice $\Z^N$.
Given a sublattice $\Lambda \subset \Z^N$, one has an embedding of vertex operator superalgebras $V_\Lambda \subset (\F^0)^{\otimes N}$.
It is straightforward to check that if $\Lambda \subset \Lambda^\prime$, then $V_\Lambda$ is a unitary subalgebra of $V_{\Lambda^\prime}$ from explicit formulas for $\theta_L$ (see \cite[Lem. 2.8]{AiLin17}, where $\theta$ is called $\phi$) and for $L_1$ (see the proof of \cite[Prop. 5.5]{Kac98}).
\end{Example}

\begin{Example}[Many WZW models]\label{exWZW}
Let $G$ be compact, simple, simply connected Lie group, and let $\g$ be its complexified Lie algebra.
Since the weight 1 subspace of $(\F^{0})^{\otimes N}$ contains a copy of $\fru(N)$ (see \cite[\S7]{Wa98} for an explicit construction), given a unitary representation $\pi: G \to \C^N$, we obtain an embedding of the affine vertex algebra $V_{\g,\Delta_\pi} \hookrightarrow (\F^0)^{\otimes N}$ at some level $\Delta_\pi \in \Z_{> 0}$, called the Dynkin index of $\pi$.
It is clear from the explicit formula for the action of the matrix units $(E_{ij})_{(-1)}$ on $(\F^0)^{\otimes N}$ (see e.g. \cite[\S7]{Wa98}) that $V_{\g,\Delta_\pi}$ is invariant under $\theta_\F$.
Since $V_{\g,\Delta_\pi}$ is generated by vectors with weight $1$, it will automatically be invariant under $L_1$.
Thus $V_{\g,\Delta_\pi}$ is a unitary subalgebra of $(\F^0)^{\otimes N}$.

For $k \in \Z_{> 0}$, $V_{\g,k\Delta_\pi}$ is a unitary subalgebra of $V_{\g,\Delta_\pi}^{\otimes k}$, and thus every $V_{\g,k \Delta_\pi}$ has bounded localized vertex operators.
The smallest Dynkin indices $\Delta_\g = \min_\pi \Delta_\pi$ for each $\g$ are given in Figure \ref{figDynkinIndices} (see  \cite[Tbl. 5]{DynkinIndex} and \cite[Prop. 2.6]{LaszloSorger}).
For more details on this construction, see the discussion at the beginning of \cite[\S5.2]{Po03}.

\begin{figure}[h!tbp]
$$
\begin{array}{|r|c|c|c|c|c|c|c|c|c|}
\hline \g=& A_n & B_n& C_n& D_n& E_6 & E_7 & E_8 & F_4 & G_2\\
\hline \Delta_\g=&1 &2 &1&2&6&12&60&6&2\\
\hline
\end{array}
$$
\caption{Minimal Dynkin indices for simple Lie algebras}
\label{figDynkinIndices}
\end{figure}

Since the $D_n$ level 1 VOA comes from a sublattice of $\Z^n$, we have in fact shown that the $D_n$ VOAs have bounded localized vertex operators at all positive integer levels as a consequence of Example \ref{exLattice}, instead of just at even ones as suggested by Figure \ref{figDynkinIndices}.
Of course, the $A_n$ and $C_n$ VOAs also have bounded localized vertex operators at every level as a consequence of Figure \ref{figDynkinIndices}.
We expect that all affine VOAs have bounded localized vertex operators. 
\end{Example}

\begin{Example}[Many (super) Virasoro models]\label{exVirasoro}
If $c \in \Z_{\ge 1}$, then the Virasoro vertex operator algebra with central charge $c$ is a unitary subalgebra of $(\F^0)^{\otimes c}$, and thus has bounded localized vertex operators.
If $c$ lies in the discrete series, then the corresponding Virasoro VOA is realized as a subalgebra of $SU(2)_n \otimes SU(2)_1$ inside the unitary coset subalgebra ${SU(2)_{n+1}}^c$ (the Goddard-Kent-Olive construction \cite{GKO}).
Thus the discrete series of Virasoro VOAs have bounded localized vertex operators, since $SU(2)_n \otimes SU(2)_1 \subset (\F^0)^{\otimes 2n+2}$ is a unitary subalgebra.
We get the same when $c$ is the sum of an integer and values in the discrete series of unitary Virasoro representations, and when $c$ is the central charge of a coset of one of the other examples given (and so on).

Similarly, the discrete series of $(N=1)$ super Virasoro vertex operator algebras are realized in the coset of $SU(2)_{n+2} \subset SU(2)_{n} \otimes SU(2)_2$ (by \cite[\S3]{GKO}, see also \cite[\S6.4]{CaKaLo08}), and so have bounded localized vertex operators.
In \cite[\S 5]{CarpiHillierKawahigashiLongoXu15}, it is shown that the discrete series of $N=2$ super Virasoro VOAs can be embedded as unitary subalgebras of free fermions, and in a recent paper \cite{MasonTuiteYamskulna18}, the $N=4$ super conformal algebra with central charge $c=6$ is realized as a unitary conformal subalgebra of $(\F^0)^{\otimes 6}$.
\end{Example}

\begin{Remark}\label{rmkCKLWComparison}
The main results of \cite{CKLW18} should generalize to the case of super VOAs and Fermi conformal nets without any major modification, and using the ``super version'' of that paper, one can prove that the Fermi conformal nets constructed via Theorem \ref{thmFermiSubalgebrasAreGood} from unitary subalgebras $V \subset (\F^0)^{ \otimes N}$ coincide with the CKLW nets (that is, the nets constructed in \cite{CKLW18}).
The free fermion Fermi conformal net is, by definition, generated by smeared generating fields for the free fermion vertex operator superalgebra, and so the CKLW free fermion net agrees with the one constructed from $\F^0$ via Theorem \ref{thmFermionIsGood}.
By \cite[Cor. 8.2]{CKLW18} and Proposition \ref{propTensorProductGood}, the net constructed from $(\F^0)^{\otimes N}$ agrees with the CKLW net.
Now by the super version of  \cite[Thm. 7.1]{CKLW18}, unitary subalgebras of $(\F^0)^{\otimes N}$ are strongly local, and the corresponding CKLW nets agree with the ones constructed from bounded localized vertex operators by the super version of \cite[Thm. 7.4]{CKLW18}.
A direct proof that the even part of $(\F^0)^{\otimes N}$ is strongly local will also appear in \cite{CarpiWeinerXu}, which implies that any even unitary subalgebra of $(\F^0)^{\otimes N}$ is strongly local by the results of \cite{CKLW18}.

We expect that the above discussion should apply to any simple unitary vertex operator superalgebra with bounded localized vertex operators. That is, we expect that such vertex operator superalgebras are energy bounded and strongly local, and that the Fermi conformal net arising from the bounded localized vertex operators is isomorphic to the CKLW net.
\end{Remark}

\subsection{Further directions}\label{subsecOutlook}

The goal of this paper is to demonstrate that many Fermi conformal nets can be constructed geometrically from unitary vertex operator superalgebras via assigning values to some degenerate Riemann surfaces.
In the interest of (relative) brevity, we have not attempted to develop a general theory of degenerate Riemann surfaces, or bounded localized vetex operators.
In this section we will briefly discuss several directions for future research.

\subsubsection{Relaxing the semigroup condition for fermions}\label{secOutlookGeometry}

Let $U \subset \D$ be a Jordan domain with $C^\infty$ boundary, and let $\phi:\D \to \overline{U}$ be a Riemann map.
For our construction of Fermi conformal nets, it sufficed to assign bounded operators to degenerate annuli $\D \setminus U$ with the property that $\phi$ fit into a one-parameter semigroup fixing $0$.
This condition was essential to our proof, but it would be very surprising if it were anything other than a technical convenience.
In the free fermion example, we saw that the boudnedness of the operator assigned to the degenerate annulus is equivalent to being able to write $W_\phi$ as the sum of a contraction and a trace class operator.
This, in turn, is equivalent to a condition on the decay of the \emph{approximation numbers}\footnote{
One might also call these the \emph{singular values} of $W_\phi$, but this term is sometimes reserved for compact operators}
$$
a_n(W_\phi) = \inf \{ \norm{W_\phi - F} : \operatorname{rank}(F) < n\}.
$$
When $U \subset \D$ is a Jordan domain with $C^\infty$ boundary and $\overline{U} \cap S^1 \ne \emptyset$, we have $\lim_{n \to \infty} a_n(W_\phi) = 1$, and the boundedness of the operator assigned to the degenerate annulus is equivalent to the condition $\prod_{n=1}^\infty a_n(W_\phi) < \infty$.

The $\phi$ with this property on the approximation numbers (relaxing the requirement that $\overline{U} \cap S^1 \ne \emptyset$) form a semigroup, and it is quite large.
As a consequence of the results in this paper, it contains all one-parameter semigroups of $\phi$ with common fixed point lying in the open disk $\interior{\D}$.
At some point, we would like to show that this semigroup in fact contains all $\phi$ mapping onto Jordan domains with $C^\infty$ boundary by carefully analyzing the approximation numbers of $W_\phi$.

\subsubsection{A general theory of Segal CFT for degenerate Riemann surfaces}

Eventually, we would like to upgrade our construction of maps assigned to degenerate Riemann surfaces to a functorial field theory.
That is, one should be able to precisely describe a bordism category of degenerate Riemann surfaces, and construct examples of field theories using this bordism category as a source.
In the free fermion example, the maps that should be assigned to degenerate surfaces can be characterized via commutaiton relations with respect to a Hardy space, just as with the degenerate surfaces considered in this paper.

A related project is Henriques' partial construction of extended 2d functorial conformal field theories from Riemann surfaces with cusps \cite{Henriques14}.
Henriques uses a presentation of the category of complex bordisms which features a generator
$$
\begin{tikzpicture}[baseline={([yshift=-.5ex]current bounding box.center)}]
\filldraw[fill=red!10!blue!20!gray!30!white] (180:1cm) .. controls ++(0:.6cm) and ++(270:.6cm) .. (90:1cm) .. controls ++(270:.6cm) and ++(180:.6cm) .. (0:1cm) .. controls ++(180:.6cm) and ++(90:.6cm) ..  (270:1cm) .. controls ++(90:.6cm) and ++(0:.6cm) .. (180:1cm);
\end{tikzpicture}
\,\,.
$$
In the language of our paper, this generator corresponds to a degenerate Riemann surface
\begin{equation}\label{eqnMyVersionFourPointed}
\begin{tikzpicture}[baseline={([yshift=-.5ex]current bounding box.center)}]
	\filldraw[fill=red!10!blue!20!gray!30!white] (0,0) circle (1cm);
	\filldraw[fill=white] (225:1cm) .. controls ++(-45:.5cm) and ++(270:.4cm) .. (180:.2cm) .. controls ++(90:.4cm) and ++(45:.5cm) .. (135:1cm) arc (135:225:1cm);
	\filldraw[fill=white] (45:1cm) .. controls ++(135:.5cm) and ++(90:.4cm) .. (0:.2cm) .. controls ++(270:.4cm) and ++(225:.5cm) .. (-45:1cm) arc (-45:45:1cm);
\end{tikzpicture}
\,\,.
\end{equation}

We did not discuss degenerate surfaces of this type, but the results of this paper allow one to assign bounded maps to such a degenerate surface in the free fermion example as long as the maps corresponding to the individual annuli
\begin{equation}\label{eqnFilledInLeftAndRight}
\begin{tikzpicture}[baseline={([yshift=-.5ex]current bounding box.center)},scale=.8]
	\filldraw[fill=red!10!blue!20!gray!30!white] (0,0) circle (1cm);
	\filldraw[fill=white] (225:1cm) .. controls ++(-45:.5cm) and ++(270:.4cm) .. (180:.2cm) .. controls ++(90:.4cm) and ++(45:.5cm) .. (135:1cm) arc (135:225:1cm);
\end{tikzpicture}
\,\,  \qquad \mbox { and } \qquad \,\,
\begin{tikzpicture}[baseline={([yshift=-.5ex]current bounding box.center)},scale=.8]
	\filldraw[fill=red!10!blue!20!gray!30!white] (0,0) circle (1cm);
	\filldraw[fill=white] (45:1cm) .. controls ++(135:.5cm) and ++(90:.4cm) .. (0:.2cm) .. controls ++(270:.4cm) and ++(225:.5cm) .. (-45:1cm) arc (-45:45:1cm);
\end{tikzpicture}
\end{equation}
are bounded.

We briefly sketch a proof of this fact, which is similar to the proof of boundedness of operators assigned to degenerate pairs of pants in Theorem \ref{thmFermionPantsBoundedness}.

Given a degenerate Riemann surface such as the one in \eqref{eqnMyVersionFourPointed}, write it as $\D \setminus (\phi_1(\interior{\D}) \cup \phi_2(\interior{\D}))$ for Riemann maps $\phi_i$.
If both annuli $\D \setminus \phi_i(\interior{\D})$ have associated bounded operators, then it must be that $\prod_{n=1}^\infty a_n(W_{\phi_i}) < \infty$ for $i =1,2$. 
Equivalently, this means that each $W_{\phi_i}$ can be written as the sum of a contraction and a trace class operator.

Now if we set $W\xi = (W_{\phi_1}\xi, W_{\phi_2}\xi)$, we have 
$$
WW^* = \begin{pmatrix} W_{\phi_1}W_{\phi_1}^* & W_{\phi_1}W_{\phi_2}^*\\
W_{\phi_2}W_{\phi_1}^* & W_{\phi_2}W_{\phi_2}^*
\end{pmatrix}
.
$$
Since $\phi_1(S^1) \cap \phi_2(S^1) = \emptyset$, it is straightforward to check that the off-diagonal entries of $WW^*$ are trace class (in fact, they are integral operators with a smooth kernel).
On the other hand, $W_{\phi_i}W_{\phi_i}^*$ can be written as the sum of a contraction and a trace class, so the same is true of $WW^*$, and hence $W$.
Thus $W \oplus \tilde W$ defines an admissible operator in $\cB(H, H \oplus H)$, where $H = L^2(S^1)$ and admissibility is understood with repsect to the Hardy space projections $p$ and $p \oplus p$.

Arguing as in Section \ref{subsecCalculationSegalCFT}, one may show that the adjoint of the operator which should be assigned the the degenerate surface in \eqref{eqnMyVersionFourPointed} is the implementing operator associated to $(W \oplus \tilde W, \hat \Omega)$, for a vector $\hat \Omega$ which is assigned to a non degenerate Riemann surface by the free fermion Segal CFT.
Boundedness now follows as in Theorem \ref{thmFermionPantsBoundedness}.

\subsubsection{More examples and constructions}

While the class of vertex operator superalgebras which can be embedded unitarily in $(\F^0)^{\otimes N}$ is quite large, there are important examples for which we do not know of such an embedding.
Most notably, the lists of lattice, WZW and Virasoro models discussed in Examples \ref{exLattice}, \ref{exWZW} and \ref{exVirasoro} are incomplete.
Ideally, we would like a general argument for each of the three cases.

It would also be desirable to show that the property of having bounded localized vertex operators is preserved under additional operations, for example ``nice'' extensions.
In order to prove anything about localized vertex operators for extensions, we would require a broader notion of localized vertex operators which includes module and intertwining operators.

\subsubsection{Modules and intertwining operators}

In this paper we only considered operators assigned to degenerate Riemann surfaces in the vacuum sector, and we saw that the operators that should be assigned were related to vertex operators.
To assign operators to degenerate Riemann surfaces with boundary components labeled by sectors, we would need to generalize our results to intertwining operators.
Bounded localized intertwining operators will play an important role in relating the tensor product of VOA modules with the tensor product of representations of the associated conformal net, in the same way that Wassermann used the boundedness of certain smeared intertwining operators in his proof of the fusion rules for the $SU(N)_k$ conformal nets in \cite{Wa98}.
We begin the study of bounded localized intertwining operators in the sequel article \cite{Ten18ax}.


\section{Implementing operators}
\label{secImplementingOperators}

Consider the following general scenario.
Let $H$ and $K$ be separable Hilbert spaces, and let $p$ and $q$ be projections on $H$ and $K$, respectively. 
Assume that $pH$ and $(1-p)H$ are infinite dimensional.
With this data, we can form the Fock spaces $\F_{H,p}$ and $\F_{K,q}$, which carry representations of $\CAR(H)$ and $\CAR(K)$, respectively.

Fix an orthonormal basis $\{\xi_i\}_{i \in \Z}$ for $H$, and assume that $\xi_i \in pH$ when $i \ge 0$, $\xi_i \in (1-p)H$ when $i < 0$.
Such a basis for $H$ is said to be \emph{compatible with $p$}.
Recall that if $I = \{i_1, \ldots, i_n\} \subset \Z$ with
$
i_1 < i_2 < \cdots < i_n,
$
and if $\{\psi_i\} \subset H$ is a family of vectors indexed by $I$, then we write
\begin{equation}\label{eqnFermionProductNotation}
a(\psi_I) = a(\psi_{i_1}) \cdots a(\psi_{i_n}) \in \CAR(H).
\end{equation}

The Fock space $\F_{H,p}$ has an orthonormal basis $a(\xi_J)a(\xi_I)^*\Omega_p$, where $I$ runs over finite subsets of $\Z_{\ge 0}$ and $J$ runs over finite subsets of $\Z_{< 0}$.

\begin{Definition}\label{defImplementer}
Let $H, K, p, q,$ and  $\xi_i$ be as above.
Let $r \in \cB(H, K)$ and $\hOmega \in \F_{K, q}$.
Then the \emph{implementing operator associated} to $(r, \hOmega)$ is the densely defined linear map $R:\F_{H, p} \to \F_{K, q}$ given by
$$
Ra(\xi_J)a(\xi_I)^*\Omega_p = a(r \xi_J)a(r \xi_I)^*\hOmega.
$$
\end{Definition}

We now set out to establish a sufficient condition for an implementing operator to be bounded.

\begin{Definition}\label{defDiagonalExpectation}
Let $H$ and $K$ be Hilbert spaces, and let $p$ and $q$ be projections on $H$ and $K$, respectively. 
For $r \in \cB(H, K)$, define the \emph{diagonal expectation} of $r$ by $E(r) = qrp + (1-q)r(1-p)$.
The class of {\it admissible} operators $\cA(H, K)$ is defined to be those $r \in \cB(H,K)$ with $qr(1-p)$ trace class, and which have the property that there exist $a,x \in \cB(H,K)$ with $\norm{a} \le 1$ and $x$ trace class such that $E(r) = a + x$.
\end{Definition}
In other words, if we think of elements of $\cB(H,K)$ as $2 \times 2$ matrices with respect to the decompositions $pH \oplus (1-p)H$ and $qK \oplus (1-q)K$, then for $r \in \cB(H,K)$ to be admissible we require the top right entry of $r$ to be trace class, and the diagonal entries to have a decomposition as $(${\it contraction}$) + (${\it trace class}$)$.

Definition \ref{defDiagonalExpectation} depends on the projections $p$ and $q$, which we omit from the notation as they will remain fixed in our applications.

In a moment, we will give Theorem \ref{thmAdmissibleBoundedness}, the main result of Section \ref{secImplementingOperators} which partially characterizes boundedness of implementing maps in terms of admissibility. 
First, we need to briefly recall some facts about the representation theory of the $\CAR$ algebra (see Section \ref{subsecFermionicFockSpace}).

Let $q^\prime \in \cB(K)$ be a projection, and assume that $q^\prime - q$ is Hilbert-Schmidt. 
Then there is a unique-up-to-scalar vector $\hat \Omega_{q^\prime} \in \F_{K,q}$ such that 
\begin{equation}\label{eqnqVacuum}
a(f)\hOmega_{q^\prime} = a(g)^*\hOmega_{q^\prime} = 0
\end{equation}
for every $f \in q^\prime K$ and every $g \in (1-q^\prime)K$. 
When $q^\prime = q$, then $\hOmega_{q^\prime}$ is just the ordinary vacuum vector $\Omega_q \in \F_{K,q}$.

\begin{Theorem}\label{thmAdmissibleBoundedness}
Let $H$ and $K$ be separable Hilbert spaces, and let $p$ and $q$ be projections on $H$ and $K$, respectively, with $pH$ and $(1-p)H$ infinite dimensional. 
Let $\{\xi_i\}_{i \in \Z}$ be an orthonormal basis for $H$ which is compatible with $p$.
Let $r \in \cB(H,K)$, and assume that $qr(1-p)$ is trace class. 
Let $q^\prime$ be a projection on $K$ with $q^\prime - q$ trace class, and let $\hat \Omega_{q^\prime} \in \F_{K,q}$ be a non-zero vector satisfying \eqref{eqnqVacuum}.  
Then the implementing operator associated to $(r,\hOmega_{q^\prime})$ is bounded if and only if $r \in \cA(H,K)$. 
\end{Theorem}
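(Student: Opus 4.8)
The plan is to reduce the general statement, through three boundedness-preserving and admissibility-preserving reductions, to the case of a block-diagonal one-particle operator acting on the standard vacuum, where the implementing operator becomes an ordinary fermionic second quantization whose norm is computed by singular values. Throughout I write $r$ as a $2\times 2$ matrix $\alpha = qrp$, $\beta = qr(1-p)$, $\gamma = (1-q)rp$, $\delta = (1-q)r(1-p)$ relative to $pH \oplus (1-p)H$ and $qK \oplus (1-q)K$, so that $E(r) = \alpha \oplus \delta$ and the standing hypothesis reads ``$\beta$ trace class.'' First I would dispose of the vacuum: since $q'-q$ is trace class, choose $u \in \cU_{res}(K,q)$ of the form (unitary commuting with $q$) $+$ (trace class) with $uqu^* = q'$, and let $U$ be its image under the basic representation, so $U$ is unitary and $U\Omega_q = \hat\Omega_{q'}$ up to a scalar by \eqref{eqnBasicRepOnVacuum}. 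Using $a(g)U = Ua(u^*g)$ one sees that the implementing operator for $(r,\hat\Omega_{q'})$ equals $U$ composed with the implementing operator for $(u^*r,\Omega_q)$; hence its boundedness is unchanged, and since $u^*$ is a trace-class perturbation of a $q$-commuting unitary, $u^*r \in \cA(H,K) \iff r \in \cA(H,K)$. This reduces everything to $\hat\Omega = \Omega_q$.

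Next I would show that $\gamma = (1-q)rp$ is irrelevant. Expanding $a(r\xi_I)^* = \prod_k\big(a(\alpha\xi_{i_k})^* + a(\gamma\xi_{i_k})^*\big)$, every summand containing a factor $a(\gamma\xi)^*$ vanishes on $\Omega_q$: because all creation operators anticommute, $\{a(f)^*,a(g)^*\}=0$, the rightmost such factor can be moved past the others to hit the vacuum, and $a(\gamma\xi)^*\Omega_q = 0$ since $\gamma\xi \in (1-q)K$. Thus $a(r\xi_I)^*\Omega_q = a(\alpha\xi_I)^*\Omega_q$, so the implementing operator depends on $r$ only through $\alpha,\beta,\delta$, and I may set $\gamma = 0$. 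Then I remove $\beta$: expanding $a(r\xi_J) = \prod_k\big(a(\delta\xi_{j_k}) + a(\beta\xi_{j_k})\big)$ and normal-ordering, each annihilation factor $a(\beta\xi_j)$ (with $\beta\xi_j \in qK$) must contract against an $\alpha$-creation through $\{a(\beta\xi_j),a(\alpha\xi_i)^*\} = \ip{\beta\xi_j,\alpha\xi_i}$, the admissible pairings being governed by the matrix of the trace-class operator $\beta^*\alpha : pH \to (1-p)H$. Wick's theorem packages these contractions into the exponential of a quadratic expression with trace-class coefficients, a bounded invertible operator $G$ on $\F_{H,p}$, yielding $R_{(r,\Omega_q)} = R_{(E(r),\Omega_q)}\,G$; consequently $R_{(r,\Omega_q)}$ is bounded if and only if $R_{(E(r),\Omega_q)}$ is.

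It then remains to treat the block-diagonal case $r = E(r) = \alpha \oplus \delta$ with standard vacuum. Identifying $\F_{H,p} = \Lambda H_p$ and $\F_{K,q} = \Lambda K_q$ as in Proposition \ref{propFockSumToTensor}, a direct check on the number basis shows that $R_{(E(r),\Omega_q)} = \bigoplus_n \Lambda^n T$ is the functorial second quantization of the one-particle map $T = \check\alpha \oplus \delta : H_p \to K_q$, where $\check\alpha$ is the transpose of $\alpha$ and has the same singular values. Since $\norm{\Lambda^n T} = s_1(T)\cdots s_n(T)$, this second quantization is bounded if and only if $\sup_n \prod_{k\le n} s_k(T) = \prod_{s_k(T)>1} s_k(T) < \infty$, i.e. if and only if $\sum_k \big(s_k(T)-1\big)_+ < \infty$. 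Finally, $T$ — equivalently $E(r)$, which has the same singular values — admits a decomposition as a contraction plus a trace-class operator exactly under this condition, obtained by truncating its singular value decomposition at $1$; this is precisely $E(r) = a + x$ with $\norm{a}\le 1$ and $x$ trace class, which closes the chain of equivalences.

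The hard part will be the middle reduction, namely making precise that the trace-class block $\beta$ contributes only a bounded invertible factor $G$. This requires organizing the Wick expansion carefully and verifying that $G$ and $G^{-1}$ are bounded and preserve the dense finite-particle domain, and it is exactly here that the trace-class (rather than merely Hilbert--Schmidt) hypothesis on $qr(1-p)$ is needed, so that $\beta^*\alpha$ is trace class and the resulting exponential converges to a bounded operator.
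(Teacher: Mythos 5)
Your proposal is correct, and its overall skeleton matches the paper's: change the vacuum $\hOmega_{q^\prime}$ to $\Omega_q$ via the basic representation of a unitary $u$ with $uqu^*=q^\prime$ (checking that admissibility is preserved under $r\mapsto u^*r$ because $[q,u^*]$ is trace class), observe that $(1-q)rp$ is invisible against $\Omega_q$, remove the trace-class block $qr(1-p)$, and reduce to an exterior-power computation for the block-diagonal part. The two technical steps are executed differently, though. For the removal of $\beta=qr(1-p)$, the paper (Lemma \ref{lemOffDiagonalTraceClassPerturbation}) writes the perturbed implementer as the norm-absolutely-convergent series $\sum_L\lambda_L\,a(\eta_L)Ra(\xi_L)^*$ built from the singular value decomposition $x\xi_j=\lambda_j\eta_j$ of the perturbation, with convergence governed by $\prod(1+\abs{\lambda_j})<\infty$; you instead factor off a fermionic Gaussian $G=e^{B}$ with $B$ a quadratic pure-annihilation operator whose coefficient matrix is $\beta^*\alpha$. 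Your route works, and your closing remark is on target: bringing the antisymmetric coefficient matrix to canonical form gives $e^{B}=\prod_k(1+\lambda_k n_k)$ with commuting nilpotent partial isometries $n_k$, so $\norm{e^{B}}\le\prod_k(1+\abs{\lambda_k})$, finite precisely in the trace-class case --- and $e^{B}$ is genuinely unbounded for $\ell^2\setminus\ell^1$ coefficients, as one sees by testing on BCS-type product vectors, so Hilbert--Schmidt would not suffice here. Note that your factorization obliges you to check separately that $G^{-1}=e^{-B}$ is bounded (it is, by the same computation), whereas the paper gets the two-sided statement for free by applying its lemma to $-x$. For the block-diagonal case, the paper proves that $\Lambda(s)$ is bounded iff $s$ is a contraction plus a trace-class operator via a series expansion in one direction and a determinant/trace estimate in the other; your appeal to $\norm{\Lambda^nT}=s_1(T)\cdots s_n(T)$ gives both directions at once and is arguably cleaner, though the ``only if'' half of the equivalence between $\sup_n\prod_{k\le n}s_k(T)<\infty$ and the decomposition $E(r)=a+x$ requires the Ky Fan inequality $\sum_{k\le n}s_k(a+x)\le n+\norm{x}_1$, not just the truncated-SVD construction you mention for the converse.
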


We will prove Theorem \ref{thmAdmissibleBoundedness} with several lemmas giving operations under which the boundedness of the implementer for $(r, \hOmega)$ is preserved.

First, we check that the boundedness of the implementing operator is independent of the choice of basis used to define it.

\begin{Proposition}\label{propImplementerIndependentOfBasis}
Let $H,K,p,q$ be as in Theorem \ref{thmAdmissibleBoundedness}, and let $\hOmega \in \F_{K,q}$ and $r \in \cB(H,K)$ be arbitrary.
Then the boundedness of the implementing operators associated to $(r, \hOmega)$ is independent of the choice of basis $\xi_i$. 
When the implementing operators for two choices of bases are bounded, then their extensions to $\F_{H,p}$ coincide.
\end{Proposition}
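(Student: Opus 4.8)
The plan is to reduce both assertions of Proposition~\ref{propImplementerIndependentOfBasis} to a single multilinearity statement about a \emph{bounded} implementing operator, after which basis-independence follows formally. Fix the compatible basis $\{\xi_i\}$ used to define the implementer $R$ of $(r,\hOmega)$ (Definition~\ref{defImplementer}), and write $\mathcal D \subset \F_{H,p}$ for the algebraic span of the monomials $a(\xi_J)a(\xi_I)^*\Omega_p$; this is dense, since those monomials form an orthonormal basis of $\F_{H,p}$ (Section~\ref{subsecFermionicFockSpace}). The key claim I would prove is: if $R$ is bounded, with bounded extension $\bar R$, then for all finite ordered families $\{f_j\}_{j\in J}\subset (1-p)H$ and $\{g_i\}_{i\in I}\subset pH$,
\begin{equation}\label{eqnStarMultilinear}
\bar R\, a(f_{j_1})\cdots a(f_{j_m})\, a(g_{i_1})^*\cdots a(g_{i_n})^*\Omega_p = a(rf_{j_1})\cdots a(rf_{j_m})\, a(rg_{i_1})^*\cdots a(rg_{i_n})^*\hOmega,
\end{equation}
the orderings on the two sides taken to match. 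The point of \eqref{eqnStarMultilinear} is that its right-hand side refers only to the data $(r,\hOmega)$ and never to the basis.

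To prove \eqref{eqnStarMultilinear} I would first treat the case in which each $f_j$ and $g_i$ is a finite linear combination of basis vectors: then $f_j=\sum_{k<0}c_{jk}\xi_k$ and $g_i=\sum_{k\ge 0}d_{ik}\xi_k$, both sides expand multilinearly into finite linear combinations of basis monomials, and the identity is exactly the defining relation of $R$ propagated through the linearity of $f\mapsto a(f)$ and $g\mapsto a(g)^*$; this step needs no boundedness, as the relevant vectors lie in $\mathcal D$. For general $f_j,g_i$ I would approximate $f_j^{(N)}\to f_j$ and $g_i^{(N)}\to g_i$ by such finite combinations. Since $\norm{a(h)}=\norm{a(h)^*}=\norm{h}$, each factor converges in operator norm, and a telescoping estimate shows the finite products converge in norm; applying them to $\Omega_p$ (resp.\ to $\hOmega$) yields convergence of both sides of the finite-combination identity. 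Because $r$ is bounded, $rf_j^{(N)}\to rf_j$ and $rg_i^{(N)}\to rg_i$, so the right-hand side converges to the right-hand side of \eqref{eqnStarMultilinear}; because $\bar R$ is \emph{bounded}, it commutes with the limit on the left, giving \eqref{eqnStarMultilinear} in full generality.

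Given a second compatible basis $\{\xi_i'\}$, its monomials $a(\xi_{J}')a(\xi_{I}')^*\Omega_p$ are of precisely the form appearing in \eqref{eqnStarMultilinear} (as $\xi_i'\in pH$ for $i\ge 0$ and $\xi_i'\in (1-p)H$ for $i<0$). Hence, whenever $R$ is bounded, the implementer $R'$ of $(r,\hOmega)$ in the primed basis agrees with $\bar R$ on its defining domain $\mathcal D'$ (the span of the primed monomials, again dense), so $R'=\bar R|_{\mathcal D'}$ is bounded and $\overline{R'}=\bar R$. By the symmetry of the two bases this proves both the basis-independence of boundedness and the equality of the extensions. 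The main obstacle is establishing \eqref{eqnStarMultilinear} for arbitrary vectors rather than finite combinations: this is exactly where boundedness of $\bar R$ is indispensable, since it is what licenses interchanging $\bar R$ with the approximating limit. For a merely densely-defined $R$ the relation can genuinely fail, because $a(f_{j_1})\cdots a(g_{i_n})^*\Omega_p$ need not lie in $\mathcal D$ once the $f_j,g_i$ are honest infinite combinations of basis vectors.
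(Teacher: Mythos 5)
Your proposal is correct and follows essentially the same route as the paper: both arguments show that a bounded implementer automatically satisfies the defining relation on monomials built from \emph{arbitrary} vectors in $pH$ and $(1-p)H$, by expanding in the fixed basis and using the boundedness of $\bar R$ together with $\norm{a(h)}=\norm{h}$ to pass to the limit, and then apply this to the monomials of the second basis. Your write-up merely makes explicit the continuity step that the paper's proof leaves implicit when it asserts that the expansion coefficients $c_{I',J'}$ carry over to the vectors $a(r\xi^{(2)}_J)a(r\xi^{(2)}_I)^*\hOmega$.
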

\begin{proof}
Let $\xi_i^{(1)}$ and $\xi_i^{(2)}$ be two orthonormal bases for $H$, and densely define linear maps $R^{(1)}$ and $R^{(2)}$ by
$$
R^{(m)}a(\xi^{(m)}_J)a(\xi^{(m)}_I)^*\Omega_p = a(r \xi^{(m)}_J)a(r \xi^{(m)}_I)^* \hOmega.
$$
Assume that $R^{(1)}$ extends to a bounded map on all of $\F_{H,p}$. 
Fix finite subsets $I \subset \Z_{\ge 0}$ and $J \subset \Z_{<0}$, and write
$$
a(\xi^{(2)}_J)a(\xi^{(2)}_I)^*\Omega_p = \sum_{I^\prime,J^\prime} c_{I^\prime,J^\prime} a(\xi^{(1)}_{J^\prime})a(\xi^{(1)}_{I^\prime})^*\Omega_p
$$
where $I^\prime$ runs over finite subsets of $\Z_{\ge 0}$, $J^\prime$ runs over finite subsets of $\Z_{< 0}$, and $c_{I^\prime, J^\prime} \in \C$.
Then we have
$$
a(r\xi^{(2)}_J)a(r\xi^{(2)}_I)^*\hOmega = \sum_{I^\prime,J^\prime} c_{I^\prime,J^\prime} a(r\xi^{(1)}_{J^\prime})a(r\xi^{(1)}_{I^\prime})^*\hOmega
$$
for the same coefficients $c_{I^\prime,J^\prime}$.
We can now calculate
\begin{align*}
R^{(1)} a(\xi^{(2)}_J)a(\xi^{(2)}_I)^*\Omega_p &= R^{(1)}\sum_{I^\prime,J^\prime} c_{I^\prime,J^\prime} a(\xi^{(1)}_{J^\prime})a(\xi^{(1)}_{I^\prime})^*\Omega_p\\
&= \sum_{I^\prime,J^\prime} c_{I^\prime,J^\prime} a(r\xi^{(1)}_{J^\prime})a(r\xi^{(1)}_{I^\prime})^*\hOmega\\
&= a(r\xi^{(2)}_J)a(r\xi^{(2)}_I)^*\hOmega\\
&= R^{(2)} a(\xi^{(2)}_J)a(\xi^{(2)}_I)^*\Omega_p.
\end{align*}
Since $R^{(1)}$ and $R^{(2)}$ agree on a basis, $R^{(2)}$ is also bounded and $R^{(1)} = R^{(2)}$.
\end{proof}

\begin{Lemma}\label{lemOffDiagonalTraceClassPerturbation}
Let $H,K,p,q,\xi_i$ be as in Theorem \ref{thmAdmissibleBoundedness}, and let $r \in \cB(H,K)$ and $\hOmega \in \F_{K,q}$ be arbitrary. 
Let $x \in B(H, K)$ be a trace class operator with $xp = 0$. 
Then the implementer associated to $(r, \hOmega)$ is bounded if and only if the implementer associated to $(r+x, \hOmega)$ is.
\end{Lemma}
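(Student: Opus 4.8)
The plan is to reduce to rank-one perturbations and then exploit the nilpotency $a(\eta)^2 = 0$ to obtain an \emph{exact} (not merely perturbative) comparison of the two implementing operators. Write $R$ and $R'$ for the implementing operators associated to $(r,\hOmega)$ and $(r+x,\hOmega)$, respectively. Since $xp = 0$ we have $x\xi_i = 0$ for all $i \ge 0$, so on the standard basis
$$
R'\,a(\xi_J)a(\xi_I)^*\Omega_p = a((r+x)\xi_J)\,a(r\xi_I)^*\hOmega,
$$
and only the factors indexed by $J \subset \Z_{<0}$ are modified.

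First I would treat the case where $x$ has rank one, writing $x v = \langle v, h\rangle\,\eta$ with $h \in (1-p)H$ (possible since $xp=0$ forces the range of $x^*$ into $(1-p)H$) and $\eta \in K$, so that each factor becomes $a((r+x)\xi_j) = a(r\xi_j) + \langle \xi_j, h\rangle\,a(\eta)$. Expanding the product $a((r+x)\xi_J) = \prod_{j\in J}\bigl(a(r\xi_j) + \langle\xi_j,h\rangle a(\eta)\bigr)$, every term containing two or more factors of $a(\eta)$ vanishes (bring the two copies adjacent at the cost of a sign and use $a(\eta)^2=0$), so only the zeroth- and first-order terms in $a(\eta)$ survive. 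Tracking anticommutation signs, pulling the single $a(\eta)$ to the front, and recognizing the surviving contraction as the action of the annihilation operator $b := a(h)^*$ on the input space (note $b\,\Omega_p = 0$ and $b$ annihilates the $a(\xi_I)^*$ factors because $h\in(1-p)H$), I expect to obtain, up to an overall sign, the exact identity
$$
R' = R + c\,R\,b, \qquad c := a(\eta) \in \cB(\F_{K,q}), \quad b := a(h)^* \in \cB(\F_{H,p}),
$$
on the dense domain. As $\norm{b} = \norm{h}$, $\norm{c} = \norm{\eta}$ and $\norm{x}_1 = \norm{h}\,\norm{\eta}$, this yields $\norm{R'} \le \norm{R}\,(1 + \norm{x}_1)$ whenever $R$ is bounded; the reverse bound follows by writing $r = (r+x)+(-x)$. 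This settles the equivalence for rank-one $x$.

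For general trace-class $x$ I would take a singular value decomposition $x = \sum_l s_l\,x_l$ with $x_l v = \langle v, h_l\rangle\,\eta_l$, the $\{h_l\}\subset(1-p)H$ orthonormal (again using $xp=0$) and $\sum_l s_l = \norm{x}_1$, and add the rank-one pieces one at a time. Applying the rank-one step to the partial perturbations $r + \sum_{l\le n} s_l x_l$ gives implementers $R'_n$ with $\norm{R'_n} \le \norm{R}\prod_{l\le n}(1+s_l) \le \norm{R}\,e^{\norm{x}_1}$, a bound uniform in $n$. Since $\sum_{l\le n}s_l x_l \to x$ in operator norm, for each basis vector $\zeta$ one has $R'_n\zeta \to R'\zeta$ (the finite product of the norm-continuous maps $v\mapsto a(v)$ applied to $\hOmega$ converges), so the uniform bound passes to the limit and forces $\norm{R'}\le\norm{R}\,e^{\norm{x}_1}$, with the reverse implication symmetric. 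The main obstacle I anticipate is essentially bookkeeping: fixing the anticommutation signs in the rank-one identity and confirming that the contraction really is implemented by a bounded CAR operator on the correct Fock space, together with checking that the telescoping product $\prod_l(1+s_l)$ is governed by $\norm{x}_1$ and not by the number of summands — which is precisely the point at which trace-class-ness, rather than mere Hilbert–Schmidt-ness, of $x$ is used.
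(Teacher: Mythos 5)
Your proof is correct and is essentially the paper's argument: the paper likewise takes the singular value decomposition of $x$ (with the $h_\ell$ forced into $(1-p)H$ by $xp=0$) and proves the closed-form identity $T=\sum_{L\subset\Z_{<0}}\lambda_L\,a(\eta_L)R\,a(\xi_L)^*$, whose absolute convergence is controlled by $\prod_\ell(1+\abs{\lambda_\ell})<\infty$, i.e. exactly by the trace norm. Your rank-one identity $R'=R+a(\eta)Ra(h)^*$ iterated over the singular values and closed off by a pointwise limit is a telescoped repackaging of that same expansion with the same bound $\prod_\ell(1+s_\ell)$, so the two arguments coincide in substance.
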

\begin{proof}
Let $R$ be the implementer assigned to $(r, \hOmega)$ , and let $T$ be the implementer assigned to $(r+x, \hOmega)$. 
Assume that $R$ defines a bounded operator, and we will prove that $T$ is bounded as well. 
By Proposition \ref{propImplementerIndependentOfBasis}, we can choose any orthonormal basis $\xi_i$ for $H$ to define $R$ and $T$ with respect to, as long as $\xi_i \in pH$ when $i \ge 0$ and $\xi_j \in (1-p)H$ when $j < 0$. 

Since $xp = 0$ and $x$ is compact, the singular value decomposition of $x$ yields an orthonormal basis $\{\xi_j\}_{j < 0}$ for $(1-p)H$, an orthonormal set $\{\eta_j\}_{j < 0} \subset K$, and scalars $\lambda_j \in \C$ with $x \xi_j = \lambda_j \eta_j$. Moreover, since $x$ is trace class we have $\sum \abs{\lambda_j} < \infty$. 
Extend $\xi_j$ to an orthonormal basis $\{\xi_j\}_{j \in \Z}$ for $H$.

For $L \subset \Z_{< 0}$ a finite subset, set $\lambda_L = \prod_{\ell \in L} \lambda_\ell$. 
Recall that if we have $L = \{\ell_1, \ldots, \ell_n\}$ with $\ell_1 < \cdots < \ell_n$, and if $\{\psi_\ell\}_{\ell \in L}$ is a family of vectors indexed by $L$, then we set
$$
a(\psi_L) = a(\psi_{\ell_1}) \cdots a(\psi_{\ell_n}).
$$

We will now show that
\begin{equation}\label{eqnPerturbedSumExpansion}
T = \sum_{L  \subset \Z_{<0}} \lambda_L a(\eta_L)R a(\xi_L)^*,
\end{equation}
where the sum runs over finite subsets $L$.
Since $\norm{a(\eta_\ell)^*} = \norm{a(\xi_\ell)} = 1$, 
$$
\sum_{L  \subset \Z_{<0}} \norm{\lambda_L a(\eta_L)R a(\xi_L)^*} \le \norm{R} \sum_L \abs{\lambda_L}  = \norm{R}\prod_{\ell \in \Z_{< 0}} (1 + \abs{\lambda_\ell}),
$$
and so the right-hand side of \eqref{eqnPerturbedSumExpansion} converges absolutely in norm.
Thus to verify \eqref{eqnPerturbedSumExpansion}, and in particular that $T$ is bounded, it suffices to check that both sides agree when applied to basis vectors $a(\xi_J)a(\xi_I)^*\Omega$, where $J \subset \Z_{< 0}$ and $I \subset \Z_{\ge 0}$ are finite sets.

For $J \subset \Z$ a finite subset, $\{\psi_j\}_{j \in J}$ a family of vectors and $L \subseteq J$, let $\epsilon_{L,J} \in \{ \pm 1\}$ be such that
$$
a(\psi_J) = \epsilon_{L,J} a(\psi_L)a(\psi_{J \setminus L}).
$$
Note that $\epsilon_{L,J}$ is independent of the $\psi_j$. 
With this notation, for $J \subset \Z_{<0}$ and $I \subset \Z_{\ge 0}$ finite subsets we have
\begin{align*}
Ta(\xi_J)a(\xi_I)^*\Omega_p &= a((r+x)\xi_J)a(r\xi_I)^*\hOmega\\
&= \sum_{L \subset J} \epsilon_{L,J} a(x \xi_L) a(r \xi_{J \setminus L}) a(r \xi_I)^* \hOmega\\
&= \sum_{L \subset J} \epsilon_{L,J} a(x \xi_L) R a(\xi_{J \setminus L}) a(\xi_I)^* \Omega_p\\
&= \sum_{L \subset J} \lambda_L a(\eta_L) Ra(\xi_L)^* a(\xi_J)a(\xi_I)^*\Omega_o\\
&=\left(\sum_{L \subset \Z_{< 0}} \lambda_L a(\eta_L) R a(\xi_L)^*\right)  a(\xi_J)a(\xi_I)^*\Omega_p.
\end{align*}
In the last equality, we used that $a(\xi_L)^*a(\xi_J)a(\xi_I)^*\Omega_p = 0$ when $L \subset \Z_{< 0}$ but $L \not \subset J$.
This establishes \eqref{eqnPerturbedSumExpansion} and completes the proof.
\end{proof}

The next lemma establishes Theorem \ref{thmAdmissibleBoundedness} in the case where the $\hat \Omega_{q^\prime} = \Omega_q$, the vacuum vector in $\F_{K, q}$.

\begin{Lemma}\label{lemAdmissibleBoundednessVacuum}
Let $H$, $K$, $p$, $q$, and $\xi_i$ be as in Theorem \ref{thmAdmissibleBoundedness}. 
Let $r \in \cB(H,K)$ and assume that $qr(1-p)$ is trace class. 
Then the implementer $R$ associated to $(r,\Omega_q)$ is bounded if and only if $r \in \cA(H, K)$.
\end{Lemma}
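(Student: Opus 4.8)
The plan is to reduce boundedness of the implementing operator $R$ associated to $(r, \Omega_q)$ to a question about the diagonal expectation $E(r) = qrp + (1-q)r(1-p)$, and then analyze that diagonal piece directly. First I would dispose of the off-diagonal part. Since $qr(1-p)$ is trace class by hypothesis, I can write $r = E(r) + qr(1-p) + (1-q)rp$. The term $qr(1-p)$ kills $(1-p)H$ composed with $p$ (more precisely, it is trace class with the right vanishing structure), and the term $(1-q)rp$ annihilates vectors created from $pH$ in the relevant sense; I expect both off-diagonal corrections to be handled by Lemma \ref{lemOffDiagonalTraceClassPerturbation} and its analogue with the roles of creation and annihilation operators (equivalently $p$ and $q$) interchanged. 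Lemma \ref{lemOffDiagonalTraceClassPerturbation} is stated for a trace class $x$ with $xp = 0$; the symmetric statement (for $x$ with $x(1-p) = 0$, or working with adjoints so that the perturbation attaches to $a(\xi_I)^*$ rather than $a(\xi_J)$) follows from the same computation. Thus the boundedness of $R$ is unchanged if I replace $r$ by $E(r)$, and I have reduced to the case $r = E(r)$, i.e. $r$ commutes appropriately with the grading $p \mapsto q$.

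Next, with $r = E(r)$ block-diagonal, the implementing operator factors through the second-quantization machinery of Section \ref{subsecFermionicFockSpace}. The key observation is that when $r$ preserves the splitting (carrying $pH$ to $qK$ and $(1-p)H$ to $(1-q)K$), the map $Ra(\xi_J)a(\xi_I)^*\Omega_p = a(r\xi_J)a(r\xi_I)^*\Omega_q$ is precisely the second quantization $\Lambda(r)$ of $r$ regarded as a map of the "one-particle" spaces $H_p = (pH)^* \oplus (1-p)H \to K_q = (qK)^* \oplus (1-q)K$. I would then invoke the standard fact that such a second-quantized operator $\Lambda(r)$ on fermionic Fock space is bounded if and only if the inducing map is a contraction up to a trace class (equivalently Hilbert--Schmidt, but here we need the sharper trace class statement to match $\cA(H,K)$) perturbation. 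Concretely, writing $E(r) = a + x$ with $\norm{a} \le 1$ and $x$ trace class exhibits $R$ as the second quantization of a contraction composed with a trace class correction; the contraction gives a genuine contraction $\Lambda(a)$ on Fock space (second quantization is norm non-increasing on contractions), and the trace class correction can be absorbed via a convergent sum analogous to \eqref{eqnPerturbedSumExpansion}.

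For the converse, I would argue that if $E(r)$ cannot be written as contraction-plus-trace-class, then $R$ is unbounded: one produces a sequence of unit vectors in $\F_{H,p}$ on which $R$ has norm tending to infinity, using the singular value structure of $E(r)$. The cleanest route is to test $R$ on vectors of the form $a(\xi_J)a(\xi_I)^*\Omega_p$ with $I$, $J$ chosen adapted to the large singular values of the diagonal blocks; the failure of the contraction-plus-trace-class decomposition means the products $\prod (1 + (\text{large singular values}))$ diverge, forcing unboundedness. I expect the main obstacle to be this converse direction, specifically making precise the link between "$E(r)$ admits no contraction-plus-trace-class decomposition" and the divergence of the relevant infinite products of singular values; this is the crux where one must carefully compare the operator norm of $R$ restricted to the finite-dimensional subspaces spanned by basis vectors indexed by $I, J$ against the spectral data of $E(r)$. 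The forward (boundedness) direction should follow routinely once the reduction to block-diagonal $r$ and the absorption of the trace class part via Lemma \ref{lemOffDiagonalTraceClassPerturbation} are in place.
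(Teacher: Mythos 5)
Your overall strategy---strip the off-diagonal blocks, reduce to block-diagonal $r$, recognize the resulting operator as a second quantization $\Lambda$, and characterize boundedness of $\Lambda(s)$ by a contraction-plus-trace-class decomposition of $s$---is exactly the paper's route. But two steps, as you have written them, do not go through. First, the block $(1-q)rp$ cannot be removed by a ``symmetric version'' of Lemma \ref{lemOffDiagonalTraceClassPerturbation}: nothing in the hypotheses makes $(1-q)rp$ trace class, so no perturbation argument of that type is available. The correct observation is one you half-state and then abandon: since $a(g)^*\Omega_q=0$ for $g\in(1-q)K$ and creation operators anticommute exactly, $a(r\xi_I)^*\Omega_q=a(qr\xi_I)^*\Omega_q$, so $R$ literally does not depend on $(1-q)rp$ and one may set it to zero with no estimate whatsoever. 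Only $qr(1-p)$ is handled by Lemma \ref{lemOffDiagonalTraceClassPerturbation}, and for that block the trace-class hypothesis is in place.

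Second, the converse direction (boundedness of $\Lambda(s)$ forces $s=$ contraction $+$ trace class), which you correctly identify as the crux, is left at the level of ``the relevant infinite products diverge,'' and that is precisely the missing idea. The paper's mechanism is: pass to $\abs{s}$ via the polar decomposition $s=u\abs{s}$, so $\Lambda(\abs{s})=\Lambda(u^*)\Lambda(s)$ is still bounded and is positive; cut at the spectral projection $p_{>1}$ of $\abs{s}$ for $(1,\infty)$; set $b=\abs{s}p_{\le 1}+p_{>1}$ (a contraction) and $x=\abs{s}-b\ge 0$, supported on $p_{>1}H'$; then test $\Lambda(\abs{s})=\Lambda(1+x)$ on wedge products $\psi_1\wedge\cdots\wedge\psi_n$ of orthonormal vectors in $p_{>1}H'$. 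The expansion $\ip{\Lambda(1+x)\psi_1\wedge\cdots\wedge\psi_n,\psi_1\wedge\cdots\wedge\psi_n}=\sum_{L}\det(\ip{x\psi_i,\psi_j})_{i,j\in L}\ge\sum_j\ip{x\psi_j,\psi_j}$ (all principal minors of a positive operator are nonnegative) yields $\norm{x}_1\le\norm{\Lambda(\abs{s})}$ directly. Without this positivity-plus-determinant step, the link between unboundedness and the failure of the decomposition is asserted rather than proved; with it, your outline matches the paper's proof.
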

\begin{proof}
We must show that $R$ is bounded if and only if $qrp + (1-q)r(1-p) =: E(r)$ can be written as the sum of a contraction plus a trace class. 

Recall that $R$ is given by the formula
\begin{equation}\label{eqnImplementerAgainstVacuum}
Ra(\xi_J)a(\xi_I)^*\Omega_p = a(r \xi_J) a(r \xi_I)^* \Omega_q
\end{equation}
where as usual $I$ is a finite subset of $\Z_{\ge 0}$ and $J$ is a finite subset of $\Z_{< 0}$.
Since $a(g)^*\Omega_q = 0$ when $g \in (1-q)K$, one can see that $R$ is independent of $(1-q)rp$, and so we assume without loss of generality that $(1-q)rp = 0$. 

By Lemma \ref{lemOffDiagonalTraceClassPerturbation} and the assumption that $qr(1-p)$ is trace class, the boundedness of \eqref{eqnImplementerAgainstVacuum} is unchanged by subtracting off $qr(1-p)$. 
Thus it suffices to prove the lemma under the assumption that $r$ is block diagonal, i.e. that $r = E(r) = qrp + (1-q)r(1-p)$.

There is a natural unitary $U_H:\F_{H,p} \to \Lambda (1-p)H \otimes \Lambda \overline{pH}$  given by
$$
U_H a(\xi_J) a(\xi_I)^*\Omega_p = a(\xi_J)\Omega \otimes a(\xi_I)^*\Omega
$$
for $J \subset \Z_{< 0}$ and $I \subset \Z_{\ge 0}$ finite subsets.
Here, we are thinking of $\Lambda (1-p)H = \F_{(1-p)H,0}$ and $\Lambda \overline{pH} = \F_{pH, 1_{pH}}$ when we write the actions of $\CAR(pH)$ and $\CAR((1-p)H)$ on these spaces.
Thus $a(\xi_J)\Omega$ gives an orthonormal basis for $\Lambda (1-p)H$ indexed by finite subsets $J \subset \Z_{< 0}$, and $a(\xi_I)^*\Omega$ gives an orthonormal basis for $\Lambda \overline{pH}$ indexed by finite subsets $I \subset \Z_{\ge 0}$.

Let $U_K:\F_{K,q} \to  \Lambda (1-q)K \otimes \Lambda \overline{qK}$ be the unitary given by
$$
U_K a(\psi_J) a(\psi^\prime_I)^*\Omega_q = a(\psi_J)\Omega \otimes a(\psi^\prime_I)^*\Omega
$$
for all finite families of vectors $\{\psi^\prime_i\}_{i \in I} \subset qK$ and $\{\psi_j\}_{j \in J} \subset (1-q)K$.

Since $r$ is block diagonal, we have $U_KRU_H^* = R_- \otimes R_+$, where $R_-: \Lambda (1-p)H \to \Lambda (1-q)K$ and $R_+: \Lambda \overline{pH} \to \Lambda \overline{qK}$ are given by
$$
R_-a(\xi_J)\Omega = a((1-q)r(1-p)\xi_J)\Omega, \qquad R_+a(\xi_I)^*\Omega = a(qrp\xi_I)^*\Omega
$$
for finite subsets $J \subset \Z_{< 0}$ and $I \subset \Z_{\ge 0}$.

Thus to complete the proof we must show that $R_- \otimes R_+$ is bounded if and only if $r$ can be written as a sum of a contraction plus a trace class, or equivalently if $(1-q)r(1-p)$ and $qrp$ can both be written as the sum of a contraction plus a trace class.
We will prove that $R_-$ is bounded if and only if $(1-q)r(1-p)$ can be written as contraction plus trace class, and the corresponding statement for $R_+$ and $qrp$ is identical.

We begin with a small piece of notation.
If $L_1$ and $L_2$ are Hilbert spaces, and $t \in \cB(L_1,L_2)$, we will write $\Lambda(t):\Lambda(L_1) \to \Lambda(L_2)$ for the densely defined operator given on finite wedge products by
$$
\Lambda(t)(\psi_1 \wedge \cdots \wedge \psi_n) = t\psi_1 \wedge \cdots \wedge t\psi_n.
$$
Note that $\Lambda(t)$ is the restriction of 
$$
\bigoplus_{n=0}^\infty t^{\otimes n} \in \bigoplus_{n=0}^\infty \cB\big({L_1}^{\otimes n}, \, {L_2}^{\otimes n}\big)
$$
to an invariant subspace, so $\norm{\Lambda(t)} \le 1$ when $\norm{t} \le 1$.
However, $\Lambda(t)$ can be bounded even when $t$ is not a contraction.

To simplify notation, let $H^\prime = (1-p)H$, $K^\prime = (1-q)K$, and $s = (1-q)r(1-p) \in \cB(H^\prime,K^\prime)$.
In this notation,  $R_- = \Lambda(s)$, and we must show that $R_-$ is bounded if and only if $s$ can be written as the sum of a contraction and a trace class operator.

Assume first that $s = b + x$ with $\norm{b} \le 1$ and $x$ trace class.
Since $x$ is trace class, the singular value decomposition yields an orthonormal basis $\{\xi_j\}_{j \in \Z_{< 0}}$ for $H^\prime$, an orthonormal set $\{\eta_j\}_{j \in \Z_{< 0}}$, and scalars $(\lambda_j) \in \ell^1(\Z_{< 0})$ such that $x \xi_j = \lambda_j\eta_j$.
One can then verify, just as in the proof of Lemma \ref{lemOffDiagonalTraceClassPerturbation}, that
$$
R_- = \sum_{L \subset \Z_{< 0}} \lambda_L \,a(\eta_L)\Lambda(b)a(\xi_L)^*
$$
where the sum is indexed by finite subsets $L$ and $\lambda_L = \prod_{\ell \in L} \lambda_\ell$.
Hence 
$$
\norm{R_-} \le \sum_L \abs{\lambda_L} = \prod_{\ell < 0} (1 + \abs{\lambda_\ell}) < \infty
$$
and so $R_-$ is bounded.

Now assume that $R_-$ is bounded, and we will prove that $s$ can be written as the sum of a contraction and a trace class operator.
Let $s = u\abs{s}$ be the polar decomposition, and observe that $\Lambda(\abs{s}) = \Lambda(u^*)\Lambda(s)$, and thus $\Lambda(\abs{s})$ is bounded.
Note that $\Lambda(\abs{s}) = \abs{\Lambda(s)} \ge 0$.
Let $p_{\le 1}$ be the spectral projection for $\abs{s}$ corresponding to the interval $[0,1]$, and let $p_{> 1} = 1-p_{\le 1}$.
Observe that $b:=\abs{s}p_{\le 1} + p_{> 1}$ is a contraction, and let $x = \abs{s} - b$.
Then $x$ is supported on $p_{> 1}H^\prime$, and $x \ge 0$.

Let $\psi_1, \ldots, \psi_n \in p_{> 1}H^\prime$ be an arbitrary orthonormal family.
Then we have
\begin{align*}
\norm{\Lambda(\abs{s})} &\ge \ip{\Lambda(\abs{s}) \psi_1 \wedge \cdots \wedge \psi_n, \psi_1 \wedge \cdots \wedge \psi_n}\\
&= \ip{\Lambda(x + 1) \psi_1 \wedge \cdots \wedge \psi_n, \psi_1 \wedge \cdots \wedge \psi_n}\\
&= \sum_{L \subseteq \{1, \ldots, n\}} \det (\ip{x\psi_i,\psi_j})_{i,j \in L}\\
&\ge \sum_{j = 1}^n \ip{x\psi_j, \psi_j}.
\end{align*}
Hence $x$ is trace class, with $\norm{x}_1  \le \norm{\Lambda(\abs{s})}$. 
We have therefore produced a decomposition $\abs{s} = b + x$ with $\norm{b} \le 1$ and $x$ trace class.
It follows that $s = ub + ux$ is a decompostion of the same type, which was to be shown.
\end{proof}

\begin{Lemma}\label{lemChangeVacuumVector}
Let $H$, $K$, $p$, $q$ and $\xi_i$ be as in Theorem \ref{thmAdmissibleBoundedness}. 
Let $r \in \cB(H, K)$, and assume that $qr(1-p)$ is trace class. 
Let $q^\prime$ be a projection on $K$ with $q - q^\prime$ trace class.  
Then the implementer associated to $(r, \Omega_q)$ is bounded if and only if the implementer associated to $(r, \hat \Omega_{q^\prime})$ is.
\end{Lemma}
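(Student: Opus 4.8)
The plan is to reduce the change of reference vector all the way back to the vacuum case already handled in Lemma \ref{lemAdmissibleBoundednessVacuum}, by implementing the passage from $\Omega_q$ to $\hat\Omega_{q^\prime}$ through the basic representation. First I would choose a unitary $u \in \cB(K)$ with $u q u^* = q^\prime$; such a $u$ exists because $q - q^\prime$ is trace class, hence compact, so $q$ and $q^\prime$ share the same essential spectrum $\{0,1\}$ and thus have ranges and kernels of equal (infinite) dimension. Since $[q,u] = (q - q^\prime)u$ is trace class, in particular Hilbert--Schmidt, we have $u \in \cU_{res}(K,q)$, so $u$ lifts to a unitary $U$ on $\F_{K,q}$ characterized by $U a(f) U^* = a(uf)$, with $U\Omega_q = \lambda\,\hat\Omega_{q^\prime}$ for some nonzero scalar $\lambda$ by \eqref{eqnBasicRepOnVacuum} and the uniqueness in Theorem \ref{thmShaleStinespring}.

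The key computation is then to commute $U$ leftward through the creation and annihilation operators, using $a(h)U = U a(u^* h)$ and $a(h)^* U = U a(u^* h)^*$. Writing $R^\prime$ for the implementer of $(r, \hat\Omega_{q^\prime})$, this gives
$$ R^\prime a(\xi_J) a(\xi_I)^* \Omega_p = a(r\xi_J) a(r\xi_I)^* \hat\Omega_{q^\prime} = \lambda^{-1} U\, a(u^* r \xi_J)\, a(u^* r\xi_I)^* \Omega_q = \lambda^{-1} U S\, a(\xi_J)a(\xi_I)^*\Omega_p, $$
where $S$ is the implementer of $(u^* r, \Omega_q)$. Since $U$ is unitary and $\lambda \ne 0$, the operator $R^\prime$ extends boundedly if and only if $S$ does. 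By Lemma \ref{lemAdmissibleBoundednessVacuum}, the implementer $R$ of $(r,\Omega_q)$ is bounded iff $r \in \cA(H,K)$, and --- once I verify that $q u^* r(1-p)$ is trace class --- the same lemma shows $S$ is bounded iff $u^* r \in \cA(H,K)$. Thus the proof comes down to showing that $q u^* r (1-p)$ is trace class and that $r \in \cA(H,K) \Leftrightarrow u^* r \in \cA(H,K)$.

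For this final step I would exploit that $u \in \cU_{res}(K,q)$ forces the off-diagonal blocks $qu(1-q)$ and $(1-q)uq$ to be trace class, so that $u = u_{\mathrm{diag}} + (\text{trace class})$ with $u_{\mathrm{diag}} = quq + (1-q)u(1-q)$ a block-diagonal contraction. A direct block computation then yields $q u^* r(1-p) = (qu^*q)\,qr(1-p) + (\text{trace class})$, which is trace class because $qr(1-p)$ is, and likewise $E(u^* r) = u^*_{\mathrm{diag}}\,E(r) + (\text{trace class})$. Since $u^*_{\mathrm{diag}}$ is a contraction, left multiplication by it carries a decomposition of $E(r)$ as (contraction) $+$ (trace class) to one of $E(u^* r)$, so $r \in \cA(H,K) \Rightarrow u^* r \in \cA(H,K)$; repeating the argument with $u$ in place of $u^*$ (legitimate because $u \in \cU_{res}$ as well and $qu^* r(1-p)$ is now known to be trace class) gives the reverse implication.

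I expect the only genuine content to be the bookkeeping of this last paragraph --- that the admissible class $\cA(H,K)$, and the off-diagonal trace-class condition required to invoke Lemma \ref{lemAdmissibleBoundednessVacuum}, are both stable under left multiplication by the unitary $u^*$. The point that makes this work, and the only place the hypotheses are really used, is the $\cU_{res}$ condition: it confines all the damage done by $u$ to trace-class off-diagonal corrections, leaving the contraction-plus-trace-class structure of the diagonal expectation intact.
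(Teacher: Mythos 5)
Your proof is correct and follows essentially the same route as the paper: implement $u$ (with $uqu^*=q^\prime$) via the basic representation to reduce to the vacuum case, then check that admissibility and the trace-class condition on the upper-right block are preserved under left multiplication by $u^*$ (the paper phrases this via the commutator identity $qu^*r(1-p)=u^*qr(1-p)+[q,u^*]r(1-p)$ and $E(u^*r)=u^*E(r)+[q,u^*]r(2p-1)$ rather than your block decomposition $u=u_{\mathrm{diag}}+(\text{trace class})$, but these are the same computation). One small attribution slip: membership in $\cU_{res}(K,q)$ alone only makes the off-diagonal blocks of $u$ Hilbert--Schmidt; the fact that they are \emph{trace class} comes from $[q,u]=(q-q^\prime)u$ with $q-q^\prime$ trace class, which you had already established earlier.
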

\begin{proof}
Let $R$ be the implementer associated to $(r, \hat \Omega_{q^\prime})$.

Let $u \in \cU_{res}(K,q)$ be a unitary with $q^\prime  = uqu^*$, and let $U \in \cU(\F_{K,q})$ be the image of $u$ under the basic representation (see Secion \ref{subsecFermionicFockSpace}). 
Then $U\Omega_q = \hat \Omega_{q^\prime}$ and $Ua(f)U^* = a(uf)$ for all $f \in K$.

Then we see that
\begin{align*}
Ra(\xi_J)a(\xi_I)^*\Omega_p &= a(r \xi_J) a(r \xi_I)^* \hat \Omega_{q^\prime}\\
&= a(r \xi_J) a(r \xi_I)^* U\Omega_q\\
&= U a(u^*r \xi_J) a(u^*r \xi_I)^*\Omega_q.
\end{align*}
Thus $R$ is bounded if and only if the implementer associated to $(u^*r, \Omega_q)$ is bounded.
Our problem is then reduced to showing that, under the assumption that $qr(1-p)$ is trace class, the implementer associated to $(r, \Omega_q)$ is bounded if and only if the implementer associated to $(u^*r, \Omega_q)$ is bounded, where $u \in \cU(K)$ has the property that $uqu^* - q$ is trace class.

By Lemma \ref{lemAdmissibleBoundednessVacuum}, it suffices to show that if $r \in \cA(H,K)$, then $u^*r \in \cA(H,K)$ as well. 
Assume that $r \in \cA(H,K)$. 
Then $qr(1-p)$ is trace class, and by assumption $[q,u^*] = u^*(uqu^* - q)$ is as well. 
Hence 
$$
qu^*r(1-p) = u^*qr(1-p) + [q,u^*]r(1-p)
$$
is trace class as well.

Similarly, we have
\begin{align*}
E(u^*r) &= qu^*rp + (1-q)u^*r(1-p)\\
 &= u^*\big(qrp + (1-q)r(1-p)\big) + [q, u^*]r(2p-1)\\
 &= u^*E(r) + [q,u^*]r(2p-1),
\end{align*}
and since since $E(r)$ can be written as the sum of a contraction and a trace class operator, so can $E(u^*r)$.
This establishes that $u^*r \in \cA(H,K)$, and completes the proof.
\end{proof}

We can now assemble the above lemmas to give  a short proof of Theorem \ref{thmAdmissibleBoundedness}.

\begin{proof}[Proof of Theorem \ref{thmAdmissibleBoundedness}:]
By Lemma \ref{lemChangeVacuumVector}, the implementer associated to $(r, \hat \Omega_{q^\prime})$ is bounded if and only if the implementer associated to $(r, \Omega_q)$ is bounded, and by Lemma \ref{lemAdmissibleBoundednessVacuum}, this implementer is bounded if and only if $r \in \cA(H,K)$.
\end{proof}

\bibliographystyle{alpha}
\bibliography{../../ffbib}

\Addresses

\end{document}